\newcommand{\cmark}{{\color{green!70!black}\text{\ding{51}}}}%
\newcommand{\xmark}{{\color{red!70!black}\text{\ding{55}}}}%
\newcommand{\checkmarkt}[1]{%
	\edef\TVALUE{{#1}}%
	\expandafter\ifstrequal\TVALUE{yes}{\cmark}{}%
	\expandafter\ifstrequal\TVALUE{no}{\xmark}{}%
			\expandafter\ifstrequal\TVALUE{no*}{\xmark $^\star$}{}%
}
\theoremstyle{acmdefinition}
\newtheorem{remark}{Remark}
\theoremstyle{acmdefinition}
\newtheorem{notation}[definition]{Notation}
\newtheorem*{notation*}{Notation}
\Crefname{section}{Section}{Sections}
\crefname{corollary}{\text{Corollary}}{\text{Corollaries}}
\Crefname{corollary}{\text{Corollary}}{\text{Corollaries}}
\crefname{lemma}{\text{Lemma}}{\text{Lemmas}}
\Crefname{lemma}{\text{Lemma}}{\text{Lemmas}}
\crefname{proposition}{\text{Prop.}}{\text{Propositions}}
\Crefname{proposition}{\text{Proposition}}{\text{Propositions}}
\crefname{definition}{\text{Def.}}{\text{Definitions}}
\Crefname{definition}{\text{Definition}}{\text{Definitions}}
\crefname{notation}{\text{Notation}}{\text{Notations}}
\Crefname{notation}{\text{Notation}}{\text{Notations}}
\crefname{theorem}{\text{Thm.}}{\text{Theorems}}
\Crefname{theorem}{\text{Theorem}}{\text{Theorems}}
\crefname{figure}{\text{Fig.}}{\text{Figures}}
\Crefname{figure}{\text{Figure}}{\text{Figures}}
\crefname{example}{\text{Ex.}}{\text{Examples}}
\Crefname{example}{\text{Example}}{\text{Examples}}
\newcommand{\noop}[1]{}
\newcounter{claimcounter}
\numberwithin{claimcounter}{theorem}
\crefname{claimcounter}{\text{Claim}}{\text{Claims}}
\Crefname{claimcounter}{\text{Claim}}{\text{Claims}}
\newcommand{\textcode}[1]{\texorpdfstring{\texttt{#1}}{#1}}
\newcommand{\kw}[1]{\textbf{\textcode{#1}}}
\newcommand{\ALT}{\;\;|\;\;}
\newcommand{\ie}{{i.e.,} }
\newcommand{\eg}{{e.g.,} }
\newcommand{\wrt}{w.r.t.~}
\newcommand{\aka}{a.k.a.~}
\newcommand{\inarrC}[1]{\begin{array}{@{}c@{}}#1\end{array}}
\newcommand{\inarr}[1]{\begin{array}{@{}l@{}}#1\end{array}}
\newcommand{\inarrII}[2]{\begin{array}{@{}l@{~~}||@{~~}l@{}}\inarr{#1}&\inarr{#2}\end{array}}
\renewcommand{\comment}[1]{\color{commentgreen}{~~\texttt{/\!\!/}\textit{#1}}}
\newcommand{\set}[1]{\{{#1}\}}
\newcommand{\st}{\; | \;}
\newcommand{\N}{{\mathbb{N}}}
\newcommand{\dom}[1]{\textit{dom}{({#1})}}
\newcommand{\tup}[1]{{\langle{#1}\rangle}}
\newcommand{\nin}{\not\in}
\newcommand{\suq}{\subseteq}
\newcommand{\size}[1]{|{#1}|}
\newcommand{\maketil}[1]{{#1}\ldots{#1}}
\newcommand{\til}{\maketil{,}}
\DeclareRobustCommand
\renewcommand*{\cdots}{\Compactcdots}
\newcommand{\rst}[1]{|_{#1}}
\newcommand{\defeq}{\triangleq}
\newcommand{\raisemath}[1]{\mathpalette{\raisem@th{#1}}}
\newcommand{\raisem@th}[3]{\raisebox{#1}{$#2#3$}}
\newcommandx{\yaHelper}[2][1=\empty]{%
\ifthenelse{\equal{#1}{\empty}}%
  { \ensuremath{ \scriptstyle{ #2 } } } %
  { \raisebox{ #1 }[0pt][0pt]{ \ensuremath{ \scriptstyle{ #2 } } } }  %
}
\newcommandx{\yrightarrow}[4][1=\empty, 2=\empty, 4=\empty, usedefault=@]{%
  \ifthenelse{\equal{#2}{\empty}}
  { \xrightarrow{ \protect{ \yaHelper[ #4 ]{ #3 } } } } %
  { \xrightarrow[ \protect{ \yaHelper[ #2 ]{ #1 } } ]{ \protect{ \yaHelper[ #4 ]{ #3 } } } } %
}
\newcommand{\astep}[1]{\mathrel{\raisebox{-0.8pt}{\ensuremath{\xrightarrow{#1}}}}}
\newcommand{\asteplab}[2]{{}\mathrel{\raisebox{-0.8pt}{\ensuremath{\xrightarrow{#1}}}_{#2}}{}}
\newcommand{\iasteplab}[2]{{}\mathrel{\raisebox{-0.8pt}{\ensuremath{\xrightarrow{\inst{#1}}}}_{#2}}{}}
\newcommand{\tcsteplab}[2]{{}\mathrel{\raisebox{-0.8pt}{\ensuremath{\xrightarrow{#1}^{\raisemath{-2pt}{*}}_{#2}}}}}
\newcommand{\xRightarrow}[2][]{\ext@arrow 0359\Rightarrowfill@{#1}{#2}}
\newcommand{\bsteplab}[2]{{}\mathrel{\raisebox{-0.8pt}{\ensuremath{\xRightarrow{#1}}}_{#2}}{}}
\newcommand{\bsteptidlab}[3]{{}\mathrel{\raisebox{-0.8pt}{\ensuremath{\xRightarrow{#1,#2}}}_{#2}}{}}
\colorlet{colorPO}{gray!60!black}
\colorlet{colorRF}{green!60!black}
\colorlet{colorMO}{orange}
\colorlet{colorFR}{purple}
\colorlet{colorECO}{red!80!black}
\colorlet{colorSYN}{green!40!black}
\colorlet{colorHB}{blue}
\colorlet{colorPPO}{magenta}
\colorlet{colorPB}{olive}
\colorlet{colorSBRF}{olive}
\colorlet{colorRMW}{olive!70!black}
\colorlet{colorRSEQ}{blue}
\colorlet{colorSC}{violet}
\colorlet{colorPSC}{violet}
\colorlet{colorREL}{olive}
\colorlet{colorCONFLICT}{olive}
\colorlet{colorRACE}{olive}
\colorlet{colorWB}{orange!70!black}
\colorlet{colorPSC}{violet}
\colorlet{colorSCB}{violet}
\colorlet{colorDEPS}{violet}
\colorlet{colorS}{orange!70!black}
\colorlet{colorTPO}{olive}
\colorlet{colorDTPO}{violet!80!black}
\tikzset{
   every path/.style={>=stealth},
   po/.style={->,color=colorPO,thin,shorten >=-0.5mm,shorten <=-0.5mm},
   sw/.style={->,color=colorSYN,shorten >=-0.5mm,shorten <=-0.5mm},
   rf/.style={->,color=colorRF,dashed,,shorten >=-0.5mm,shorten <=-0.5mm},
   hb/.style={->,color=colorHB,thick,shorten >=-0.5mm,shorten <=-0.5mm},
   mo/.style={->,color=colorMO,dotted,very thick,shorten >=-0.5mm,shorten <=-0.5mm},
   no/.style={->,dotted,thick,shorten >=-0.5mm,shorten <=-0.5mm},
   fr/.style={->,color=colorFR,dotted,thick,shorten >=-0.5mm,shorten <=-0.5mm},
   deps/.style={->,color=colorDEPS,dotted,thick,shorten >=-0.5mm,shorten <=-0.5mm},
   rmw/.style={->,color=colorRMW,thick,shorten >=-0.5mm,shorten <=-0.5mm},
   tpo/.style={->,color=colorTPO,dotted,thick,shorten >=-0.5mm,shorten <=-0.5mm},
   dtpo/.style={->,color=colorDTPO,dotted, thick,shorten >=-0.5mm,shorten <=-0.5mm},
   revisit/.style={inner sep=1pt,rounded corners,fill=phlightcolor},
}
\newcommand{\rlab}[2]{{\lR}({#1},{#2})}
\newcommand{\wlab}[2]{{\lW}({#1},{#2})}
\newcommand{\fllab}[1]{{\lFL}({#1})}
\newcommand{\ulab}[3]{{\lU}({#1},{#2},{#3})}
\newcommand{\mflab}{{\lMF}}
\newcommand{\folab}[1]{{\lFO}({#1})}
\newcommand{\sflab}{{\lSF}}
\newcommand{\perlab}[1]{{\lPER}({#1})}
\newcommand{\rexlab}[2]{{\lRex}({#1},{#2})}
\newcommand{\fotlabp}[1]{{\lFOT}({#1})}
\newcommand{\event}[3]{\tup{{#1},{#2},{#3}}}
\newcommand{\inst}[1]{\green{#1}}
\newcommand{\addid}[2]{{#1}\inst{{\texttt{\#}}{#2}}}
\newcommand{\iiwlab}[3]{\addid{\wlab{#1}{#2}}{#3}}
\newcommand{\ifllab}[2]{\addid{\fllab{#1}}{#2}}
\newcommand{\iulab}[4]{\addid{\ulab{#1}{#2}{#3}}{#4}}
\newcommand{\ifolab}[2]{\addid{\folab{#1}}{#2}}
\newcommand{\isflab}[1]{\addid{\sflab}{#1}}
\newcommand{\iperlab}[2]{\addid{\perlab{#1}}{#2}}
\newcommand{\ifotlabp}[2]{\addid{\fotlabp{#1}}{#2}}
\newcommand{\ipwlab}[3]{\lP{\addid{\lW({#1})}{#3}}}
\newcommand{\ipfotlab}[3]{\lP{\addid{\lFOT({#2})}{#3}}}
\newcommand{\itpwlab}[3]{\lTP{\addid{\lW({#1})}{#3}}}
\newcommand{\lE}{{\mathtt{E}}}
\newcommand{\lR}{{\mathtt{R}}}
\newcommand{\lW}{{\mathtt{W}}}
\newcommand{\lQ}{{\mathtt{Q}}}
\newcommand{\lU}{{\mathtt{RMW}}}
\newcommand{\lMF}{{\mathtt{MF}}}
\newcommand{\linit}{{\mathtt{Q}_\Init}}
\newcommand{\lT}{{\mathtt{T}}}
\newcommand{\lFL}{{\mathtt{FL}}}
\newcommand{\lFO}{{\mathtt{FO}}}
\newcommand{\lSF}{{\mathtt{SF}}}
\newcommand{\lPER}{{\mathtt{PER}}}
\newcommand{\lFOT}{{\mathtt{FO}}}
\newcommand{\lRex}{{\mathtt{R}\text{-}\mathtt{ex}}}
\newcommand{\lvQ}{{\tilde{\lQ}}}
\newcommand{\lvinit}{{\lvQ_\Init}}
\newcommand{\lSigma}{{\mathbf{\Sigma}}}
\newcommand{\sR}{\mathsf{R}}
\newcommand{\sW}{\mathsf{W}}
\newcommand{\sU}{\mathsf{RMW}}
\newcommand{\sE}{\mathsf{E}}
\newcommand{\sMF}{\mathsf{MF}}
\newcommand{\sSF}{\mathsf{SF}}
\newcommand{\sFL}{\mathsf{FL}}
\newcommand{\sFO}{\mathsf{FO}}
\newcommand{\sRex}{{\mathsf{R}\text{-}\mathsf{ex}}}
\newcommand{\sP}{\mathsf{P}}
\newcommand{\lLAB}{{\mathtt{lab}}}
\newcommand{\lTID}{{\mathtt{tid}}}
\newcommand{\lSN}{{\mathtt{\#}}}
\newcommand{\lTYP}{{\mathtt{typ}}}
\newcommand{\lLOC}{{\mathtt{loc}}}
\newcommand{\lVALR}{{\mathtt{val}_\lR}}
\newcommand{\lVALW}{{\mathtt{val}_\lW}}
\newcommand{\po}{{\color{colorPO}\mathit{po}}}
\newcommand{\rf}{{\color{colorRF}\mathit{rf}}}
\newcommand{\mo}{{\color{colorMO}\mathit{mo}}}
\newcommand{\hb}{{\color{colorHB}\mathit{hb}}}
\newcommand{\fr}{{\color{colorFR}\mathit{fr}}}
\newcommand{\tpo}{{\color{colorTPO}\mathit{tpo}}}
\newcommand{\dtpo}{{\color{colorDTPO}\mathit{dtpo}}}
\newcommand{\ppo}{{\color{colorPPO}\mathit{ppo}}}
\newcommand{\E}{E}
\newcommand{\lX}{\mathtt{X}}
\newcommand{\lPO}{{\color{colorPO}\mathtt{po}}}
\newcommand{\lRF}{{\color{colorRF} \mathtt{rf}}}
\newcommand{\lMO}{{\color{colorMO} \mathtt{mo}}}
\newcommand{\lFR}{{\color{colorFR} \mathtt{fr}}}
\newcommand{\lHB}{{\color{colorHB}\mathtt{hb}}}
\newcommand{\lPPO}{{\color{colorPPO} {\mathtt{ppo}}}}
\newcommand{\lFLO}{{\mathsf{FLO}}}
\newcommand{\lDTPO}{{\color{colorDTPO} \mathtt{dtpo}}}
\newcommand{\lmakeE}[1]{#1\mathtt{e}}
\newcommand{\lRFE}{\lmakeE{\lRF}}
\newcommand{\Init}{\mathsf{Init}}
\newcommand{\Tid}{\mathsf{Tid}}
\newcommand{\Loc}{\mathsf{Loc}}
\newcommand{\Val}{\mathsf{Val}}
\newcommand{\Lab}{\mathsf{Lab}}
\newcommand{\SProg}{\mathsf{SProg}}
\newcommand{\Inst}{\mathsf{Inst}}
\newcommand{\Reg}{\mathsf{Reg}}
\newcommand{\PTLab}{\mathsf{PTLab}}
\newcommand{\sep}{\;;\;\;}
\newcommand{\readInst }[2]{#1 \;{:=}\;#2}
\newcommand{\mfenceInst}{\kw{mfence}}
\newcommand{\sfenceInst}{\kw{sfence}}
\newcommand{\ifGotoInst}[2]{\kw{if} \; #1 \; \kw{goto} \; #2}
\newcommand{\ifThenInst}[2]{\kw{if} \; #1 \; \kw{then} \\ ~~\; #2}
\newcommand{\writeInst}[2]{#1\;{:=}\;#2}
\newcommand{\assignInst}[2]{#1\;{:=}\;#2}
\newcommand{\incInst}[3]{#1 \;{:=}\;\faddInstn({#2},{#3})}
\newcommand{\casInst}[4]{#1 \;{:=}\;\casInstn({#2},{#3},{#4})}
\newcommand{\casInstn}{\kw{CAS}}
\newcommand{\faddInstn}{\kw{FADD}}
\newcommand{\flInst}[1]{\kw{fl}({#1})}
\newcommand{\foInst}[1]{\kw{fo}({#1})}
\newcommand{\callInst}[3]{
\ifthenelse{\equal{#1}{}}
{\ifthenelse{\equal{#3}{}}
{\kw{call}({#2})}
{\kw{call}({#2},{#3})}}
{\ifthenelse{\equal{#3}{}}
{#1 \;{:=}\; \kw{call}({#2})}
{#1 \;{:=}\; \kw{call}({#2},{#3})}}
}
\newcommand{\makemodel}[1]{\ensuremath{{\mathsf{#1}}}\xspace}
\newcommand{\SC}{\makemodel{SC}}
\newcommand{\D}{{{\ensuremath{D}}}\xspace}
\newcommand{\makeI}[1]{\ensuremath{\makeinst{#1}}\xspace}
\newcommand{\makeP}[1]{{\ensuremath{\mathsf{P}}{#1}}\xspace}
\newcommand{\makeD}[1]{{\ensuremath{\mathsf{D}}{#1}}\xspace}
\newcommand{\PSC}{\makeP{\SC}}
\newcommand{\PSCI}{\makeI{\PSC}}
\newcommand{\PSCf}{{\ensuremath{\PSC_{\mathtt{fin}}}}\xspace}
\newcommand{\PTSO}{\makemodel{Px86}}
\newcommand{\PTSOI}{\makeI{\PTSO}}
\newcommand{\PTSOsynI}{\makeI{\PTSOone}}
\newcommand{\PTSOsynnI}{\makeI{\PTSOtwo}}
\newcommand{\PTSOsynnnI}{\makeI{\PTSOsynnn}}
\newcommand{\DPTSO}{\makeD{\PTSOsynnn}}
\newcommand{\DPTSOmo}{\makeD{\makemodel{{\PTSOa_\mathtt{syn}^{\lMO}}}}}
\newcommand{\DPSC}{\makeD{\PSC}}
\newcommand{\PTSOone}{\ensuremath{\PTSOa_1}\xspace}
\newcommand{\PTSOtwo}{\ensuremath{\PTSOa_2}\xspace}
\newcommand{\PTSOsynnn}{{\ensuremath{\PTSOa_\mathtt{syn}}}\xspace}
\newcommand{\PTSOa}{\makeP{\makemodel{TSO}}}
\newcommand{\progstate}{\overline{q}}
\newcommand{\sprogstate}{q}
\newcounter{mylabelcounter}
\newcommand{\labelAxiom}[2]{%
\hfill{\normalfont\textsc{(#1)}}\refstepcounter{mylabelcounter}
\immediate\write\@auxout{%
  \string\newlabel{#2}{{\unexpanded{\normalfont\textsc{#1}}}{\thepage}{{\unexpanded{\normalfont\textsc{#1}}}}{mylabelcounter.\number\value{mylabelcounter}}{}}
}%
}
\newcommand{\squishlist}[1][$\bullet$]{%
 \begin{list}{#1}
  { \setlength{\itemsep}{0pt}
     \setlength{\parsep}{0pt}
     \setlength{\topsep}{1pt}
     \setlength{\partopsep}{0pt}
     \setlength{\leftmargin}{1.2em}
     \setlength{\labelwidth}{0.5em}
     \setlength{\labelsep}{0.4em} } }
\newcommand{\squishend}{
  \end{list}  }
\definecolor{DarkGreen}{rgb}{0.05, 0.45, 0.05}
\newcommand{\hide}[1]{}
\newcommand{\green}[1]{{\color{green!40!black}{#1}}}
\newcommand{\sn}{n}
\newcommand{\A}{{A}}
\newcommand{\M}{{{\ensuremath{M}}}\xspace}
\newcommand{\memf}{{{\mu}}}
\newcommand{\sprog}{S}
\newcommand{\prog}{{\mathit{Pr}}}
\newcommand{\loc}{{x}}
\newcommand{\loca}{{y}}
\newcommand{\tid}{{\tau}}
\newcommand{\tida}{{\pi}}
\newcommand{\lab}{{l}}
\newcommand{\val}{v}
\renewcommand{\exp}{{e}}
\newcommand{\reg}{{r}}
\newcommand{\vale}[1][\val]{\lW({#1})}
\newcommand{\ivale}[2][\val]{\addid{\vale[#1]}{#2}}
\newcommand{\ilab}{{\inst{{L}}}}
\newcommand{\pc}{\mathit{pc}}
\newcommand{\nextevent}{{\mathsf{NextEvent}}}
\newcommand{\tidlab}[2]{\tup{{#1},{#2}}}
\newcommand{\itidlab}[2]{\inst{\tup{{#1},{#2}}}}
\newcommand{\asteptidlab}[3]{{}\mathrel{\raisebox{-0.8pt}{\ensuremath{\xrightarrow{#1,#2}}}_{#3}}{}}
\newcommand{\iasteptidlab}[3]{{}\mathrel{\raisebox{-0.8pt}{\ensuremath{\xrightarrow{\inst{#1,#2}}}}_{#3}}{}}
\newcommand{\mylabel}[2]{#2\def\@currentlabel{#2}\label{#1}}
\newcommand{\rulename}[1]{{\textsc{{#1}}}}
\newcommand{\ctid}[1]{\mathtt{T}_#1}
\newcommand{\cloc}[1]{\mathtt{
\ifthenelse{\equal{#1}{1}}{x}{
\ifthenelse{\equal{#1}{2}}{y}{
\ifthenelse{\equal{#1}{3}}{z}{
\ifthenelse{\equal{#1}{4}}{w}{
\problem}}}}}}
\newcommand{\creg}[1]{\mathtt{
\ifthenelse{\equal{#1}{1}}{a}{
\ifthenelse{\equal{#1}{2}}{b}{
\ifthenelse{\equal{#1}{3}}{c}{
\ifthenelse{\equal{#1}{4}}{d}{
\ifthenelse{\equal{#1}{5}}{e}{
\ifthenelse{\equal{#1}{6}}{f}{
\problem}}}}}}}}
\newcommand{\cval}[1]{\mathtt{
\ifthenelse{\equal{#1}{1}}{v_1}{
\ifthenelse{\equal{#1}{2}}{v_2}{
\ifthenelse{\equal{#1}{3}}{v_3}{
\ifthenelse{\equal{#1}{4}}{v_4}{
\problem}}}}}}
\newcommand{\mem}{\ensuremath{m}}
\newcommand{\vmem}{{\tilde{\mem}}}
\newcommand{\buff}{\ensuremath{\mathit{b}}}
\newcommand{\Buff}{\ensuremath{\mathit{B}}}
\newcommand{\pbuff}{\ensuremath{\mathit{p}}}
\newcommand{\Pbuff}{\ensuremath{\mathit{P}}}
\newcommand{\makeinst}[1]{\inst{\mathit{i}}{#1}}
\newcommand{\PbuffI}{\makeinst{\Pbuff}}
\newcommand{\pbuffI}{\makeinst{\pbuff}}
\newcommand{\buffI}{\makeinst{\buff}}
\newcommand{\BuffI}{\makeinst{\Buff}}
\newcommand{\rdWn}{\mathsf{get}}
\newcommand{\rdW}[2]{\rdWn(#1,#2)}
\newcommand{\rdWtson}{\rdWn}
\newcommand{\rdWtso}[3]{\rdWtson(#1,#2,#3)}
\newcommand{\rdWtsosynn}{\rdWn}
\newcommand{\rdWtsosyn}[3]{\rdWtsosynn(#1,#2,#3)}
\newcommand{\epsl}{\ensuremath{\epsilon}\xspace}
\newcommand{\crash}{\lightning}
\newcommand{\lMEMF}{\mathtt{M}}%
\newcommand{\id}{\inst{s}}
\newcommand{\ID}{\inst{S}}
\newcommand{\lP}[1]{\inst{\mathtt{Per}{#1}}}%
\newcommand{\lTP}[1]{\inst{\mathtt{Prop}{#1}}}%
\newcommand{\tr}{t}
\newcommand{\trI}{{\makeinst{\tr}}}
\newcommand{\traces}[1]{\mathsf{traces}({#1})}
\newcommand{\otraces}[1]{\mathsf{otraces}({#1})}
\newcommand{\seq}{\mathbin{;}}
\newcommand{\cs}[2]{{{#1}\shortparallel{#2}}{}}
\newcommand{\diffemph}[1]{{\tightshadetext{\ensuremath{#1}}}}
\definecolor{darkturquoise}{rgb}{0.012, 0.502, 0.486}
\definecolor{lilac}{rgb}{0.580, 0.341, 0.922}
\definecolor{StringRed}{rgb}{.637,0.082,0.082}
\definecolor{CommentGreen}{rgb}{0.0,0.55,0.3}
\definecolor{KeywordBlue}{rgb}{0.0,0.3,0.55}
\definecolor{LinkColor}{rgb}{0.55,0.0,0.3}
\definecolor{CiteColor}{rgb}{0.55,0.0,0.3}
\definecolor{HighlightColor}{rgb}{0.0,0.0,0.0}
\definecolor{grey}{rgb}{0.5,0.5,0.5}
\definecolor{darkgrey}{rgb}{0.4,0.4,0.4}
\definecolor{red}{rgb}{1,0,0}
\definecolor{darkgreen}{rgb}{0.0,0.7,0.0}
\definecolor{mydarkgreen}{rgb}{0.0,0.3,0.0}
\definecolor{darkblue}{rgb}{0.0,0.0,0.5}
\definecolor{darkred}{rgb}{0.7,0.0,0.0}
\definecolor{mygrey}{rgb}{0.7, 0.7, 0.7}
\definecolor{commentgreen}{rgb}{0, 0.3, 0}
\definecolor{darkred}{rgb}{0.5, 0, 0}
\definecolor{nicerhighlightcolor}{HTML}{F0E0F0}
\definecolor{colorPROMOTED}{rgb}{0.906, 0.161, 0.541}
\definecolor{phlightcolor}{rgb}{0.45,0.95,0.78}
\newcommand{\tightshadetext}[2][nicerhighlightcolor]{\setlength{\fboxsep}{0pt}\colorbox{#1}{#2}}
\newcommand{\hilight}[1]{\setlength{\fboxsep}{1pt}\colorbox{yellow}{#1}}
\newcommand{\hlitem}{\stepcounter{enumi}\item[\hilight{\theenumi}]}
\newcommand{\erase}{\Lambda}
\newcommand{\persist}[2][phlightcolor]{\setlength{\fboxsep}{1pt}\colorbox{#1}{\ensuremath{#2}}}
\edef\myindent{\the\parindent}%
\newcommand{\cp}{L}
\newcommand{\csf}{T}
\newcommand{\textflush}{flush\xspace}
\newcommand{\textflushopt}{flush-optimal\xspace}
\newcommand{\textsfence}{sfence\xspace}
\newcommand{\myhrule}{{\color{lightgray}\hrule}}
\newcommand{\suffix}[2]{\mathsf{suffix}_#1(#2)}
\begin{document}

\title{Taming x86-TSO Persistency (Extended Version)}
\author{Artem Khyzha}
\affiliation{
  \institution{Tel Aviv University}            %
  \country{Israel}                    %
}
\email{artkhyzha@mail.tau.ac.il}          %

\author{Ori Lahav}
\affiliation{
  \institution{Tel Aviv University}            %
  \country{Israel}                    %
}
\email{orilahav@tau.ac.il}          %

\begin{abstract}
We study the formal semantics of non-volatile memory in the x86-TSO architecture.
We show that while the explicit persist operations in the recent model of Raad~et~al. from POPL'20
only enforce order between writes to the non-volatile memory,
it is equivalent, in terms of reachable states,
to a model whose explicit persist operations mandate that prior writes are actually written to the non-volatile memory.
The latter provides a novel model that is much closer to common developers' understanding of persistency semantics.
We further introduce a simpler and stronger sequentially consistent persistency model,
develop a sound mapping from this model to x86,
and establish a data-race-freedom guarantee providing programmers with a safe programming discipline.
Our operational models are accompanied with equivalent declarative formulations,
which facilitate our formal arguments, and may prove useful for program verification under x86 persistency.
\end{abstract}

\begin{CCSXML}
<ccs2012>
   <concept>
       <concept_id>10010520.10010521.10010528.10010536</concept_id>
       <concept_desc>Computer systems organization~Multicore architectures</concept_desc>
       <concept_significance>300</concept_significance>
       </concept>
   <concept>
       <concept_id>10003752.10003753.10003761</concept_id>
       <concept_desc>Theory of computation~Concurrency</concept_desc>
       <concept_significance>300</concept_significance>
       </concept>
   <concept>
       <concept_id>10003752.10010124.10010131</concept_id>
       <concept_desc>Theory of computation~Program semantics</concept_desc>
       <concept_significance>300</concept_significance>
       </concept>
 </ccs2012>
\end{CCSXML}

\ccsdesc[300]{Computer systems organization~Multicore architectures}
\ccsdesc[300]{Software and its engineering~Semantics}
\ccsdesc[300]{Theory of computation~Concurrency}
\ccsdesc[300]{Theory of computation~Program semantics}
\keywords{persistency, non-volatile memory, x86-TSO, weak memory models, concurrency}  %

\maketitle
\section{Introduction}
\label{sec:intro}

Non-volatile memory  (\aka persistent memory) preserves its contents in case of a system failure
and thus allows the implementation of crash-safe systems. 
On new Intel machines non-volatile memory coexists with standard (volatile) memory.
Their performance are largely comparable,
and it is believed that non-volatile memory may replace standard memory in the future~\cite{pelley-persistency}.
Nevertheless, in all modern machines, writes are not performed directly to memory,
and the caches in between the CPU and the memory 
are expected to remain volatile (losing their contents upon a crash)~\cite{persistent-lin}.
Thus, writes may propagate to the non-volatile memory later than the time they were issued by the processor, 
and possibly not even in the order in which they were issued,
which may easily compromise the system's ability to recover to a consistent state upon a failure~\cite{tr-hp}.
This complexity, which, for concurrent programs, 
comes on top of the complexity of the memory consistency model,
results in counterintuitive behaviors, and 
makes the programming on such machines very challenging.

As history has shown for consistency models in multicore systems,
having formal semantics of the underlying persistency model is a paramount precondition
for understanding such intricate systems, 
as well as for programming and reasoning about programs under such systems,
and for mapping (\ie compiling) from one model to another.

The starting point for this paper is the recent work of~\citet{pxes-popl} that
in extensive collaboration with engineers at Intel formalized an extension of
the x86-TSO memory model of~\citet{x86-tso} to account for Intel-x86
persistency semantics~\cite{intel-manual}. Roughly speaking, in order to
formally justify certain outcomes that are possible after crash but can never be
observed in normal (non-crashing) executions, their model, called \PTSO,
employs two levels of buffers---per thread store buffers and a global
persistence buffer sitting between the store buffers and the non-volatile
memory. 

There are, however, significant gaps between the \PTSO model 
and developers and researchers' common (often informal) understanding of persistent memory systems.

\textbf{First,}
\PTSO's explicit persist instructions are \emph{``asynchronous''}.
These are instructions that allow different levels of
control over how writes persist (\ie propagate to the non-volatile memory):
\emph{\textflush} instructions for persisting single cache lines 
and more efficient \emph{\textflushopt} instructions that require
a following store fence (\emph{\textsfence}) to ensure their completion.
In \PTSO these instructions are asynchronous: 
propagating these instructions from the store buffer (making them globally visible)
does not block until certain writes persist,
but rather enforces restrictions on the order in which writes persist.
For example, rather then guaranteeing that a certain cache line has to persist when \textflush is propagated from the store buffer,
it only ensures that prior writes to that cache line 
must persist before any subsequent writes
(under some appropriate definition of ``prior'' and ``subsequent'').
Similarly, \PTSO's \textsfence instructions provide such guarantees for \textflushopt instructions executed before the \textsfence,
but does not ensure that any cache line actually persisted.
In fact, for any program under \PTSO, it is always possible that writes do not persist at all---the system may always crash with 
the contents of the very initial non-volatile memory.

We observe that \PTSO's asynchronous explicit persist instructions
lie in sharp contrast with a variety of previous work and developers' guides,
ranging from theory to practice, that assumed, sometimes implicitly, 
\emph{``synchronous''} explicit persist instructions that allow the programmer
to assert that certain write must have persisted at certain program points 
(\eg \cite{persist-ordering,persistent-lin,arp,sfr,herlihy-persistent-queue,durable-sets,Friedman2020,Wang18,Tudor18,pmem,
vldb20-liu,vldb15-chen,fast11-Venkataraman,fast15-yang,sigmod16-oukid,vldb19-lersch,vldb18-arulraj}).
For example, \citet{persistent-lin}'s {\tt psync} instruction blocks until all previous 
explicit persist institutions ``have actually reached persistent memory'',
but such instruction cannot be implemented in \PTSO.

\textbf{Second,}
the store buffers of \PTSO are not standard first-in-first-out (FIFO) buffers.
In addition to pending writes, as in usual TSO store buffers, store buffers of \PTSO include pending explicit persist instructions.
While pending writes preserve their order in the store buffers, 
the order involving the pending persist instructions is not necessarily maintained.
For example, a pending \textflushopt instruction may propagate from the store buffer after a
pending write also in case that the \textflushopt instruction was issued by the processor \emph{before} the write. 
Indeed, without this (and similar) out-of-order propagation steps, \PTSO becomes too strong so it forbids certain observable behaviors.
We find the exact conditions on the store buffers propagation order to be rather intricate,
making manual reasoning about possible outcomes rather cumbersome.

\textbf{Third,}
\PTSO lacks a formal connection to an SC-based model.
Developers often prefer sequentially consistent concurrency semantics (SC).
They may trust a compiler to place sufficient (preferably not excessive) barriers for ensuring SC
when programming against an underlying relaxed memory model,
or rely on a data-race-freedom guarantee (DRF) ensuring
that well synchronized programs cannot expose weak memory behaviors.
However, it is unclear how to derive a simpler well-behaved SC persistency model from \PTSO.
The straightforward solution of discarding the store buffers from the model,
thus creating direct links between the processors and the persistence buffer, is senseless for \PTSO.
Indeed, if applied to \PTSO, it would result in an overly strong semantics, which, in particular, completely identifies 
the two kinds of explicit persist instructions (``flush'' and ``flush-optimal''), since the difference between them 
in \PTSO emerges solely from propagation restrictions from the store buffers.
In fact, in \PTSO, even certain behaviors of \emph{single threaded} programs
can be only accounted for by the effect of the store buffer.

\textsl{Does this mean that the data structures, algorithms, and principled approaches
developed before having the formal \PTSO model are futile \wrt \PTSO?}
The main goal of the current paper is to bridge the gap between \PTSO and 
developers and researchers' common understanding,
and establish a negative answer to this question.

Our first contribution is an alternative x86-TSO operational persistency model
that is provably equivalent to \PTSO, and is closer, to the best of our understanding,
to developers' mental model of x86 persistency.
Our model, which we call \PTSOsynnn, has \emph{synchronous} 
explicit persist instructions, which, when they are propagated from the store buffer, 
do block the execution until certain writes persist.
(In the case of \textflushopt, the subsequent \textsfence instruction is the one blocking.)
Out-of-order propagation from the store buffers is also significantly confined in our \PTSOsynnn model
(but not avoided altogether, see~\cref{ex:fo-ooo}).
In addition, \PTSOsynnn employs per-cache-line persistence FIFO buffers,
which, we believe, are reflecting the guarantees on the persistence order of writes more directly
than the persistence (non-FIFO) buffer of \PTSO.
(This is not a mere technicality, due to the way explicit persist instructions are handled in \PTSO,
its persistence buffer has to include pending writes of all cache-lines.) 

The equivalence notion we use to relate \PTSO and \PTSOsynnn is state-based:
it deems two models equivalent if the set of reachable program states
(possibly with crashes) in the models coincide. Since a program may always
start by inspecting the memory, this equivalence notion is sufficiently strong
to ensure that every content of the non-volatile memory after a crash that is
observable in one model is also observable in the other. 
Roughly speaking, our equivalence argument builds on the
intuition that crashing before an asynchronous flush instruction completes is
observationally indistinguishable from crashing before a synchronous flush instruction
propagates from the store buffer. Making this intuition into a proof and
applying it for the full model including both kinds of explicit persist
instructions is technically challenging (we use two additional intermediate
systems between \PTSO and \PTSOsynnn).
 
Our second contribution is an SC persistency model that is formally related to our TSO persistency model.
The SC model, which we call \PSC, is naturally obtained by discarding the store buffers in \PTSOsynnn.
Unlike for \PTSO, the resulting model, to our best understanding, precisely captures the developers' understanding.
In particular, the difficulties described above for \PTSO are addressed by \PTSOsynnn:
even without store buffers the different kinds of explicit persist instructions (flush and \textflushopt)
have different semantics in \PTSOsynnn, and store buffers are never needed in single threaded programs.

We establish two results relating \PSC and \PTSOsynnn.
The first is a sound mapping from \PSC to \PTSOsynnn, intended to be used as a compilation scheme
that ensures simpler and more well-behaved semantics on x86 machines.
This mapping extends the standard mapping of SC to TSO: in addition to placing a memory fence (mfence)
between writes and subsequent reads to different locations,
it also places store fences (sfence) between writes and subsequent \textflushopt instructions to different locations
(the latter is only required when there is no intervening write or read operation between the write and the \textflushopt,
thus allowing a barrier-free compilation of standard uses of \textflushopt).
The second result is a DRF-guarantee for \PTSOsynnn \wrt \PSC.
This guarantee ensures \PSC-semantics for programs that are race-free \emph{under \PSC semantics},
and thus provide a safe programming discipline against \PTSOsynnn that can be followed without even knowing \PTSOsynnn.
To achieve this, the standard notion of a data race is extend to include races between \textflushopt instructions and writes.
We note that following our precise definition of a data race,
RMW (atomic read-modify-writes) instructions do not induce races,
so that with a standard lock implementation, 
properly locked programs (using locks to avoid data races) are not considered racy.
In fact, both of the mapping of \PSC to \PTSOsynnn and the DRF-guarantee
are corollaries of a stronger and more precise theorem relating \PSC and \PTSOsynnn (see~\cref{thm:drf}).

Finally, as a by-product of our work, we provide declarative (\aka axiomatic) formulations for \PTSOsynnn and \PSC
(which we have used for formally relating them).
Our \PTSOsynnn declarative model is more abstract than one in~\cite{pxes-popl}.
In particular, its execution graphs do not record 
total persistence order on so-called ``durable'' events (the ‘non-volatile-order’ of~\cite{pxes-popl}).
Instead, execution graphs are accompanied a mapping that assigns to every location the latest persisted write to that location.
From that mapping, we derive an additional partial order on events that is used in our acyclicity consistency constraints.
We believe that, by avoiding the existential quantification on all possible persistence orders,
our declarative presentation of the persistency model may lend itself more easily 
to automatic verification using execution graphs, \eg in the style of~\cite{rcmc,abdulla2018optimal}.

\subsubsection*{Outline.}
The rest of this paper is organized as follows.
In \cref{sec:operational} we present our general formal framework for operational persistency models.
In \cref{sec:ptso} we present \citet{pxes-popl}'s \PTSO persistency model.
In \cref{sec:ptsosynnn} we introduce \PTSOsynnn and outline the proof of equivalence of \PTSOsynnn and \PTSO.
In \cref{sec:dec} we present our declarative formulation of \PTSOsynnn and relate it to the operational semantics.
In \cref{sec:psc} we present the persistency SC-model derived from \PTSOsynnn, as well as its declarative formulation.
In \cref{sec:ptso_psc} we use the declarative semantics to formally relate \PTSO and \PTSOsynnn.
In \cref{sec:related} we present the related work and conclude.

\subsubsection*{Additional Material.}
Proofs of the theorems in the paper are given in the
its accompanying technical appendix.%

\section{An Operational Framework for Persistency Specifications}
\label{sec:operational}

In this section we present our general framework for defining operational persistency models.
As standard in weak memory semantics, the operational semantics is obtained by synchronizing
a program (\aka \emph{thread subsystem}) and a memory subsystem (\aka \emph{storage subsystem}).
The novelty lies in the definition of \emph{persistent} memory subsystems whose states
have distinguished non-volatile components.
When running a program under a persistent memory subsystem,
we include non-deterministic ``full system'' crash transitions 
that initialize all \emph{volatile} parts of the state.

We start with some notational preliminaries (\cref{sec:preliminaries}),
briefly discuss program semantics  (\cref{sec:programs}),
and then define persistent memory subsystems
and their synchronization with programs (\cref{sec:systems}).

\subsection{Preliminaries}
\label{sec:preliminaries}

\paragraph{Sequences.}
For a finite alphabet $\Sigma$, we denote by $\Sigma^*$ (respectively, $\Sigma^+$)
the set of all sequences (non-empty sequences) over $\Sigma$.
We use $\epsilon$ to denote the empty sequence.
The length of a sequence $s$ %
is denoted by $\size{s}$ (in particular $\size{\epsilon} = 0$).
We often identify a sequence $s$ over $\Sigma$ with its underlying 
function in $\set{1 \til \size{s}} \to \Sigma$,
and write $s(k)$ for the symbol at position $1 \leq k \leq \size{s}$ in $s$.
We write $\sigma\in s$ if $\sigma$ appears in $s$, that is if $s(k)=\sigma$ for some $1 \leq k \leq \size{s}$.
We use ``$\cdot$'' for the concatenation of sequences,
which is lifted to concatenation of sets of sequences in the obvious way.
We identify symbols with sequences of length $1$ or their singletons
when needed (\eg in expressions like $\sigma \cdot S$).

\paragraph{Relations.}
Given a %
relation 
$R$, $\dom{R}$ denotes its domain;
and $R^?$, $R^+$, and $R^*$ denote its reflexive, transitive, and reflexive-transitive closures.
The inverse of $R$ is denoted by $R^{-1}$.
The (left) composition of relations $R_1,R_2$ is denoted by $R_1 \seq R_2$.
We assume that $;$ binds tighter than $\cup$ and $\setminus$.
We denote by $[A]$ the identity relation on a set $A$, and so $[A]\seq R\seq [B] = R\cap (A\times B)$.

\paragraph{Labeled transition systems.}
A \emph{labeled transition system} (LTS) $\A$ is a tuple
$\tup{Q,\Sigma,Q_\Init,T}$, where $Q$ is a set of \emph{states}, $\Sigma$ is a
finite \emph{alphabet} (whose symbols are called \emph{transition labels}),
$Q_\Init\subseteq Q$ is a set of \emph{initial states}, and $T\suq Q\times \Sigma
\times Q$ is a set of \emph{transitions}. We denote by $\A.\lQ$, $\A.\lSigma$,
$\A.\linit$, and $\A.\lT$ the components of an LTS $\A$. We write
$\asteplab{\sigma}{\A}$ for the relation $\set{\tup{q,q'} \st
\tup{q,\sigma,q'}\in{} \A.\lT}$, and $\asteplab{}{\A}$ for
$\bigcup_{\sigma\in\Sigma} \asteplab{\sigma}{\A}$. For a sequence $\tr \in
\A.\lSigma^*$, we write $\asteplab{\tr}{\A}$ for the composition
$\asteplab{\tr(1)}{\A} \seq \ldots \seq \asteplab{\tr(\size{\tr})}{\A}$.
A~sequence $\tr \in \A.\lSigma^*$ such that $q_\Init \asteplab{\tr}{\A} q$ for
some $q_\Init\in\A.\linit$ and $q\in\A.\lQ$ is called a \emph{trace} of $\A$
(or an \emph{$\A$-trace}). We denote by $\traces{\A}$ the set of all traces of
$\A$. A state $q\in \A.\lQ$ is called \emph{reachable} in $\A$ if $q_\Init
\asteplab{\tr}{\A} q$ for some $q_\Init\in\A.\linit$ and $\tr \in
\traces{\A}$.

\paragraph{Observable traces.}
Given an LTS $\A$, we usually have a distinguished symbol $\epsl$ included in $\A.\lSigma$.
We refer to transitions labeled with $\epsl$ as \emph{silent} transitions,
while the other transition are called \emph{observable} transitions.
For a sequence $\tr \in (\A.\lSigma \setminus \set{\epsl})^*$,
we write $\bsteplab{\tr}{\A}$ for the relation 
$\set{\tup{q,q'} \st q \tcsteplab{\epsl}{\A} \asteplab{\tr(1)}{\A} \tcsteplab{\epsl}{\A} \cdots \tcsteplab{\epsl}{\A} \asteplab{\tr(\size{\tr})}{\A} \tcsteplab{\epsl}{\A} q'}$.
A sequence $\tr \in (\A.\lSigma \setminus \set{\epsl})^*$ such that $q_\Init \bsteplab{\tr}{\A} q$ for some $q_\Init\in\A.\linit$ and $q\in\A.\lQ$
is called an \emph{observable trace} of $\A$  (or an \emph{$\A$-observable-trace}).
We denote by $\otraces{\A}$ the set of all observable traces of $\A$.

\subsection{Concurrent Programs Representation}
\label{sec:programs}

To keep the presentation abstract, we do not provide here a concrete programming language,
but rather represent programs as LTSs.
For this matter, we let $\Val\suq \N$, $\Loc\suq \set{\cloc{1},\cloc{2},\ldots}$, and $\Tid\suq\set{\ctid{1},\ctid{2}\til \ctid{N}}$,
be sets of \emph{values}, (shared) memory \emph{locations}, and \emph{thread identifiers}.
We assume that $\Val$ contains a distinguished value $0$, used as the initial value for all locations.

Sequential programs are identified with LTSs whose
transition labels are \emph{event labels}, extended with $\epsl$ for silent program transitions, as defined next.\footnote{In 
our examples we use a standard program syntax and assume a standard reading of programs as LTSs.
To assist the reader, \cref{app:syntax} provides a concrete example of how this can be done.}

\begin{definition}
\label{def:label}
An \emph{event label} is either
	a \emph{read label} $\rlab{\loc}{\val_\lR}$,
	a \emph{write label} $\wlab{\loc}{\val_\lW}$,
	a \emph{read-modify-write (RMW) label} $\ulab{\loc}{\val_\lR}{\val_\lW}$,
	a \emph{failed compare-and-swap (CAS) label} $\rexlab{\loc}{\val_\lR}$,
	an \emph{mfence label} $\mflab$,
	a \emph{flush label} $\fllab{\loc}$,
	a \emph{flush-opt label} $\folab{\loc}$, or
	an \emph{sfence label} $\sflab$,
where $\loc\in \Loc$ and $\val_\lR,\val_\lW\in \Val$.
We denote by $\Lab$ the set of all event labels. The functions
$\lTYP$, $\lLOC$, $\lVALR$, and $\lVALW$ retrieve (when applicable) the type
($\lR/\lW/\lU/\lRex/\lMF/\lFL/\lFO/\lSF$), location ($\loc$), read value
($\val_\lR$), and written value ($\val_\lW$) of an event label.
\end{definition}

Event labels correspond to the different interactions that a program may have with the persistent memory subsystem.
In particular, we have several types of barrier labels: a memory fence ($\mflab$),
a persistency per-location flush barrier ($\fllab{\loc}$), an optimized persistency per-location flush barrier, called ``flush-optimal'' ($\folab{\loc}$),
and a store fence ($\sflab$).\footnote{In \cite{intel-manual},
\textflush is referred to as {CLFLUSH}, \textflushopt is referred to as
CLFLUSHOPT. 
Intel's {CLWB} instruction is equivalent to {CLFLUSHOPT} and may improve performance in certain cases~\cite{pxes-popl}.}
Roughly speaking, memory fences ($\mflab$) ensure the completion of all prior instructions,
while store fences ($\sflab$) ensure that prior flush-optimal instructions have taken their effect.
Memory access labels include plain reads and writes, as well as RMWs ($\ulab{\loc}{\val_\lR}{\val_\lW}$)
resulting from operations like compare-and-swap (CAS) and fetch-and-add.
For failed CAS (a CAS that did not read the expected value)
we use a special read label $\rexlab{\loc}{\val_\lR}$, which allows us
to distinguish such transitions from plain reads and provide them with stronger semantics.\footnote{Some previous
work, \eg~\cite{pxes-popl,sra}, 
consider failed RMWs (arising from \texttt{lock cmpxchg} instructions) as plain reads,
although failed RMWs induce a memory fence in TSO.}
We note that our event labels are specific for the x86 persistency,
but they can be easily extended and adapted for other models.

In turn, a (concurrent) program $\prog$ is a top-level parallel composition of sequential programs,
defined as a mapping assigning a sequential program to every $\tid\in\Tid$.
A program $\prog$ is also identified with an LTS, which is obtained by standard lifting of
the LTSs representing its component sequential programs. 
The transition labels of this LTS record the thread identifier of non-silent transitions, as defined next.

\begin{definition}
\label{def:pt-label}
A \emph{program transition label} is either
 $\tidlab{\tid}{\lab}$ for $\tid\in\Tid$ and $\lab\in\Lab$ (\emph{observable transition})
 or $\epsl$  (\emph{silent transition}).
 We denote by $\PTLab$ the set of all program transition labels.
 We use the function $\lTID$ and $\lLAB$ to return the thread identifier ($\tid$)
and event label $\lab$ of a given transition label (when applicable).
The functions $\lTYP$, $\lLOC$, $\lVALR$, and $\lVALW$  are lifted to transition labels in the obvious way
(undefined for $\epsl$-transitions).
\end{definition}

The LTS induced by a (concurrent) program $\prog$ is over the alphabet $\PTLab$;
its states are functions, denoted by $\progstate$, assigning a state in $\prog(\tid).\lQ$ to every $\tid\in\Tid$;
its initial states set is $\prod_\tid \prog(\tid).\linit$;
and its transitions are ``interleaved transitions'' of $\prog$'s components, given by:
\begin{mathpar}
\inferrule*{
\lab \in \Lab \\
\progstate(\tid) \asteplab{\lab}{\prog(\tid)} \sprogstate'
}{\progstate \asteptidlab{\tid}{\lab}{\prog} \progstate[\tid \mapsto \sprogstate']}
\and
\inferrule*{
\progstate(\tid) \asteplab{\epsl}{\prog(\tid)} \sprogstate'
}{\progstate \asteplab{\epsl}{\prog} \progstate[\tid \mapsto \sprogstate']}
\end{mathpar}

We refer to sequences over $\PTLab \setminus \set{\epsl}=\Tid\times\Lab$ as \emph{observable program traces}. 
Clearly, observable program traces are closed under ``per-thread prefixes'':

\begin{definition}
\label{def:per-thread}
We denote by $\tr\rst{\tid}$ the restriction of 
an observable program trace $\tr$ to transition labels of the form $\tidlab{\tid}{\_}$.
An observable program trace $\tr'$ is \emph{per-thread equivalent} to an observable program trace $\tr$, 
denoted by $\tr' \sim \tr$, if $\tr'\rst{\tid}=\tr\rst{\tid}$ for every $\tid\in\Tid$.
In turn, $\tr'$ is a \emph{per-thread prefix} of $\tr$, denoted by $\tr' \lesssim \tr$,
if $\tr'$ is a (possibly trivial) prefix of some $\tr'' \sim \tr$
(equivalently, $\tr'\rst{\tid}$ is a prefix of $\tr\rst{\tid}$ for every $\tid\in\Tid$).
\end{definition}

\begin{proposition}\label{prop:bigstep-prefix}
If $\tr$ is a $\prog$-observable-trace, then so is every $\tr' \lesssim \tr$.
\end{proposition}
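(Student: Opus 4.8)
The plan is to prove this by induction on the length of $\tr$, using the characterization of $\lesssim$ in terms of prefixes of per-thread restrictions. First I would handle the general statement by reducing it to two simpler closure properties of the set of $\prog$-observable-traces: (i) closure under ordinary prefixes, and (ii) closure under the per-thread equivalence $\sim$. Indeed, by definition $\tr' \lesssim \tr$ means there is some $\tr'' \sim \tr$ such that $\tr'$ is a prefix of $\tr''$; so if I can show that $\tr''$ is a $\prog$-observable-trace (using (ii)) and that any prefix of a $\prog$-observable-trace is again one (using (i)), I am done.

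For part (i), closure under prefixes, the argument is immediate from the definition of $\bsteplab{\tr}{\prog}$: if $q_\Init \bsteplab{\tr}{\prog} q$, then unfolding the composition $\tcsteplab{\epsl}{\prog} \asteplab{\tr(1)}{\prog} \cdots \asteplab{\tr(\size{\tr})}{\prog} \tcsteplab{\epsl}{\prog}$, every intermediate state reached after processing an initial segment $\tr(1)\cdots\tr(k)$ (followed by a maximal block of $\epsl$-steps, which always exists since $\tcsteplab{\epsl}{\prog}$ is reflexive) witnesses that $\tr(1)\cdots\tr(k)$ is a $\prog$-observable-trace.

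For part (ii), closure under $\sim$, I would induct on $\size{\tr}$ and analyze the last transition label $\tidlab{\tid}{\lab} = \tr(\size{\tr})$. The key observation is that the program LTS is an interleaving of independent per-thread LTSs: a transition of thread $\tid$ only reads and updates the $\tid$-component of the state function $\progstate$, and commutes with every transition (silent or observable) of any other thread $\tida \neq \tid$. So given $\tr'' \sim \tr$, I can transform a run witnessing $\tr$ into a run witnessing $\tr''$ by repeatedly swapping adjacent transitions of distinct threads — a standard "bubble-sort"/trace-rearrangement argument — since per-thread equivalence means exactly that the two traces agree after projecting to each thread, i.e. they differ only by interleaving order. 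Formally this is cleanest as an induction where I peel off the first transition label of $\tr''$, say $\tidlab{\tida}{\lab'}$; it is the first $\tida$-labeled transition of $\tr$ as well (by $\tr''\rst{\tida} = \tr\rst{\tida}$), and using commutation I move the corresponding step to the front of the run for $\tr$, then appeal to the induction hypothesis on the remaining shorter trace.

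The main obstacle I expect is carefully formalizing the commutation/rearrangement lemma in the presence of $\epsl$-transitions and the $\bsteplab{}{}{}$ big-step notation: one has to make sure that the $\epsl$-blocks interspersed in the run for $\tr$ can be redistributed appropriately when reordering observable steps, and that silent steps of thread $\tida$ occurring before its first observable step can likewise be commuted to the front. This is routine but fiddly; the cleanest route is to prove a small-step commutation lemma — if $\progstate \asteplab{\alpha}{\prog} \asteplab{\beta}{\prog} \progstate'$ with the threads of $\alpha$ and $\beta$ distinct (treating $\epsl$ as a step of some thread determined by which component it updates), then $\progstate \asteplab{\beta}{\prog} \asteplab{\alpha}{\prog} \progstate'$ — and then lift it to sequences. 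Everything else is bookkeeping.
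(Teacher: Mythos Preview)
The paper states this proposition without proof, treating it as folklore about interleaving semantics. Your approach is correct and is exactly the standard argument: decompose $\lesssim$ into $\sim$ followed by prefix, then use that per-thread transitions (including $\epsl$-steps, which by the second rule also act on a single component $\progstate(\tid)$) commute with transitions of other threads.
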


\subsection{Persistent Systems}
\label{sec:systems}

At the program level, the read values are arbitrary.
It is the responsibility of the memory subsystem to specify what values can be read 
from each location at each point.
Formally, the memory subsystem is another LTS over $\PTLab$,
whose synchronization with the program gives us the possible behaviors of the whole system.
For persistent memory subsystems, we require that each memory state is composed
of a persistent memory $\Loc \to \Val$, which survived the crash,
and a volatile part, whose exact structure varies from one system to another
(\eg TSO-based models will have store buffers in the volatile part and SC-based systems will not).

\begin{definition}
\label{def:pms}
A \emph{persistent memory subsystem} is an LTS $\M$ 
that satisfies the following:
\begin{itemize}
\item $\M.\lSigma=\PTLab$.
\item $\M.\lQ=  (\Loc \to \Val) \times \tilde{Q}$ where $\tilde{Q}$ is some set.
We denote by $\M.\lvQ$ the particular set $\tilde{Q}$ used in a persistent memory subsystem $\M$.
We usually denote states in $\M.\lQ$ as $q=\tup{\mem,\vmem}$, 
where the two components ($\mem$ and $\vmem$) of a state $q$ are respectively called the \emph{non-volatile state}
and the \emph{volatile state}.%
\footnote{When the elements of $\M.\lvQ$ 
are tuples themselves, we often simplify the writing by flattening the states, 
\eg $\tup{\mem,\alpha,\beta}$ instead of $\tup{\mem,\tup{\alpha,\beta}}$.}
\item $\M.\linit = (\Loc \to \Val) \times \tilde{Q}_\Init$ 
where $\tilde{Q}_\Init$ is some subset of $\M.\lvQ$.
We denote by $\M.\lvinit$ the particular set $\tilde{Q}_\Init$ used in a persistent memory subsystem $\M$.
\end{itemize}
\end{definition}

In the systems defined below, the non-volatile states in $\M.\lvQ$ consists a multiple buffers
(store buffers and persistence buffers) that lose their contents upon crash.
The transition labels of a persistent memory subsystem are pairs in $\Tid\times\Lab$, representing the
thread identifier and the event label of the operation, or $\epsl$ for internal (silent) memory actions 
(\eg propagation from the store buffers).
We note that, given the requirements of \cref{def:pms}, to define a persistent memory subsystem $\M$
it suffices to give its sets $\M.\lvQ$ and $\M.\lvinit$ of volatile states and initial volatile states,
and its transition relation.

By synchronizing a program $\prog$ and a persistent memory subsystem $\M$, 
and including non-deterministic crash transitions (labeled with $\crash$), we
obtain a \emph{persistent system}, which we denote by $\cs{\prog}{\M}$:

\begin{definition}
\label{def:concurrent_system}
A program $\prog$ and a persistent memory subsystem $\M$
form a \emph{persistent system}, denoted by $\cs{\prog}{\M}$.
It is an LTS over the alphabet
$\PTLab \cup \set{\crash}$ 
whose set of states is $\prog.\lQ \times (\Loc \to \Val) \times \M.\lvQ$;
its initial states set is $\prog.\linit \times \set{\mem_\Init} \times \M.\lvinit$,
where $\mem_\Init = \lambda \loc\in\Loc.\; 0$;
and its transitions are ``synchronized transitions'' of $\prog$ and $\M$,
given by:
\begin{mathpar}
\inferrule*{
\progstate {\asteptidlab{\tid}{\lab}{\prog}} \progstate' \\
\tup{\mem,\vmem} {\asteptidlab{\tid}{\lab}{\M}} \tup{\mem',\vmem'}
}{\tup{\progstate,{\mem,\vmem}} {\asteptidlab{\tid}{\lab}{\cs{\prog}{\M}}} \tup{\progstate',{\mem',\vmem'}}}
\and
\inferrule*{
\progstate {\asteplab{\epsl}{\prog}} \progstate'
}{\tup{\progstate,{\mem,\vmem}} {\asteplab{\epsl\vphantom{\lab}}{\cs{\prog}{\M}}} \tup{\progstate',{\mem,\vmem}}}
\and
\inferrule*{
\tup{\mem,\vmem} \asteplab{\epsl}\M \tup{\mem',\vmem'}
}{\tup{\progstate,{\mem,\vmem}} \asteplab{\epsl\vphantom{\lab}}{\cs{\prog}{\M}} \tup{\progstate,{\mem',\vmem'}}}
\and
\inferrule*{
\progstate_\Init\in\prog.\linit \\ \vmem_\Init \in \M.\lvinit
}{\tup{\progstate,{\mem,\vmem}} \asteplab{\crash\vphantom{\lab}}{\cs{\prog}{\M}} \tup{\progstate_\Init,{\mem,\vmem_\Init}}}
\end{mathpar}
\end{definition}

Crash transitions reinitialize the program state $\progstate$
(which corresponds to losing the program counter and the local stores)
and the volatile component of the memory state $\vmem$.
The persistent memory $\mem$ is left intact.

Given the above definition of persistent system, we can define the set of reachable program states
under a given persistent memory subsystem. 
Focused on safety properties, we use this notion to
define when one persistent memory subsystem observationally refines another.

\begin{definition}
\label{def:reachable}
A program state $\progstate\in\prog.\lQ$ is \emph{reachable under a persistent
memory subsystem} $\M$ if $\tup{\progstate,{\mem,\vmem}}$ is reachable in
$\cs{\prog}{\M}$ for some $\tup{\mem, \vmem} \in \M.\lQ$.
\end{definition}

\begin{definition}
\label{def:memory_refine}
A persistent memory subsystem $\M_1$ \emph{observationally refines} a persistent memory subsystem $\M_2$
if for every program $\prog$, 
every program state $\progstate \in \prog.\lQ$ that is reachable under $\M_1$
is also reachable under $\M_2$.
We say that $\M_1$ and $\M_2$ are \emph{observationally equivalent}
if $\M_1$ observationally refines $\M_2$
and $\M_2$ observationally refines $\M_1$.
\end{definition}

While the above refinement notion refers to reachable program states,
it is also applicable for the reachable non-volatile memories.
Indeed, a program may always start by asserting certain conditions reflecting the fact that the memory is in certain consistent state
(which usually vacuously hold for the very initial memory $\mem_\Init$), thus capturing the state of the non-volatile memory
in the program state itself.

\begin{remark}
Our notions of observational refinement and equivalence above are state-based.
This is standard in formalizations of weak memory models, intended to support reasoning about safety properties
(\eg detect program assertion violations). 
In particular, if $\M_1$ observationally refines $\M_2$, the developer may safely assume $\M_2$'s semantics
when reasoning about reachable non-volatile memories under $\M_1$.
We note that a more refined %
notion of observation
in a richer language, \eg with I/O side-effects,
may expose behaviors of $\M_1$ that are not observable in $\M_2$
even when $\M_1$ and $\M_2$ are observationally equivalent according to the definition above.
\end{remark}

The following lemma allows us to establish refinements 
without considering \emph{all programs} and \emph{crashes}.

\begin{definition}
\label{lem:initialized_trace}
An observable trace $\tr$ of a persistent memory subsystem $\M$
is called \emph{$\mem_0$-to-$\mem$}
if $\tup{\mem_0, \vmem_\Init} \bsteplab{\tr}{\M} \tup{\mem, \vmem}$ for some $\vmem_\Init\in\M.\lvinit$ and $\vmem\in\M.\lvQ$.
Furthermore, $\tr$ is called \emph{$\mem_0$-initialized}
if it is $\mem_0$-to-$\mem$ for some $\mem$.
\end{definition}

\begin{restatable}{lemma}{lemmamemoryrefine}\label{lem:memory_refine}
The following conditions together ensure that 
a persistent memory subsystem $\M_1$ observationally refines a persistent memory subsystem $\M_2$:
\begin{enumerate}
\item[(i)] Every $\mem_0$-initialized $\M_1$-observable-trace is also an $\mem_0$-initialized $\M_2$-observable-trace.

\item[(ii)] For every $\mem_0$-to-$\mem$ $\M_1$-observable-trace $\tr_1$,
some $\tr_2 \lesssim \tr_1$ is an $\mem_0$-to-$\mem$ $\M_2$-observable-trace.
\end{enumerate}
\end{restatable}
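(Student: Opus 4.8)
The plan is to lift the memory-only hypotheses~(i) and~(ii) to the whole-system statement of \cref{def:memory_refine} by taking an arbitrary terminating run of $\cs{\prog}{\M_1}$ and replaying it, epoch by epoch, under $\M_2$. Fix a program $\prog$ and a state $\progstate \in \prog.\lQ$ that is reachable under $\M_1$, witnessed by a run $\rho$ of $\cs{\prog}{\M_1}$ from an initial state to some $\tup{\progstate,\mem,\vmem}$.

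First I would cut $\rho$ at its $\crash$-transitions into $n \geq 1$ consecutive crash-free segments (``epochs''). Since a $\crash$-transition resets the program to a state in $\prog.\linit$ and the volatile memory to a state in $\M_1.\lvinit$ while leaving the non-volatile memory intact, writing $\mem_0 = \mem_\Init$ and letting $\mem_i$ be the non-volatile memory at the end of the $i$-th epoch (so $\mem_n = \mem$), the $i$-th epoch begins in a state of the form $\tup{\progstate_0,\mem_{i-1},\vmem_0}$ with $\progstate_0 \in \prog.\linit$ and $\vmem_0 \in \M_1.\lvinit$ (the global initial state when $i = 1$). Projecting the $i$-th epoch onto the program component and onto the memory component and collapsing $\epsl$-steps yields a single observable label sequence $\tr_i \in (\Tid\times\Lab)^*$ that is simultaneously a $\prog$-observable-trace and an $\mem_{i-1}$-to-$\mem_i$ $\M_1$-observable-trace; moreover, for $i = n$, the program side of this trace drives the program from an initial state to $\progstate$. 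This extraction is the delicate step (see below).

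Given the $\tr_i$, I would build the $\M_2$-run. For $1 \leq i < n$, apply~(ii) to $\tr_i$, with its parameter ``$\mem_0$'' instantiated by $\mem_{i-1}$, to get $\tr_i' \lesssim \tr_i$ that is an $\mem_{i-1}$-to-$\mem_i$ $\M_2$-observable-trace; by \cref{prop:bigstep-prefix}, $\tr_i'$ is still a $\prog$-observable-trace. For $i = n$, apply~(i) to the $\mem_{n-1}$-initialized trace $\tr_n$ to obtain that $\tr_n$ itself is an $\mem_{n-1}$-initialized $\M_2$-observable-trace, leaving it unchanged as a $\prog$-observable-trace that ends in $\progstate$. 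I also need the routine fact that a sequence $\tr$ which is both a $\prog$-observable-trace from $\progstate'$ to $\progstate''$ and an $\M$-observable-trace from $\tup{\mem',\vmem'}$ to $\tup{\mem'',\vmem''}$ is a $\cs{\prog}{\M}$-observable-trace from $\tup{\progstate',\mem',\vmem'}$ to $\tup{\progstate'',\mem'',\vmem''}$, proved by interleaving the program's and the memory's $\epsl$-runs around their shared observable steps. Then the required $\M_2$-run is: from an initial state of $\cs{\prog}{\M_2}$ (which has non-volatile memory $\mem_\Init = \mem_0$), run $\tr_1'$; then a $\crash$-transition; then $\tr_2'$; then a $\crash$; $\ldots$; a $\crash$; and finally $\tr_n$. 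Each $\tr_i'$ leaves the non-volatile memory equal to $\mem_i$, and each $\crash$ restores a fresh initial program state and initial volatile state while preserving that memory, so the final epoch legitimately starts from a state $\tup{\progstate_0,\mem_{n-1},\vmem_0}$ with $\progstate_0 \in \prog.\linit$ and $\vmem_0 \in \M_2.\lvinit$, and running $\tr_n$ reaches a state with program component $\progstate$; hence $\progstate$ is reachable under $\M_2$. (For $n = 1$ this degenerates to a single run of $\tr_1 = \tr_n$ from an initial state, using only~(i).)

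The main obstacle is bookkeeping rather than any hard estimate. One has to check carefully that each epoch of the $\M_1$-run really does begin from an \emph{initial} volatile state --- this is exactly what makes the $\tr_i$ bona fide observable traces to which~(i) and~(ii) apply --- and that the ``$\mem_0$''-parameter of~(i)/(ii) is instantiated with the correct crash-point memory $\mem_{i-1}$. The key design point to articulate is why the weaker prefix-only guarantee~(ii) suffices for the intermediate epochs --- there only the resulting non-volatile memory matters, and it must equal $\mem_i$ so that the next $\crash$ lines up --- whereas it is \emph{not} enough for the last epoch, where a strict per-thread prefix of $\tr_n$ might stop short of $\progstate$; this gap is precisely why the stronger hypothesis~(i) is assumed alongside~(ii).
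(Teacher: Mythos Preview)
Your proposal is correct and follows essentially the same approach as the paper's proof: split the $\M_1$-run at its $\crash$-transitions, project each crash-free epoch to a common observable trace via the product decomposition (the ``routine fact'' you spell out is exactly \cref{prop:bigstep}), apply~(ii) to each non-final epoch and~(i) to the final one, invoke \cref{prop:bigstep-prefix} for the per-thread prefixes, and reassemble with crashes in between. The only cosmetic difference is that the paper indexes epochs from $0$ to $n$ rather than $1$ to $n$.
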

\begin{proof}[Proof (outline)]
Consider any program state $\progstate$ reachable under $\M_1$ with a trace $\tr = \tr_0 \cdot \crash \cdot
\tr_1 \cdot \ldots \cdot \crash \cdot \tr_n$. Each crash resets
the program state and the volatile state, but not the non-volatile state. We leverage
condition (ii) in showing that $\cs{\prog}{\M_2}$ can reach each crash having
the same non-volatile memory state as $\cs{\prog}{\M_1}$ (possibly with a 
shorter program trace). Therefore, when $\cs{\prog}{\M_1}$ proceeds with 
in $\tr_n$ after the last crash, $\cs{\prog}{\M_2}$ is able to proceed from
exactly the same state. Then, condition (i) applied to $\tr_n$ immediately
gives us that $\progstate$ is reachable under $\M_2$.
\end{proof}

Intuitively speaking, condition (i) ensures that after the last system crash, the client can only observe behaviors of $\M_1$ that 
are allowed by $\M_2$, and condition (ii) ensures that the parts of the state that survives crashes that are observable in $\M_1$
are also observable in $\M_2$. 
Note that condition (ii) allows us (and we actually rely on it in our proofs) to reach the non-volatile memory in $\M_1$
with a per-thread prefix of the program trace that reached that memory in $\M_2$.
Indeed, the program state is lost after the crash, and the client cannot observe what part of the program has been actually
executed before the crash.

\newcommand{\ptso}{
\begin{figure*}
\small
\myhrule
$$\inarrC{
\mem  \in \Loc \to \Val \qquad\qquad
\pbuff \in (\set{\wlab{\loc}{\val} \st \loc\in\Loc, \val\in\Val} \cup \set{\perlab{\loc} \st \loc\in\Loc})^* \\
\Buff \in \Tid \to (\set{\wlab{\loc}{\val} \st \loc\in\Loc, \val\in\Val} \cup \set{\fllab{\loc} \st \loc\in\Loc} 
\cup \set{\folab{\loc} \st \loc\in\Loc} \cup \set{\sflab})^*
\\
\pbuff_\Init  \defeq \epsl \qquad\qquad\qquad
\Buff_\Init  \defeq \lambda \tid.\; \epsl
}$$
\myhrule
\begin{mathpar}
\inferrule[write/flush/flush-opt/sfence]{
\lTYP(\lab) \in \set{\lW,\lFL,\lFO,\lSF}
\\\\ \Buff' = \Buff[\tid \mapsto \Buff(\tid) \cdot \lab]
}{\tup{\mem, \pbuff,\Buff} \asteptidlab{\tid}{\lab}{\PTSO} \tup{\mem, \pbuff, \Buff'}
} \and
\inferrule[read]{
\lab = \rlab{\loc}{\val}
\\\\ \rdWtso{\mem}{\pbuff}{\Buff(\tid)}(\loc) = \val
}{\tup{\mem, \pbuff, \Buff} \asteptidlab{\tid}{\lab}{\PTSO} \tup{\mem, \pbuff, \Buff}
} \\
\inferrule[rmw]{
\lab = \ulab{\loc}{\val_\lR}{\val_\lW}
\\\\ \rdWtso{\mem}{\pbuff}{\epsilon}(\loc) = \val_\lR
\\\\ \Buff(\tid)=\epsilon 
\\\\ \pbuff' = \pbuff \cdot \wlab{\loc}{\val_\lW}
}{\tup{\mem, \pbuff, \Buff} \asteptidlab{\tid}{\lab}{\PTSO} \tup{\mem, \pbuff', \Buff}
} \and
\inferrule[rmw-fail]{
\lab = \rexlab{\loc}{\val}
\\\\ \rdWtso{\mem}{\pbuff}{\epsilon}(\loc) = \val
\\\\ \Buff(\tid)=\epsilon 
\\\\
}{\tup{\mem, \pbuff, \Buff} \asteptidlab{\tid}{\lab}{\PTSO} \tup{\mem, \pbuff, \Buff}
} \and
\inferrule[mfence]{
\lab = \mflab
\\\\
\\\\ \Buff(\tid)=\epsilon 
\\\\
}{\tup{\mem, \pbuff, \Buff} \asteptidlab{\tid}{\lab}{\PTSO} \tup{\mem, \pbuff, \Buff}
}
\end{mathpar}
\myhrule
\begin{mathpar}
\inferrule[prop-w]{
\Buff(\tid) = \buff_1 \cdot \wlab{\loc}{\val} \cdot \buff_2 
\\\\ \wlab{\_}{\_}, \fllab{\_}, \sflab \nin \buff_1
\\\\ \Buff' = \Buff[\tid \mapsto \buff_1 \cdot \buff_2]
\\ \pbuff' = \pbuff \cdot \wlab{\loc}{\val}
}{\tup{\mem, \pbuff, \Buff} \asteplab{\epsl}{\PTSO} \tup{\mem, \pbuff', \Buff'}
} \and
\inferrule[prop-fl]{
\Buff(\tid) = \buff_1 \cdot \fllab{\loc} \cdot \buff_2 
\\\\ \wlab{\_}{\_}, \fllab{\_}, \folab{\loc}, \sflab \nin \buff_1
\\\\ \Buff' = \Buff[\tid \mapsto \buff_1 \cdot \buff_2]
\\ \pbuff' = \pbuff \cdot \perlab{\loc}
}{\tup{\mem, \pbuff, \Buff} \asteplab{\epsl}{\PTSO} \tup{\mem, \pbuff', \Buff'}
} \and
\inferrule[prop-fo]{
\Buff(\tid) = \buff_1 \cdot \folab{\loc} \cdot \buff_2 
\\\\ \wlab{\loc}{\_},\fllab{\loc},\sflab  \nin \buff_1
\\\\ \Buff' = \Buff[\tid \mapsto \buff_1 \cdot \buff_2]
\\ \pbuff' = \pbuff \cdot \perlab{\loc}
}{\tup{\mem, \pbuff, \Buff} \asteplab{\epsl}{\PTSO} \tup{\mem, \pbuff', \Buff'}
} \and
\inferrule[prop-sf]{
\Buff(\tid) = \sflab \cdot \buff
\\\\ \Buff' = \Buff[\tid \mapsto \buff]
}{\tup{\mem, \pbuff, \Buff} \asteplab{\epsl}{\PTSO} \tup{\mem, \pbuff, \Buff'}
}
\end{mathpar}
\myhrule
\begin{mathpar}
\inferrule[persist-w]{
\pbuff = \pbuff_1 \cdot \wlab{\loc}{\val} \cdot \pbuff_2
\\\\ \wlab{\loc}{\_}, \perlab{\_} \nin \pbuff_1
\\\\ \pbuff' = \pbuff_1 \cdot \pbuff_2
\\ \mem' = \mem[\loc \mapsto \val]
}{\tup{\mem, \pbuff ,\Buff} \asteplab{\epsl}{\PTSO} \tup{\mem', \pbuff',\Buff}
} \and
\inferrule[persist-per]{
\pbuff =\pbuff_1 \cdot \perlab{\loc} \cdot \pbuff_2
\\\\ \wlab{\loc}{\_}, \perlab{\_} \nin \pbuff_1
\\\\ \pbuff' = \pbuff_1 \cdot \pbuff_2
}{\tup{\mem, \pbuff ,\Buff} \asteplab{\epsl}{\PTSO} \tup{\mem, \pbuff',\Buff}
}\end{mathpar}
\myhrule
\caption{The \PTSO Persistent Memory Subsystem}
\label{fig:PTSO}
\end{figure*}
}

\section{The \PTSO Persistent Memory Subsystem}
\label{sec:ptso}

In this section we present \PTSO, the persistent memory subsystem by \citet{pxes-popl}
which models the persistency semantics of the Intel-x86 architecture.

\begin{remark}
\label{rem:man}
Following discussions with Intel engineers, \citet{pxes-popl}
introduced \emph{two} models: $\PTSO_\text{man}$ and $\PTSO_\text{sim}$.
The first formalizes the (ambiguous and under specified) reference manual specification~\cite{intel-manual}.
The latter simplifies and strengthens the first while capturing the 
``behavior intended by the Intel engineers''.
The model studied here is $\PTSO_\text{sim}$, which we simply call \PTSO.
\end{remark}

\PTSO is an extension of the standard TSO model~\cite{x86-tso}
with another layer called \emph{persistence buffer}.
This is a global buffer that contains writes that are pending to be
persisted to the (non-volatile) memory as well as certain markers governing the persistence order.
Store buffers are extended to include not only store instruction but also
flush and sfence instructions.  
Both the (per-thread) store buffers and the (global) persistence buffer
are volatile.

\begin{definition}
\label{def:store_buffers}
A \emph{store buffer} is a finite sequence $\buff$
of event labels $\lab$ with $\lTYP(\lab)\in\set{\lW,\lFL,\lFO,\lSF}$.
A \emph{store-buffer mapping} is a function $\Buff$
assigning a store buffer to every $\tid\in\Tid$.
We denote by $\Buff_\epsl$, the initial store-buffer mapping
assigning the empty sequence to every $\tid\in\Tid$.
\end{definition}

\begin{definition}
\label{def:pbuff}
A \emph{persistence buffer} is a finite sequence $\pbuff$ 
of elements of the form $\wlab{\loc}{\val}$ or $\perlab{\loc}$ 
(where $\loc\in\Loc$ and $\val\in\Val$).
\end{definition}

Like the memory, the persistence buffer is accessible by all threads.
When thread $\tid$ reads from a shared location $\loc$ it obtains its latest accessible value of $\loc$,
which is defined using the following $\rdWtson$ function applied on the current persistent memory $\mem$,
persistence buffer $\pbuff$, and $\tid$'s store buffer $\buff$:
$$\smaller
	\rdWtso{\mem}{\pbuff}{\buff}
	\defeq \lambda \loc.\;
	\begin{cases}
		\val & %
		\buff = \buff_1 \cdot \wlab{\loc}{\val} \cdot \buff_2 \land \wlab{\loc}{\_}\nin \buff_2 \\
		\val & \wlab{\loc}{\_} \nin \buff \land 
		\pbuff = \pbuff_1 \cdot \wlab{\loc}{\val} \cdot \pbuff_2 \land \wlab{\loc}{\_}\nin \pbuff_2 \\
		\mem(\loc) & \text{otherwise}
	\end{cases}	
$$
Using these definitions, \PTSO is presented in  \cref{fig:PTSO}.
Its set of volatile states, $\PTSO.\lvQ$,  consists of all pairs $\tup{\pbuff,\Buff}$,
where $\pbuff$ is a persistence buffer and $\Buff$ is a store-buffer mapping.
Initially, all buffers are empty ($\PTSO.\lvinit=\set{\tup{\epsilon,\Buff_\epsl}}$).

\ptso

The system's transitions are of three kinds: ``issuing steps'', ``propagation steps'',
and ``persistence steps''.
Steps of the first kind are defined as in standard TSO semantics,
with the only extension being the fact that flush, flush-optimals and sfences instructions
emit entries in the store buffer.

Propagation of writes from the store buffer (\rulename{prop-w}) 
is both making the writes visible to other threads, 
and propagating them to the persistence buffer. 
Note that a write may propagate even when flush-optimals precede it
in the store buffer (which means that they were issued before the write by the thread).
Propagation of flushes and flush-optimals (\rulename{prop-fl} and \rulename{prop-fo}) adds a ``$\lPER$-marker'' to the persistence buffer,
which later restricts the order in which writes persist.
The difference between the two kinds of flushes is reflected in the conditions on their propagation.
In particular, a flush-optimal may propagate even when writes to different locations precede it  
in the store buffer (which means that they were issued before the flush-optimal by the thread).
Propagation of sfences simply removes the sfence entry, which is only used to restrict the order of propagation of other entries,
and is discarded once it reaches the head of the store buffer.

Finally, persisting a write moves a write entry from the persistence buffer to the non-volatile memory (\rulename{persist-w}).
Writes to the same location persist in the same order in which they propagate. 
The $\lPER$-markers ensure that writes that propagated before
some marker persist before writes that propagate after that marker.
After the $\lPER$-markers play their role, they are discarded from the persistence buffer (\rulename{persist-per}).

We note that the step for (non-deterministic) system crashes is included in \cref{def:concurrent_system}
upon synchronizing the LTS of a program with the one of the \PTSO memory subsystem.
\emph{Without crashes}, the effect of the persistence buffer is unobservable,
and \PTSO trivially coincides with the standard TSO semantics.

\begin{example}
\label{ex:ptso_basic}
Consider the following four sequential programs:
$$\small
\inarrC{
\inarr{
\ \\
\writeInst{\cloc{1}}{1} \sep \\
\persist{\writeInst{\cloc{2}}{1}} \sep  \\
\
}\\
(A)\quad \cmark}
\qquad\qquad\qquad
\inarrC{
\inarr{
\writeInst{\cloc{1}}{1} \sep \\
\flInst{\cloc{1}} \sep \\
\persist{\writeInst{\cloc{2}}{1}} \sep  \\
\ }\\
(B)\quad \xmark}
\qquad\qquad\qquad
\inarrC{
\inarr{
\writeInst{\cloc{1}}{1} \sep \\
\foInst{\cloc{1}} \sep \\
\persist{\writeInst{\cloc{2}}{1}} \sep  \\
\ 
}\\
(C)\quad \cmark}
\qquad\qquad\qquad
\inarrC{
\inarr{
\writeInst{\cloc{1}}{1} \sep \\
\foInst{\cloc{1}} \sep \\
\sfenceInst \sep \\
\persist{\writeInst{\cloc{2}}{1}} \sep  
}\\
(D)\quad \xmark}
$$
To refer to particular program behaviors, we use 
\persist{\text{colored boxes}} for denoting the last write that persisted for each locations
(inducing a possible content of the non-volatile memory in a run of the program).
When some location lacks such annotation (like $\cloc{1}$ in the above examples), it means that none of its write persisted,
so that its value in the non-volatile memory is $0$ (the initial value).
In particular, the behaviors annotated above all have $\mem \supseteq \set{\cloc{1}\mapsto 0, \cloc{2} \mapsto 1}$.
It is easy to verify that \PTSO allows/forbids each of these behaviors as specified by the corresponding
\cmark/\xmark~marking.
In particular, example (C) demonstrates that propagating a write
before a prior flush-optimal is essential.
Indeed, the annotated behavior is obtained by propagating $\writeInst{\cloc{2}}{1}$
from the store buffer before $\foInst{\cloc{1}}$ (but necessarily after $\writeInst{\cloc{1}}{1}$).
Otherwise, $\writeInst{\cloc{2}}{1}$ cannot persist without $\writeInst{\cloc{1}}{1}$ persisting before.
\end{example}

\begin{remark}
\label{rem:cache}
To simplify the presentation, following~\citet{Izraelevitz_16}, but unlike \citet{pxes-popl}, we conservatively assume that writes 
persist atomically at the location granularity (representing, \eg machine words).
Real machines provide granularity at the width of a cache line,
and, assuming the programmer can faithfully control what locations are stored on same cache line,
may provide stronger guarantees.
Nevertheless, adapting our results to support cache line granularity is straightforward.
\end{remark}

\begin{remark}
\label{rem:rec}
Persistent systems make programs responsible for recovery from crashes: after a crash,
programs restart with reinitialized program state and the volatile component
of the memory state. In contrast, \citet{pxes-popl} define their system
assuming a separate recovery program called a {\em recovery context}, which
after a crash atomically advances program state from the initial one. In our
technical development, we prefer to make minimal assumptions about the
recovery mechanism. Nevertheless, by adjusting crash transitions in
\cref{def:concurrent_system}, 
our framework and results can be easily extended to support \citet{pxes-popl}'s recovery context.
\end{remark}

\newcommand{\ptsoynnn}{
\begin{figure*}
\small
\myhrule
$$\inarrC{
\mem  \in \Loc \to \Val \qquad\qquad
\diffemph{\Pbuff}  \in \diffemph{\Loc} \to (\diffemph{\set{\vale[\val] \st \val\in\Val}} \cup 
\diffemph{\set{\fotlabp{\tid} \st \tid\in\Tid}})^* \\
\Buff \in \Tid \to (\set{\wlab{\loc}{\val} \st \loc\in\Loc, \val\in\Val} \cup \set{\fllab{\loc} \st \loc\in\Loc} 
\cup \set{\folab{\loc} \st \loc\in\Loc} \cup \set{\sflab})^*
\\
\diffemph{\Pbuff_\Init}  \defeq \diffemph{\lambda \loc.\;\epsl} \qquad\qquad\qquad
\Buff_\Init  \defeq \lambda \tid.\; \epsl
}$$\myhrule
\begin{mathpar}
\inferrule[write/flush/flush-opt/sfence]{
\lTYP(\lab) \in \set{\lW,\lFL,\lFO,\lSF}
\\\\ \Buff' = \Buff[\tid \mapsto \Buff(\tid) \cdot \lab]
}{\tup{\mem, \diffemph{\Pbuff},\Buff} \asteptidlab{\tid}{\lab}{\PTSOsynnn} \tup{\mem, \diffemph{\Pbuff}, \Buff'}
} \and
\inferrule[read]{
\lab=\rlab{\loc}{\val}
\\\\ \rdWtsosyn{\mem}{\diffemph{\Pbuff(\loc)}}{\Buff(\tid)}(\loc) = \val
}{\tup{\mem, \diffemph{\Pbuff}, \Buff} \asteptidlab{\tid}{\lab}{\PTSOsynnn} \tup{\mem, \diffemph{\Pbuff}, \Buff}
} \\
\inferrule[rmw]{
\lab =\ulab{\loc}{\val_\lR}{\val_\lW}
\\\\ \rdWtsosyn{\mem}{\diffemph{\Pbuff(\loc)}}{\epsilon}(\loc) = \val_\lR
\\\\\ \Buff(\tid)=\epsilon 
\\\\ \diffemph{\forall \loca.\; \fotlabp{\tid} \nin \Pbuff(\loca)}
\\\\ \diffemph{\Pbuff' = \Pbuff[\loc \mapsto \Pbuff(\loc) \cdot \vale[\val_\lW]]}
}{\tup{\mem, \diffemph{\Pbuff}, \Buff} \asteptidlab{\tid}{\lab}{\PTSOsynnn} \tup{\mem, \diffemph{\Pbuff'}, \Buff}
} \hfill
\inferrule[rmw-fail]{
\lab = \rexlab{\loc}{\val} 
\\\\ \rdWtsosyn{\mem}{\diffemph{\Pbuff(\loc)}}{\epsilon}(\loc) = \val
\\\\ \Buff(\tid)=\epsilon 
\\\\ \diffemph{\forall \loca.\; \fotlabp{\tid} \nin \Pbuff(\loca)}
\\\\
}{\tup{\mem, \diffemph{\Pbuff}, \Buff} \asteptidlab{\tid}{\lab}{\PTSOsynnn} \tup{\mem, \diffemph{\Pbuff}, \Buff}
} \hfill
\inferrule[mfence]{
\lab = \mflab
\\\\
\\\\ \Buff(\tid)=\epsilon 
\\\\ \diffemph{\forall \loca.\; \fotlabp{\tid} \nin \Pbuff(\loca)}
\\\\
}{\tup{\mem, \diffemph{\Pbuff}, \Buff} \asteptidlab{\tid}{\lab}{\PTSOsynnn} \tup{\mem, \diffemph{\Pbuff}, \Buff}
} \end{mathpar}
\myhrule
\begin{mathpar}
\inferrule[prop-w]{
\diffemph{\Buff(\tid) = \wlab{\loc}{\val} \cdot \buff}
\\ \diffemph{\Buff' = \Buff[\tid \mapsto \buff]}
\\\\ \diffemph{\Pbuff' = \Pbuff[\loc \mapsto \Pbuff(\loc) \cdot \vale]}
}{\tup{\mem, \diffemph{\Pbuff}, \Buff} \asteplab{\epsl}{\PTSOsynnn} \tup{\mem, \diffemph{\Pbuff'}, \Buff'}
} \and
\inferrule[prop-fl]{
\diffemph{\Buff(\tid) = \fllab{\loc} \cdot \buff}
\\ \diffemph{\Buff' = \Buff[\tid \mapsto \buff]}
\\\\ \diffemph{\Pbuff(\loc)=\epsilon}
}{\tup{\mem, \diffemph{\Pbuff}, \Buff} \asteplab{\epsl}{\PTSOsynnn} \tup{\mem, \diffemph{\Pbuff}, \Buff'}
} \and
\inferrule[prop-fo]{
	\Buff(\tid) = \buff_1 \cdot \folab{\loc} \cdot \buff_2 
	\\\\
	\wlab{\loc}{\_},\fllab{\loc},\diffemph{\folab{\loc}},\sflab  \nin \buff_1
	\\\\
		\Buff' = \Buff[\tid \mapsto \buff_1 \cdot \buff_2]
	\\
	\diffemph{\Pbuff' = \Pbuff[\loc \mapsto \Pbuff(\loc) \cdot \fotlabp{\tid}]}
}{\tup{\mem, \diffemph{\Pbuff}, \Buff} \asteplab{\epsl}{\PTSOsynnn} \tup{\mem, \diffemph{\Pbuff'}, \Buff'}
} \and
\inferrule[prop-sf]{
\Buff(\tid) = \sflab \cdot \buff
\\ \Buff' = \Buff[\tid \mapsto \buff]
\\\\ \diffemph{\forall \loca.\; \fotlabp{\tid} \nin \Pbuff(\loca)}
}{\tup{\mem, \diffemph{\Pbuff}, \Buff} \asteplab{\epsl}{\PTSOsynnn} \tup{\mem, \diffemph{\Pbuff}, \Buff'}
} \end{mathpar}
\myhrule
\begin{mathpar}
\diffemph{
\inferrule[persist-w]{
\Pbuff(\loc) = \vale \cdot \pbuff
\\\\ \Pbuff' = \Pbuff[\loc \mapsto \pbuff]
\\ \mem' = \mem[\loc \mapsto \val]
}{\tup{\mem, \diffemph{\Pbuff} ,\Buff} \asteplab{\epsl}{\PTSOsynnn} \tup{\mem', \diffemph{\Pbuff'},\Buff}}
} \and
\diffemph{
\inferrule[persist-fo]{
\Pbuff(\loc) = \fotlabp{\_} \cdot \pbuff
\\\\ \Pbuff' = \Pbuff[\loc \mapsto \pbuff]
}{\tup{\mem, \diffemph{\Pbuff}, \Buff} \asteplab{\epsl}{\PTSOsynnn} \tup{\mem, \diffemph{\Pbuff'}, \Buff}}
} \end{mathpar}
\myhrule
\caption{The \PTSOsynnn Persistent Memory Subsystem  (differences \wrt \PTSO are \diffemph{\text{highlighted}})}
\label{fig:PTSOsynnn}
\end{figure*}}

\section{The \PTSOsynnn Persistent Memory Subsystem}
\label{sec:ptsosynnn}

In this section we present our alternative persistent memory subsystem, 
which we call \PTSOsynnn, that is observationally equivalent to \PTSO.
We list major differences between \PTSOsynnn and \PTSO:
\begin{itemize}[leftmargin=*]
\item \PTSOsynnn has \emph{synchronous flush instructions}---the propagation of a flush of location $\loc$
from the store buffer is blocking the execution until all writes to $\loc$ that propagated earlier have persisted.
We note that, as expected in a TSO-based model, flushes do not take their synchronous effect when they
are issued by the thread, but rather have a delayed globally visible effect happening when they propagate from the store buffer.

\item \PTSOsynnn has \emph{synchronous sfence instructions}---the propagation of an sfence 
from the store buffer is blocking the execution until all  
flush-optimals of the same thread that propagated earlier have taken their effect.
The latter means that all writes to the location of the flush-optimal that 
propagated before the flush-optimal have persisted.
Thus, flush-optimals serve as markers in the persistence buffer,
that are only meaningful when an sfence (issued by the same thread that issued the flush-optimal)
propagates from the store buffer.
As for flushes, the effect of an sfence is not at its issue time but at its propagation time.
We note that mfence and RMW operations (both when they fail and when they succeed) induce an implicit sfence.

\item Rather than a global persistence buffer, \PTSOsynnn employs \emph{per-location} persistence buffers
directly reflecting the fact that the persistence order has to agree with the propagation order
only between writes to the same location,
while writes to different locations may persist out of order.

\item The store buffers of \PTSOsynnn are ``almost'' FIFO buffers.
With the exception of flush-optimals, entries may propagate from the store buffer only when they 
reach the head of the buffer. Flush-optimals may still ``overtake'' writes as well as flushes/flush-optimals of a different location.
\Cref{ex:fo-ooo} below demonstrates why we need to allow the latter (there is a certain design choice here, see \cref{rem:fo-ooo}).
\end{itemize}

To formally present \PTSOsynnn, we first define 
per-location persistence buffers
and per-location-persistence-buffer mappings.

\begin{definition}
\label{def:Pbuff}
A \emph{per-location persistence buffer} is a finite sequence $\pbuff$ 
of elements of the form $\vale$ or $\fotlabp{\tid}$ 
(where $\val\in\Val$ and $\tid\in\Tid$).
A \emph{per-location-persistence-buffer mapping} is a function $\Pbuff$
assigning a per-location persistence buffer to every $\loc\in\Loc$.
We denote by $\Pbuff_\epsl$, the initial per-location-persistence-buffer mapping
assigning the empty sequence to every $\loc\in\Loc$.
\end{definition}

Flush instructions under \PTSOsynnn take effect upon their
propagation, so, unlike in \PTSO, they do not add $\lPER$-markers into the persistence buffers.
For flush-optimals, instead of $\lPER$-markers, we use (per location) $\fotlabp{\tid}$ markers,
where $\tid$ is the identifier of the thread that issued the instruction.
In accordance with how \PTSO's sfence
only blocks the propagation of the same thread's flush-optimals, the
synchronous behavior of sfence must not wait for flush-optimals by different
threads (see \cref{ex:fot} below).

The (overloaded) $\rdWn$ function
is updated in the obvious way:
$$
\smaller
	\rdWtsosyn{\mem}{\pbuff}{\buff}
	\defeq \lambda \loc.\;
	\begin{cases}
		\val & %
		\buff = \buff_1 \cdot \wlab{\loc}{\val} \cdot \buff_2 \land \wlab{\loc}{\_}\nin \buff_2 \\
		\val & \wlab{\loc}{\_} \nin \buff \land 
		\pbuff = \pbuff_1 \cdot \vale \cdot \pbuff_2 \land \vale[\_]\nin \pbuff_2 \\
		\mem(\loc) & \text{otherwise}
	\end{cases}	
$$
For looking up a value for location $\loc$ by thread $\tid$,
we apply $\rdWn$ with $\mem$ being the current non-volatile memory,
$\pbuff$ being $\loc$'s persistence buffer,
$\buff$ being $\tid$'s store buffer

Using these definitions, \PTSOsynnn is presented in  \cref{fig:PTSOsynnn}. Its
set of volatile states, $\PTSOsynnn.\lvQ$,  consists of all pairs
$\tup{\Pbuff,\Buff}$, where $\Pbuff$ is  a per-location-persistence-buffer
mapping and $\Buff$ is a store-buffer mapping. Initially, all buffers are empty
($\PTSOsynnn.\lvinit=\set{\tup{\Pbuff_\epsl,\Buff_\epsl}}$).

The differences of \PTSOsynnn \wrt \PTSO are \diffemph{\text{highlighted}} in \cref{fig:PTSOsynnn}. First, the
\rulename{prop-fl} transition only occurs when $\Pbuff(\loc)=\epsilon$ to
ensure that all previously propagated writes have persisted. Second, the
\rulename{prop-sfence} transition (as well as \rulename{rmw},
\rulename{rmw-fail}, and \rulename{mfence}) only occurs when $\forall \loca
.\; \fotlabp{\tid} \nin \Pbuff(\loca)$ holds to ensure that propagation of
each sfence blocks until previous flush-optimals of the same thread have completed.
Third, the \rulename{persist-w} and \rulename{persist-fo} transitions persist the
entries from the per-location persistence buffers \emph{in-order}. Finally, the
\rulename{prop-w} and \rulename{prop-fl} transitions propagate entries from
the \emph{head} of a store buffer, so only \rulename{prop-fo} transitions may not
use the store buffers as perfect FIFO queues.

\ptsoynnn

\begin{example}
\label{ex:ptso_syn_basic}
It is instructive to refer back to the simple programs in \cref{ex:ptso_basic}
and see how same judgments are obtained for \PTSOsynnn albeit in a different way.
In particular, in these example the propagation order must follow the issue order.
Then, the behavior of program (C) is not explained by out-of-order propagation,
but rather by using the fact that $\writeInst{\cloc{1}}{1}$ and $\writeInst{\cloc{2}}{1}$
are propagated to different persistence buffers, and thus can persist in an order
opposite to their propagation order.
\end{example}

\begin{example}
\label{ex:fo-ooo}
As mentioned above, while \PTSOsynnn forbids propagating writes/flushes/sfences before
propagating prior entries, this is still not the case for flush-optimals that can propagate
before prior write/flushes/flush-optimals.

\noindent
\begin{minipage}{.7\textwidth}
The program on the right demonstrates such case.
The annotated outcome is allowed in \PTSO (and thus, has to be allowed in \PTSOsynnn).
The fact that 
$\writeInst{\cloc{2}}{3}$ persisted
implies that $\writeInst{\cloc{2}}{2}$ propagated after
$\writeInst{\cloc{2}}{1}$.
Now, since writes propagate in order,
we obtain that $\writeInst{\cloc{2}}{2}$ propagated after $\writeInst{\cloc{1}}{1}$.
Had we required that $\foInst{\cloc{1}}$ must propagate after 
$\writeInst{\cloc{2}}{2}$, we would obtain that 
$\foInst{\cloc{1}}$ must propagate after $\writeInst{\cloc{1}}{1}$.
In turn, due to the sfence instruction, this would forbid 
$\writeInst{\cloc{3}}{1}$ from persisting before $\writeInst{\cloc{1}}{1}$ has persisted.
\end{minipage}\hfill
\begin{minipage}{.25\textwidth}
$$
\inarrII{
\writeInst{\cloc{1}}{1} \sep \\
\writeInst{\cloc{2}}{1} \sep  \\
\ifThenInst{\cloc{2}= 2}{\persist{{\writeInst{\cloc{2}}{3}}}} \sep 
}{
\writeInst{\cloc{2}}{2} \sep  \\
\foInst{\cloc{1}} \sep \\
\sfenceInst \sep \\
\persist{\writeInst{\cloc{3}}{1}} \sep
}
\bigskip \ \bigskip
$$
\end{minipage}
\end{example}

\begin{remark}
\label{rem:fo-ooo}
There is an alternative formulation for \PTSOsynnn that always propagates flush-optimals
from the \emph{head} of the store buffer. This simplification comes at the expense of complicating how flush-optimals are added into the store buffer upon issuing.
Concretely, we can have a \rulename{flush-opt} step
that does not put the new $\folab{\loc}$ entry in the tail of the store buffer
(omit $\folab{\loc}$ from the \rulename{write/flush/flush-opt/sfence} issuing step).
Instead, the step looks inside the buffer and puts the $\folab{\loc}$-entry immediately after 
the last pending entry $\lab$ with $\lLOC(\lab)=\loc$ or $\lTYP(\lab)=\sflab$
(or at the head of the buffer is no such entry exists):
\begin{mathpar}
\small
\inferrule[flush-opt$_1$]{
\lab=\folab{\loc}
\\\\ \Buff(\tid)= \buff_{\text{head}} \cdot \alpha \cdot \buff_{\text{tail}}
\\ \lLOC(\alpha)=\loc \lor \alpha = \sflab
\\\\ \wlab{\loc}{\_}, \fllab{\loc}, \folab{\loc}, \sflab\nin \buff_{\text{tail}}
\\\\ \Buff' = \Buff[\tid \mapsto \buff_{\text{head}} \cdot \alpha \cdot \lab \cdot \buff_{\text{tail}}]
}{\tup{\mem, {\Pbuff},\Buff} \asteptidlab{\tid}{\lab}{\PTSOsynnn} \tup{\mem, {\Pbuff}, \Buff'}
}\and
\inferrule[flush-opt$_2$]{
\lab=\folab{\loc}
\\\\
\\\\ \wlab{\loc}{\_}, \fllab{\loc}, \folab{\loc}, \sflab\nin \Buff(\tid)
\\\\ \Buff' = \Buff[\tid \mapsto \lab \cdot \Buff(\tid)]
}{\tup{\mem, {\Pbuff},\Buff} \asteptidlab{\tid}{\lab}{\PTSOsynnn} \tup{\mem,
{\Pbuff}, \Buff'} }\end{mathpar} This alternative reduces the level of
non-determinism in the system. Roughly speaking, it is equivalent to eagerly
taking \rulename{prop-fo}-steps, which is sound, since delaying a
\rulename{prop-fo}-step may only put more constraints on the rest of the run.
We suspect that insertions not in the tail of the buffer (even if done in
deterministic positions) may appear slightly less intuitive than eliminations
not from the head of the buffer, and so we continue with
\PTSOsynnn as formulated in \cref{fig:PTSOsynnn}.
\end{remark}

\begin{example}
\label{ex:fot}
An sfence (or an sfence-inducing operation: mfence and RMW) 
performed by one thread does not affect flush-optimals by other threads.
To achieve this, \PTSOsynnn records thread identifiers in $\lFOT$-entries in the persistence buffer.
(In \PTSO, this is captured by the fact that sfence only affects the propagation order
from the (per-thread) store buffers.)

\vspace{5pt}
\noindent
\begin{minipage}{.7\textwidth}
The program on the right demonstrates how this works.
The annotated behavior is allowed by \PTSOsynnn:
the flush-optimal entry in $\cloc{1}$'s persistence buffer
has to be in that buffer at the point the sfence is issued
(since the second thread has already observed  $\writeInst{\cloc{2}}{1}$).
But, since it is an sfence coming from the store buffer of the second thread,
and the flush-optimal entry is by the first thread,
the sfence has no effect in this case.
\end{minipage}\hfill
\begin{minipage}{.25\textwidth}

$$
\inarrII{
\writeInst{\cloc{1}}{1} \sep \\
\foInst{\cloc{1}} \sep \\
\writeInst{\cloc{2}}{1} \sep 
}{
\readInst{\creg{1}}{\cloc{2}} \sep \comment{1}\\
\sfenceInst \sep \\
\ifThenInst{\creg{1}= 1 }{\persist{{\writeInst{\cloc{3}}{1}}} \sep 
}}
$$
\end{minipage}
\end{example}

The next lemma (used to prove \cref{thm:PTSO_eq_DPTSO} below) ensures that we can safely assume that crashes only happen
when all store buffers are empty (\ie ending with $\Buff_\epsl \defeq \lambda \tid.\; \epsl$).
(Clearly, such assumption is wrong for the persistence buffers).
Intuitively, it follows from the fact that we can always remove from a trace
all thread operations starting from the first write/flush/sfence operation that did not propagate from the store buffer 
before the crash. These can only affect the volatile part of the state.

\begin{restatable}{lemma}{emptyBUF}
\label{lem:empty_buff}
\label{lem:empty_buff_simple}
Suppose that $\tup{\mem_0,\Pbuff_\epsl,\Buff_\epsl} \bsteplab{\tr}{\PTSOsynnn} \tup{\mem, \Pbuff,\Buff}$.
Then:
\begin{itemize}
\item $\tup{\mem_0,\Pbuff_\epsl,\Buff_\epsl} \bsteplab{\tr}{\PTSOsynnn} \tup{\mem', \Pbuff',\Buff_\epsl}$
for some $\mem'$ and $\Pbuff'$.
\item $\tup{\mem_0,\Pbuff_\epsl,\Buff_\epsl} \bsteplab{\tr'}{\PTSOsynnn} \tup{\mem, \Pbuff,\Buff_\epsl}$
for some $\tr' \lesssim \tr$.
\end{itemize}
\end{restatable}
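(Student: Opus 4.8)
The plan is to handle the two bullets separately; both follow from analyzing how entries enter and leave the store buffers along a \PTSOsynnn run.

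\emph{First bullet.} It suffices to show that from \emph{any} \PTSOsynnn-reachable state $\tup{\mem,\Pbuff,\Buff}$ one can reach a state of the form $\tup{\mem',\Pbuff',\Buff_\epsl}$ using only silent ($\epsl$) transitions, since appending such transitions to the given run keeps the observable trace equal to $\tr$. For this I would argue that whenever not all of $\Buff$ and $\Pbuff$ are empty some silent transition is enabled: if some per-location persistence buffer is non-empty then \rulename{persist-w}/\rulename{persist-fo} applies (its head is a write entry or an $\fotlabp{\tid}$-marker); otherwise, with all persistence buffers empty, the side conditions of \rulename{prop-fl} and \rulename{prop-sf} hold vacuously, so the head entry of any non-empty store buffer is removable by the matching \rulename{prop-w}/\rulename{prop-fl}/\rulename{prop-fo}/\rulename{prop-sf} step. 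Moreover every such transition strictly decreases the lexicographic measure $(\sum_{\tid}\size{\Buff(\tid)},\ \sum_{\loc}\size{\Pbuff(\loc)})$: propagation steps drop the first component, persistence steps fix it and drop the second. Hence the process terminates, necessarily in a state with $\Buff=\Buff_\epsl$ (in fact also $\Pbuff=\Pbuff_\epsl$).

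\emph{Second bullet.} I would follow the informal recipe behind the lemma. For each thread $\tid$, let $e_\tid$ be the first write/flush/sfence event that $\tid$ issues in $\tr$ whose corresponding entry is still pending in $\Buff(\tid)$ at the end of the run (if every such event of $\tid$ propagated, nothing is cut for $\tid$), and let $\tr'$ be $\tr$ with all events of each thread from its $e_\tid$ onward deleted. Then $\tr'\rst{\tid}$ is a prefix of $\tr\rst{\tid}$ for every $\tid$, so $\tr'\lesssim\tr$. To realize $\tr'$ I would take the original run and delete the issuing transitions of the removed events together with the silent transitions that consumed their entries, then verify the result is a valid \PTSOsynnn run ending in $\tup{\mem,\Pbuff,\Buff_\epsl}$. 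Two observations cover most cases: (i) writes, flushes, and sfences leave a store buffer only from its head, so by a FIFO argument every write/flush/sfence issued by $\tid$ after $e_\tid$ is also still pending at the end — hence no removed write/flush/sfence ever entered a persistence buffer or memory, and deleting their issuing steps affects only the volatile store-buffer component and empties $\Buff(\tid)$ exactly; (ii) reads, mfences, and (successful or failed) RMWs after $e_\tid$ only observe state or block, so deleting them is sound and leaves $\mem$ and $\Pbuff$ untouched. The weakening of the program trace to a per-thread prefix is exactly what condition (ii) of \cref{lem:memory_refine} can absorb, which is why this form of the lemma is enough downstream.

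\emph{The main obstacle} is a flush-optimal that $\tid$ issues after $e_\tid$ yet that \emph{propagated} before the end of the run: by \rulename{prop-fo} a flush-optimal of $\loc$ may overtake a pending write to a \emph{different} location — in particular $e_\tid$ — and thus may have deposited an $\fotlabp{\tid}$-marker in $\Pbuff(\loc)$ despite following $e_\tid$ in issue order. Since the second bullet has to reproduce the mapping $\Pbuff$ \emph{exactly}, the cut must be arranged, and the reconstructed run chosen, so that precisely these markers are recreated while all store buffers are still emptied and $\mem$ stays unchanged; this needs a more delicate cut than ``first non-propagated write/flush/sfence'' and leans on the exact ``almost-FIFO'' propagation discipline of \PTSOsynnn (illustrated in \cref{ex:fo-ooo,ex:fot}), together with the fact that such an overtaking flush-optimal's effect is independent of the values of the writes it overtook. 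Reconciling ``cut early enough to empty all store buffers'' with ``cut late enough to keep every persistence-buffer marker'' is, I expect, the technically heaviest part of the argument; the rest is the FIFO bookkeeping above plus a routine induction on the run.
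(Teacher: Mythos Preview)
Your plan coincides with the paper's: the first bullet is dispatched exactly as you do (the paper just calls it ``trivial---propagate and persist whatever is needed''), and the second bullet is proved by the same per-thread cut-and-delete argument, carried out via the instrumented system \PTSOsynnnI.

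Two points where you diverge from the paper are worth flagging.

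\textbf{Cut criterion.} The paper cuts at the first \emph{issued} entry of type $\lW/\lFL/\lFO/\lSF$ that never propagated, not $\lW/\lFL/\lSF$. Your criterion leaves pending $\lFO$-entries in the store buffer in runs where no write/flush/sfence is stuck (e.g.\ issue $\wlab{\cloc{1}}{1}$; issue $\folab{\cloc{1}}$; \rulename{prop-w}; end). Including $\lFO$ in the cut fixes this, and the FIFO argument still goes through: any $\lW/\lFL/\lSF$ issued after the cut point cannot have propagated, and any later $\lFO$ that \emph{did} propagate is simply deleted together with its \rulename{prop-fo} (and \rulename{persist-fo}) step.

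\textbf{The ``obstacle'' you worry about is not one the paper tries to overcome.} You spend effort trying to reconcile emptying the store buffers with reproducing $\Pbuff$ exactly, because the lemma statement names the same $\Pbuff$. The paper does \emph{not} recreate those $\fotlabp{\tid}$ markers: it removes the overtaking $\lFO$'s issue, propagation, and persist steps and asserts the run reaches the same state. In fact, for your counterexample-shaped run (issue $\wlab{\cloc{1}}{1}$; issue $\folab{\cloc{2}}$; \rulename{prop-fo}; end) the paper's construction yields $\Pbuff_\epsl$, not the original $\Pbuff$ with $\fotlabp{\tid}\in\Pbuff(\cloc{2})$; so the claim that the identical $\Pbuff$ is reached appears to be a small slip. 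Crucially this does not matter: the lemma's only client is the proof that \PTSOsynnn refines \DPTSO, which uses only the persistent memory $\mem$ (see the sentence invoking \cref{lem:empty_buff} there). Since removing $\lFO$ issue/prop/persist steps never touches $\mem$, and since no other thread's enabledness depends on $\tid$'s $\fotlabp{\tid}$ markers, the construction does reach $\tup{\mem,\Pbuff',\Buff_\epsl}$ for some $\Pbuff'$ differing from $\Pbuff$ only in removed $\lFO$ markers. So rather than devising the ``more delicate cut'' you anticipate, just adopt the paper's cut (include $\lFO$), delete the overtaking $\lFO$s entirely, and note that only $\mem$ is actually needed downstream.
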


\subsection{Observational Equivalence of \PTSO and \PTSOsynnn}
\label{sec:proof}

Our first main result is stated in the following theorem.

\begin{theorem}\label{thm:tsosyn}
\PTSO and \PTSOsynnn are observationally equivalent.
\end{theorem}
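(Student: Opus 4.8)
The plan is to derive the claimed equivalence from \cref{lem:memory_refine} applied twice, so that it suffices to check conditions~(i) and~(ii) of that lemma for each of the two refinements between \PTSO{} and \PTSOsynnn. Condition~(i) is the consistency half and is comparatively routine: once crashes are set aside, the persistence components of both systems are observationally inert --- every $\rdWn$-lookup by a thread returns the last write to that location pending in the thread's own store buffer, otherwise the last not-yet-persisted write to that location in the persistence buffer(s), otherwise $\mem(\loc)$, and persistence steps never change any such value --- so both systems produce exactly the observable traces of standard x86-TSO. One inclusion is witnessed by a run that never persists at all; the other by a run that persists only as much as the synchronous side conditions of \rulename{prop-fl} and \rulename{prop-sf} demand, which, being persistence of the heads of FIFO buffers, never disturbs a read.

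The substance of the theorem is condition~(ii), and above all its instance establishing that \PTSO{} refines \PTSOsynnn: given a crash-free \PTSO-run from $\mem_0$ to $\mem$, we must produce a \PTSOsynnn-run from $\mem_0$ to the \emph{same} $\mem$ whose observable trace is a per-thread prefix of the original. I would not relate \PTSO{} and \PTSOsynnn{} directly, but interpose two intermediate persistent memory subsystems, altering one feature at a time: for instance, first make flush propagation synchronous (it blocks until the covered writes have persisted, and so no longer needs a $\perlab{\loc}$-marker), then make sfence propagation --- and that of the sfence-inducing operations, mfence and RMW, whether it succeeds or fails --- synchronous, and finally replace the single global persistence buffer with thread-tagged $\fotlabp{\tid}$-markers by per-location FIFO buffers, at the same time tightening the store buffers so that only flush-optimals may leave other than from the head. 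Each of the three consecutive pairs is shown observationally equivalent via \cref{lem:memory_refine}; condition~(i) goes as above, the refinement towards the \emph{more permissive} (asynchronous, or global-buffer) member of a pair is a forward simulation that keeps $\mem$ identical step by step, and the refinement towards the \emph{synchronous}, per-location member is the one requiring genuine work.

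That work I would organise as a truncation argument. In the given asynchronous run, assign to each thread a \emph{commit point}: the first flush, sfence, or sfence-inducing instruction in its store buffer that, in the run, has a not-yet-persisted write ahead of it. Cut each thread's trace strictly before its commit point, and additionally cut each thread before any read that, in the original run, observed --- through $\mem$, through a persistence buffer, or through another thread's store buffer --- a write that the preceding cuts have stranded in some store buffer. The result is a per-thread prefix of the original trace, which I would then replay in the synchronous system: each store buffer is drained exactly as in the original run up to its commit point, the same writes are moved into the per-location buffers and persisted, and the synchronous side conditions of \rulename{prop-fl} and \rulename{prop-sf} now hold precisely because every thread halts before the first place they could fail. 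The non-volatile memory reached is again $\mem$, and the per-thread-prefix slack of \cref{lem:memory_refine}(ii) is exactly what lets us shorten the additional threads once a stranded write is no longer globally visible. The analogous step for the global-to-per-location pair instead reschedules the persistence steps, showing that the extra cross-location freedom of per-location buffers does not enlarge the set of reachable non-volatile memories.

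The main obstacle is making the commit point and the re-execution robust to the two phenomena isolated by \cref{ex:fo-ooo} and \cref{ex:fot}. Because a flush-optimal may legitimately overtake prior writes and flushes to \emph{other} locations, a $\fotlabp{\tid}$-marker can come to rest in a per-location buffer whose earlier writes are not all committed, so the commit point cannot be read off naively from store-buffer order; and because an sfence must block only on its own thread's $\fotlabp{\tid}$-markers, the invariant being maintained is genuinely thread-indexed. Carrying these invariants --- that markers can always be placed, propagated, and, when a synchronous step demands it, persisted in a manner consistent with the truncated trace --- through all three pairs of systems is where essentially all of the difficulty concentrates, and splitting the gap between \PTSO{} and \PTSOsynnn{} into three one-feature-at-a-time steps is what keeps each individual simulation and truncation manageable.
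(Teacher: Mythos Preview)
Your high-level plan --- invoke \cref{lem:memory_refine} in both directions, dispatch condition~(i) by noting that absent crashes both systems reduce to plain TSO, and bridge the gap via two intermediate systems --- matches the paper, and you have correctly isolated \cref{ex:fo-ooo} and \cref{ex:fot} as the pressure points.

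The execution differs in two ways. First, the paper's intermediate systems both \emph{already} use per-location persistence buffers: \PTSOsynI{} makes flush \emph{and} flush-optimal propagation block on an empty per-location buffer, \PTSOsynnI{} relaxes flush-optimal back to asynchronous (emitting $\fotlabp{\tid}$-markers) while making sfence synchronous, and only the last hop to \PTSOsynnn{} tightens the store buffers. Moving to per-location buffers first decouples locations immediately; your ordering keeps the global buffer through two stages, so your ``synchronous flush'' would still have to negotiate $\lPER$-markers belonging to other locations.

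Second --- and this is where your sketch has a real gap --- the paper's truncation for condition~(ii) is \emph{global}, not per-thread, and is paired with a normalisation step you do not mention. Working with instrumented traces that tag each issue, propagation, and persist step with a shared identifier, the paper finds the first index $i_0$ at which a $\lTP{\lFL}$- or $\lTP{\lFO}$-step lacks a matching persist, keeps the prefix up to $i_0$ together with all \emph{later} persist steps, and drops everything else. Because the orphaned $\lPER$-marker at $i_0$ blocks every subsequent persistence-buffer entry from persisting, no persisting write is lost, so $\mem$ is preserved; and because the cut is at a single global index, every later issue and read of every thread vanishes in one stroke --- there is no cascade of stranded observations to chase. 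The paper then commutes each surviving persist step leftwards until it sits immediately after its matching propagation (its ``synchronous trace'' normalisation), and only then replays step-for-step in the next system. Your per-thread commit points with dependency-driven secondary cuts reinvent, more painfully, what the global cut plus commutativity deliver directly; and you do not surface the last hop (the store-buffer tightening from \PTSOsynnI{} to \PTSOsynnn), which in the paper is a genuine forward simulation where \PTSOsynnn{} \emph{eagerly} takes \rulename{prop-fo} and \rulename{persist-fo} whenever possible, keeping its flush-optimals always at least as far advanced as the source system's.
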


We briefly outline the key steps in the proof of this theorem. The full
proof presented in \cref{app:ptsosynnn_proofs} formalizes the following ideas
by using \emph{instrumented} memory subsystems and employing two different
intermediate systems that bridge the gap between \PTSO and \PTSOsynnn.

We utilize \cref{lem:memory_refine}, which splits the task of proving \Cref{thm:tsosyn} into four parts:
\begin{enumerate}[label=(\Alph*),leftmargin=0.3cm]
\item Every $\mem_0$-initialized \PTSOsynnn-observable-trace is also an
$\mem_0$-initialized \PTSO-observable-trace.
\item For every $\mem_0$-to-$\mem$ \PTSOsynnn-observable-trace $\tr$, some
$\tr' \lesssim \tr$ is an $\mem_0$-to-$\mem$ \PTSO-observable-trace.
\item Every $\mem_0$-initialized \PTSO-observable-trace is also an
$\mem_0$-initialized \PTSOsynnn-observable-trace.
\item For every $\mem_0$-to-$\mem$ \PTSO-observable-trace $\tr$, some $\tr'
\lesssim \tr$ is an $\mem_0$-to-$\mem$ \PTSOsynnn-observable-trace.
\end{enumerate}

Part (A) requires showing that \PTSO allows the same observable behaviors as
\PTSOsynnn regardless of the final memory. This part is 
straightforward: we perform silent \rulename{persist-w} and
\rulename{persist-fo} steps at the end of the \PTSOsynnn run to completely drain the
persistence buffers, and then move all the persistence steps to be
immediately after corresponding propagation steps. It is then easy to demonstrate
that \PTSO can simulate such sequence of steps.%

Part (B) requires showing that \PTSO can survive crashes with the same
non-volatile state as \PTSOsynnn. We note that this cannot be always achieved
by executing the exact same sequence of steps under \PTSOsynnn and
\PTSO. \Cref{ex:ptso_basic}(C) illustrates a case in point:
If \PTSOsynnn propagates all of the instructions,
and only persists the write $\writeInst{\cloc{2}}{1}$, to achieve the same
result, \PTSO needs to propagate $\writeInst{\cloc{2}}{1}$ ahead
of propagating $\foInst{\cloc{1}}$ (otherwise, the \rulename{persist-w} step
for $\writeInst{\cloc{2}}{1}$ would require persisting $\foInst{\cloc{1}}$
first, resulting in a non-volatile state different from \PTSOsynnn's). Our
proof strategy for part (B) is to reach the same non-volatile memory by
omitting all propagation steps of non-persisting flush-optimals from the
run. We prove that this results in a trace that can be
transformed into a \PTSO-observable-trace.

Part (C) requires showing that \PTSOsynnn allows the same observable behaviors
as \PTSO regardless of the final memory. In order to satisfy stronger
constraints on the content of the persistence buffers upon the propagation
steps of \PTSOsynnn, we employ a transformation like the one from part (A) and obtain a
trace of \PTSO, in which every persisted instruction is persisted immediately after it
is propagated. Unlike part (A), it is not trivial that \PTSOsynnn can simulate
such a trace due to its more strict constraints on the propagation from the store
buffers. We overcome this challenge by eagerly propagating and persisting
flush-optimals as we construct an equivalent  run of \PTSOsynnn (as a part of
a forward simulation argument).

Part (D) requires showing that \PTSOsynnn can survive crashes with the
same non-volatile state as \PTSO. This cannot be always achieved
by executing the exact same sequence of steps under \PTSO and
\PTSOsynnn, since they do not lead to the same non-volatile states: the 
synchronous semantics of flush, sfence, mfence and RMW instructions under \PTSOsynnn makes
instructions persist earlier. However, the program state is lost after
the crash, so at that point the client cannot observe outcomes of instructions that did not persist.
Therefore, crashing before a flush/flush-optimal
instruction persists is observationally indistinguishable from crashing before
it propagates from the store buffer. These intuitions allow us to reach the
non-volatile memory in \PTSOsynnn with a per-thread-prefix of the program
trace that reached that memory in \PTSO. 
More concretely, we trim the sequence of steps of \PTSO to a per-thread prefix in order to remove all propagation steps of
non-persisting flush/flush-optimal instructions, 
and then move the persistence steps
of the persisting instructions to be immediately after their
propagation, which is made possible by certain commutativity
properties of persistence steps. This way, we essentially obtain a \PTSOsynnn-observable-trace, 
which, as in part (C), formally requires the eager propagation 
and persistence of flush-optimals.
\section{Declarative Semantics}
\label{sec:dec}

In this section we provide an alternative characterization of \PTSOsynnn 
(and, due to the equivalence theorem, also of \PTSO) that is declarative (\aka axiomatic)
rather than operational.
In such semantics, instead of considering machine traces that are totally ordered by definition,
one aims to abstract from arbitrary choices of the order of operations,
and maintain such order only when it is necessary to do so.
Accordingly, behaviors of concurrent systems are represented as partial orders rather than total ones.
This more abstract approach, while may be less intuitive to work with,
often leads to much more succinct presentations, and 
has shown to be beneficial for comparing models and mapping from 
one model to another (see, \eg~\cite{Sarkar:2012,Wickerson-al:POPL17,imm-popl19}), 
reasoning about sound program transformations (see, \eg~\cite{c11comp}),
and bounded model checking (see, \eg~\cite{rcmc,abdulla2018optimal}).
In the current paper, the declarative semantics is instrumental for
establishing the DRF and mapping theorem in \cref{sec:ptso_psc}.

We present two different declarative models of \PTSOsynnn.
Roughly speaking, the first, called \DPTSO, is an extension the declarative TSO model in~\cite{sra},
and it is closer to the operational semantics as it tracks the propagation order.
The second, called \DPTSOmo, is an extension the declarative TSO model in~\cite{herding-cats}
that employs per-location propagation orders on writes only,
but ignores some of the program order edges.

\subsection{A Declarative Framework for Persistency Specifications}

Before introducing the declarative models, 
we present the general notions used to assign declarative semantics to persistent systems (see~\cref{def:concurrent_system}).
This requires several modifications of the standard declarative approach that does not handle persistency.
First, we define execution graphs, each of which represents a particular behavior.
We start with their nodes, called \emph{events}.

\begin{definition}
\label{def:event}
An \emph{event} is a triple $e=\event{\tid}{\sn}{\lab}$,
where $\tid\in\Tid \cup \set{\bot}$ is a thread identifier 
($\bot$ is used for \emph{initialization events}), 
$\sn\in\N$ is a serial number, %
and $\lab\in\Lab$ is an event label (as defined in \cref{def:label}).
The functions $\lTID$, $\lSN$, and $\lLAB$
return the thread identifier, serial number, and label 
of an event.
The functions $\lTYP$, $\lLOC$, $\lVALR$, and $\lVALW$ are lifted to events in the obvious way.
We denote by $\sE$ the set of all events,
and by $\Init$ the set of initialization events, \ie $\Init \defeq \set{e\in\sE\st \lTID(e)=\bot}$.
We use $\sW,\sR,\sU,\sRex,\sMF,\sFL,\sFO$, and $\sSF$ for the 
sets of all events of the respective type
(\eg $\sR \defeq \set{ e \in \sE\st \lTYP(e)=\lR}$). 
Sub/superscripts are used to restrict these sets to certain location
(\eg $\sW_\loc=\set{w\in \sW \st \lLOC(w)=\loc}$)
and/or thread identifier
(\eg $\sE^\tid= \set{e\in \sE \st \lTID(e)=\tid}$).
\end{definition}

Our representation of events induces a \emph{sequenced-before} partial order on events,
where $e_1 < e_2$ holds iff ($e_1 \in \Init$ and $e_2 \nin \Init$)
or ($e_1,e_2 \nin \Init$, $\lTID(e_1) =\lTID(e_2)$, and $\lSN(e_1) < \lSN(e_2)$).
That is, initialization events precede all non-initialization events,
and events of the same thread are ordered according to their serial numbers.

Next, a (standard) mapping justifies every read with a corresponding write event:

\begin{definition}
A relation $\rf$ is a \emph{reads-from} relation for a set $\E$ of events if the following hold:
\begin{itemize}%
\item $\rf \suq (\E\cap(\sW \cup \sU)) \times (\E\cap(\sR \cup \sU \cup \sRex))$. 
\item If $\tup{w,r}\in \rf$, then $\lLOC(w)=\lLOC(r)$ and $\lVALW(w)=\lVALR(r)$. 
\item If $\tup{w_1,r},\tup{w_2,r}\in \rf$, then $w_1=w_2$  (that is, $\rf^{-1}$ is functional).
\item $\forall r \in \E\cap(\sR \cup \sU \cup \sRex).\; \exists w.\; \tup{w,r}\in\rf$ (each read event reads from some write event).
\end{itemize}
\end{definition}

The ``non-volatile outcome'' of an execution graph is recorded in 
\emph{memory assignments}:

\begin{definition}
A \emph{memory assignment} $\memf$ for a set $\E$ of events
is a function assigning an event in $\E \cap (\sW_\loc \cup \sU_\loc)$
to every location $\loc\in\Loc$.
\end{definition}

Intuitively speaking, $\memf$ records the last write in the graph that persisted before the crash.
Using the above notions, we formally define execution graphs.

\begin{definition}
\label{def:execution}
An \emph{execution graph} is a tuple $G=\tup{\E,\rf,\memf}$, where
$\E$ is a finite set of events, 
$\rf$ is a reads-from relation for $\E$,
and $\memf$ is a memory assignment for $\E$.
The components of $G$ are denoted by $G.\lE$, $G.\lRF$, and $G.\lMEMF$.
For a set $A\suq \sE$, we write $G.A$ for $G.\lE \cap A$ 
(\eg $G.\sW_\loc=G.\lE\cap \sW_\loc$).
In addition, derived relations and functions are defined as follows:
\begin{align*}
G.\lPO &\defeq \set{\tup{e_1,e_2}\in G.\lE \times G.\lE \st e_1 < e_2} \tag{\emph{program order}} 
\\ G.\lRFE &\defeq G.\lRF \setminus G.\lPO  \tag{\emph{external reads-from}}
\\ \mem(G) &\defeq \lambda \loc.\; \lVALW(G.\lMEMF(\loc)) \tag{\emph{induced persistent memory}}
\end{align*}
\end{definition}

Our execution graphs are always \emph{initialized} with some initial memory:

\begin{definition}
\label{def:initialized}
Given $\mem : \Loc \to \Val$,
an execution graph $G$ is \emph{$\mem$-initialized}
if $G.\lE \cap \Init = \set{\event{\bot}{0}{\wlab{\loc}{\mem(\loc)}} \st \loc \in \Loc}$.
We say that $G$ is \emph{initialized} if it is $\mem$-initialized for some $\mem : \Loc \to \Val$.
We denote by $\mem_\Init(G)$ the (unique) function $\mem$ for which $G$ is $\mem$-initialized.
\end{definition}

A declarative characterization of a persistent memory subsystem
is captured by the set of execution graphs that the subsystem allows.
Intuitively speaking, the conditions it enforces on $G.\lRF$ correspond to 
the consistency aspect of the memory subsystem; and those on $G.\lMEMF$
correspond to its persistency aspect.

\begin{definition}
\label{def:dec}
A \emph{declarative persistency model} is a set \D of execution graphs.
We refer to the elements of \D as \emph{\D-consistent} execution graphs.
\end{definition}

Now, to use a declarative persistency model for specifying the possible behaviors
of programs (namely, what program states are reachable under a given model \D),
 we need to formally associate execution graphs with programs.
The next definition uses the characterization of programs as LTSs to provide this association.
(Note that at this stage $G.\lRF$ and $G.\lMEMF$ are completely arbitrary.)

\begin{notation}
For a set $\E$ of events, thread identifier $\tid\in\Tid$ and event label $\lab\in\Lab$,
$\nextevent(\E,\tid,\lab)$ denotes the event given by 
$\event{\tid}{\max\set{\lSN(e) \st e\in G.\sE^\tid} + 1}{\lab}$.
\end{notation}

\begin{definition}
\label{def:generated}
An execution graph $G$ is \emph{generated by 
a program $\prog$ with final state $\progstate$}
if $\tup{\progstate_\Init,E_0} \to^* \tup{\progstate,G.\lE}$
for some $\progstate_\Init\in \prog.\linit$
and $E_0\suq \Init$,
where
$\to$
is defined by:
\begin{mathpar}
\inferrule*{
\progstate {\asteptidlab{\tid}{\lab}{\prog}} \progstate'
}{\tup{\progstate,\E} \to \tup{\progstate',\E \cup \set{\nextevent(\E,\tid,\lab)}}
}\and
\inferrule*{
\progstate {\asteplab{\epsl}{\prog}} \progstate'
}{\tup{\progstate,\E} \to \tup{\progstate',\E}}
\end{mathpar}
We say 
that $G$ is \emph{generated by $\prog$} 
if it is generated by $\prog$ with \emph{some} final state.
\end{definition}

The following alternative characterization of the association of graphs and programs,
based on traces, is useful below.

\begin{definition}
An observable program trace $\tr \in (\Tid\times\Lab)^*$ is \emph{induced} by an execution graph $G$
if $\tr=\tidlab{\lTID(e_1)}{\lLAB(e_1)} \til \tidlab{\lTID(e_n)}{\lLAB(e_n)}$
for some enumeration $e_1 \til e_n$ of $G.\lE\setminus \Init$
that respects $G.\lPO$ (\ie $\tup{e_i,e_j}\in G.\lPO$ implies that $i<j$).
We denote by $\traces{G}$ the set of all observable program trace that are induced by $G$.
\end{definition}

\begin{proposition}
\label{prop:G_traces_1}
Let $\tr \in \traces{G}$.
Then, $\traces{G} = \set {\tr' \in (\Tid\times\Lab)^* \st \tr'\sim\tr}$
(where $\sim$ is per-thread equivalence of observable program traces, see \cref{def:per-thread}).
\end{proposition}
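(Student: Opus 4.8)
The plan is to establish the two inclusions $\traces{G} \suq \set{\tr' \st \tr'\sim\tr}$ and $\set{\tr' \st \tr'\sim\tr} \suq \traces{G}$ separately, both resting on one structural observation about $G.\lPO$. \emph{The key observation:} restricted to a fixed thread $\tid\in\Tid$, the relation $G.\lPO$ is a strict total order on $G.\sE^\tid$, since any two distinct events of thread $\tid$ in $G.\lE$ carry distinct serial numbers (this holds for graphs arising from programs via $\nextevent$, and is part of the execution-graph formalism) and $\lPO$ orders same-thread events by serial number. Hence every enumeration $e_1 \til e_n$ of $G.\lE\setminus\Init$ respecting $G.\lPO$ lists the events of each thread $\tid$ in increasing-serial-number order, and so for any $\bar t \in \traces{G}$ and any $\tid$ the restriction $\bar t\rst{\tid}$ equals one fixed sequence $\mu_\tid \defeq \tup{\tid,\lLAB(f_1)} \til \tup{\tid,\lLAB(f_k)}$, where $f_1 \til f_k$ enumerates $G.\sE^\tid$ in serial-number order. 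In particular $\bar t\rst{\tid}$ depends only on $G$ and $\tid$, not on the chosen enumeration.

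For the inclusion $\suq$: if $\tr' \in \traces{G}$, then applying the observation to both $\tr'$ and the given $\tr$ gives $\tr'\rst{\tid} = \mu_\tid = \tr\rst{\tid}$ for every $\tid\in\Tid$, so $\tr'\sim\tr$ by \cref{def:per-thread}.

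For the inclusion $\supseteq$: let $\tr'$ be an observable program trace with $\tr'\sim\tr$, so $\tr'\rst{\tid} = \tr\rst{\tid} = \mu_\tid$ for all $\tid$. I would build an enumeration of $G.\lE\setminus\Init$ directly from $\tr'$: for each position $j$ with $\tr'(j) = \tup{\tid,\lab}$, let $m$ be the number of thread-$\tid$ positions $\leq j$ in $\tr'$ and set $e_j \defeq f^{\tid}_m$, the $m$-th event of $G.\sE^\tid$ in serial-number order. Then I check that (a) $e_1 \til e_{\size{\tr'}}$ is a bijective enumeration of $G.\lE\setminus\Init$ --- using $\size{\tr'} = \sum_\tid\size{\tr'\rst{\tid}} = \sum_\tid\size{\mu_\tid} = \sum_\tid\size{G.\sE^\tid} = \size{G.\lE\setminus\Init}$ together with the fact that within each thread the map from the $m$-th thread-$\tid$ position of $\tr'$ to $f^\tid_m$ is a bijection onto $G.\sE^\tid$; (b) it respects $G.\lPO$ --- there are no $\lPO$ edges between non-initialization events of distinct threads, and $\tup{e_i,e_j}\in G.\lPO$ with both events in thread $\tid$ forces $e_i = f^\tid_m$, $e_j = f^\tid_{m'}$ with $m<m'$, so the $m$-th thread-$\tid$ position precedes the $m'$-th, i.e.\ $i<j$; (c) it induces $\tr'$ --- the thread component of $\tup{\lTID(e_j),\lLAB(e_j)}$ is $\tid$ by construction, and $\lLAB(e_j)$ is the label of the $m$-th entry of $\mu_\tid = \tr'\rst{\tid}$, which is exactly the label component of $\tr'(j)$. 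Hence $\tr'\in\traces{G}$.

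The argument is essentially combinatorial bookkeeping, and the hard part will be the $\supseteq$ direction, where the position-to-event assignment extracted from $\tr'$ must be shown simultaneously bijective, $\lPO$-respecting, and label-faithful --- all of which reduce to the canonical form $\mu_\tid$ of the per-thread restrictions from the key observation. The one hypothesis worth making explicit in the write-up is that events of the same thread in $G.\lE$ have pairwise distinct serial numbers, on which the key observation depends.
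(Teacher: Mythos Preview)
Your argument is correct. The paper states this proposition without proof, treating it as an immediate consequence of the definitions; your write-up is precisely the natural unpacking of that claim, and the caveat you flag about distinct serial numbers within a thread is the one implicit well-formedness assumption the paper relies on (it holds for all execution graphs the paper actually constructs, via $\nextevent$ or from traces).
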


\begin{proposition}
\label{prop:gen_trace1}
If $G$ is generated by $\prog$ with final state $\progstate$,
then for every $\tr \in \traces{G}$,
we have $\progstate_\Init \bsteplab{\tr}{\prog}  \progstate$ for some $\progstate_\Init\in \prog.\linit$.
\end{proposition}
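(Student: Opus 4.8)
The plan is to read a concrete observable program trace off a generation derivation of $G$, check that it belongs to $\traces{G}$, and then carry the conclusion over to an arbitrary element of $\traces{G}$ using \cref{prop:G_traces_1}. First I would fix, by \cref{def:generated}, a derivation $\tup{\progstate_\Init,E_0} \to^* \tup{\progstate, G.\lE}$ with $\progstate_\Init\in\prog.\linit$ and $E_0 \suq \Init$. Projecting each configuration onto its program-state component turns this into a sequence of $\prog$-transitions from $\progstate_\Init$ to $\progstate$ in which every observable $\to$-step contributes an $\asteptidlab{\tid}{\lab}{\prog}$-step and every silent $\to$-step contributes a silent program step; reading off the labels of the observable steps in order gives a word $\tr_0 \in (\Tid\times\Lab)^*$, and unfolding the definition of the big-step relation yields $\progstate_\Init \bsteplab{\tr_0}{\prog} \progstate$. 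Moreover, each observable $\to$-step adds exactly one fresh non-initialization event (via $\nextevent$, whose serial number strictly exceeds those of the thread's events already present), so the creation order enumerates $G.\lE\setminus\Init$; since a thread's serial numbers grow along the derivation, this enumeration respects $G.\lPO$, and hence $\tr_0 \in \traces{G}$.

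Next I would apply \cref{prop:G_traces_1} to $\tr_0 \in \traces{G}$, obtaining $\traces{G} = \set{\tr' \st \tr'\sim\tr_0}$; thus any $\tr\in\traces{G}$ satisfies $\tr\sim\tr_0$, \ie $\tr\rst{\tid}=\tr_0\rst{\tid}$ for every $\tid\in\Tid$. The remaining and central step is to show that $\progstate_\Init \bsteplab{\tr_0}{\prog}\progstate$ together with $\tr\sim\tr_0$ entail $\progstate_\Init \bsteplab{\tr}{\prog}\progstate$. For this I would exploit that the LTS of $\prog$ is the interleaving of the per-thread LTSs $\prog(\tid)$: every $\prog$-transition, observable or silent, rewrites a single thread's component of the program state, and distinct threads' components evolve independently. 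From this one obtains the equivalence: for any observable program trace $\tr'$ and program states $q,q'$, $q \bsteplab{\tr'}{\prog} q'$ holds iff for every $\tid$ one has $q(\tid) \bsteplab{\tr'_\tid}{\prog(\tid)} q'(\tid)$, where $\tr'_\tid$ denotes $\tr'\rst{\tid}$ with thread tags removed. The forward direction is an induction on the length of the big-step derivation that keeps track of the per-thread decomposition; the backward direction reassembles the per-thread big-step derivations into one interleaving that emits the observable labels exactly in the order prescribed by $\tr'$, slotting each thread's silent steps in at the matching points, which is sound since steps of distinct threads commute. Applying the forward direction to $\tr_0$ and then the backward direction to $\tr$ (justified because $\tr$ and $\tr_0$ have identical per-thread projections) gives $\progstate_\Init \bsteplab{\tr}{\prog}\progstate$.

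I expect the last step to be the main obstacle: the equivalence between a big-step run of $\prog$ and the family of big-step runs of its component threads, and in particular its backward direction, where the silent steps of each per-thread run must be interleaved consistently with the global order of $\tr$. A serviceable alternative is to observe that $\tr\sim\tr_0$ holds iff $\tr$ arises from $\tr_0$ by finitely many transpositions of adjacent labels belonging to distinct threads (a Mazurkiewicz-style fact for the independence relation ``different thread''), and to prove instead a single-transposition commutation lemma for big steps; the fiddly point then migrates to sorting, by thread, the block of silent steps that separates the two transposed observable steps, but the rest is routine.
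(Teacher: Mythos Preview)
The paper states this proposition without proof, treating it as a routine consequence of the definitions; there is no argument in the paper to compare against. Your outline is correct and supplies exactly the details the paper leaves implicit: extract a canonical trace $\tr_0$ from the generating derivation, observe that $\tr_0\in\traces{G}$ because $\nextevent$ allocates strictly increasing serial numbers per thread, invoke \cref{prop:G_traces_1} to reduce to per-thread equivalence, and finish via the decomposition of $\bsteplab{}{\prog}$ into per-thread runs afforded by the interleaving definition of the program LTS. The only comment is that your ``main obstacle'' is already essentially the content of \cref{prop:bigstep-prefix} (closure of $\prog$-observable traces under per-thread prefixes) strengthened from prefixes to the full equivalence $\sim$; both rest on the same independence-of-threads argument, so this is indeed routine.
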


\begin{proposition}
\label{prop:gen_trace2}
If $\progstate_\Init \bsteplab{\tr}{\prog}  \progstate$ for some $\progstate_\Init\in \prog.\linit$ and $\tr \in \traces{G}$,
then  $G$ is generated by $\prog$ with final state $\progstate$.
\end{proposition}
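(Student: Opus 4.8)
The plan is to \emph{replay} the small-step derivation underlying $\progstate_\Init \bsteplab{\tr}{\prog} \progstate$ while incrementally growing the event set, checking that each event synthesized by $\nextevent$ is exactly the corresponding event of $G$. Let $n = \size{\tr}$ and fix an enumeration $e_1 \til e_n$ of $G.\lE \setminus \Init$ witnessing $\tr \in \traces{G}$: it respects $G.\lPO$ and $\tr = \tidlab{\lTID(e_1)}{\lLAB(e_1)} \til \tidlab{\lTID(e_n)}{\lLAB(e_n)}$. Put $E_0 \defeq G.\lE \cap \Init$ (so $E_0 \suq \Init$), let $\tr_i$ be the length-$i$ prefix of $\tr$ (so $\tr_0 = \epsilon$ and $\tr_n = \tr$), and set $\E_i \defeq E_0 \cup \set{e_j \st 1 \le j \le i}$ for $0 \le i \le n$; note $\E_0 = E_0$ and $\E_n = G.\lE$. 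I would then prove, by induction on $i$, that \emph{for every $\progstate'$ with $\progstate_\Init \bsteplab{\tr_i}{\prog} \progstate'$ one has $\tup{\progstate_\Init, E_0} \to^* \tup{\progstate', \E_i}$}, and finish by instantiating $i = n$, $\progstate' = \progstate$: this gives $\tup{\progstate_\Init, E_0} \to^* \tup{\progstate, G.\lE}$ with $\progstate_\Init \in \prog.\linit$ and $E_0 \suq \Init$, which is exactly the statement that $G$ is generated by $\prog$ with final state $\progstate$ (\cref{def:generated}).

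For the inductive step I would use that, by definition of the big-step relation, $\bsteplab{\tr_i}{\prog}$ equals $\tcsteplab{\epsl}{\prog} \seq \asteplab{\tr(1)}{\prog} \seq \cdots \seq \asteplab{\tr(i)}{\prog} \seq \tcsteplab{\epsl}{\prog}$, hence factors as $\bsteplab{\tr_{i-1}}{\prog} \seq \asteplab{\tr(i)}{\prog} \seq \tcsteplab{\epsl}{\prog}$. So from $\progstate_\Init \bsteplab{\tr_i}{\prog} \progstate'$ I would extract states $\progstate_0, \progstate_1$ with $\progstate_\Init \bsteplab{\tr_{i-1}}{\prog} \progstate_0$, $\progstate_0 \asteplab{\tr(i)}{\prog} \progstate_1$, and $\progstate_1 \tcsteplab{\epsl}{\prog} \progstate'$. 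The induction hypothesis gives $\tup{\progstate_\Init, E_0} \to^* \tup{\progstate_0, \E_{i-1}}$; the observable step $\progstate_0 \asteplab{\tidlab{\lTID(e_i)}{\lLAB(e_i)}}{\prog} \progstate_1$ fires the first $\to$-rule, taking $\tup{\progstate_0, \E_{i-1}}$ to $\tup{\progstate_1, \E_{i-1} \cup \set{\nextevent(\E_{i-1}, \lTID(e_i), \lLAB(e_i))}}$; and each trailing $\epsl$-step fires the second $\to$-rule, which leaves the event set unchanged. The base case $i = 0$ is the same argument restricted to $\epsl$-steps.

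The hard part will be the event-synthesis bookkeeping: verifying that $\nextevent(\E_{i-1}, \lTID(e_i), \lLAB(e_i)) = e_i$, so that the event set indeed passes from $\E_{i-1}$ to $\E_i$ in the step above. Writing $\tid = \lTID(e_i)$: since initialization events carry thread identifier $\bot$, the thread-$\tid$ events of $\E_{i-1}$ are exactly those among $e_1 \til e_{i-1}$, and because the chosen enumeration respects $G.\lPO$ — which linearly orders $G.\sE^\tid$ by serial number — these are precisely the events of $G.\sE^\tid$ with serial number below $\lSN(e_i)$. Hence $\nextevent(\E_{i-1}, \tid, \lLAB(e_i))$ carries thread identifier $\tid$, label $\lLAB(e_i)$, and serial number one greater than the largest thread-$\tid$ serial number below $\lSN(e_i)$, namely $\lSN(e_i)$ itself (serial numbers of a thread's events being consecutive); so it equals $e_i$. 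Everything else — the factorization of $\bsteplab{\cdot}{\prog}$ and the lifting of $\epsl$-steps to event-set-preserving $\to$-steps — is a routine unwinding of the definitions.
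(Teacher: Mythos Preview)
The paper states this proposition without proof, so there is no argument to compare against; your inductive replay of the trace—growing the event set by one event per observable step and absorbing $\epsl$-steps via the second $\to$-rule—is precisely the intended (and only natural) proof.

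That said, the step you yourself flag as ``the hard part'' conceals a real gap. You need $\nextevent(\E_{i-1},\tid,\lLAB(e_i))=e_i$, and you justify the serial-number component by the parenthetical ``serial numbers of a thread's events being consecutive''. But \cref{def:execution} puts no such constraint on $G.\lE$: nothing forbids, say, $G.\sE^\tid=\{\event{\tid}{3}{\lab_1},\event{\tid}{7}{\lab_2}\}$, and then $\nextevent(E_0,\tid,\lab_1)$ with $E_0\subseteq\Init$ has serial number $\max\emptyset+1$, not $3$, so the $\to^*$-derivation of \cref{def:generated} can never land on $G.\lE$. In fact the execution graphs the paper itself builds—e.g.\ in the proof of \cref{thm:PTSO_refines_DPTSO}, where the serial number of $e_i$ is the raw index $i$ into the full instrumented trace—do \emph{not} have per-thread consecutive serial numbers. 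So the proposition is literally false as stated; the fix the paper tacitly intends is to read ``generated by $\prog$'' up to a per-thread order-preserving renaming of serial numbers (equivalently, restrict attention to graphs whose per-thread serial numbers form an initial segment of $\N_{>0}$). Your argument goes through verbatim once that normalisation is added, but the parenthetical currently asserts something that is neither proved nor provable from the definitions as written.
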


Now, following~\cite{pxes-popl}, reachability of program states under a 
declarative persistency model \D  is defined using ``chains'' of \D-consistent execution graphs,
each of which represents the behavior obtained between two consecutive crashes.
\Cref{ex:dptso_basic,ex:dptso_ex} below illustrate some execution graph chains
for simple programs.

\begin{definition}
\label{def:dec_reachable}
A program state $\progstate\in\prog.\lQ$ is \emph{reachable
under a declarative persistency model} \D if 
there exist \D-consistent execution graphs $G_0\til G_n$ such that:
\begin{itemize}
\item For every $0\leq i\leq n-1$, $G_i$ is generated by $\prog$.
\item $G_n$  is generated by $\prog$ with final state $\progstate$.
\item $G_0$ is $\mem_\Init$-initialized (where $\mem_\Init = \lambda \loc\in\Loc.\; 0$).
\item For every $1\leq i\leq n$, $G_i$ is $\mem(G_{i-1})$-initialized.
\end{itemize}
\end{definition}

In the sequel, we provide declarative formulations for (operational)
persistent memory subsystems (see~\cref{def:pms}).
Observational refinements (and equivalence)
between a persistent memory subsystem \M and a declarative persistency model \D
are defined  just like observational refinements between persistent memory subsystems (see~\cref{def:memory_refine}), 
comparing reachable program states under \M (using \cref{def:reachable})
to reachable program states under \D (using \cref{def:dec_reachable}).

The following lemmas are useful establishing refinements 
without considering \emph{all programs} and \emph{crashes}
(compare with \cref{lem:memory_refine}).
In both lemmas \M denotes 
a persistent memory subsystem \M,
and \D denotes a declarative persistency model.

\begin{restatable}{lemma}{oprefinesdec}\label{lem:op_refines_dec}
The following conditions together ensure that \M observationally refines \D:
\begin{enumerate}
\item[(i)]  For every $\mem_0$-initialized \M-observable-trace $\tr$,
there exists a \D-consistent $\mem_0$-initialized execution graph $G$ such that $\tr\in\traces{G}$.
\item[(ii)] For every $\mem_0$-to-$\mem$ \M-observable-trace $\tr$,
there exist $\tr' \lesssim \tr$ and \D-consistent $\mem_0$-initialized  execution graph
such that $\tr'  \in \traces{G}$ and $\mem(G)=\mem$.
\end{enumerate}
\end{restatable}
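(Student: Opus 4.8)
The plan is to mirror the proof of \cref{lem:memory_refine}, using a chain of $\D$-consistent execution graphs in place of a run of the ``target'' persistent system. Fix a program $\prog$ and a program state $\progstate$ reachable under $\M$. By \cref{def:reachable,def:concurrent_system} there is a run of $\cs{\prog}{\M}$ reaching $\progstate$, whose sequence of non-silent, non-crash labels decomposes as $\tr_0 \cdot \crash \cdot \tr_1 \cdots \crash \cdot \tr_n$ for some $n \ge 0$ and observable program traces $\tr_0 \til \tr_n$. First I would read off from this run the non-volatile memories $\mem^{(0)} \til \mem^{(n+1)}$ encountered at the crash boundaries (with $\mem^{(0)} = \mem_\Init$), using the fact --- immediate from the crash rule of \cref{def:concurrent_system} --- that a $\crash$ step preserves the non-volatile component while reinitializing the program and volatile states. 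Projecting the run onto its $\M$-component and onto its $\prog$-component then gives, for each $0 \le i \le n$, that $\tr_i$ is a $\mem^{(i)}$-to-$\mem^{(i+1)}$ $\M$-observable-trace, and that $\progstate_\Init^{(i)} \bsteplab{\tr_i}{\prog} \sprogstate^{(i)}$ for some $\progstate_\Init^{(i)} \in \prog.\linit$ and program state $\sprogstate^{(i)}$, with $\sprogstate^{(n)} = \progstate$.

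Next I would build the execution graphs one segment at a time. For each $i < n$, hypothesis~(ii) applied to $\tr_i$ yields some $\tr_i' \lesssim \tr_i$ and a $\D$-consistent $\mem^{(i)}$-initialized graph $G_i$ with $\tr_i' \in \traces{G_i}$ and $\mem(G_i) = \mem^{(i+1)}$. For $i = n$, observe that $\tr_n$ is $\mem^{(n)}$-initialized (it is $\mem^{(n)}$-to-$\mem^{(n+1)}$), so hypothesis~(i) yields a $\D$-consistent $\mem^{(n)}$-initialized graph $G_n$ with $\tr_n \in \traces{G_n}$. The reason to use~(ii) rather than~(i) on the early segments is exactly the per-thread-prefix slack it grants: the program state that $G_i$ ends in is discarded by the following crash, so it is harmless that $\D$ might only realize a prefix $\tr_i'$ of $\tr_i$ while still reaching the required memory $\mem^{(i+1)}$.

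To conclude, I would check the clauses of \cref{def:dec_reachable} for the chain $G_0 \til G_n$. That each $G_i$ with $i < n$ is generated by $\prog$ follows from \cref{prop:bigstep-prefix} --- so that $\tr_i' \lesssim \tr_i$ is again a $\prog$-observable-trace, hence $\progstate_\Init \bsteplab{\tr_i'}{\prog} \sprogstate$ for some $\progstate_\Init \in \prog.\linit$ and state $\sprogstate$ --- combined with \cref{prop:gen_trace2}; and, directly from \cref{prop:gen_trace2} (using $\tr_n \in \traces{G_n}$ and $\progstate_\Init^{(n)} \bsteplab{\tr_n}{\prog} \progstate$), $G_n$ is generated by $\prog$ with final state $\progstate$. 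The initialization clauses hold since $G_0$ is $\mem^{(0)}$-initialized with $\mem^{(0)} = \mem_\Init$, and for $1 \le i \le n$ the graph $G_i$ is $\mem^{(i)}$-initialized while $\mem^{(i)} = \mem(G_{i-1})$ by construction (both equal the non-volatile memory at the $i$-th crash boundary). The degenerate case $n = 0$ uses only hypothesis~(i) and yields a one-graph chain.

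I do not expect a genuine obstacle: the lemma is essentially a repackaging of \cref{lem:memory_refine}'s argument through the graph/trace correspondence (\cref{prop:gen_trace1,prop:gen_trace2}). The steps needing care --- and the ones I would treat as the ``hard'' part --- are cleanly extracting the intermediate non-volatile memories and the per-segment $\prog$-sub-runs from the single combined $\cs{\prog}{\M}$-run, and checking that the non-volatile memory carried out of $G_{i-1}$ agrees with the one $G_i$ is initialized with; both are bookkeeping, since crashes leave $\mem$ untouched and hypothesis~(ii) delivers exactly $\mem(G_i) = \mem^{(i+1)}$.
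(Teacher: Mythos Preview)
Your proposal is correct and follows essentially the same approach as the paper: decompose the $\cs{\prog}{\M}$-run at crash boundaries, apply hypothesis~(ii) to each non-final segment and hypothesis~(i) to the final one, then invoke \cref{prop:bigstep-prefix} and \cref{prop:gen_trace2} to verify the clauses of \cref{def:dec_reachable}. The paper additionally makes the per-segment projection explicit via \cref{prop:bigstep}, but this is exactly what you describe as ``projecting the run onto its $\M$-component and onto its $\prog$-component.''
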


\begin{restatable}{lemma}{decrefinesop}\label{lem:dec_refines_op}
If for every \D-consistent initialized execution graph $G$, 
some $\tr\in\traces{G}$
is an $\mem_\Init(G)$-to-$\mem(G)$ \M-observable-trace,
then \D observationally refines \M.
\end{restatable}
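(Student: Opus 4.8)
The plan is to show that $\D$ observationally refines $\M$ by taking an arbitrary program $\prog$ and an arbitrary program state $\progstate$ reachable under $\D$, and producing a run of $\cs{\prog}{\M}$ reaching $\progstate$. By \cref{def:dec_reachable}, reachability under $\D$ gives us $\D$-consistent execution graphs $G_0, \ldots, G_n$ with $G_0$ being $\mem_\Init$-initialized, each $G_i$ ($i \geq 1$) being $\mem(G_{i-1})$-initialized, each $G_i$ for $i \leq n-1$ generated by $\prog$, and $G_n$ generated by $\prog$ with final state $\progstate$. The strategy is to process these graphs one at a time, interleaving $\M$-runs with crash transitions: first I would run $\cs{\prog}{\M}$ from its initial state to realize $G_0$, then crash, then realize $G_1$ starting from the non-volatile memory $\mem(G_0)$ left by the crash, then crash, and so on, until finally realizing $G_n$ and arriving at $\progstate$.

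The key technical step is applying the hypothesis to each $G_i$. Since each $G_i$ is $\D$-consistent and initialized (with $\mem_\Init(G_0) = \mem_\Init$ and $\mem_\Init(G_i) = \mem(G_{i-1})$ for $i \geq 1$), the hypothesis yields a trace $\tr_i \in \traces{G_i}$ that is an $\mem_\Init(G_i)$-to-$\mem(G_i)$ $\M$-observable-trace. First I would use \cref{prop:gen_trace1}: since $G_i$ is generated by $\prog$ (with final state $\progstate$ in the case $i = n$), the trace $\tr_i \in \traces{G_i}$ satisfies $\progstate_\Init^{(i)} \bsteplab{\tr_i}{\prog} \progstate^{(i)}$ for some initial program state $\progstate_\Init^{(i)} \in \prog.\linit$ (and $\progstate^{(n)} = \progstate$). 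Combining the program-side big-step and the $\M$-side big-step via the synchronization rules of \cref{def:concurrent_system}, we obtain $\tup{\progstate_\Init^{(i)}, \mem_\Init(G_i), \vmem_\Init} \bsteplab{\tr_i}{\cs{\prog}{\M}} \tup{\progstate^{(i)}, \mem(G_i), \vmem}$ for suitable volatile states $\vmem_\Init \in \M.\lvinit$ and $\vmem$. Then a crash transition (the last rule of \cref{def:concurrent_system}) resets the program state to some $\progstate_\Init^{(i+1)} \in \prog.\linit$ and the volatile state to some $\vmem_\Init' \in \M.\lvinit$, while leaving the non-volatile memory at $\mem(G_i) = \mem_\Init(G_{i+1})$ — exactly the starting memory needed for the next segment. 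Chaining these segments, separated by crash transitions, gives a run of $\cs{\prog}{\M}$ from an initial state to $\tup{\progstate, \mem(G_n), \vmem}$, witnessing that $\progstate$ is reachable under $\M$.

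The main obstacle — though it is really bookkeeping rather than a deep difficulty — is matching up the initial program state chosen by the crash transition with the one demanded by \cref{prop:gen_trace1} for the next graph: the crash rule in \cref{def:concurrent_system} is nondeterministic over all of $\prog.\linit$, so it can produce precisely the $\progstate_\Init^{(i+1)}$ that \cref{prop:gen_trace1} hands us. Similarly, the crash can pick whichever $\vmem_\Init' \in \M.\lvinit$ is needed to start the $\M$-observable-trace $\tr_{i+1}$. One should also be slightly careful that the $\bsteplab{\tr_i}{\cs{\prog}{\M}}$ notation absorbs silent $\epsl$-transitions on both sides — this is fine because the program's $\epsl$-transitions and $\M$'s $\epsl$-transitions each lift to $\epsl$-transitions of $\cs{\prog}{\M}$ by the second and third synchronization rules. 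A minor edge case is $n = 0$ (no crashes), which is handled by the single segment for $G_0$ with no crash transitions at all.
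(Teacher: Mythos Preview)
Your proposal is correct and follows essentially the same approach as the paper's own proof: apply the hypothesis to each $G_i$ to obtain an $\mem_\Init(G_i)$-to-$\mem(G_i)$ $\M$-observable-trace $\tr_i \in \traces{G_i}$, use \cref{prop:gen_trace1} to get the program-side big-steps, combine them via the synchronization of \cref{def:concurrent_system}, and chain the resulting segments with crash transitions. If anything, you are more explicit than the paper about the crash-transition bookkeeping (matching the nondeterministic choice of initial program and volatile states to those required by the next segment).
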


\subsection{The \DPTSO Declarative Persistency Model}
\label{sec:dptso}

In this section we define the declarative \DPTSO model.
As in (standard) TSO models~\cite{x86-tso,sra}, 
\DPTSO-consistency requires one to justify an execution graph with a
\emph{TSO propagation order} ($\tpo$), which, roughly speaking, corresponds to the order
in which the events in the graph are propagated from the store buffers.

\begin{definition}
\label{def:tpo}
The set of \emph{propagated events}, denoted by $\sP$, is given by:
$$\sP \defeq \sW \cup \sU \cup \sRex \cup \sMF \cup \sFL \cup \sFO \cup \sSF \qquad (= \sE \setminus \sR).$$
Given an execution graph $G$, 
a strict total order $\tpo$ on $G.\sP$
is called a \emph{TSO propagation order} for $G$.
\end{definition}

\DPTSO-consistency sets several conditions on the TSO propagation order
that, except for one novel condition related to persistency, 
are adopted from the model in~\cite{sra} (which, in turn, is a variant of the model in~\cite{x86-tso}).
To define these conditions, we use the standard ``from-read'' derived relation, 
which places a read (or RMW) $r$ before a write (or RMW) $w$
when $r$ reads from a write that was propagated before $w$.
We parametrize this concept by the order on writes.
(Here we only need $R=\tpo$, but we reuse this definition in \cref{def:mo} with a different $R$.)

\begin{definition}
\label{def:fr}
The \emph{from-read} (\aka \emph{reads-before}) relation for an execution graph $G$
and a strict partial order $R$ on $G.\lE$,
denoted by $G.\lFR(R)$, is defined by:
$$G.\lFR(R) \defeq 
\bigcup_{\loc\in\Loc} ([\sR_\loc \cup \sU_\loc \cup \sRex_\loc] \seq G.\lRF^{-1} \seq R \seq [\sW_\loc \cup \sU_\loc])
\setminus [\sE].$$
\end{definition}

Next, for persistency, we use one more derived relation. 
Since flushes and sfences in \PTSOsynnn take effect at the moment 
they propagate from the store buffer, we can \emph{derive} 
the existence of a propagation order from any flush event to location $\loc$
(or flush-optimal to $x$ followed by sfence) to any write $w$ to $\loc$ that propagated from
the store buffer after $G.\lMEMF(\loc)$ persisted.
Indeed, if the propagation order went in the opposite direction,
we would be forced to persist $w$ and overwrite $G.\lMEMF(\loc)$,
but the latter corresponds the last persisted write to $\loc$. 
This derived order is formalized as follows.
(Again, we need $R=\tpo$, but this definition is reused in \cref{def:mo} with a different $R$.)

\begin{definition}
\label{def:dtpo}
The \emph{derived TSO propagation order} for an execution graph $G$
and a strict partial order $R$ on $G.\lE$,
denoted by $G.\lDTPO(R)$, is defined by:
$$G.\lDTPO(R) \defeq 
\bigcup_{\loc\in\Loc} G.\lFLO_\loc \times \set{w\in \sW_\loc \cup \sU_\loc \st \tup{G.\lMEMF(\loc),w} \in R}$$
where $G.\lFLO_\loc$ is the following set:
$$G.\lFLO_\loc \defeq G.\sFL_\loc \cup (\sFO_\loc \cap \dom{G.\lPO\seq  [\sU \cup \sRex \cup \sMF \cup \sSF]}).$$
\end{definition}

Using $\lFR$ and $\lDTPO$, $\DPTSO$-consistency is defined as follows.

\begin{definition}
\label{def:DPTSO}
The declarative persistency model
$\DPTSO$ consists of all 
execution graphs $G$ for which 
there exists a propagation order $\tpo$ for $G$ such that the following hold:
\begin{enumerate}
\item For every $a,b\in \sP$,
  except for the case that $a\in \sW \cup \sFL \cup \sFO$, $b\in \sFO$, and $\lLOC(a)\neq \lLOC(b)$,
  if $\tup{a,b}\in G.\lPO$, then $\tup{a,b}\in \tpo$.
\item $\tpo^? \seq G.\lRFE \seq G.\lPO^?$  is irreflexive.
\item $G.\lFR(\tpo) \seq G.\lRFE^? \seq G.\lPO$ is irreflexive.
\item $G.\lFR(\tpo) \seq \tpo$  is irreflexive.
\item $G.\lFR(\tpo) \seq \tpo \seq G.\lRFE \seq G.\lPO$  is irreflexive.
\item $G.\lFR(\tpo) \seq \tpo \seq [\sU \cup \sRex \cup \sMF] \seq G.\lPO$ is irreflexive.

\item $G.\lDTPO(\tpo) \seq \tpo$ is irreflexive.

\end{enumerate}
\end{definition}

Conditions $(1)-(6)$ take care of the concurrency part of the model.
They are taken from~\cite{sra} and slightly adapted to take into account the fact that  
our propagation order also orders 
$\lFL$, $\lFO$, and $\lSF$ events which do not exist in non-persistent TSO models.\footnote{Another 
technical difference is that we ensure here that failed CAS instructions,
represented as $\lRex$ events, are also acting as mfences, while in \cite{sra,pxes-popl} they 
are not distinguished from plain reads.}
The only conditions that affect the propagation order on such events are $(1)$ and $(2)$.
Condition $(1)$ forces the propagation order to agree with the program order,
except for the order between a  $\lW/\lFL/\lFO$-event and 
a subsequent $\lFO$-event to a different location.
This corresponds to the fact that propagation from \PTSOsynnn's store buffers
is \emph{in-order}, except for out-of-order propagation of $\lFO$'s,
which can ``overtake'' preceding writes, flushes, and flush-optimals to different locations.
In turn, condition $(2)$ ensures that if a read event observes some write $w$ in the persistence buffer (or persistent memory)
via $G.\lRFE$, then subsequent events (including $\lFL/\lFO/\lSF$-events) are necessarily
propagated from the store buffer after the write $w$.

Condition $(7)$ is our novel constraint.
It is the only condition required for the persistency part of the model. 
The approach in~\cite{pxes-popl} for \PTSO
requires the existence of a persistence order, reflecting the order in which writes persist
(after they propagate), and enforce certain condition on this order.
This makes the semantics less abstract (in the sense that it is closer to operational traces).
Instead, we use the derived propagation order (induced by the graph component, $G.\lMEMF$),
and require that it must agree with the propagation order itself.
This condition ensures that if a write $w$ to location $\loc$ propagated from the store buffer before some flush to $\loc$,
then the last persisted write cannot be a write that propagated \emph{before} $w$.
The same holds if $w$ propagated before some flush-optimal to $\loc$
that is followed by an sfence by the same thread (or any other instruction that has the effect of an sfence).

The following simple lemma is useful below.

\begin{lemma}
\label{lem:dtpo}
Let $\tpo$ be a propagation order for an execution graph $G$
for which the conditions of \cref{def:DPTSO} hold.
Then, $G.\lDTPO(\tpo) \suq \tpo$.
\end{lemma}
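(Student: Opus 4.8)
The plan is to derive the inclusion directly from condition $(7)$ of \cref{def:DPTSO} --- irreflexivity of $G.\lDTPO(\tpo)\seq\tpo$ --- together with the fact that $\tpo$ is a strict \emph{total} order on $G.\sP$.

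First I would take an arbitrary pair $\tup{a,w}\in G.\lDTPO(\tpo)$ and unfold \cref{def:dtpo}: there is a location $\loc$ with $a\in G.\lFLO_\loc$, $w\in\sW_\loc\cup\sU_\loc$, and $\tup{G.\lMEMF(\loc),w}\in\tpo$. Two bookkeeping observations follow. First, both $a$ and $w$ are events of $G$: we have $G.\lFLO_\loc\suq G.\lE$ (since $G.\sFL_\loc\suq G.\lE$ and the domain of the relation $G.\lPO\seq[\sU\cup\sRex\cup\sMF\cup\sSF]$ is contained in $G.\lE$), and $w\in G.\lE$ because $\tpo$ is a relation on $G.\sP\suq G.\lE$. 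Second, both lie in $\sP$ and are distinct: $a\in\sFL\cup\sFO$ while $w\in\sW\cup\sU$, so $a$ and $w$ have different types --- in particular $a\neq w$ --- and all of $\sFL,\sFO,\sW,\sU$ are contained in $\sP$. Hence $a$ and $w$ are two distinct elements of $G.\sP$.

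Next, since $\tpo$ is a strict total order on $G.\sP$, either $\tup{a,w}\in\tpo$ or $\tup{w,a}\in\tpo$. I would rule out the latter: if $\tup{w,a}\in\tpo$, then composing it with $\tup{a,w}\in G.\lDTPO(\tpo)$ gives $\tup{a,a}\in G.\lDTPO(\tpo)\seq\tpo$, contradicting condition $(7)$. Therefore $\tup{a,w}\in\tpo$, and since $\tup{a,w}$ was arbitrary, $G.\lDTPO(\tpo)\suq\tpo$.

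I do not expect a genuine obstacle: the statement is essentially condition $(7)$ rephrased via the totality of $\tpo$, and the only care needed is in verifying the side conditions for invoking totality --- namely that $a$ and $w$ are distinct and both reside in the domain $G.\sP$ of $\tpo$ --- which are immediate from the type information built into the definition of $G.\lDTPO$.
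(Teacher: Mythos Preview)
Your argument is correct and matches the paper's own proof, which simply notes that the inclusion ``easily follows from the fact that $\tpo$ is total on $G.\sP$ and the last condition in \cref{def:DPTSO}.'' You have spelled out exactly those two ingredients --- totality (after checking $a,w\in G.\sP$ are distinct) and condition~(7) --- so there is nothing to add.
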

\begin{proof}
Easily follows from the fact that $\tpo$ is total on $G.\sP$
and the last condition in \cref{def:DPTSO}.
\end{proof}

\begin{example}
\label{ex:dptso_basic}
The execution graphs depicted below correspond to the annotated behaviors of the simple sequential programs in \cref{ex:ptso_basic}.
For every location $\loc$, the event $G.\lMEMF(\loc)$ is \persist{\text{highlighted}}.
The solid edges are program order edges.
In each graph, we also depict the $\tpo$-edges that are forced in order to satisfy conditions $(1)-(6)$ above,
and the $G.\lDTPO(\tpo)$-edges they induce.
Execution graphs (A) and (C) are \DPTSO-consistent, while (B) and (D) violate condition $(7)$ above.

{\small\smaller
\begin{tikzpicture}[yscale=0.7,xscale=1.2]
\node[revisit] (0x)  at (-0.7,4) {$\wlab{\cloc{1}}{0}$};
\node (0y)  at (0.7,4) {$\wlab{\cloc{2}}{0}$};
\node (11)  at (0,3) {$\wlab{\cloc{1}}{1}$ };
\node[revisit] (14)  at (0,0) {$\wlab{\cloc{2}}{1}$ };
  \draw[po] (0x) edge (11);
  \draw[po] (0y) edge (11);
  \draw[po] (11) edge (14);
      \draw[tpo,bend right=25] (0x) edge node[left,pos=0.5] {\smaller$\tpo$} (11);
            \draw[tpo,bend left=25] (0y) edge node[right,pos=0.5] {\smaller$\tpo$} (11);
      \draw[tpo,bend right=15] (11) edge node[left,pos=0.5] {\smaller$\tpo$} (14);
\end{tikzpicture}
\hfill
\begin{tikzpicture}[yscale=0.7,xscale=1.2]
\node[revisit] (0x)  at (-0.7,4) {$\wlab{\cloc{1}}{0}$};
\node (0y)  at (0.7,4) {$\wlab{\cloc{2}}{0}$};
\node (11)  at (0,3) {$\wlab{\cloc{1}}{1}$ };
\node (12)  at (0,1.5) {$\fllab{\cloc{1}}$ };
\node[revisit] (14)  at (0,0) {$\wlab{\cloc{2}}{1}$ };
  \draw[po] (0x) edge (11);
  \draw[po] (0y) edge (11);
  \draw[po] (11) edge (12);
  \draw[po] (12) edge (14);
      \draw[tpo,bend right=25] (0x) edge node[left,pos=0.5] {\smaller$\tpo$} (11);
            \draw[tpo,bend left=25] (0y) edge node[right,pos=0.5] {\smaller$\tpo$} (11);
      \draw[tpo,bend right=15] (11) edge node[left,pos=0.5] {\smaller$\tpo$} (12);
      \draw[tpo,bend right=15] (12) edge node[left,pos=0.5] {\smaller$\tpo$} (14);
      \draw[dtpo,bend right=15] (12) edge node[right,pos=0.5] {\smaller$\dtpo$} (11);
\end{tikzpicture}
\hfill
\begin{tikzpicture}[yscale=0.7,xscale=1.2]
\node[revisit] (0x)  at (-0.7,4) {$\wlab{\cloc{1}}{0}$};
\node (0y)  at (0.7,4) {$\wlab{\cloc{2}}{0}$};
\node (11)  at (0,3) {$\wlab{\cloc{1}}{1}$ };
\node (12)  at (0,1.5) {$\folab{\cloc{1}}$ };
\node[revisit] (14)  at (0,0) {$\wlab{\cloc{2}}{1}$ };
  \draw[po] (0x) edge (11);
  \draw[po] (0y) edge (11);
  \draw[po] (11) edge (12);
  \draw[po] (12) edge (14);
      \draw[tpo,bend right=25] (0x) edge node[left,pos=0.5] {\smaller$\tpo$} (11);
            \draw[tpo,bend left=25] (0y) edge node[right,pos=0.5] {\smaller$\tpo$} (11);
      \draw[tpo,bend right=15] (11) edge node[left,pos=0.5] {\smaller$\tpo$} (12);
      \draw[tpo,bend right=15] (12) edge node[left,pos=0.5] {\smaller$\tpo$} (14);
\end{tikzpicture}
\hfill
\begin{tikzpicture}[yscale=0.7,xscale=1.2]
\node[revisit] (0x)  at (-0.7,4) {$\wlab{\cloc{1}}{0}$};
\node (0y)  at (0.7,4) {$\wlab{\cloc{2}}{0}$};
\node (11)  at (0,3) {$\wlab{\cloc{1}}{1}$ };
\node (12)  at (0,2) {$\folab{\cloc{1}}$ };
\node (13)  at (0,1) {$\sflab$ };
\node[revisit] (14)  at (0,0) {$\wlab{\cloc{2}}{1}$ };
  \draw[po] (0x) edge (11);
  \draw[po] (0y) edge (11);
  \draw[po] (11) edge (12);
  \draw[po] (12) edge (13);
  \draw[po] (13) edge (14);
      \draw[tpo,bend right=25] (0x) edge node[left,pos=0.5] {\smaller$\tpo$} (11);
            \draw[tpo,bend left=25] (0y) edge node[right,pos=0.5] {\smaller$\tpo$} (11);
      \draw[tpo,bend right=15] (11) edge node[left,pos=0.5] {\smaller$\tpo$} (12);
      \draw[tpo,bend right=15] (12) edge node[left,pos=0.5] {\smaller$\tpo$} (13);
      \draw[tpo,bend right=15] (13) edge node[left,pos=0.5] {\smaller$\tpo$} (14);
      \draw[dtpo,bend right=15] (12) edge node[right,pos=0.5] {\smaller$\dtpo$} (11);
\end{tikzpicture}
}

\noindent
$\qquad\quad (A)~~ \cmark \hfill
(B)~~ \xmark \hfill
(C)~~ \cmark \hfill
(D)~~ \xmark \qquad\quad$
\end{example}

\begin{example}
\label{ex:dptso_ex}
The following example (variant of \cref{ex:fo-ooo}) demonstrates a non-volatile outcome 
that is justified with a sequence of two \DPTSO-consistent execution graphs.
In the graphs below we use serial numbers $\color{colorTPO}(n)$ to present a possible valid $\tpo$ relation
Note that, for the first graph, it is crucial that program order from a write to an $\lFO$-event of a different location
does not enforce a $\tpo$-order in the same direction
(otherwise, the graph would violate condition $(7)$ above).

\noindent
{\small\smaller
\begin{minipage}{.22\textwidth}
$$
\inarrII{
\ifThenInst{(\cloc{2}= 3)}{
\ifThenInst{(\cloc{1}= 0)}{
\;\;\ifThenInst{(\cloc{3}= 1)}{{\;\;\;\;\persist{{\writeInst{\cloc{3}}{2}}}}}}} \sep  \\
\writeInst{\cloc{1}}{1} \sep \\
\writeInst{\cloc{2}}{1} \sep  \\
\ifThenInst{\cloc{2}= 2}{\persist{{\writeInst{\cloc{2}}{3}}}} \sep 
}{
\writeInst{\cloc{2}}{2} \sep  \\
\foInst{\cloc{1}} \sep \\
\sfenceInst \sep \\
\writeInst{\cloc{3}}{1} \sep
}
$$
\end{minipage}
\quad
\begin{minipage}{.35\textwidth}
\begin{tikzpicture}[yscale=0.6,xscale=1.9]
\node (0x)  at (-0.5,4) {${\color{colorTPO}(1)~} \persist{\wlab{\cloc{1}}{0}}$};
\node (0y)  at (0.5,4) {${\color{colorTPO}(2)~} \wlab{\cloc{2}}{0}$};
\node (0z)  at (1.5,4) {${\color{colorTPO}(3)~} \wlab{\cloc{3}}{0}$};
\node (11)  at (0,3) {$\rlab{\cloc{2}}{0}$ };
\node (12)  at (0,2) {${\color{colorTPO}(5)~} \wlab{\cloc{1}}{1}$};
\node (13)  at (0,1) {${\color{colorTPO}(6)~} \wlab{\cloc{2}}{1}$};
\node (14)  at (0,0) {$\rlab{\cloc{2}}{2}$ };
\node (15)  at (0,-1) {${\color{colorTPO}(8)~} \persist{\wlab{\cloc{2}}{3}}$};
\node (21)  at (1,3) {${\color{colorTPO}(7)~} \wlab{\cloc{2}}{2}$ };
\node (22)  at (1,2) {${\color{colorTPO}(4)~} \folab{\cloc{1}}$};
\node (23)  at (1,1) {${\color{colorTPO}(9)~} \sflab$};
\node (24)  at (1,0) {${\color{colorTPO}(10)~} \persist{\wlab{\cloc{3}}{1}}$};
\node (nn)  at (0.5,-1.5) {$G_0$~~ \cmark};
\draw[po] (0x) edge (11) edge (21);
\draw[po] (0y) edge (11) edge (21);
\draw[po] (0z) edge (11) edge (21);
\draw[po] (11) edge (12);
\draw[po] (12) edge (13);
\draw[po] (13) edge (14);
\draw[po] (14) edge (15);
\draw[po] (21) edge (22);
\draw[po] (22) edge (23);
\draw[po] (23) edge (24);
      \draw[rf,bend right=15] (0y) edge node[left,pos=0.7] {\smaller$\lRF$} (11);
      \draw[rf,bend right=-15] (21) edge node[right,pos=0.8] {\smaller$\lRF$} (14);
      \draw[dtpo,bend right=0] (22) edge node[above,pos=0.5] {\smaller$\dtpo$} (12);
\end{tikzpicture}
\end{minipage}
\hfill
\begin{minipage}{.35\textwidth}
\begin{tikzpicture}[yscale=0.6,xscale=1.5]
\node (0x)  at (-0.5,4) {${\color{colorTPO}(1)~} \persist{\wlab{\cloc{1}}{0}}$};
\node (0y)  at (0.5,4) {${\color{colorTPO}(2)~} \persist{\wlab{\cloc{2}}{3}}$};
\node (0z)  at (1.5,4) {${\color{colorTPO}(3)~} \wlab{\cloc{3}}{1}$};
\node (11)  at (0.5,3) {$\rlab{\cloc{2}}{3}$ };
\node (12)  at (0.5,2) {$\rlab{\cloc{1}}{0}$ };
\node (13)  at (0.5,1) {$\rlab{\cloc{3}}{1}$ };
\node (14)  at (0.5,0) {${\color{colorTPO}(4)~} \persist{\wlab{\cloc{3}}{2}}$};
\node (nn)  at (0.5,-1.5) {$G_1$~~ \cmark};
\draw[po] (0x) edge (11);
\draw[po] (0y) edge (11);
\draw[po] (0z) edge (11);
\draw[po] (11) edge (12);
\draw[po] (12) edge (13);
\draw[po] (13) edge (14);
      \draw[rf,bend right=15] (0y) edge node[left,pos=0.3] {\smaller$\lRF$} (11);
      \draw[rf,bend right=15] (0x) edge node[left,pos=0.5] {\smaller$\lRF$} (12);
      \draw[rf,bend right=-15] (0z) edge node[right,pos=0.5] {\smaller$\lRF$} (13);
\end{tikzpicture}
\end{minipage}}
\end{example}

\subsection{An Equivalent Declarative Persistency Model: \DPTSOmo}
\label{sec:dptsomo}

We present an equivalent more abstract declarative model that requires existential quantification
over \emph{modification orders}, 
rather than over propagation orders (total orders of $G.\sP$).
Modification orders totally order writes (including RMWs) to the same location,
leaving unspecified the order between other events, 
as well as the order between writes to different locations.
This alternative formulation
has a global nature:
it identifies an ``happens-before'' relation
and requires acyclicity this relation.
In particular, it allows us to relate \PTSOsynnn to an \SC
persistency model (see \cref{sec:ptso_psc}).

Unlike in \SC, in TSO we cannot include $G.\lPO$ in the ``happens-before'' relation.
Instead, we use a restricted subset, which consists of the program order edges 
that are ``preserved''.

\begin{definition}
\label{def:ppo}
The \emph{preserved program order} relation for an execution graph $G$,
denoted by $G.\lPPO$, is defined by:
$$G.\lPPO \defeq \left\lbrace \tup{a,b} \in G.\lPO ~\middle|~  \inarr{
(a \in \sW \cup \sFL \cup \sFO \cup \sSF \implies b\nin \sR) ~\land~\\
(a \in \sW \cup \sFL \cup \sFO \land \lLOC(a)\neq \lLOC(b) \implies b\nin \sFO)}\right\rbrace$$
\end{definition}

This definition extends the (non-persistent) preserved program order of TSO
that is given by $\set{\tup{a,b} \in G.\lPO \st  a \in \sW \implies b\nin \sR}$~\cite{herding-cats}.

Using $\lPPO$, we state a global acyclicity condition,
and show that it must hold in \DPTSO-consistent executions.

\begin{restatable}{lemma}{hbtsohelper}\label{lem:hbtso_helper}
Let $\tpo$ be a propagation order for an execution graph $G$
for which the conditions of \cref{def:DPTSO} hold.
Then, $G.\lPPO \cup G.\lRFE \cup \tpo \cup G.\lFR(\tpo)$
is acyclic.
\end{restatable}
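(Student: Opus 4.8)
The plan is to assume, toward a contradiction, that $R\defeq G.\lPPO\cup G.\lRFE\cup\tpo\cup G.\lFR(\tpo)$ contains a cycle, pick one of minimal length, and eventually contradict one of the irreflexivity conditions $(2)$--$(6)$ of \cref{def:DPTSO}; the two workhorses will be condition $(1)$ and the totality of $\tpo$ on $G.\sP$ (condition $(7)$ is not needed). The first thing I would record is that the only events outside $G.\sP$ are the pure reads in $\sR$, and that a pure read $r$ occurring in the cycle has a severely constrained neighbourhood: its incoming edge must be $G.\lRFE$ or $G.\lPPO$ (these are the only two of the four relations with reads in their codomain, since $\tpo$ lives on $G.\sP$ and $G.\lFR(\tpo)$ lands on writes/RMWs), and its outgoing edge must be $G.\lFR(\tpo)$ or $G.\lPPO$ (the only two with reads in their domain). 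I would also note two small facts: if $\tup{a,r}\in G.\lPPO$ with $r\in\sR$ then $a\notin\sW\cup\sFL\cup\sFO\cup\sSF$ (contrapositive of the first side-condition defining $G.\lPPO$); and by condition $(1)$ every $G.\lPPO$-edge between two $\sP$-events is already a $\tpo$-edge, because the sole exception in condition $(1)$---a $\sW/\sFL/\sFO$-event before an $\sFO$-event at a different location---is exactly the pair excluded from $G.\lPPO$ by its second side-condition.

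Next I would eliminate the pure reads. A read $r$ with incoming $G.\lRFE$ from $w$ and outgoing $G.\lFR(\tpo)$ to $w'$ is bypassed: functionality of $G.\lRF^{-1}$ forces $w=G.\lRF^{-1}(r)$, hence $\tup{w,w'}\in\tpo$, so $w\to r\to w'$ collapses to one $\tpo$-edge. A read $r$ with incoming and outgoing $G.\lPPO$, say $a\to r\to b$, collapses to $\tup{a,b}\in G.\lPPO$, since $a\notin\sW\cup\sFL\cup\sFO\cup\sSF$ makes both side-conditions of $G.\lPPO$ vacuous for $\tup{a,b}$. Repeating these two moves (which only ever insert $\tpo$- or $G.\lPPO$-edges, so stay inside $R$, and keep the cycle minimal) leaves a cycle in which every surviving read has either ($G.\lRFE$ in, $G.\lPPO$ out) or ($G.\lPPO$ in, $G.\lFR(\tpo)$ out). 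A short combinatorial argument then bounds the length of a maximal run of consecutive reads by two: internal read-to-read edges can only be $G.\lPPO$, which would force an interior read into the ($G.\lPPO$ in, $G.\lPPO$ out) shape that has been eliminated. Contracting each such run with its two neighbouring edges, and absorbing $\sP$-internal $G.\lPPO$-edges into $\tpo$ via condition $(1)$, yields a cycle over $\sP$-events only, whose edges are: $\tpo$-edges; $G.\lFR(\tpo)$-edges (out of RMW/failed-CAS events, or created when bypassing a read); and a bounded number of composite edges of the shapes $G.\lRFE\seq G.\lPO$, $[\sU\cup\sRex\cup\sMF]\seq G.\lPO\seq G.\lFR(\tpo)$, and $G.\lRFE\seq G.\lPO\seq G.\lFR(\tpo)$ inherited from the two read-shapes.

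Finally I would close the cycle. Since $\tpo$ is a strict total order on $G.\sP$, I would contract all maximal $\tpo$-subpaths, so that the reduced cycle alternates single $\tpo$-edges with $G.\lFR(\tpo)$/composite edges, and use totality again to replace any ``missing'' $\tpo$-edge between two of its $\sP$-events by the reverse $\tpo$-edge. A case analysis on the remaining non-$\tpo$ edges then matches the cycle against one of the forbidden compositions: an $G.\lRFE\seq G.\lPO$ tail (possibly preceded by $\tpo$) gives condition $(2)$; a $G.\lFR(\tpo)$ followed by a $G.\lPO$-tail gives condition $(3)$; a $G.\lFR(\tpo)$ immediately followed by a closing $\tpo$ gives condition $(4)$; and a $G.\lFR(\tpo)\seq\tpo$ chain with an $G.\lRFE\seq G.\lPO$ tail, respectively an $[\sU\cup\sRex\cup\sMF]\seq G.\lPO$ tail, gives conditions $(5)$ and $(6)$. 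I expect the genuine obstacle to be this last step: proving that \emph{every} residual $\sP$-cycle really reduces to one of these five shapes---in particular, re-running the TSO-coherence reasoning behind conditions $(2)$ and $(3)$ to show that the $G.\lPO$-pieces introduced on either side of a bypassed read do not create an unaccounted cycle, and keeping the per-location bookkeeping straight when composing a $G.\lFR(\tpo)$-edge with a trailing $\tpo$-edge. This part mirrors the standard ``global-happens-before is acyclic'' argument for declarative TSO, with $\sFL/\sFO/\sSF$ events playing the role of writes/fences and $\sRex$ events the role of fences throughout.
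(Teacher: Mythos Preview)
Your plan is correct, but it is considerably more laborious than the paper's argument, and you miss the one observation that short-circuits most of the work. The paper also starts from a minimal cycle, but instead of eliminating reads by contraction it immediately observes that \emph{a minimal cycle contains at most two events in $G.\sP$}: since $\tpo$ is a strict total order on $G.\sP$, any three $\sP$-events would admit a $\tpo$-chord that yields a strictly shorter $R$-cycle. With this bound in hand, the paper just does a direct case split on the number of $\sP$-events (zero, one, or two). The ``zero'' case is a $\lPPO$-cycle, hence a $\lPO$-cycle; the ``one'' case forces a pattern $(\lPPO\cup\lRFE)\seq\lPPO^+\seq(\lPPO\cup\lFR(\tpo))$ through reads, which unwinds into one of conditions (2)--(4); the ``two'' case uses $(e_1,e_2)\in\tpo$ and analyzes the return path $e_2\to e_1$, which lands in one of conditions (2)--(6) (with the $\lPPO\seq[\sR]\seq\lPO\suq\lPPO$ absorption lemma doing the bookkeeping). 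This is exactly the ``residual case analysis'' that you flag as the genuine obstacle, but the $\leq 2$ bound makes it finite and short rather than open-ended.

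A couple of small remarks on your write-up. First, your inventory of edges in the $\sP$-only cycle omits $\lRFE$-edges with target in $\sU\cup\sRex$; these do occur and (by condition (2) and totality) are also contained in $\tpo$, so the omission is harmless but should be stated. Second, the phrase ``keep the cycle minimal'' is backwards: your contractions \emph{shorten} the cycle, so what you are really using is that in a minimal cycle those two configurations cannot occur. Finally, you may find it cleaner to push your own argument one step further: each of your three composite shapes $\lRFE\seq\lPO$, $[\sU\cup\sRex\cup\sMF]\seq\lPO\seq\lFR(\tpo)$, and $\lRFE\seq\lPO\seq\lFR(\tpo)$ between $\sP$-events is already forced into $\tpo$ by conditions (2), (6), and (5) respectively (via totality), so the reduced $\sP$-cycle is in fact a $\tpo$-cycle---an immediate contradiction without any further case analysis.
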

\begin{proof}[Proof (outline)]
The proof considers a cycle in $G.\lPPO \cup G.\lRFE \cup \tpo \cup G.\lFR(\tpo)$ of minimal length.
The fact that $\tpo$ is total on $G.\sP$ and the minimality of the cycle
imply that this cycle may contain at most two events in $\sP$.
Then, each of the possible cases is handled using 
one of the conditions of \cref{def:DPTSO}.
\end{proof}

We now switch from propagation orders to modification orders
and formulate the alternative declarative model.

\begin{definition}
\label{def:mo}
A relation $\mo$ is a \emph{modification order} for an execution graph $G$
if $\mo$ is a disjoint union of relations $\set{\mo_\loc}_{\loc\in\Loc}$
where each $\mo_\loc$ is a strict total order on $G.\lE \cap (\sW_\loc \cup \sU_\loc)$. %
Given a modification order $\mo$ for $G$, 
the \emph{\PTSOsynnn-happens-before} relation,
denoted by $G.\lHB(\mo)$, is defined by:
$$G.\lHB(\mo)  \defeq (G.\lPPO \cup G.\lRFE \cup \mo \cup G.\lFR(\mo) \cup G.\lDTPO(\mo))^+.$$
\end{definition}

\begin{definition}
\label{def:DPTSOmo}
The declarative persistency model
\DPTSOmo consists of all 
execution graphs $G$ for which 
there exists a modification order $\mo$ for $G$ such that the following hold:
\begin{multicols}{2}
\begin{enumerate}
\item  $G.\lHB(\mo)$ is irreflexive.
\item $G.\lFR(\mo) \seq G.\lPO$ is irreflexive.
\end{enumerate}
\end{multicols}
\end{definition}

In addition to requiring that the \PTSOsynnn-happens-before is irreflexive, 
\cref{def:DPTSOmo} forbids $G.\lPO$ to contradict $G.\lFR(\mo)$.
Since program order edges from writes to reads are not included in $G.\lHB(\mo)$,
the latter condition is needed to ensure ``per-location-coherence''~\cite{herding-cats}.

\begin{example}
\label{ex:DPTSOmo}
Revisiting \cref{ex:dptso_basic} (B), in \DPTSOmo-inconsistency follows from the 
$G.\lDTPO(\mo) \seq \lPPO$ loop from the flush event
($\mo$ is forced to agree with $G.\lPO$).
In turn, the consistency of $G_0$ in \cref{ex:dptso_ex} only requires to provide a modification order,
which can have ${\color{colorTPO}(1)} \to {\color{colorTPO}(5)}$ for $\cloc{1}$, 
${\color{colorTPO}(2)} \to {\color{colorTPO}(6)} \to {\color{colorTPO}(7)} \to {\color{colorTPO}(8)}$ for $\cloc{2}$, 
and ${\color{colorTPO}(3)}\to {\color{colorTPO}(10)}$ for $\cloc{3}$.
Note that $\mo$ does not order writes to different locations
as well as the flush-optimal and the sfence events.
\end{example}

We prove the %
equivalence of \DPTSO and \DPTSOmo.

\begin{theorem}
\label{thm:DPTSO_DPTSOmo}
$\DPTSO = \DPTSOmo$.
\end{theorem}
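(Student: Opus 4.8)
The statement asserts set equality of two declarative models, so the natural route is to prove the two inclusions separately, each time converting one kind of auxiliary order into the other. For the inclusion $\DPTSO \suq \DPTSOmo$, suppose $G$ is $\DPTSO$-consistent via a propagation order $\tpo$. I would define a candidate modification order $\mo$ by restricting $\tpo$ to same-location write/RMW pairs: $\mo_\loc \defeq \tpo \cap ((\sW_\loc \cup \sU_\loc) \times (\sW_\loc \cup \sU_\loc))$ for each $\loc$, and $\mo = \bigcup_\loc \mo_\loc$. Since $\tpo$ is a strict total order on $G.\sP \supseteq \bigcup_\loc (\sW_\loc\cup\sU_\loc)$, each $\mo_\loc$ is indeed a strict total order, so $\mo$ is a modification order. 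The key observation is then that $\mo \suq \tpo$, hence $G.\lFR(\mo) \suq G.\lFR(\tpo)$ and $G.\lDTPO(\mo) \suq G.\lDTPO(\tpo)$ (both $\lFR$ and $\lDTPO$ are monotone in their order argument — visible from \cref{def:fr} and \cref{def:dtpo}). Therefore $G.\lPPO \cup G.\lRFE \cup \mo \cup G.\lFR(\mo) \cup G.\lDTPO(\mo) \suq G.\lPPO \cup G.\lRFE \cup \tpo \cup G.\lFR(\tpo)$, and the right-hand side is acyclic by \cref{lem:hbtso_helper}; so $G.\lHB(\mo)$ is irreflexive, giving condition (1) of \cref{def:DPTSOmo}. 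For condition (2), $G.\lFR(\mo) \seq G.\lPO$ irreflexive: note $G.\lFR(\mo)\suq G.\lFR(\tpo)$ and use condition (3) of \cref{def:DPTSO} with the reflexive $G.\lRFE^?$ instantiated trivially (take the identity), or more directly that a loop in $G.\lFR(\tpo)\seq G.\lPO$ would be a loop in the acyclic relation above since $G.\lPO$ restricted appropriately lies in $\lPPO \cup \tpo$ — I would spell this out by case analysis on the types of the two endpoints, which is the only slightly fiddly part of this direction.

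The harder inclusion is $\DPTSOmo \suq \DPTSO$: given $G$ consistent via a modification order $\mo$, I must produce a total propagation order $\tpo$ on $G.\sP$ satisfying all seven conditions of \cref{def:DPTSO}. The natural construction is to take $\tpo$ to be any total order extending the restriction to $G.\sP$ of the relation $R \defeq (G.\lPPO \cup \mo \cup G.\lDTPO(\mo))^+ \cup (\text{the forced }\lPO\text{ edges of condition (1)})$ — that is, extend the "obvious" partial-order constraints to a linear order, which is possible by Szpilrajn's extension theorem provided $R$ restricted to $G.\sP$ is a strict partial order, i.e. acyclic. Establishing that acyclicity is the crux: I need that $G.\lPPO$-edges between $\sP$-events, plus $\mo$, plus $G.\lDTPO(\mo)$, plus the program-order edges that condition (1) demands, together form no cycle among propagated events. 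This should follow from $G.\lHB(\mo)$ irreflexivity (condition (1) of \cref{def:DPTSOmo}), because all of $\lPPO$, $\mo$, $\lDTPO(\mo)$ sit inside $G.\lHB(\mo)$; the one subtlety is that condition (1) of \cref{def:DPTSO} also forces $\tpo$ to contain certain $\lPO$ edges that are \emph{not} in $\lPPO$ — namely $\lPO$ edges into reads are excluded from $\lPP O$ but reads are not in $\sP$ anyway, so that is harmless; and $\lPO$ edges of the form $\sW/\sFL/\sFO \to \sFO$ to the same location are in $\lPO$ and required in $\tpo$ but these \emph{are} in $\lPPO$ (the $\lPPO$ exclusion is only for \emph{different} locations). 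So condition (1)'s forced edges among $\sP$-events are exactly the $\lPPO \cap (\sP\times\sP)$ edges, already inside $G.\lHB(\mo)$. Good — so $R\cap(\sP\times\sP) \suq G.\lHB(\mo)$, which is irreflexive, hence acyclic, hence extends to a total order $\tpo$.

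It then remains to verify conditions (2)–(7) of \cref{def:DPTSO} for this $\tpo$. Conditions (1) and (7) hold by construction (the forced $\lPO$ edges and $G.\lDTPO(\mo) \suq \tpo$ are built in, and for (7) one checks $G.\lDTPO(\tpo)$ vs $G.\lDTPO(\mo)$: since $\mo$ orders all same-location writes and $\tpo$ extends $\mo$ on those, $G.\lMEMF(\loc)$-to-$w$ edges agree, so $G.\lDTPO(\tpo) = G.\lDTPO(\mo) \suq \tpo$). For conditions (2)–(6), each is an irreflexivity claim about a composition involving $\tpo$, $G.\lFR(\tpo)$, $G.\lRFE$, and $G.\lPO$; the standard technique (as in \cite{sra}) is to observe that any such cycle, together with the fact that $\tpo \cap (\sW_\loc\times\sW_\loc) = \mo_\loc$ and $G.\lFR(\tpo) = G.\lFR(\mo)$ (because $\lFR$ only looks at the order between a write read-from and another same-location write, and there $\tpo$ and $\mo$ coincide), translates into a cycle in $G.\lHB(\mo) \cup G.\lFR(\mo)\seq G.\lPO$, contradicting the two conditions of \cref{def:DPTSOmo}. \textbf{The main obstacle} I anticipate is exactly this last verification: carefully checking that $G.\lFR(\tpo) = G.\lFR(\mo)$ despite $\tpo$ being a proper extension of $\mo$ (one must confirm $\tpo$ adds no new same-location write-write edges, which holds since $\mo_\loc$ is already total on same-location writes), and then pushing each of conditions (2)–(6) through by a somewhat delicate case analysis on where the cycle's events lie relative to $\sP$ — reusing the minimal-cycle argument sketched in the proof of \cref{lem:hbtso_helper}. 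I would handle (2)–(6) by citing that their proofs are symmetric to the corresponding arguments for the non-persistent TSO equivalence in \cite{sra,herding-cats}, and give the full details only for the interaction with the new $\lDTPO$/$\lFL$/$\lFO$/$\lSF$ machinery.
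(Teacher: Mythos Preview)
Your first direction matches the paper's, modulo one omission: after arguing $G.\lDTPO(\mo)\suq G.\lDTPO(\tpo)$ you still need $G.\lDTPO(\tpo)\suq\tpo$ before the claimed containment in $G.\lPPO\cup G.\lRFE\cup\tpo\cup G.\lFR(\tpo)$ goes through; this is \cref{lem:dtpo} (immediate from condition~(7) and totality of $\tpo$ on $G.\sP$), which the paper invokes explicitly.

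The second direction has a real gap. Your $\tpo$ extends only the restriction to $G.\sP$ of $(G.\lPPO\cup\mo\cup G.\lDTPO(\mo))^+$, \emph{omitting} $G.\lRFE$ and $G.\lFR(\mo)$. That is not enough to force conditions (2)--(6). Concretely: take $w_1\in\sW_\loc$ in one thread, $r\in\sR_\loc$ with $\tup{w_1,r}\in G.\lRFE$ in another thread, and $w_2\in\sW_\loca$ for some $\loca\neq\loc$ with $\tup{r,w_2}\in G.\lPO$. Your relation places no constraint between $w_1$ and $w_2$: different locations rule out $\mo$; there is no $\lDTPO$; and the only path from $w_1$ toward $w_2$ goes through the $\lRFE$-edge $\tup{w_1,r}$, which you excluded (note $\tup{w_1,r}\notin\lPPO$ since $w_1\in\sW$, $r\in\sR$). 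Hence an extension may set $\tup{w_2,w_1}\in\tpo$, giving $\tup{w_2,w_2}\in\tpo\seq G.\lRFE\seq G.\lPO$ and violating condition~(2). Your planned ``translate the cycle into $G.\lHB(\mo)$'' cannot rescue this, because the offending $\tpo$-edge is an arbitrary extension edge with no $G.\lHB(\mo)$-witness. The paper's construction avoids this and the case analysis you anticipate: it extends the \emph{full} $G.\lHB(\mo)$ to a total order $R$ on \emph{all} of $G.\lE$ and sets $\tpo\defeq[\sP]\seq R\seq[\sP]$. Then $\tpo$, $G.\lPPO$, $G.\lRFE$, $G.\lFR(\tpo)=G.\lFR(\mo)$, and $G.\lDTPO(\tpo)=G.\lDTPO(\mo)$ are all contained in the single acyclic relation $R$, so every irreflexivity condition of \cref{def:DPTSO} follows at once.
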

\begin{proof}
For one direction, let $G$ be a \DPTSO-consistent execution graph.
Let $\tpo$ be a propagation order for $G$ that satisfies the conditions of \cref{def:DPTSO}.
We define $\mo \defeq \bigcup_{\loc\in\Loc} [\sW_\loc \cup \sU_\loc] \seq \tpo \seq [\sW_\loc \cup \sU_\loc]$.
By definition, we have $G.\lFR(\mo)=G.\lFR(\tpo)$ and $G.\lDTPO(\mo)=G.\lDTPO(\tpo)$.
Using \cref{lem:hbtso_helper} and \cref{lem:dtpo}, it follows that 
$\mo$ satisfies the conditions of \cref{def:DPTSOmo},
and so $G$ is \DPTSOmo-consistent.

For the converse, let $G$ be a \DPTSOmo-consistent execution graph.
Let $\mo$ be a modification order for $G$ that satisfies the conditions of \cref{def:DPTSOmo}.
Let $R$ be any total order on $G.\lE$ extending $G.\lHB(\mo)$.
Let $\tpo \defeq [\sP]\seq R\seq [\sP]$.
Again, we have $G.\lFR(\tpo)=G.\lFR(\mo)$ and $G.\lDTPO(\tpo)=G.\lDTPO(\mo)$.
This construction ensures that $G.\lPPO \cup G.\lRFE \cup \tpo \cup G.\lFR(\tpo) \cup G.\lDTPO(\mo)$
is contained in $R$, and thus acyclic. Then, all conditions of \cref{def:DPTSO} follow.
\end{proof}

\subsection{Equivalence of \PTSOsynnn and \DPTSO}
\label{sec:ptsosynnn_dptso}

Using \cref{lem:op_refines_dec,lem:dec_refines_op}, we show that 
\PTSOsynnn and \DPTSO are observationally equivalent.
(Note that for showing that \DPTSO observationally refines \PTSOsynnn,
we use the \cref{lem:hbtso_helper}.)

\begin{restatable}{theorem}{PTSOeqDPTSO}\label{thm:PTSO_eq_DPTSO}
\PTSOsynnn and \DPTSO are observationally equivalent.
\end{restatable}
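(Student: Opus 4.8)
The plan is to invoke the two general reduction lemmas \cref{lem:op_refines_dec} and \cref{lem:dec_refines_op}, so that the theorem splits into three obligations: (i) every $\mem_0$-initialized \PTSOsynnn-observable-trace lies in $\traces{G}$ for some \DPTSO-consistent $\mem_0$-initialized $G$; (ii) for every $\mem_0$-to-$\mem$ \PTSOsynnn-observable-trace $\tr$, some per-thread prefix $\tr' \lesssim \tr$ lies in $\traces{G}$ for a \DPTSO-consistent $\mem_0$-initialized $G$ with $\mem(G)=\mem$; and (iii) for every \DPTSO-consistent initialized execution graph $G$, some $\tr\in\traces{G}$ is an $\mem_\Init(G)$-to-$\mem(G)$ \PTSOsynnn-observable-trace. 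Throughout I would exploit \cref{lem:empty_buff}, which lets us assume WLOG that the traces under consideration end in a state with all store buffers empty; this is what makes it possible to read off a clean memory assignment and propagation order from a terminal configuration.

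For directions (i) and (ii) (operational $\Rightarrow$ declarative), I would take a \PTSOsynnn run realizing $\tr$ and extract an execution graph $G$: the events are generated from the transition labels in program order (using \cref{prop:gen_trace2} to tie $G$ to the program), $G.\lRF$ is dictated by which persistence-buffer/store-buffer/memory entry each read observed via $\rdWtsosynn$, and $G.\lMEMF(\loc)$ is the write event corresponding to the last \rulename{persist-w} step for $\loc$ (or the initialization event if none persisted). The key device is to define $\tpo$ as the order in which propagated events left the store buffers (\rulename{prop-w}, \rulename{prop-fl}, \rulename{prop-fo}, \rulename{prop-sf}, and the implicit propagation of RMW/mfence), and then verify conditions $(1)$–$(7)$ of \cref{def:DPTSO} one by one against the operational rules. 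Condition $(1)$ follows because store buffers are FIFO except for the \rulename{prop-fo} overtaking, which is exactly the excepted case; conditions $(2)$–$(6)$ are the standard TSO arguments (reads see the most recent propagated write of their location, mfence/RMW/sfence drain or block appropriately); condition $(7)$ is the new one and follows from the blocking side-conditions on \rulename{prop-fl} (requires $\Pbuff(\loc)=\epsilon$) and \rulename{prop-sf} (requires no pending $\fotlabp{\tid}$), together with the fact that per-location persistence buffers are FIFO — so a flush/flush-opt+sfence that propagated after a write $w$ to $\loc$ forces $w$ and everything $\tpo$-earlier into memory before $w$ can be overwritten, making $\tup{G.\lMEMF(\loc),w}\in\tpo$ impossible to contradict $\lDTPO$. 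For (ii), \cref{lem:empty_buff} supplies the per-thread prefix $\tr'$ that drops the unpropagated tail (which never affected the non-volatile memory anyway).

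For direction (iii) (declarative $\Rightarrow$ operational), I would start from a \DPTSO-consistent $G$ with witnessing $\tpo$, and build a \PTSOsynnn run. The enumeration of $G.\lE\setminus\Init$ that induces the target trace $\tr\in\traces{G}$ is chosen to respect $G.\lPO$; the operational run interleaves (a) issue steps following that enumeration, (b) propagation steps following $\tpo$, scheduled as early as the store-buffer side-conditions permit, and (c) persistence steps. The reads-from discipline of $G$ tells us each read must be scheduled after its $\rf$-source has propagated far enough, which is guaranteed by conditions $(2)$–$(6)$ (via \cref{lem:hbtso_helper}'s acyclicity, giving a consistent global schedule). The final non-volatile memory is forced to equal $\mem(G)$ by persisting, for each $\loc$, exactly up to $G.\lMEMF(\loc)$ in $\tpo$-order; condition $(7)$ (together with \cref{lem:dtpo}) is precisely what ensures this persistence schedule is consistent with the \rulename{prop-fl}/\rulename{prop-sf} blocking requirements — no flush or sfence for $\loc$ is forced to propagate while a write that would have to be persisted past $G.\lMEMF(\loc)$ is still pending.

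The main obstacle I expect is direction (iii): unlike the simple drain-and-reorder argument that suffices operationally-to-declaratively, here one must produce a genuine operational schedule from a partial order, and the scheduling must simultaneously satisfy the FIFO constraints on the (almost-FIFO) store buffers, the per-location FIFO persistence buffers, and the blocking side-conditions on \rulename{prop-fl}/\rulename{prop-sf}/\rulename{mfence}/\rulename{rmw}. Getting the \rulename{prop-fo} steps scheduled correctly is delicate, since a flush-opt may need to overtake writes in the store buffer — the cleanest route is to eagerly take \rulename{prop-fo} steps (as in \cref{rem:fo-ooo}) so that flush-opt markers sit in the persistence buffers as early as possible, and then argue by a forward simulation that the remaining steps can always be scheduled in a way consistent with $\tpo$ and with the persistence cutoffs determined by $G.\lMEMF$. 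The acyclicity of $G.\lPPO \cup G.\lRFE \cup \tpo \cup G.\lFR(\tpo)$ from \cref{lem:hbtso_helper} is the combinatorial core making this scheduling possible.
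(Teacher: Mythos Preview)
Your proposal is correct and follows essentially the same approach as the paper: both directions are reduced via \cref{lem:op_refines_dec}, \cref{lem:dec_refines_op}, and \cref{lem:empty_buff}, the operational-to-declarative direction extracts $G$, $G.\lRF$, $G.\lMEMF$, and $\tpo$ from an (instrumented) run and checks conditions $(1)$--$(7)$, and the declarative-to-operational direction linearizes a partial order on issue/propagate/persist actions whose acyclicity reduces to \cref{lem:hbtso_helper}.

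One minor difference worth noting for direction (iii): you frame the construction as an incremental schedule with eager \rulename{prop-fo} and a forward simulation, whereas the paper instead builds an explicit partial order $R$ on the set of all needed issue labels $\alpha(e)$, propagation labels $\beta(e)$, and persistence labels $\gamma(e)$ (fourteen sub-relations encoding program order, $\tpo$, reads-from, from-read, and the blocking side-conditions), proves $R$ acyclic by reducing any $R$-cycle to a cycle in $G.\lPPO \cup G.\lRFE \cup \tpo \cup G.\lFR(\tpo)$, and then takes any linearization. The paper's approach has the advantage that the careful bookkeeping of \emph{which} flush-optimals must be persisted (those $\tpo$-before a flush or before $G.\lMEMF(\loc)$, or $\lPO$-before a same-thread sfence/mfence/RMW) is made completely explicit in the definition of the persisted set $\E_\gamma^{\sFO}$, which you will need to pin down in your simulation argument as well.
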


The proof is given in \cref{app:dec_proofs}.

\newcommand{\psc}{
\begin{figure*}
\small
\myhrule
\begin{align*}
\mem & \in \Loc \to \Val & &
\Pbuff  \in \Loc \to (\set{\vale[\val] \st \val\in\Val} \cup \set{\fotlabp{\tid} \st \tid\in\Tid})^* \\
& & &
\Pbuff_\Init  \defeq \lambda \loc.\; \epsl
\end{align*}
\vspace{-8pt}
\myhrule
\begin{mathpar}
\inferrule[write]{
\lab = \wlab{\loc}{\val} \\\\ \\\\
\\\\ \Pbuff'=\Pbuff[ \loc \mapsto \Pbuff(\loc) \cdot \vale]
}{\tup{\mem, \Pbuff} \asteptidlab{\tid}{\lab}{\PSC} \tup{\mem, \Pbuff'}
} \hfill
\inferrule[read]{
\lab = \rlab{\loc}{\val}
\\\\ \rdW{\mem}{\Pbuff(\loc)}(\loc) = \val \\\\ \\\\ 
}{\tup{\mem, \Pbuff} \asteptidlab{\tid}{\lab}{\PSC} \tup{\mem, \Pbuff}
} \hfill
\inferrule[rmw]{
\lab = \ulab{\loc}{\val_\lR}{\val_\lW}
\\\\ \rdW{\mem}{\Pbuff(\loc)}(\loc) = \val_\lR
\\\\ {\forall \loca.\; \fotlabp{\tid} \nin \Pbuff(\loca)}
\\\\ \Pbuff'=\Pbuff[ \loc \mapsto \Pbuff(\loc) \cdot \vale[\val_\lW]]
}{\tup{\mem, \Pbuff} \asteptidlab{\tid}{\lab}{\PSC} \tup{\mem, \Pbuff'}
} \hfill
\inferrule[rmw-fail]{
\lab = \rexlab{\loc}{\val}
\\\\ \rdW{\mem}{\Pbuff(\loc)}(\loc) = \val
\\\\ {\forall \loca.\; \fotlabp{\tid} \nin \Pbuff(\loca)}
\\\\
}{\tup{\mem, \Pbuff} \asteptidlab{\tid}{\lab}{\PSC} \tup{\mem, \Pbuff}
} \\
\inferrule[mfence/sfence]{
\lab \in \set{\mflab,\sflab}
\\\\ {\forall \loca.\; \fotlabp{\tid} \nin \Pbuff(\loca)}
}{\tup{\mem, \Pbuff} \asteptidlab{\tid}{\lab}{\PSC} \tup{\mem, \Pbuff}
} \and
\inferrule[flush]{
\lab = \fllab{\loc}
\\\\ {\Pbuff(\loc)=\epsl}
}{\tup{\mem, \Pbuff} \asteptidlab{\tid}{\lab}{\PSC} \tup{\mem, \Pbuff}
} \and
\inferrule[flush-opt]{
\lab = \folab{\loc}
\\\\	{\Pbuff' = \Pbuff[\loc \mapsto \Pbuff(\loc) \cdot \fotlabp{\tid}]}
}{\tup{\mem, \Pbuff} \asteptidlab{\tid}{\lab}{\PSC} \tup{\mem, \Pbuff'}
} \end{mathpar}
\myhrule
\begin{mathpar}
\inferrule[persist-w]{
\Pbuff(\loc) = \vale \cdot \pbuff
\\\\ \Pbuff' = \Pbuff[\loc \mapsto \pbuff]
\\ \mem' = \mem[\loc \mapsto \val]
}{\tup{\mem, \Pbuff} \asteplab{\epsl}{\PSC} \tup{\mem', {\Pbuff'}}
} \and
\inferrule[persist-fo]{
\Pbuff(\loc) = \fotlabp{\_} \cdot \pbuff
\\\\ \Pbuff' = \Pbuff[\loc \mapsto \pbuff]
}{\tup{\mem, {\Pbuff}} \asteplab{\epsl}{\PSC} \tup{\mem, {\Pbuff'}}
}\end{mathpar}
\myhrule
\caption{The \PSC Persistent Memory Subsystem}
\label{fig:PSC}
\end{figure*}
}

\newcommand{\pscf}{
\begin{figure*}
\small
\smaller
\myhrule
\begin{align*}
\mem  & \in \Loc \to \Val    &
\vmem  & \in \Loc \to \Val  &
\cp & \suq \Loc  &
\csf & \suq \Tid
\\
& &
\vmem_\Init &\defeq \lambda \loc.\; 0 &
\cp_\Init  &\defeq \Loc  &
\csf_\Init  &\defeq \Tid
\end{align*}
\vspace{-8pt}
\myhrule
\begin{mathpar}
\inferrule[write-persist]{
\lab = \wlab{\loc}{\val}
\\ \loc \in \cp 
\\\\ \mem' = \mem[\loc \mapsto \val]
\\ \vmem' = \vmem[\loc \mapsto \val]
}{\tup{\mem, \vmem, \cp, \csf} \asteptidlab{\tid}{\lab}{\PSCf} \tup{\mem', \vmem', \cp, \csf}
} \hfill
\inferrule[write-no-persist]{
\lab = \wlab{\loc}{\val}
\\\\ \vmem' = \vmem[\loc \mapsto \val]
\\ \cp' = \cp  \setminus \set{\loc}
}{\tup{\mem, \vmem, \cp, \csf} \asteptidlab{\tid}{\lab}{\PSCf} \tup{\mem, \vmem', \cp', \csf}
} \hfill
\inferrule[read]{
\lab = \rlab{\loc}{\val}
\\\\ \vmem(\loc) = \val
}{\tup{\mem, \vmem, \cp, \csf} \asteptidlab{\tid}{\lab}{\PSCf} \tup{\mem, \vmem, \cp, \csf}
} \\
\inferrule[rmw-persist]{
\lab = \ulab{\loc}{\val_\lR}{\val_\lW}
\\ \loc \in \cp 
\\\\ \vmem(\loc) = \val_\lR \\  \tid \in \csf 
\\\\ \mem' = \mem[\loc \mapsto \val_\lW]
\\ \vmem' = \vmem[\loc \mapsto \val_\lW]
}{\tup{\mem, \vmem, \cp, \csf} \asteptidlab{\tid}{\lab}{\PSCf} \tup{\mem', \vmem', \cp', \csf}
} \hfill
\inferrule[rmw-no-persist]{
\lab = \ulab{\loc}{\val_\lR}{\val_\lW}
\\\\ \vmem(\loc) = \val_\lR \\  \tid \in \csf 
\\\\ \vmem' = \vmem[\loc \mapsto \val_\lW]
\\ \cp' = \cp  \setminus \set{\loc} 
}{\tup{\mem, \vmem, \cp, \csf} \asteptidlab{\tid}{\lab}{\PSCf} \tup{\mem, \vmem', \cp', \csf}
} \hfill
\inferrule[rmw-fail]{
\lab = \rexlab{\loc}{\val}
\\\\ \vmem(\loc) = \val \\ \tid \in \csf 
\\\\ 
}{\tup{\mem, \vmem, \cp, \csf} \asteptidlab{\tid}{\lab}{\PSCf} \tup{\mem, \vmem, \cp, \csf}
} \\
\inferrule[mfence/sfence]{
\lab \in \set{\mflab,\sflab}
\\ \tid \in \csf 
}{\tup{\mem, \vmem, \cp, \csf} \asteptidlab{\tid}{\lab}{\PSCf} \tup{\mem, \vmem, \cp, \csf}
} \and
\inferrule[flush]{
\lab = \fllab{\loc}
\\ \loc \in \cp
}{\tup{\mem, \vmem, \cp, \csf} \asteptidlab{\tid}{\lab}{\PSCf} \tup{\mem, \vmem, \cp, \csf}
} \\
\inferrule[flush-opt-persist]{
\lab = \folab{\loc}
\\ \loc \in \cp
}{\tup{\mem, \vmem, \cp, \csf} \asteptidlab{\tid}{\lab}{\PSCf} \tup{\mem, \vmem, \cp, \csf}
} \and
\inferrule[flush-opt-no-persist]{
\lab = \folab{\loc}
\\ \cp' = \cp  \setminus \set{\loc} 
\\ \csf' = \csf \setminus \set{\tid}
}{\tup{\mem, \vmem, \cp, \csf} \asteptidlab{\tid}{\lab}{\PSCf} \tup{\mem, \vmem, \cp', \csf'}
}\end{mathpar}
\myhrule
\caption{The \PSCf Persistent Memory Subsystem}
\label{fig:PSCf}
\end{figure*}
}

\section{Persistent Memory Subsystem: \PSC}
\label{sec:psc}

In this section we present an SC-based persistent memory subsystem, which we call \PSC.
This system is stronger, and thus easier to program with, than \PTSOsynnn.
From a formal verification point of view, assuming finite-state programs, in \cref{sec:pscf} we show that \PSC
can be represented as a \emph{finite} transition system (like standard SC semantics),
so that reachability of program states under \PSC is trivially decidable (PSPACE-complete).
In \cref{sec:dpsc}, we also accompany the operational definition with an equivalent declarative one. %
The declarative formulation will be used in \cref{sec:ptso_psc} to relate \PTSOsynnn and \PSC.

The persistent memory subsystem \PSC is obtained from \PTSOsynnn
by simply discarding the store buffers,
thus creating direct links between the threads and the per-location persistence buffers.
More concretely, issued writes go directly to the appropriate persistence buffer
(made globally visible immediately when they are issued);
issued flushes to location $\loc$ wait until the $\loc$-persistence-buffer has drained;
issued flush-optimals go directly to the appropriate persistence buffer;
and issued sfences wait until all writes before
a flush-optimal entry (of the same thread issuing the sfence)
in every per-location persistence buffer have persisted.
As in \PTSOsynnn, RMWs, failed RMWs, and mfences induce an sfence.\footnote{
In \PSC there is no need in mfences, as they are equivalent to sfences;
we only keep them here for the sake uniformity.}
We note that \emph{without crashes}, the effect of the persistence buffers is unobservable,
and \PSC trivially coincides with the standard SC semantics.

We note that, unlike for \PTSOsynnn, discarding the store buffers in \PTSO leads to a model that is stronger than \PSC,
where flush and flush-optimals are equivalent (which makes sfences redundant),
and providing this stronger semantics even to sequential programs requires placing additional barriers.

To formally define \PSC, we again use a ``lookup'' function (overloading
again the $\rdWn$ notation). In \PSC, when thread $\tid$ reads from a shared
location $\loc$ it obtains the latest accessible value of $\loc$, which is
defined by applying the following $\rdWn$ function on the current persistent
memory $\mem$, and the current per-location persistence buffer $\pbuff$ for
location $\loc$:
\[\smaller
	\rdW{\mem}{\pbuff} 
	\defeq \lambda \loc.\;
	\begin{cases}
		\val & %
				\pbuff = \pbuff_1 \cdot \vale \cdot \pbuff_2 \land \vale[\_]\nin \pbuff_2 \\
		\mem(\loc) & \text{otherwise}
	\end{cases}	
\]
Using this definition, \PSC is presented in  \cref{fig:PSC}.
Its set of volatile states, $\PSC.\lvQ$, consists all per-location-persistence-buffer mappings.
Initially all buffers are empty ($\PSC.\lvinit=\set{\Pbuff_\epsl}$).

\psc

\begin{example}
\label{ex:psc}
With the exception of \cref{ex:fo-ooo,ex:dptso_ex}, \PSC provides the same allowed/forbidden
judgments as \PTSOsynnn (and \PTSO) for all of the examples above.
(Obviously, standard litmus tests, which are not related to persistency,
differentiate the models.)
The annotated behaviors in \cref{ex:fo-ooo,ex:dptso_ex} are, however, disallowed in \PSC.
Indeed, by removing the store buffers, \PSC requires that 
the order of entries in each persistence buffer follows
exactly the order of issuing of the corresponding instructions
(even when they are issued by different threads).
\end{example}

\pscf

\subsection{An Equivalent Finite Persistent Memory Subsystem: \PSCf}
\label{sec:pscf}

From a formal verification perspective, \PSC has another important advantage \wrt \PTSOsynnn.
Assuming finite-state programs (\ie finite sets of threads, values and locations, but still, possibly, loopy programs)
the reachability problem under \PSC (that is, 
checking whether a given program state $\progstate$ is reachable under \PSC according to \cref{def:reachable})
is computationally simple---PSPACE-complete---just like under standard SC semantics~\cite{Kozen:1977}.
Since \PSC is an infinite state system (the persistence buffer are unbounded), the PSPACE upper bound
is not immediate. To establish this bound, we present an alternative persistent memory subsystem, called \PSCf,
that is observationally equivalent to \PSC, and, assuming that $\Tid$ and $\Loc$ are finite,
\PSCf is a \emph{finite} LTS.

The system \PSCf is presented in \cref{fig:PSCf}.
Its states keep track of a non-volatile memory $\mem$,
a (volatile) mapping $\vmem$ of the most recent value to each location,
a (volatile) set $\cp$ of locations that still persist,
and a (volatile) set $\csf$ of thread identifiers that may perform an sfence (or an sfence-inducing instruction).
Every write (or RMW) to some location $\loc$ can ``choose'' to not persist, removing $\loc$ from $\cp$,
and thus forbidding later writes to $\loc$ to persist.
Importantly, once some write to $\loc$ did not persist (so we have $\loc\nin \cp$),
flushes to $\loc$ cannot be anymore executed (the system deadlocks).
A similar mechanism handles flush-optimals:
once a flush-optimal y thread $\tid$ ``chooses'' to not persist,
further writes to the same location may not persist,
and, moreover, it removes $\tid$ from $\csf$, 
so that thread $\tid$ cannot anymore execute an sfence-inducing instruction
(sfence, mfence, or RMW).

\begin{restatable}{theorem}{PSCFeqPSC}\label{thm:PSCFeqPSC}
\PSC and \PSCf are observationally equivalent.
\end{restatable}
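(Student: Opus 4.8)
The plan is to establish both refinements by exhibiting a simulation relation between the states of the two subsystems and then invoke \cref{lem:memory_refine}. The key conceptual bridge is that a \PSC state $\tup{\mem,\Pbuff}$ can be abstracted into a \PSCf state $\tup{\mem',\vmem,\cp,\csf}$ by ``executing all pending persists'': take $\vmem(\loc)$ to be $\rdW{\mem}{\Pbuff(\loc)}(\loc)$ (the most recent value visible for $\loc$), take $\mem'=\mem$, let $\cp=\set{\loc \st \Pbuff(\loc) \text{ contains no } \fotlabp{\_} \text{ entry before its last write}}$, i.e.\ the locations where a flush can still drain, and let $\csf=\set{\tid \st \fotlabp{\tid}\nin\Pbuff(\loca) \text{ for all }\loca\text{ that are ``blocked''}}$. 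More precisely, a location leaves $\cp$ exactly when a write lands in a persistence buffer after an $\lFOT$-marker and that marker can no longer reach the head before the write; a thread leaves $\csf$ exactly when one of its flush-optimal markers is ``stuck'' behind a non-persisting write. I would first pin down this abstraction map $\alpha$ precisely, together with the dual intuition: every reachable \PSCf state is $\alpha(q)$ for some reachable \PSC state whose buffers have been drained as far as the recorded sets permit.

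First I would prove that \PSC observationally refines \PSCf, using \cref{lem:memory_refine}. For condition~(i)/(ii) I would show by induction on the length of a \PSC-observable-trace $\tr$ that if $\tup{\mem_0,\Pbuff_\epsl}\bsteplab{\tr}{\PSC}\tup{\mem,\Pbuff}$ then $\tup{\mem_0,\vmem_\Init,\Loc,\Tid}\bsteplab{\tr}{\PSCf}\alpha(\tup{\mem,\Pbuff})$. The matching of steps is mostly direct: a \PSC \rulename{write} step appends $\vale$ to $\Pbuff(\loc)$, which under $\alpha$ either keeps $\loc\in\cp$ (if no blocking marker precedes) — matched by \rulename{write-persist} after also updating $\mem$ via a \rulename{persist-w} — or drops $\loc$ from $\cp$, matched by \rulename{write-no-persist}. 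A \rulename{flush-opt} step is matched by \rulename{flush-opt-persist} or \rulename{flush-opt-no-persist}. Silent \rulename{persist-w}/\rulename{persist-fo} steps in \PSC are matched by \emph{no} step in \PSCf, since $\alpha$ already accounts for draining; one checks $\alpha$ is invariant under these silent steps. \rulename{flush}, \rulename{read}, \rulename{mfence/sfence}, \rulename{rmw}, \rulename{rmw-fail} transfer their enabling conditions across $\alpha$ by construction of $\cp,\csf,\vmem$.

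Next I would prove the converse, that \PSCf observationally refines \PSC. Here \cref{lem:memory_refine}(ii) is what earns its keep: given a \PSCf-observable-trace reaching $\tup{\mem,\vmem,\cp,\csf}$, I would construct a (possibly per-thread-shorter) \PSC run reaching some $\tup{\mem,\Pbuff}$ with the right $\mem$, by replaying each \PSCf step as the corresponding \PSC step and interleaving just enough silent persist steps: each \rulename{write-persist} becomes \rulename{write} followed immediately by \rulename{persist-w} (legal because $\loc\in\cp$ means the $\loc$-buffer has no stuck marker, so the new write is at the head after draining); each \rulename{write-no-persist} becomes a \rulename{write} that we simply never persist; \rulename{flush-opt-persist}/\rulename{flush-opt-no-persist} become \rulename{flush-opt}, with \rulename{persist-fo} eagerly applied in the first case; \rulename{flush} becomes \rulename{flush} (enabled because $\loc\in\cp$). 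The per-thread-prefix slack is used when a thread in \PSCf has been removed from $\csf$ or a location from $\cp$ and subsequent operations in \PSC cannot be replayed — but by the crash-before-observation intuition articulated after \cref{lem:memory_refine}, one truncates the trace there. I expect the bookkeeping of exactly when a persistence buffer can be drained ``to the head'' — i.e.\ proving the invariant that $\loc\in\cp$ iff $\Pbuff(\loc)$ contains no $\lFOT$-marker that will outlive all its writes — to be the main obstacle, since it is the one place where the finite summary $\cp,\csf$ must be shown to capture precisely the relevant information in the unbounded buffers, and it requires carefully handling the out-of-order \rulename{persist-fo} steps relative to \rulename{persist-w}.
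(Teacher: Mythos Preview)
Your plan for the $\PSCf \to \PSC$ direction is essentially the paper's forward-simulation argument; the paper uses the even simpler invariant $\loc\in\cp \iff \Pbuff(\loc)=\epsl$ and $\tid\in\csf \iff \forall\loca.\;\fotlabp{\tid}\nin\Pbuff(\loca)$, which makes the simulation exact and removes any need for per-thread-prefix slack.

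The $\PSC \to \PSCf$ direction has a genuine gap. Your abstraction $\alpha$ sends $\tup{\mem,\Pbuff}$ to $\tup{\mem,\vmem,\cp,\csf}$ with the \emph{same} $\mem$, and you claim silent \rulename{persist-w}/\rulename{persist-fo} steps of \PSC are matched by no step in \PSCf because ``$\alpha$ is invariant under these silent steps''. But \rulename{persist-w} changes $\mem$, and \PSCf has no silent transitions whatsoever, so $\alpha$ cannot be invariant under \rulename{persist-w}. More fundamentally, in \PSC every entry in every persistence buffer can always be drained (both \rulename{persist-w} and \rulename{persist-fo} are unconditional head-dequeues), so any attempt to define $\cp$ as ``locations whose buffer can still be drained'' collapses to $\cp=\Loc$. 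The real asymmetry is \emph{when} the persist/no-persist choice is made: at issue time in \PSCf, at an arbitrary later point in \PSC. A functional abstraction from \PSC states to \PSCf states cannot supply this prophetic information, so a forward simulation in this direction does not go through.

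The paper handles this direction by a trace transformation rather than a simulation: using commutativity of persist steps (a persist step commutes backwards past any non-persist step with a different identifier, and past persist steps on other locations), one reorders every \rulename{persist-w}/\rulename{persist-fo} to immediately follow its issuing write/RMW/flush-optimal. In the resulting synchronous trace each write either persists immediately or never, and this maps directly onto a \PSCf-trace by choosing \rulename{write-persist} in the first case and \rulename{write-no-persist} in the second (and likewise for RMWs and flush-optimals).
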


\begin{remark}
One may apply a construction like \PSCf for \PTSOsynnn, 
namely replacing the persistence buffers with a standard non-volatile memory $\vmem$
and sets $\cp$ and $\csf$.
For \PTSOsynnn such construction does not lead to a finite-state machine,
as we will still have unbounded store buffers.
We leave the investigation of the decidability of reachability under \PTSO (equivalently, under \PTSOsynnn) to future work.
Nevertheless, we note that the non-primitive recursive lower bound established by~\citet{tso-reach} 
for reachability under the standard TSO semantics
trivially extends to \PTSO. 
Indeed, for programs that start by resetting all memory locations to $0$
(the very initial value), reachability of program states under \PTSO coincides with reachability under TSO.
\end{remark}

\subsection{The \DPSC Declarative Persistency Model}
\label{sec:dpsc}

We present a declarative formulation of \PSC, which we call \DPSC.
As \DPTSOmo, it is based on an ``happens-before'' relation. 

\begin{definition}
\label{def:hbsc}
Given a modification order $\mo$ for an execution graph $G$, 
the \emph{\PSC-happens-before} relation,
denoted by $G.\lHB_\PSC(\mo)$, is defined by:
$$G.\lHB_\PSC(\mo)  \defeq (G.\lPO \cup G.\lRF \cup \mo \cup G.\lFR(\mo) \cup G.\lDTPO(\mo))^+.$$
\end{definition}

$G.\lHB_\PSC(\mo)$ extends 
the standard happens-before relation that defines SC~\cite{herding-cats}
with the derived propagation order ($G.\lDTPO(\mo)$).
In turn, it extends the \PTSOsynnn-happens-before (see \cref{def:mo})
by including \emph{all} program order edges rather than only the ``preserved'' ones.
Consistency simply enforces the acyclicity of $G.\lHB_\PSC(\mo)$:

\begin{definition}
\label{def:DPSC}
The declarative persistency model
\DPSC consists of all 
execution graphs $G$ for which 
there exists a modification order $\mo$ for $G$ such that 
$G.\lHB_\PSC(\mo)$ is irreflexive.
\end{definition}

Next, we establish the equivalence of \PSC and \DPSC (the proof is given in \cref{app:dpsc}).

\begin{restatable}{theorem}{PSCeqDPSC}\label{thm:PSCeqDPSC}
\PSC and \DPSC are observationally equivalent.
\end{restatable}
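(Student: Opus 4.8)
The plan is to mirror the structure used for \PTSOsynnn{} and \DPTSO{} (Theorem \ref{thm:PTSO_eq_DPTSO}), invoking \cref{lem:op_refines_dec,lem:dec_refines_op}, which reduce the equivalence to three trace-level claims: (i) every $\mem_0$-initialized \PSC-observable-trace $\tr$ embeds into a \DPSC-consistent $\mem_0$-initialized execution graph $G$ with $\tr\in\traces{G}$; (ii) for every $\mem_0$-to-$\mem$ \PSC-observable-trace $\tr$, some per-thread prefix $\tr'\lesssim\tr$ embeds into a \DPSC-consistent $\mem_0$-initialized $G$ with $\mem(G)=\mem$; and (iii) for every \DPSC-consistent initialized $G$, some $\tr\in\traces{G}$ is an $\mem_\Init(G)$-to-$\mem(G)$ \PSC-observable-trace. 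Because \PSC{} has no store buffers, the argument is substantially simpler than the \PTSOsynnn{} case: there is no out-of-order propagation to reason about, and the only volatile component is the per-location persistence buffer mapping $\Pbuff$.

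For direction (iii) (\DPSC{} refines \PSC), given a \DPSC-consistent $G$ with witnessing modification order $\mo$, I would take any total order $R$ on $G.\lE$ extending $G.\lHB_\PSC(\mo)$, and let $\tr$ be the induced trace reading events of $G\setminus\Init$ in $R$-order; since $G.\lPO\subseteq G.\lHB_\PSC(\mo)\subseteq R$, this $\tr$ is in $\traces{G}$. I then run \PSC{} along $\tr$, and I schedule the silent \rulename{persist-w}/\rulename{persist-fo} steps greedily: immediately after the event producing $G.\lMEMF(\loc)$ is issued, I drain everything strictly $\mo$-before it in $\loc$'s persistence buffer (these are exactly the writes $\mo$-earlier than $G.\lMEMF(\loc)$, possibly interleaved with $\lFOT$-markers), persist $G.\lMEMF(\loc)$, and leave everything $\mo$-after it pending. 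I would verify by a standard invariant that (a) read values match (using that $R$ extends $G.\lRF$, $\mo$ and $G.\lFR(\mo)$, so the value returned by $\rdWn$ on the current buffer is exactly the one recorded in the label), (b) \rulename{flush} steps for $\loc$ fire only when $\Pbuff(\loc)=\epsl$, which is guaranteed because $G.\lDTPO(\mo)\subseteq R$ forces every flush (and flush-optimal-followed-by-sfence) to location $\loc$ to come after all writes that are $\mo$-after $G.\lMEMF(\loc)$ in $R$, hence those writes are not yet issued, and everything $\mo$-up-to $G.\lMEMF(\loc)$ has already been drained, and (c) \rulename{mfence/sfence/rmw} steps fire only when no $\fotlabp{\tid}$ is pending, which follows similarly from $G.\lDTPO(\mo)$ together with the $G.\lPO$ edge from the flush-optimal through an sfence-like event being inside $G.\lHB_\PSC(\mo)$.

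For directions (i) and (ii) (\PSC{} refines \DPSC), given a \PSC-observable-trace $\tr$ I would read off the execution graph $G$ whose events are those induced by $\tr$, whose $\rf$ is determined by which buffer/memory slot each read's $\rdWn$ value came from, and whose memory assignment $\memf$ is $\mem_\Init(G)$'s write for locations never persisted and otherwise the write whose entry was the last popped by a \rulename{persist-w} step before the trace ends; I define $\mo$ on each location by the order in which write-entries were appended to that location's persistence buffer (equivalently, issue order of writes/RMWs to $\loc$). Since \PSC{} issues events in a total interleaving, $G.\lPO$, $G.\lRF$, $\mo$ and $G.\lFR(\mo)$ are all contained in that interleaving order, and the only nontrivial point is $G.\lDTPO(\mo)$: a \rulename{flush} of $\loc$ (resp.\ a flush-optimal later followed, in the same thread, by an sfence-like event) requires $\Pbuff(\loc)=\epsl$ (resp.\ forces the $\lFOT$-marker out of the buffer) at its point in the trace, hence at that moment the write $G.\lMEMF(\loc)$ has either already persisted or has not yet been issued; a short case analysis shows that any write $w$ with $\tup{G.\lMEMF(\loc),w}\in\mo$ must be issued \emph{after} the flush/flush-opt, so the corresponding $G.\lDTPO(\mo)$ edge points forward in the interleaving. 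Thus $G.\lHB_\PSC(\mo)$ is contained in the interleaving order and hence irreflexive, giving \DPSC-consistency; for (ii) I additionally discard the tail of $\tr$ after the last persist step contributing to $\memf$, which is a per-thread prefix (indeed an actual prefix) and yields $\mem(G)=\mem$.

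The main obstacle I expect is bookkeeping the flush-optimal / sfence interaction across threads correctly, i.e.\ making the $G.\lDTPO(\mo)$ direction argument airtight: $G.\lFLO_\loc$ includes a flush-opt only when it is $G.\lPO$-before an sfence-like event \emph{of the same thread}, and \PSC's \rulename{mfence/sfence} rule only drains $\fotlabp{\tid}$-markers of that same $\tid$; so one must carefully track, for each pending $\lFOT$-marker in buffer $\loc$, which thread owns it and when that thread next performs an sfence-like action, and confirm this matches exactly the $\dom{G.\lPO\seq[\sU\cup\sRex\cup\sMF\cup\sSF]}$ side-condition in \cref{def:dtpo}. Everything else is a routine adaptation of the \PTSOsynnn{}/\DPTSO{} equivalence proof with the store-buffer machinery deleted.
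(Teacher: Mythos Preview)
Your overall architecture matches the paper's: invoke \cref{lem:op_refines_dec,lem:dec_refines_op} and discharge the resulting trace-level conditions. For \PSC{} $\Rightarrow$ \DPSC{} your construction is essentially the paper's; in fact the paper proves a single stronger statement (every $\mem_0$-to-$\mem$ \PSC-observable-trace $\tr$ lies in $\traces{G}$ for some \DPSC-consistent $\mem_0$-initialized $G$ with $\mem(G)=\mem$), which covers (i) and (ii) simultaneously with $\tr'=\tr$, so your tail-discarding step is unnecessary.

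There is, however, a real gap in your direction (iii). Your scheduling has a single drain point per location---``immediately after $G.\lMEMF(\loc)$ is issued''---and this is not enough. Take the one-thread graph $w_1;\ \fllab{\loc};\ w_2$ with $G.\lMEMF(\loc)=w_2$: the flush sits $R$-before $G.\lMEMF(\loc)$, so when it fires $w_1$ is still in $\Pbuff(\loc)$ and \rulename{flush} is blocked. Your justification of (b) tacitly assumes every flush of $\loc$ is $R$-after $G.\lMEMF(\loc)$, but $G.\lDTPO(\mo)$ only places flushes $R$-\emph{before} writes that are $\mo$-after $G.\lMEMF(\loc)$; it says nothing about $G.\lMEMF(\loc)$ itself or the writes $\mo$-below it. The same failure hits (c): an $\lFOT$-marker inserted after your single drain (e.g., $\folab{\loc}$ then $\sflab$ with $G.\lMEMF(\loc)$ the initialization write) is still pending at the sfence. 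The paper avoids this by interleaving persist steps throughout the trace rather than once per location: it builds an explicit partial order on instrumented issue and persist labels that, in particular, schedules the persist of every to-be-persisted event before the issue of any $T$-later flush, and the persist of every relevant flush-optimal before the issue of any $G.\lPO$-later sfence-like event of the same thread; acyclicity of this order then reduces to acyclicity of $T$. So the obstacle you flagged (cross-thread $\lFOT$/sfence bookkeeping) is not the crux; the missing idea is that draining must be interleaved with issuing, triggered before \emph{every} flush and sfence-like step.
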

\section{Relating \PTSOsynnn and \PSC}
\label{sec:ptso_psc}

In this section we develop a data-race-freedom (DRF) guarantee for \PTSOsynnn \wrt the stronger and simpler \PSC model.
This guarantee identifies certain forms of races and ensures that if all executions
of a given program do not exhibit such races, then the program's states that are reachable under \PTSOsynnn
are also reachable under \PSC.
Importantly, as standard in DRF guarantees, it suffices to verify the absence of races \emph{under \PSC}.
Thus, programmers can adhere to a safe programming discipline that is
formulated solely in terms of \PSC.

To facilitate the exposition, we start with a simplified version of the DRF guarantee,
and later strengthen the theorem by further restricting the notion of a race.
The strengthened theorem is instrumental in deriving a sound mapping of programs
from \PSC to \PTSOsynnn, which can be followed by compilers to ensure \PSC semantics under x86-TSO.

\subsection{A Simplified DRF Guarantee}
\label{sec:drf_simple}

The premise of the DRF result requires the absence of two kinds of races: 
\begin{enumerate*}[label=(\roman*)]
\item races between a write/RMW operation and a read accessing the same location;
and \item races between write/RMW operation and a flush-optimal instruction to the same location.
\end{enumerate*}
Write-write races are allowed.
Similarly, racy reads are only \emph{plain} reads, and not ``$\lRex$'s'' that arise from failed CAS operations.
In particular, this ensures that standard locks, implemented using a CAS for acquiring the lock (in a spinloop)
and a plain write for releasing the lock, are race free and can be safely used to avoid races in programs.
This frees us from the need to have lock and unlock primitives (\eg as in~\cite{owens-trf}),
and still obtain an applicable DRF guarantee.

For the formal statement of the theorem, we define races and racy programs.

\begin{definition}
\label{def:exhibit_race}
Given a read or a flush-optimal label $\lab$, we say that
thread $\tid$ \emph{exhibits an $\lab$-race} in a program state $\progstate \in \prog.\lQ$
if $\progstate(\tid)$ enables $\lab$,
while there exists a thread $\tid_\lW\neq\tid$ such that 
$\progstate(\tid_\lW)$ enables an event label $\lab_\lW$ with $\lTYP(\lab_\lW)\in\set{\lW,\lU}$
and $\lLOC(\lab_\lW)=\lLOC(\lab)$.
\end{definition}

\begin{definition}
\label{def:racy}
A program $\prog$ is \emph{racy} if for some program state $\progstate \in \prog.\lQ$
that is reachable under \PSC, some 
thread $\tid$ exhibits an $\lab$-race for 
some read or flush-optimal label $\lab$.
\end{definition}

The above notion of racy programs is operational (we believe it may be more easily applicable by developers
compared to a declarative notion).
It requires that under \PSC, the program $\prog$ can reach a state $\progstate$
possibly after multiple crashes,
where $\progstate$ enables both a write/RMW by some thread $\tid_\lW$
and a read/flush-optimal of the same location by some other thread $\tid$.
As mentioned above, \cref{def:racy} formulates a property of programs \emph{under the \PSC model}.

\begin{theorem}
\label{thm:drf_weak}
For a \emph{non-racy} program $\prog$,
a program state $\progstate \in \prog.\lQ$ 
is reachable under \PTSOsynnn iff it is reachable under \PSC.
\end{theorem}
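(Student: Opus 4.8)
The easy direction is that reachability under \PSC implies reachability under \PTSOsynnn, since \PTSOsynnn can simulate \PSC by keeping all store buffers empty: after each issuing step we immediately take the corresponding \rulename{prop-w}/\rulename{prop-fl}/\rulename{prop-fo}/\rulename{prop-sf} step, so that \PTSOsynnn's behavior coincides with \PSC's (the blocking side-conditions on flushes and sfences in \PTSOsynnn are exactly those of \PSC once buffers are empty). This requires no race-freedom assumption and should be a short argument, essentially an instance of a general ``empty-store-buffer'' simulation lemma; I would phrase it via \cref{lem:memory_refine}, showing every \PSC-observable-trace is realizable in \PTSOsynnn with buffers kept empty.

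The substantive direction is: for a non-racy $\prog$, every program state reachable under \PTSOsynnn is reachable under \PSC. The plan is to work with the declarative characterizations: by \cref{thm:PTSO_eq_DPTSO} and \cref{thm:DPTSO_DPTSOmo} it suffices to reason about \DPTSOmo, and by \cref{thm:PSCeqDPSC} the target is \DPSC. So I would fix a chain $G_0, \dots, G_n$ of \DPTSOmo-consistent execution graphs witnessing reachability of $\progstate$ under \PTSOsynnn, and show that the same chain (same events, same $\rf$, same $\memf$ — note each $G_i$ is generated by $\prog$ and the memory-initialization conditions only involve $\mem(G_{i-1})$, which is unchanged) is a valid \DPSC-witness. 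For this it is enough to show that each $G_i$ is \DPSC-consistent, i.e.\ that there is a modification order $\mo$ with $G_i.\lHB_\PSC(\mo)$ irreflexive. The natural candidate is the very $\mo$ witnessing \DPTSOmo-consistency. The difference between $G.\lHB_\PSC(\mo)$ and $G.\lHB(\mo)$ is exactly that the former uses all of $G.\lPO$ while the latter uses only the preserved fragment $G.\lPPO$; so the whole task reduces to showing that, for a race-free program, adding the ``missing'' program-order edges creates no cycle.

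The key lemma — and the main obstacle — is therefore a race-freedom argument on execution graphs: if $G$ is a \DPTSOmo-consistent graph generated by a non-racy $\prog$, with witnessing $\mo$, then $G.\lPO \subseteq G.\lHB(\mo)$ (equivalently $G.\lHB(\mo)$ already orders every $\lPO$-related pair, hence is a strict order whose transitive closure with the extra edges stays irreflexive). The missing $\lPO$-edges are of two shapes: (a) $\tup{a,b}\in G.\lPO$ with $a\in\sW\cup\sFL\cup\sFO\cup\sSF$ and $b\in\sR$, and (b) $\tup{a,b}\in G.\lPO$ with $a\in\sW\cup\sFL\cup\sFO$, $b\in\sFO$, $\lLOC(a)\neq\lLOC(b)$. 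For a missing edge to close a cycle, the reverse path $b \to a$ in $G.\lHB(\mo)$ must exist. I would argue that such a reverse path would force a race under \PSC: informally, if the plain read $b$ (case (a)) is $\lHB$-ordered before the write/flush/fo $a$ that $\lPO$-precedes it, one can exhibit (via a \PSC run following a linearization of the relevant prefix) a reachable \PSC state in which $b$ is enabled while a conflicting write/RMW is concurrently enabled, contradicting non-raciness; the handling of $\lRex$-events (failed CAS) here is crucial, since these are \emph{not} in the ``missing'' category and so do not need this treatment — matching the fact that properly-locked programs are non-racy. Case (b) is analogous, using the flush-optimal/write-conflict clause of \cref{def:exhibit_race}. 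Turning this intuition into a proof requires a careful ``extraction'' argument: from a minimal cycle through a missing edge, isolate the $\lHB$-path, linearize an appropriate $\lPO$-downward-closed prefix of $G$ into a \PSC-trace, and read off the two enabled conflicting labels; I expect this to be the technical heart, and the place where the precise definition of \PSC's per-location persistence buffers and of $G.\lDTPO$ must be handled with care (in particular that $\lDTPO(\mo)$-edges out of $\lFO$-events interact correctly with the prefix chosen). Once this lemma is in hand, the theorem follows: each $G_i$ is \DPSC-consistent with the same $\mo$, so the chain witnesses \PSC-reachability of $\progstate$, and combining with the easy direction gives the ``iff''.
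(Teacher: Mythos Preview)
Your overall architecture---pass to the declarative models via \cref{thm:PTSO_eq_DPTSO}, \cref{thm:DPTSO_DPTSOmo}, \cref{thm:PSCeqDPSC}, then argue that every graph in the \DPTSOmo-chain is already \DPSC-consistent with the same $\mo$---matches the paper's route (the paper in fact derives \cref{thm:drf_weak} as an immediate corollary of \cref{thm:drf}, since non-racy trivially implies not strongly racy). The easy direction is fine.

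The gap is in your key lemma. The claim ``$G.\lPO \subseteq G.\lHB(\mo)$ for non-racy programs'' is false: take a single thread executing $\writeInst{\cloc{1}}{1};\ \readInst{\creg{1}}{\cloc{2}}$. The program is vacuously non-racy, the graph is \DPTSOmo-consistent, yet the $\lPO$-edge from the write to the read is in neither $\lPPO$ nor any of $\lRFE$, $\mo$, $\lFR(\mo)$, $\lDTPO(\mo)$, so it is not in $G.\lHB(\mo)$. What you actually need is only that $G.\lHB_\PSC(\mo)$ is irreflexive, which is strictly weaker. Your fallback (``a missing edge $\tup{a,b}$ closes a cycle only if the reverse $G.\lHB(\mo)$-path $b \to a$ exists'') also fails as stated: a $G.\lHB_\PSC(\mo)$-cycle may thread through \emph{several} non-$\lPPO$ program-order edges, so no single missing edge need have a pure-$\lHB(\mo)$ return path. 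Relatedly, your race-extraction is unclear about where the \emph{other-thread} conflicting write comes from---$a$ and $b$ are in the same thread, so the race witness must be some third event on the $\lHB$-path, which you never identify.

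The paper's fix is \cref{lem:DRF_graph}: its hypothesis is not about $\lPO$-edges at all, but that every write $w$ and every $G$-unprotected $e\in\sR_{\lLOC(w)}\cup\sFO_{\lLOC(w)}$ are $(G.\lPO\cup G.\lRF)^+$-comparable. The proof of that lemma normalizes a putative $\lHB_\PSC$-cycle (collapsing $\lPO$, absorbing compositions like $\mo\seq\lFR\subseteq\mo$) down to edges of shape $\lPO\seq[\sW\cup\sU]$, $\lRF$, $\mo$, $\lPO\seq\lFR$, $\lPO\seq\lDTPO$, and then shows by a case split on the endpoints that each problematic $\lPO\seq\lFR$ or $\lPO\seq\lDTPO$ step either already factors through $\lPPO^+$ or produces the forbidden unordered pair. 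Only after this lemma does the minimality/prefix argument extract an actual \PSC-reachable racy state; the write $w$ and event $e$ in the race are precisely the unordered pair furnished by the contrapositive of the lemma, not the endpoints of a missing $\lPO$-edge.
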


The theorem is a direct corollary of the more general result in \cref{thm:drf} below.
A simple corollary of \cref{thm:drf_weak} is that single-threaded programs (\eg those in \cref{ex:ptso_basic})
cannot observe the difference between \PTSOsynnn and \PSC
(due to the non-FIFO propagation of flush-optimals in \PTSOsynnn, even this is not completely trivial).

\begin{example}
\label{ex:drf_fo}
Since \PTSOsynnn allows the propagation of flush-optimals before previously issued writes 
to different locations, it is essential to include races on flush-optimals in the definition above.

\vspace{2pt}
\noindent
\begin{minipage}{.7\textwidth}
Indeed, if races between writes and flush-optimals are not counted, then the program on the right is clearly race free. 
However, the annotated persistent memory ($\cloc{3}=\cloc{4}=1$ but $\cloc{1}=\cloc{2}=0$) is reachable 
under \PTSOsynnn (by propagating each flush-optimal before the prior write), 
but not under \PSC. 
\end{minipage}\hfill
\begin{minipage}{.25\textwidth}
$$
\inarrII{
\writeInst{\cloc{1}}{1} \sep \\
\foInst{\cloc{2}} \sep \\
\sfenceInst \sep \\
\persist{\writeInst{\cloc{3}}{1}} \sep 
}{
\writeInst{\cloc{2}}{1} \sep \\
\foInst{\cloc{1}} \sep \\
\sfenceInst \sep \\
\persist{\writeInst{\cloc{4}}{1}} \sep 
}
$$
\end{minipage}
\end{example}

\subsection{A Generalized DRF Guarantee and a \PSC to \PTSOsynnn Mapping}
\label{sec:drf_general}

We refine our definition of races to be sufficiently precise for
deriving a mapping scheme from \PSC to \PTSOsynnn as a corollary of the DRF guarantee. 
To do so, reads and flush-optimals are only considered racy if they are \emph{unprotected}, as defined next.

\begin{definition}
\label{def:unprotected}
Let $\rho=\lab_1 \til \lab_n$ be a sequence of event labels.
\begin{itemize}
\item A read label $\rlab{\loc}{\_}$ 
is \emph{unprotected} after $\rho$
if there is some $1\leq i_\lW\leq n$ such that
$\lab_{i_\lW}=\wlab{\loca}{\_}$ with $\loca\neq \loc$
and for every $i_\lW< j \leq n$ we have 
$\lab_j\nin \set{\wlab{\loc}{\_},\ulab{\_}{\_}{\_},\rexlab{\_}{\_},\mflab}$.
\item A flush-optimal label $\folab{\loc}$
is \emph{unprotected} after $\rho$
if there is some $1\leq i_\lW\leq n$ such that
$\lab_{i_\lW}=\wlab{\loca}{\_}$ with $\loca\neq \loc$
and for every $i_\lW< j \leq n$ we have 
$\lab_j\nin \set{\wlab{\loc}{\_},\ulab{\_}{\_}{\_},\rexlab{\_}{\_},\mflab,\sflab}$.
\end{itemize}
\end{definition}

Roughly speaking, unprotected labels are induced by 
read/flush-optimal instructions of location $\loc$ 
that follow some write instruction to a different location
with no barrier, which can be either an RMW instruction, an mfence, or a write to $\loc$, intervening in between.
Flush-optimal instructions are also protected if an sfence barrier
is placed  between that preceding write and the flush-optimal instruction.

Using the last definitions, we define \emph{strongly racy} programs.
\begin{notation}
For an observable program traces $\tr$ and thread $\tid$, we denote by
$\suffix{\tid}{\tr}$ the sequence of event labels corresponding to the maximal
crashless suffix of $\tr\rst{\tid}$ (\ie $\suffix{\tid}{\tr} = \lab_1 \til \lab_n$ when $\tidlab{\tid}{\lab_1} \til \tidlab{\tid}{\lab_n}$ is the maximal crashless suffix of the restriction of $\tr$
to transition labels of the form $\tidlab{\tid}{\_}$).
\end{notation}

\begin{definition}
\label{def:strongly_racy}
A program $\prog$ is \emph{strongly racy} if there exist 
$\progstate \in \prog.\lQ$,
trace $\tr$, 
thread $\tid$, 
and a read or a flush-optimal label $\lab$ such that the following hold:
\begin{itemize}
\item $\progstate$ is reachable under \PSC via the trace $\tr$ \\
(\ie $\tup{\progstate_\Init,\mem_\Init,\Pbuff_\epsl} \bsteplab{\tr}{\cs{\prog}{\PSC}} \tup{\progstate, \mem, \Pbuff}$ 
for some $\progstate_\Init \in \prog.\linit$ and $\tup{\mem,\Pbuff}\in \PSC.\lQ$).
\item $\tid$ exhibits an $\lab$-race in $\progstate$.
\item $\lab$ is unprotected after $\suffix{\tid}{\tr}$.
\end{itemize}
\end{definition}

The generalized DRF result is stated in the next theorem.

\begin{restatable}{theorem}{DRF}\label{thm:drf}
For a program $\prog$ that is not strongly racy, 
a program state $\progstate \in \prog.\lQ$ 
is reachable under \PTSOsynnn iff it is reachable under \PSC.
\end{restatable}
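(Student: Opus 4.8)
The plan is to prove \Cref{thm:drf} via the declarative characterizations of both models, namely \DPTSO (equivalently \DPTSOmo) for \PTSOsynnn and \DPSC for \PSC. The ``if'' direction is immediate: since $G.\lHB_\PSC(\mo) \supseteq G.\lHB(\mo)$ (because $G.\lPO \supseteq G.\lPPO$ and $G.\lRF \supseteq G.\lRFE$), every \DPSC-consistent execution graph is \DPTSOmo-consistent, so reachability under \PSC implies reachability under \PTSOsynnn without any race hypothesis. The whole content is in the converse: assuming $\prog$ is not strongly racy, any state reachable under \PTSOsynnn is reachable under \PSC. Using the reachability-via-chains characterization (\cref{def:dec_reachable}) and \Cref{thm:PTSO_eq_DPTSO,thm:DPTSO_DPTSOmo,thm:PSCeqDPSC}, it suffices to show: for a program $\prog$ that is not strongly racy, every chain $G_0, \ldots, G_n$ of \DPTSOmo-consistent execution graphs generated by $\prog$ with the appropriate initialization conditions can be replaced by a chain of \DPSC-consistent graphs witnessing the same final program state.

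First I would reduce to a single graph. One shows by induction on the chain length that if $G$ is \DPTSOmo-consistent, generated by a not-strongly-racy program $\prog$, and $\mem$-initialized, then there is a \DPSC-consistent $G'$ that is $\mem$-initialized, generated by $\prog$ \emph{with the same final state} (when $G$ is), and with $\mem(G') = \mem(G)$. The key point is that the racefreeness hypothesis is stated under \PSC, so once we have transformed $G_0$ into a \DPSC-consistent $G_0'$ with $\mem(G_0') = \mem(G_0)$, the next memory $\mem(G_1)$ is reached under \PSC as well, and we can invoke the non-strong-raciness at that next crash point. This is exactly the structure of \Cref{lem:memory_refine}'s proof: condition (ii)-style reasoning lets us re-derive each post-crash memory inside the stronger model.

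The heart is the single-graph claim. Given a \DPTSOmo-consistent $G$ with modification order $\mo$, I would argue that, under non-strong-raciness, $G.\lHB_\PSC(\mo) = (G.\lPO \cup G.\lRF \cup \mo \cup G.\lFR(\mo) \cup G.\lDTPO(\mo))^+$ is \emph{already} irreflexive, i.e., that the extra edges $G.\lPO \setminus G.\lPPO$ and $G.\lRF \setminus G.\lRFE = G.\lRFI$ do not create a cycle. The edges in $G.\lPO \setminus G.\lPPO$ are of two forms: write-to-read program order on the same thread (always present in TSO's store-buffer slack) and write/flush/flush-opt-to-flush-opt program order across different locations. The edges in $G.\lRFI$ are internal reads-from (a read reading its own thread's earlier buffered write). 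The plan is: take a minimal $G.\lHB_\PSC(\mo)$-cycle; since $G.\lHB(\mo)$ is acyclic, the cycle must use at least one ``new'' edge; analyze the event immediately following that new edge. If the new edge is a $\lW\to\lR$ $\po$-edge followed (along the cycle) by an outgoing edge of that read, the read must have an incoming $\lFR(\mo)$ or $\lHB(\mo)$-predecessor forcing the read to observe a write that is $\mo$-after the buffered write — and I would show this configuration forces a write/RMW by another thread racing with that read at a reachable \PSC-state, contradicting non-raciness; the \emph{unprotectedness} of the read (from \cref{def:unprotected}) is precisely what survives the trimming of the trace to the crashless suffix and what lets us stop at the first write-to-a-different-location with no intervening barrier. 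The flush-opt cases are handled symmetrically, using $\lDTPO$ in place of $\lFR$ and the sfence-aware clause of \cref{def:unprotected}. The internal-rf case reduces to per-location coherence (\cref{def:DPTSOmo} condition (2)) plus the fact that reading one's own latest write cannot be contradicted by $\mo$.

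The main obstacle will be the bookkeeping in this cycle analysis: translating ``a minimal $\lHB_\PSC$-cycle through a new edge'' into ``a concrete reachable \PSC program state at which two conflicting instructions are simultaneously enabled with the racing label unprotected.'' Concretely, one must (a) extract from the cyclic graph a prefix whose \PSC-trace reaches a state enabling both the racy read/flush-opt and the conflicting write/RMW — this uses \Cref{prop:gen_trace2} and the fact that $G.\lHB_\PSC$-prefixes of a consistent graph are themselves realizable as \PSC-traces — and (b) verify the unprotectedness condition holds along $\suffix{\tid}{\tr}$ for the trace $\tr$ we construct, which requires choosing the cycle and the trace carefully so that the ``preceding write to a different location with no barrier in between'' appears in the crashless suffix. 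Getting the quantifier order right (choosing $\progstate$, then $\tr$, then checking unprotectedness of $\lab$ after $\suffix{\tid}{\tr}$, as \cref{def:strongly_racy} demands) is the delicate part; everything else is a routine, if lengthy, case split on edge types in the minimal cycle.
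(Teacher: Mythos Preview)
Your overall strategy matches the paper's: pass to the declarative models, show that under non-raciness every \DPTSOmo-consistent graph in the chain is already \DPSC-consistent, and derive a contradiction by exhibiting a strong race whenever this fails. The ``if'' direction and the chain bookkeeping are fine.

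The substantive gap is in your race-extraction step. To contradict non-strong-raciness you must produce a \emph{\PSC-reachable} program state at which the racy write/RMW and the unprotected read/flush-opt are both enabled. Your plan is to take a prefix of $G$ and realize it as a \PSC-trace, appealing to ``$G.\lHB_\PSC$-prefixes of a consistent graph are themselves realizable as \PSC-traces.'' But $G$ is precisely the graph you are assuming is \emph{not} \DPSC-consistent, so this appeal is circular; and your minimality hypothesis---minimal \emph{cycle} length---says nothing about whether the $(\po\cup\rf)^+$-prefix below the racy pair is free of \emph{other} $\lHB_\PSC$-cycles. There could be further cycles living entirely inside that prefix, unrelated to your chosen minimal cycle.

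The paper closes this gap by a different decomposition. It first isolates a standalone graph-level lemma (\cref{lem:DRF_graph}): if in a \DPTSO-consistent $G$ every $G$-unprotected event $e\in\sR_\loc\cup\sFO_\loc$ is $(\po\cup\rf)^+$-comparable with every $w\in\sW_\loc\cup\sU_\loc$, then $G$ is \DPSC-consistent. The cycle/edge analysis you sketch is exactly the proof of \emph{that} lemma, not of the theorem directly. Then, for the theorem, one takes the first non-\DPSC-consistent $G_i$ and, inside it, picks a pair $(w,e)$ that is $(\po\cup\rf)^+$-\emph{incomparable} and is \emph{minimal} with respect to $(\po\cup\rf)^+$. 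The prefix $G'$ consisting of the strict $(\po\cup\rf)^+$-predecessors of $w$ and $e$ is \DPTSO-consistent (as a downward-closed restriction), and by minimality of $(w,e)$ every unprotected event in $G'$ \emph{is} $(\po\cup\rf)^+$-comparable with every conflicting write in $G'$. Now \cref{lem:DRF_graph} applies to $G'$, so $G'$ is \DPSC-consistent, hence realizable as a \PSC-trace reaching a state that enables both $\lLAB(w)$ and $\lLAB(e)$; the unprotectedness of $\lLAB(e)$ along $\suffix{\tid}{\tr}$ then follows from \cref{lem:protected}. The minimality you need is therefore of the unordered \emph{pair}, not of the cycle; once you make that switch and separate out \cref{lem:DRF_graph}, your argument goes through.
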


\Cref{ex:fot} is an example of a program that is racy but not strongly racy.
 By \cref{thm:drf}, that program has only \PSC-behaviors.
\Cref{ex:fo-ooo} can be made not strongly racy: by adding an sfence instruction between  
$\writeInst{\cloc{2}}{2}$ and $\foInst{\cloc{1}}$;
by strengthening $\foInst{\cloc{1}}$ to $\flInst{\cloc{1}}$;
\emph{or} by replacing $\writeInst{\cloc{2}}{2}$ with an atomic exchange instruction (an RMW).

An immediate corollary of \cref{thm:drf} is that programs that only use RMWs when writing to shared locations
(\eg~\cite{Morrison_13}) may safely assume \PSC semantics
(all labels will be protected).
More generally, by ``protecting'' all racy reads and flush-optimals,
we can transform a given program and make it non-racy according to the definition above.
In other words, we obtain a compilation scheme from a language with \PSC semantics to x86.
Since precise static analysis of races is hard,
such scheme may over-approximate.
Concretely, a sound scheme can:
\begin{enumerate}[label=(\roman*)]
\item like the standard compilation from SC to TSO~\cite{www:mappings}, place \emph{mfences} 
separating all read-after-write pairs of different locations (when there is no RMW already in between);
and \item place \emph{sfences} separating all flush-optimal-after-write pairs of
different locations (when there is no RMW or other sfence already in between).
\end{enumerate}
Moreover, since a write to $\loc$ between a write to some location $\loca\neq\loc$ and a flush-optimal to $\loc$
makes the flush protected, in the standard case where flush-optimal to some location $\loc$
immediately follows a write to $\loc$
(for ensuring a persistence order for that write), flush-optimals can be compiled without additional barriers.
Similarly, the other standard use of a flush-optimal to $\loc$ after reading from $\loc$
(known as ``flush-on-read'' for ensuring a persistence order for writes that the thread relies on)
does not require additional barriers as well---an mfence is anyway placed between writes to locations different than
$\loc$ and the read from $\loc$ that precedes the flush-optimal.
Thus, we believe that for most ``real-world'' programs the above scheme will not incur additional runtime overhead
compared standard mappings from SC to x86 (see, \eg~\cite{sc-mm,Singh2012,Liu2017} for performance studies).

To prove \cref{thm:drf} we use the declarative formulations of \PTSOsynnn and \PSC.
First, we relate unprotected labels as defined in \cref{def:unprotected}
with unprotected events in the corresponding execution graph, as defined next.

\begin{definition}
\label{def:protected}
Let $G$ be an execution graph.
An event $e \in \sR \cup \sFO$ with $\loc=\lLOC(e)$ is \emph{$G$-unprotected}
if one of the following holds:
\begin{itemize}
\item $e\in G.\sR$ and 
$\tup{w,e} \in G.\lPO \setminus (G.\lPO \seq [\sW_\loc\cup \sU \cup \sRex \cup \sMF] \seq G.\lPO)$
for some $w\in \sW \setminus \Init$ with $\lLOC(w)\neq \loc$.
\item $e\in G.\sFO$ and 
$\tup{w,e} \in G.\lPO \setminus (G.\lPO \seq [\sW_\loc \cup \sU \cup \sRex \cup \sMF \cup \sSF] \seq G.\lPO)$
for some $w\in \sW \setminus \Init$ with $\lLOC(w)\neq \loc$.
\end{itemize}
\end{definition}

\begin{proposition}
\label{lem:protected}
Let $\tid\in\Tid$.
Let $G$ and $G'$ be execution graphs
such that $G'.\lE^\tid = G.\lE^\tid \cup \set{e}$ for some  
$G.\lPO \cup G.\lRF$-maximal event $e$.
If $e$ is $G'$-unprotected, then
$\lLAB(e)$ is unprotected after $\suffix{\tid}{\tr}$ for some observable program trace $\tr \in \traces{G}$.
\end{proposition}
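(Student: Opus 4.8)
The plan is to show that \emph{any} trace $\tr\in\traces{G}$ works; such a trace exists because $G.\lPO$ is a partial order on the finite set $G.\lE\setminus\Init$, hence admits a topological enumeration. The first step is to pin down $\suffix{\tid}{\tr}$: a trace in $\traces{G}$ respects $G.\lPO$, which totally orders $G.\lE^\tid$ by serial number, and it contains no crash transitions, so $\suffix{\tid}{\tr}$ is precisely $\rho \defeq \lLAB(f_1)\cdots\lLAB(f_m)$, where $f_1,\dots,f_m$ is the serial-number enumeration of $G.\lE^\tid$. It then remains to verify, straight from \cref{def:unprotected}, that $\lLAB(e)$ is unprotected after $\rho$.

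Next I would unpack $G'$-unprotectedness of $e$, treating the case $e\in G'.\sR$ with $\loc=\lLOC(e)$ (the case $e\in G'.\sFO$ is identical, using $\sW_\loc\cup\sU\cup\sRex\cup\sMF\cup\sSF$, whose label counterpart additionally contains $\sflab$, in place of $\sW_\loc\cup\sU\cup\sRex\cup\sMF$). By \cref{def:protected} there is $w\in\sW\setminus\Init$ with $\lLOC(w)\neq\loc$, $\tup{w,e}\in G'.\lPO$, and no $m\in\sW_\loc\cup\sU\cup\sRex\cup\sMF$ with $\tup{w,m},\tup{m,e}\in G'.\lPO$. Since $w\notin\Init$ and $\tup{w,e}\in G'.\lPO$ with $e$ a $\tid$-event, the definition of program order forces $w$ to be a $\tid$-event with $\lSN(w)<\lSN(e)$; as $w\neq e$ (a write cannot be a read), $w\in G.\lE^\tid$, so $w=f_{i_\lW}$ for some index $i_\lW$. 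Furthermore, the hypothesis that $e$ is $G.\lPO\cup G.\lRF$-maximal — together with the fact that reads and flush-optimals have no $\lRF$-successors — makes $e$ $\lPO$-maximal, so every event of $G.\lE^\tid$ has serial number at most $\lSN(e)$; hence each $f_j$ with $j>i_\lW$ either lies strictly $G'.\lPO$-between $w$ and $e$, or equals $e$ itself (only possible in the degenerate case $e\in G.\lE^\tid$).

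For the conclusion, $\lab_{i_\lW}=\lLAB(w)=\wlab{\lLOC(w)}{\_}$ with $\lLOC(w)\neq\loc$, giving the required write witness in $\rho$. For $j>i_\lW$, the label $\lab_j=\lLAB(f_j)$ is either the label of a $\tid$-event strictly $G'.\lPO$-between $w$ and $e$ — which by the choice of $w$ is not in $\sW_\loc\cup\sU\cup\sRex\cup\sMF$ — or the label $\rlab{\loc}{\_}$ of $e$, which is also not in that set; either way $\lab_j\notin\set{\wlab{\loc}{\_},\ulab{\_}{\_}{\_},\rexlab{\_}{\_},\mflab}$. By \cref{def:unprotected} this shows $\lLAB(e)$ is unprotected after $\rho$, completing the argument. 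I expect no genuine obstacle: the only care needed is identifying $\suffix{\tid}{\tr}$ with the serial-number listing of $G$'s $\tid$-events (and noting it is crash-free) and using maximality of $e$ to place all those events no later than $e$; everything else is a direct transcription between the graph-level (\cref{def:protected}) and trace-level (\cref{def:unprotected}) notions of ``unprotected''.
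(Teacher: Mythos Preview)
The paper states this proposition without proof, treating it as a routine transcription between the graph-level notion of ``unprotected'' (\cref{def:protected}) and the trace-level one (\cref{def:unprotected}); your argument correctly supplies those details. One minor remark: the clause ``reads and flush-optimals have no $\lRF$-successors'' is redundant, since $G.\lPO\cup G.\lRF$-maximality of $e$ already directly entails $G.\lPO$-maximality.
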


The next key lemma, establishing the DRF-guarantee ``on the execution graph level'', is needed for proving \cref{thm:drf}.
Its proof utilizes \DPTSOmo, which is closer to \DPSC than \DPTSO.

\begin{restatable}{lemma}{DRFgraph}\label{lem:DRF_graph}
Let $G$ be a \DPTSO-consistent execution graph.
Suppose that for every $w\in G.\sW \cup G.\sU$ and $G$-unprotected event $e\in \sR_{\lLOC(w)} \cup \sFO_{\lLOC(w)}$, we have 
either $\tup{w,e}\in (G.\lPO \cup G.\lRF)^+$ or $\tup{e,w}\in (G.\lPO \cup G.\lRF)^+$.
Then, $G$ is \DPSC-consistent.
\end{restatable}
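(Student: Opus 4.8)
The plan is to prove the contrapositive at the level of the happens-before relations. By \cref{thm:DPTSO_DPTSOmo} it suffices to show that $G$ is $\DPTSOmo$-consistent. Since $G$ is $\DPTSO$-consistent, fix a propagation order $\tpo$ for $G$ satisfying the conditions of \cref{def:DPTSO}, and let $\mo \defeq \bigcup_{\loc\in\Loc} [\sW_\loc\cup\sU_\loc]\seq\tpo\seq[\sW_\loc\cup\sU_\loc]$ be the induced modification order, exactly as in the proof of \cref{thm:DPTSO_DPTSOmo}. Then $G.\lFR(\mo)=G.\lFR(\tpo)$ and $G.\lDTPO(\mo)=G.\lDTPO(\tpo)$, and by that proof $\mo$ witnesses $\DPTSOmo$-consistency of $G$, so $G.\lHB(\mo)$ is irreflexive and $G.\lFR(\mo)\seq G.\lPO$ is irreflexive. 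The goal is to upgrade this to $\DPSC$-consistency, i.e.\ to show $G.\lHB_\PSC(\mo)$ is irreflexive, where the only difference (\cref{def:hbsc} vs.\ \cref{def:mo}) is that $\lHB_\PSC$ uses the full $G.\lPO$ and full $G.\lRF$ in place of $G.\lPPO$ and $G.\lRFE$.

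**The core argument.** Suppose for contradiction that $G.\lHB_\PSC(\mo)$ is not irreflexive, i.e.\ there is a cycle in $G.\lPO \cup G.\lRF \cup \mo \cup G.\lFR(\mo) \cup G.\lDTPO(\mo)$. The plan is to take such a cycle of minimal length and show it must actually be a $G.\lHB(\mo)$-cycle (contradicting irreflexivity of $G.\lHB(\mo)$) unless it is of a very restricted shape that is also ruled out. The extra edges available to $\lHB_\PSC$ beyond $\lHB$ are: (a) $G.\lPO$ edges from a write/flush/flush-opt/sfence into a read (the "$\sW\cup\sFL\cup\sFO\cup\sSF \to \sR$" edges excluded from $\lPPO$), (b) $G.\lPO$ edges from a write/flush/flush-opt of one location into a flush-opt of a different location (the second exclusion in $\lPPO$), and (c) internal reads-from edges $G.\lRF\setminus G.\lRFE \subseteq G.\lPO$. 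Since (c) is already contained in $G.\lPO$, it is harmless up to replacing an internal rf-edge by the corresponding po-edge; and type-(b) $\lPO$-edges: here we invoke the hypothesis. A $\lFO$ event that is the target of such a po-edge from a write $w$ to a different location is, by \cref{def:protected}, a $G$-unprotected event in $\sFO_{\lLOC(w)}$ only if no $\sW_{\loc}\cup\sU\cup\sRex\cup\sMF\cup\sSF$ sits strictly between; but if such an intervening event exists then the po-edge factors through $\lPPO$ and is harmless, so the only dangerous type-(b) edges land on $G$-unprotected flush-opts, and the hypothesis then gives a $(G.\lPO\cup G.\lRF)^+$ path between $w$ and that $\lFO$ in one direction or the other — which lets us reroute the cycle. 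The genuinely new ingredient is type-(a): po-edges into plain reads. For a plain read $r$ with $\loc = \lLOC(r)$, either some $w'\in\sW_\loc\cup\sU\cup\sRex\cup\sMF$ lies strictly between the predecessor write $w$ and $r$ in po (in which case $\tup{w,w'}\in G.\lPPO$ and we reroute through $w'$, since $w'$ is either a propagated event ordered by $\tpo$/$\mo$ or has the effect of an mfence), or $r$ is $G$-unprotected, and the hypothesis applies to the pair $(w, r)$.

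**Rerouting $G$-unprotected reads and flush-opts.** For each dangerous extra edge $\tup{w, e}$ with $e$ $G$-unprotected, the hypothesis gives $\tup{w,e}\in(G.\lPO\cup G.\lRF)^+$ or $\tup{e,w}\in(G.\lPO\cup G.\lRF)^+$. In the first case the edge $\tup{w,e}$ is redundant for cycle-purposes modulo replacing it by that path (which still need not be in $\lHB$, so I iterate/take care of the innermost occurrence and argue by minimal cycle length that such a path, having fewer "dangerous" edges on a shortest witness, cannot create a fresh cycle). In the second case, $\tup{e,w}\in(G.\lPO\cup G.\lRF)^+$ together with $\tup{w,e}$ itself forms a $(G.\lPO\cup G.\lRF)^+$ loop; since $e$ is a read it can only be a target of $G.\lRF$, and tracing back, any such loop must already be a $G.\lPO$-loop on a single thread (reads contribute only incoming rf, writes outgoing rf of the same location/value), which is impossible as $G.\lPO$ is a strict order. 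Thus the second case cannot occur, and the first case lets me eliminate all type-(a) and type-(b) extra edges, reducing any $\lHB_\PSC(\mo)$-cycle to a $\lHB(\mo)$-cycle and contradicting $\DPTSOmo$-consistency of $G$.

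**Main obstacle.** I expect the hard part to be the careful bookkeeping in the rerouting argument: after replacing a dangerous $\lPO$-edge into an unprotected event by a $(G.\lPO\cup G.\lRF)^+$ path, that path may itself traverse $\lRF$-edges that are internal (hence not in $\lRFE$) or even further dangerous $\lPO$-into-read edges, so the induction on cycle length must be set up so that each rerouting strictly decreases an appropriate well-founded measure (e.g.\ the number of edges that are in $\lHB_\PSC(\mo)$ but not in $\lHB(\mo)$, counted along the cycle), and I must check the rerouted configuration does not reintroduce such edges at the same "level". A secondary subtlety is handling the case analysis on what the intervening event $w'$ between $w$ and a protected read/flush-opt actually is — ensuring in each subcase ($w'$ a write, an RMW, a failed CAS, an mfence, or for flush-opts an sfence) that $\tup{w, w'}$ or an appropriate consequence genuinely lies in $G.\lPPO\cup\tpo\cup[\sU\cup\sRex\cup\sMF]\seq\ldots$ so that rerouting through $w'$ stays inside $G.\lHB(\mo)$, which is where conditions $(5)$–$(6)$ of \cref{def:DPTSO} (repackaged via \cref{lem:hbtso_helper}) do the work.
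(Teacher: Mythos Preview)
There is a genuine gap: you apply the lemma's hypothesis to the wrong pair of events. The hypothesis only relates a write $w$ to a $G$-unprotected event $e\in\sR_{\lLOC(w)}\cup\sFO_{\lLOC(w)}$, i.e., $e$ must access the \emph{same} location as $w$. But your ``dangerous'' $\lPO$-edges of types~(a) and~(b) go from a write $w$ into a read or flush-opt $e$ of a \emph{different} location---that difference is precisely what puts the edge outside $G.\lPPO$ and, for a write source, what witnesses $e$ being $G$-unprotected. Hence the hypothesis does not apply to your pair $(w,e)$; and even if it did, the alternative $\tup{w,e}\in(G.\lPO\cup G.\lRF)^+$ holds trivially (since already $\tup{w,e}\in G.\lPO$), so your ``rerouting'' just returns the same edge and your iteration makes no progress. (Your phrase ``a $G$-unprotected event in $\sFO_{\lLOC(w)}$'' in the type-(b) discussion already signals the confusion: the flush-opt is to a different location than $w$.)

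The paper's proof looks not at the $\lPO$-predecessor of the unprotected event but at its \emph{successor} in the cycle. After an algebraic reduction (using domain/codomain facts such as $\lRF\seq\lFR(\mo)\subseteq\mo$, $\lDTPO(\mo)\seq\lFR(\mo)\subseteq\lDTPO(\mo)$, etc.) it suffices to show acyclicity of $\lPO\seq[\sW\cup\sU]\cup\lRF\cup\mo\cup\lPO\seq\lFR(\mo)\cup\lPO\seq\lDTPO(\mo)$, whose cycle nodes all lie in $(\sR\cup\sW\cup\sU\cup\sRex)\setminus\Init$. One then examines $\tup{a,b}\in G.\lPO$ together with $\tup{b,c}\in\lFR(\mo)\cup\lDTPO(\mo)$. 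Here $c\in\sW_{\lLOC(b)}\cup\sU_{\lLOC(b)}$, so when $b$ is $G$-unprotected the hypothesis applies to the pair $(c,b)$. If $\tup{b,c}\in(G.\lPO\cup G.\lRF)^+$ then $\tup{a,c}\in(\lPO\seq[\sW\cup\sU]\cup\lRF\cup\mo)^+$ and the edge is absorbed; if $\tup{c,b}\in(G.\lPO\cup G.\lRF)^+$, combining with $\tup{b,c}\in\lFR(\mo)\cup\lDTPO(\mo)$ yields a loop contradicting irreflexivity of $G.\lHB(\mo)$ or of $\lFR(\mo)\seq G.\lPO$. This packages into a single inclusion that reduces $\lHB_\PSC(\mo)$-acyclicity to $\lHB(\mo)$-irreflexivity directly, with no iterative rerouting needed.
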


With \cref{lem:DRF_graph}, the proof of \cref{thm:drf} extends the standard declarative DRF argument.
Roughly speaking, we consider the first \DPSC-inconsistent execution graph encountered in a chain of
execution graphs for reaching a certain program state.
Then, we show that a minimal \DPSC-inconsistent prefix of that graph
must entail a strong race as defined in \cref{def:strongly_racy}.
\section{Conclusion and Related Work}
\label{sec:related}

We have presented an alternative x86-TSO persistency model, called \PTSOsynnn,
formulated it operationally and declaratively, and proved it to be observationally equivalent to \PTSO
when observations consist of reachable program states and non-volatile memories.
To the best of our understanding, \PTSOsynnn captures the intuitive
persistence guarantees (of \textflushopt and \textsfence instructions, in
particular) widely present in the literature on data-structure design as
well as on programming persistent memory (see \cite{intel-manual,pmdk,pmem}).
We have also presented a formalization of an SC-based persistency model, called \PSC,
which is simpler and stronger than \PTSOsynnn, and related it to 
\PTSOsynnn via a sound compilation scheme and a DRF-guarantee.
We believe that the developments of data structures and language-level persistency constructs
for non-volatile memory, such as listed in \cref{sec:intro}, 
may adopt \PTSOsynnn and \PSC as their formal semantic foundations. 
Our models may also simplify reasoning about persistency under x86-TSO both for programmers and
automated verification tools.

We have already discussed in length the relation of our work to \cite{pxes-popl}.
Next, we describe the relation to several other related work.

\citet{pelley-persistency} (informally) explore a hardware co-design for memory persistency
and memory consistency and propose a model of {\em epoch
persistency} under sequential consistency, which splits thread
executions into epochs with special {\em persist barriers}, so that the order of
persistence is only enforced for writes from different epochs.
\citet{persist-buffering,epoch-persistency-implementation} propose hardware
implementations for persist barriers to enable epoch persistency under x86-TSO. 
While x86-TSO does not provide a persist barrier, 
flush-optimals combined with an sfence instruction
could be used to this end.

\citet{persist-ordering} conducted the
first analysis of persistency under x86.
They described the semantics induced by the use of
CLWB and sfence instructions as {\em synchronous}, reaffirming our
observation about the common understanding of persistency models.
The PTSO model~\cite{ptso}, which was published before \PTSO,
is a proposal for integrating epoch persistency with the x86-TSO semantics.
It has synchronous explicit persist instructions 
and per-location persistence buffers like our \PTSOsynnn model,
but it is more complex (its persistence buffers are queues of persistence sub-buffers,
each of which records pending writes of a given epoch),
and uses coarse-grained instructions for persisting \emph{all} pending writes, which were deprecated in x86~\cite{pcommit}.

\citet{arp} propose a declarative language-level {\em
acquire-release persistency} model offering new abstractions for 
programming for persistent memory in C/C++. 
In comparison, our work aims at providing a formal foundation for reasoning about the underlying architecture.
\citet{sfr} improved the model of \cite{arp} by proposing a
generic logging mechanism for synchronization-free regions
that aims to achieve failure atomicity for data-race-free programs.
We conjecture that our results (in particular our DRF guarantee relating \PTSOsynnn and \PSC)
can serve as a semantic foundation in
formally proving the failure-atomicity properties of their implementation.

\citet{parm-oopsla} proposed a general declarative framework for specifying
persistency semantics and formulated a persistency model for ARM in this
framework (which is less expressive than in x86). Our declarative models
follow their framework, accounting for a specific outcomes using chains of
execution graphs, but we refrain from employing an additional
``non-volatile-order'' for tracking the order in which stores are committed to
the non-volatile memory. Instead, in the spirit of a theoretical model
of~\cite{persistent-lin}, which gives a declarative semantics of epoch
persistency under release consistency (assuming both an analogue of the
synchronous sfence and also an analogue of a deprecated coarse-grained flush
instruction), we track the last persisted write for each location, and use it
to derive constraints on existing partial orders. Thus, we believe that our
declarative model is more abstract, and may provide a suitable basis for
partial order reduction verification techniques
(\eg~\cite{rcmc,abdulla2018optimal}). 
\begin{acks}                            %
We thank the POPL'21 reviewers for their helpful feedback and insights.
This research was supported by the Israel Science Foundation (grant number 5166651).
The second author was also supported by the Alon Young Faculty Fellowship.
\end{acks}

\bibliography{biblio}

\clearpage
\appendix

\section{Proofs for Section \ref{sec:operational}}
\label{app:preliminaries}

\begin{proposition}\label{prop:bigstep}
For every observable program trace $\tr$ with $\crash \notin \tr$:
\[
\tup{\progstate, \mem, \vmem}
	\bsteplab{\tr}{\cs{\prog}{\M}}
	\tup{\progstate', \mem', \vmem'}
\iff(
\progstate
	\bsteplab{\tr}{\prog}
	\progstate' \land
\tup{\mem, \vmem}
	\bsteplab{\tr}{\M}
	\tup{\mem', \vmem'}
)
\]
\end{proposition}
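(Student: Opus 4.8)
The plan is to prove the biconditional by induction on $\size{\tr}$, reducing it to a one-step statement about the three kinds of non-crash transitions of $\cs{\prog}{\M}$ (see \cref{def:concurrent_system}): a synchronized observable step of $\prog$ and $\M$, a silent step of $\prog$ alone, and a silent step of $\M$ alone. The only structural fact that matters is that these transition kinds act on disjoint components of a state $\tup{\progstate,\mem,\vmem}$: a program-silent step leaves $\tup{\mem,\vmem}$ fixed, and a memory-silent step leaves $\progstate$ fixed. Since $\crash\notin\tr$ and crash transitions are neither observable program transitions nor $\epsl$-transitions, they play no role at all here.

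First I would record an auxiliary claim for $\epsl$-runs: $\tup{\progstate,\mem,\vmem}\tcsteplab{\epsl}{\cs{\prog}{\M}}\tup{\progstate',\mem',\vmem'}$ iff $\progstate\tcsteplab{\epsl}{\prog}\progstate'$ and $\tup{\mem,\vmem}\tcsteplab{\epsl}{\M}\tup{\mem',\vmem'}$. For the forward direction, every $\epsl$-step of $\cs{\prog}{\M}$ is either program-silent or memory-silent; projecting a given $\epsl$-run onto its program-silent steps (in order) yields a run $\progstate\tcsteplab{\epsl}{\prog}\progstate'$, because $\progstate$ is untouched by the memory-silent steps, and symmetrically for $\M$. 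For the backward direction, given $\progstate\tcsteplab{\epsl}{\prog}\progstate'$ and $\tup{\mem,\vmem}\tcsteplab{\epsl}{\M}\tup{\mem',\vmem'}$, perform all the program-silent steps first (with $\tup{\mem,\vmem}$ held fixed), then all the memory-silent steps (with $\progstate'$ held fixed); each step of the resulting sequence is a legal $\epsl$-transition of $\cs{\prog}{\M}$.

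Then I would carry out the induction on $\size{\tr}$. The base case $\tr=\epsl$ is precisely the auxiliary claim. For the inductive step, let $\tidlab{\tid}{\lab}$ be the last element of $\tr$, so $\tr = \tr''\cdot\tidlab{\tid}{\lab}$, and unfold $\bsteplab{\tr}{}$ (on both sides of the equivalence) as $\bsteplab{\tr''}{}$ followed by a single $\tidlab{\tid}{\lab}$-transition surrounded by $\tcsteplab{\epsl}{}$. Since the only non-$\epsl$, non-$\crash$ rule of $\cs{\prog}{\M}$ is the synchronized one, an observable transition of $\cs{\prog}{\M}$ labeled $\tidlab{\tid}{\lab}$ splits into a $\prog$-step and an $\M$-step each labeled $\tidlab{\tid}{\lab}$; applying the induction hypothesis to the $\tr''$-part and the auxiliary claim to the surrounding $\epsl$-segments gives the left-to-right implication. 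For right-to-left, one recombines a $\prog$-step and an $\M$-step carrying the same label $\tidlab{\tid}{\lab}$ into one synchronized transition; this is legitimate because the two component runs both have observable trace $\tr$, hence are aligned at this final label, and the flanking $\epsl$-segments are reassembled via the backward part of the auxiliary claim.

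I expect no genuine obstacle; the content is bookkeeping about $\epsl$-closures and the interleaving of independent components. The one point needing a little care is the recombination in the backward direction: when the combined run is about to take the synchronized step for $\tr(i)$, the program component must be exactly in the pre-state of the program run's $i$-th observable step and the memory component in the pre-state of the memory run's $i$-th observable step. This holds because, by construction, program-silent steps never alter $\tup{\mem,\vmem}$ and memory-silent steps never alter $\progstate$, so the two component runs can be shuffled together without interference — which is precisely the disjointness observation made at the outset.
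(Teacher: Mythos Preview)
Your proposal is correct and essentially just spells out what the paper leaves implicit: the paper's proof is the single sentence ``The proposition follows immediately from \cref{def:concurrent_system}.'' Your induction and auxiliary $\epsl$-run claim are exactly the bookkeeping that makes ``immediately'' rigorous, and there is no substantive difference in approach.
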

The proposition follows immediately from \Cref{def:concurrent_system}.

\lemmamemoryrefine*
\begin{proof}%
Suppose that $\progstate \in \prog.\lQ$ is reachable under $\M_1$.
Then, by \cref{def:reachable}, 
$\tup{\progstate,{\mem,\vmem}}$ is reachable in
$\cs{\prog}{\M_1}$ for some $\tup{\mem, \vmem} \in \M_1.\lQ$.
Thus, there exist crashless observable program traces $\tr_0 \til \tr_n$,
initial program states $\progstate_0 \til \progstate_n\in \prog.\linit$,
initial non-volatile memories $\mem_1 \til \mem_n \in \Loc \to \Val$,
and initial volatile states $\vmem_0 \til \vmem_n \in \M_1.\lvinit$, such that the following hold:
\begin{itemize}
\item $\tup{\progstate_0, \mem_\Init, \vmem_0}
	\bsteplab{\tr_0}{\cs{\prog}{\M_1}} \tup{\_, \mem_1, \_}$, and
$\tup{\progstate_i, \mem_i, \vmem_i}
	\bsteplab{\tr_i}{\cs{\prog}{\M_1}} \tup{\_, \mem_{i+1}, \_}$ for every $1\leq i\leq n-1$.
\item $\tup{\progstate_n, \mem_n, \vmem_n}
	\bsteplab{\tr_n}{\cs{\prog}{\M_1}} \tup{\progstate, \_, \_}$.
\end{itemize}
By \cref{prop:bigstep}, it follows that:
\begin{itemize}
\item $\progstate_i \bsteplab{\tr_i}{\prog} \_$
for every $0\leq i\leq n-1$, and $\progstate_n \bsteplab{\tr_n}{\prog} \progstate$.
\item $\tr_0$ is an $\mem_\Init$-to-$\mem$ $\M_1$-observable-trace, and
$\tr_i$ is an $\mem_i$-to-$\mem_{i+1}$ $\M_1$-observable-trace
for every $1\leq i\leq n-1$.
\item $\tr_n$ is an $\mem_n$-initialized $\M_1$-observable-trace.
\end{itemize}
Then, assumption (ii) entails that
there exist $\tr'_0 \til \tr'_{n-1}$
such that the following hold:
\begin{itemize}
\item $\tr'_i \lesssim \tr_i$ for every $0\leq i\leq n-1$.
\item $\tr'_0$ is a $\mem_\Init$-to-$\mem_1$ $\M_2$-observable-trace,
and $\tr'_i$ is an $\mem_i$-to-$\mem_{i+1}$ $\M_2$-observable-trace
for every $1\leq i\leq n-1$.
\end{itemize}
Therefore, there exist initial volatile states $\vmem'_0 \til \vmem'_{n-1} \in \M_2.\lvinit$ such that:
\[
\tup{\mem_\Init, \vmem'_0}
	\bsteplab{\tr'_0}{\M_2} \tup{\mem_1, \_} \mbox{ and }
\tup{\mem_i, \vmem'_i}
	\bsteplab{\tr'_i}{\M_2} \tup{\mem_{i+1}, \_} \mbox{ for every } 1\leq i\leq n-1.
\]
Now, since $\progstate_i    \bsteplab{\tr_i}{\prog} \_$ and $\tr'_i \lesssim
\tr_i$ for every $0\leq i\leq n-1$, by \cref{prop:bigstep-prefix}, we have
$\progstate_i    \bsteplab{\tr'_i}{\prog} \_$ for every $0\leq i\leq n-1$. By \cref{prop:bigstep}, it follows that:
\begin{equation}\label{eq:lemref1} 
\tup{\progstate_0, \mem_\Init, \vmem'_0}
	\bsteplab{\tr'_0}{\cs{\prog}{\M_2}} \tup{\_, \mem_1, \_} \mbox{ and }
\tup{\progstate_i, \mem_i, \vmem'_i}
	\bsteplab{\tr'_i}{\cs{\prog}{\M_2}} \tup{\_, \mem_{i+1}, \_} \mbox{ for every } 1\leq i\leq n-1
\end{equation}

In addition, assumption (i) entails that
$\tr_n$ is an $\mem_n$-initialized $\M_2$-observable-trace. Therefore, there exists $\vmem'_{n} \in \M_2.\lvinit$ such that $\tup{\mem_n, \vmem'_n}
	\bsteplab{\tr_n}{\cs{\prog}{\M_2}} \tup{\_, \_}$.
Knowing that $\progstate_n \bsteplab{\tr_n}{\prog} \progstate$ holds, 
we conclude:
\begin{equation}\label{eq:lemref2} 
\tup{\progstate_n, \mem_n, \vmem'_n}
	\bsteplab{\tr_n}{\cs{\prog}{\M_2}} \tup{\progstate, \_, \_}
\end{equation}

Putting \cref{eq:lemref1} and \cref{eq:lemref2} together, we have shown that
there exist $\tr' = \tr'_0 \cdot \crash \cdot
\ldots \cdot \crash \cdot \tr'_{n-1} \cdot \crash \cdot \tr_n$ and $\vmem'_0
\til \vmem'_{n} \in \M_2.\lvinit$ such that:
\begin{multline*}
\tup{\progstate_0, \mem_\Init, \vmem'_0}
	\bsteplab{\tr'_0}{\cs{\prog}{\M_2}}
	\tup{\_, \mem_1, \_}
	\asteplab{\crash}{\cs{\prog}{\M_2}}
	\tup{\progstate_1, \mem_1, \vmem'_1}
	\bsteplab{\tr'_1}{\cs{\prog}{\M_2}}
	\ldots\\
	\ldots
	\bsteplab{\tr'_{n-1}}{\cs{\prog}{\M_2}}
	\tup{\_, \mem_n, \_}
	\asteplab{\crash}{\cs{\prog}{\M_2}}
	\tup{\progstate_n, \mem_n, \vmem'_n}
	\bsteplab{\tr_n}{\cs{\prog}{\M_2}}
	\tup{\progstate, \_, \_},
\end{multline*}
meaning that $\progstate$ is reachable for $\prog$ under the persistent memory
subsystem $\M_2$.
\end{proof} 
\section{Proofs for Section \ref{sec:ptsosynnn}}
\label{app:ptsosynnn_proofs}

To carry out our equivalence proofs we use \emph{instrumented} versions of \PTSO and \PTSOsynnn.
We also introduce two additional (instrumented) persistent memory subsystems, \PTSOsynI and \PTSOsynnI,
that serve as intermediate systems in our proof.
In \cref{sec:instrumented} we generally define {instrumented persistent memory subsystems}.
In \cref{sec:PTSOI} we present the instrumented version of \PTSO.
In \cref{sec:PTSOint} we present \PTSOsynI and \PTSOsynnI.
In \cref{sec:PTSOsynnnI} we present the instrumented version of \PTSOsynnn.
In \cref{sec:main_proof} we use these subsystems to establish the proof of \cref{thm:tsosyn}.
Finally, in \cref{sec:empty} we provide the proof of \cref{lem:empty_buff}.

\subsection{Instrumented Persistent Memory Subsystems}
\label{sec:instrumented}

Instrumented persistent memory subsystems are defined similarly to persistent memory subsystems,
except for their transition labels (the alphabet of the LTS), which carry more information.
In particular, the observable transition labels of the form $\tidlab{\tid}{\lab}$ of
persistent memory subsystems are augmented with an identifier $\id\in\N$,
which uniquely identifies the transition.
The $\epsl$-labels of silent transitions of persistent memory subsystems 
are made more informative as well.
Hence, the transition labels of an instrumented persistent memory subsystem $\makeinst{\M}$
consists of transition labels of the form $\tidlab{\tid}{\addid{\lab}{\id}}$
(where $\tid\in\Tid$ and $\lab\in\Lab$)
as well as a set denoted by $\makeinst{\M}.\makeinst{\lSigma}$ of \emph{instrumented silent transition labels},
which differs from one system to another.
We assume that, like the instrumented non-silent transition labels,
the instrumented silent transition labels also include an identifier $\id\in\N$.
We use the function $\lSN(\cdot)$ to retrieve this identifier from a given instrumented (silent or non-silent) transition label.

In the sequel, we use the same definition style and terminology that we used for persistent memory subsystems
also in the context of instrumented persistent memory subsystems
(\eg defining only the volatile component of the state).

The following \emph{erasure} function $\erase$ forgets the instrumentation
in the transition labels.

\begin{definition}
\label{def:erasure_trace}
For a transition label $\alpha$ of an instrumented persistent memory subsystem $\makeinst{\M}$,
$\erase(\alpha)$ is defined as follows:
$$\erase(\alpha) \defeq \begin{cases} 
 \tidlab{\tid}{\lab} & \alpha = \tidlab{\tid}{\addid{\lab}{\id}} \\
 \epsl & \alpha\in \makeinst{\M}.\makeinst{\lSigma}
 \end{cases}$$
The \emph{erasure} of a trace $\trI$ of an instrumented
persistent memory subsystem $\makeinst{\M}$, denoted by $\erase(\trI)$,
is the sequence obtained from $\erase(\trI(1)) \til \erase(\trI(\size{\trI}))$ by
omitting all $\epsl$ labels.
\end{definition}

As usual with instrumented operational semantics, 
it will be easy to see that the instrumentation 
does not affect the observable behaviors.
Formally, we require the existence of an \emph{erasure} (many-to-one) function
from instrumented states to non-instrumented ones that satisfies certain conditions,
as defined next.

\begin{definition}
\label{def:erasure}
Let $\M$ be a persistent memory subsystem and
$\makeinst{\M}$ be an instrumented persistent memory subsystem.
A function $\erase: \makeinst{\M}.\lvQ \to \M.\lvQ$
is an \emph{erasure function} if the following conditions hold:
\begin{itemize}
\item ${\M}.\lvinit = \set{ \erase(\vmem) \st \vmem \in \makeinst{\M}.\lvinit}$.
\item If $\tup{\mem,\makeinst{\vmem}} \asteptidlab{\tid}{\addid{\lab}{\id}}{\makeinst{\M}} \tup{\mem', \makeinst{\vmem'}}$,
then $\tup{\mem,\erase(\makeinst{\vmem})} \asteptidlab{\tid}{\lab}{\M} \tup{\mem', \erase(\makeinst{\vmem'})}$.

\item If $\tup{\mem,\makeinst{\vmem}} \asteplab{\alpha}{\makeinst{\M}} \tup{\mem', \makeinst{\vmem'}}$ for some 
$\alpha\in\makeinst{\M}.\makeinst{\lSigma}$,
then $\tup{\mem,\erase(\makeinst{\vmem})} \asteplab{\epsl}{\M} \tup{\mem', \erase(\makeinst{\vmem'})}$.

\item If $\tup{\mem,\erase(\makeinst{\vmem})} \asteptidlab{\tid}{\lab}{\M} \tup{\mem', \vmem'}$,
then $\tup{\mem,\makeinst{\vmem}} \asteptidlab{\tid}{\addid{\lab}{\id}}{\makeinst{\M}} \tup{\mem', \makeinst{\vmem'}}$
for some $\id\in\N$ and $\makeinst{\vmem'}\in \makeinst{\M}.\lvQ$ such that $\erase(\makeinst{\vmem'}) = \vmem'$.

\item If $\tup{\mem,\erase(\makeinst{\vmem})} \asteplab{\epsl}{\M} \tup{\mem', \vmem'}$,
then $\tup{\mem,\makeinst{\vmem}} \asteplab{\alpha}{\makeinst{\M}} \tup{\mem', \makeinst{\vmem'}}$
for some $\alpha\in\makeinst{\M}.\makeinst{\lSigma}$ and $\makeinst{\vmem'}\in \makeinst{\M}.\lvQ$ such that $\erase(\makeinst{\vmem'}) = \vmem'$.
\end{itemize}
Given such function $\erase$, we say that $\makeinst{\M}$ 
is a $\erase$-\emph{instrumentation} of $\M$.
Furthermore, $\makeinst{\M}$  is called 
an \emph{instrumentation} of $\M$
if it is a $\erase$-\emph{instrumentation} of $\M$
for some erasure function $\erase$.
\end{definition}

\begin{lemma}
\label{lem:M_MI}
Let $\makeinst{\M}$ be a $\erase$-instrumentation of a persistent memory subsystem $\M$.
Then, the following hold:
\begin{itemize}
\item For every $\mem_0,\mem\in\Loc \to \Val$, 
$\makeinst{\vmem_\Init}\in \makeinst{\M}.\lvinit$,
$\makeinst{\vmem} \in \makeinst{\M}.\lvQ$,
and $\trI$, if $\tup{\mem_0,\makeinst{\vmem_\Init}} \asteplab{\trI}{\makeinst{\M}} \tup{\mem, \makeinst{\vmem}}$,
then $\tup{\mem_0, \erase(\vmem_\Init)} \bsteplab{\erase(\trI)}{\M} \tup{\mem, \erase(\makeinst{\vmem})}$.
\item For every $\mem_0,\mem\in\Loc \to \Val$, 
$\vmem_\Init\in \M.\lvinit$,
$\vmem\in \M.\lvQ$,
and $\tr$, 
if $\tup{\mem_0,\vmem_\Init} \bsteplab{\tr}{\M} \tup{\mem, \vmem}$,
then $\tup{\mem_0,\makeinst{\vmem_\Init}} \asteplab{\trI}{\makeinst{\M}} \tup{\mem, \makeinst{\vmem}}$
for some $\trI$, $\makeinst{\vmem_\Init}\in\makeinst{\M}.\lvinit$, and $\makeinst{\vmem} \in \makeinst{\M}.\lvQ$
such that $\erase(\trI)=\tr$ and $\erase(\makeinst{\vmem})=\vmem$.
\end{itemize}
\end{lemma}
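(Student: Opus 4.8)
The plan is to prove both items by an induction along the given transition sequence, using the five clauses of \cref{def:erasure} as the single-step building blocks, and maintaining throughout the invariant that the volatile state reached on one side is always the $\erase$-image of the volatile state reached on the other side. The non-volatile component $\mem$ is carried along unchanged by $\erase$ in every clause, so it never needs attention beyond being passed through.

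For the first item, I would unfold $\tup{\mem_0,\makeinst{\vmem_\Init}} \asteplab{\trI}{\makeinst{\M}} \tup{\mem, \makeinst{\vmem}}$ into its $\size{\trI}$ constituent single steps. Each such step carries either a label of the form $\tidlab{\tid}{\addid{\lab}{\id}}$ or an instrumented silent label $\alpha \in \makeinst{\M}.\makeinst{\lSigma}$. Applying the second clause of \cref{def:erasure} in the former case and the third clause in the latter, I obtain, step by step, a matching $\M$-transition whose source and target states are the $\erase$-images of those of the $\makeinst{\M}$-transition. Concatenating these transitions yields a small-step $\M$-run from $\tup{\mem_0, \erase(\makeinst{\vmem_\Init})}$ to $\tup{\mem, \erase(\makeinst{\vmem})}$ whose label sequence is precisely $\erase(\trI(1)) \til \erase(\trI(\size{\trI}))$, i.e. a sequence of $\M$-labels in which each entry is either some $\tidlab{\tid}{\lab}$ or $\epsl$. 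By \cref{def:erasure_trace}, deleting the $\epsl$'s from this sequence is exactly $\erase(\trI) \in (\Tid \times \Lab)^*$; and a run of this shape, grouping consecutive $\epsl$-steps into the $\tcsteplab{\epsl}{\M}$-segments, is by definition a witness for $\tup{\mem_0, \erase(\makeinst{\vmem_\Init})} \bsteplab{\erase(\trI)}{\M} \tup{\mem, \erase(\makeinst{\vmem})}$. (Since $\makeinst{\vmem_\Init} \in \makeinst{\M}.\lvinit$, the first clause of \cref{def:erasure} also gives $\erase(\makeinst{\vmem_\Init}) \in \M.\lvinit$, although this is not strictly needed for the stated conclusion.)

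For the second item, I would first unfold $\tup{\mem_0,\vmem_\Init} \bsteplab{\tr}{\M} \tup{\mem, \vmem}$ into a concrete small-step $\M$-run in which the observable steps spell out $\tr$ in order and all remaining steps are $\epsl$-steps. Using the first clause of \cref{def:erasure}, pick $\makeinst{\vmem_\Init} \in \makeinst{\M}.\lvinit$ with $\erase(\makeinst{\vmem_\Init}) = \vmem_\Init$. Then I proceed by induction along this $\M$-run, keeping as invariant that the current $\makeinst{\M}$-volatile state erases to the current $\M$-volatile state: an observable $\M$-step $\tup{\mem',\erase(\makeinst{\vmem'})} \asteptidlab{\tid}{\lab}{\M} \tup{\mem'',\vmem''}$ is lifted by the fourth clause of \cref{def:erasure} (choosing some identifier $\id$ and some instrumented target with $\erase$-image $\vmem''$), and an $\epsl$-step is lifted by the fifth clause (choosing some $\alpha \in \makeinst{\M}.\makeinst{\lSigma}$ and some instrumented target with the right $\erase$-image). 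Collecting the labels used along the lifted run into a sequence $\trI$ gives a $\makeinst{\M}$-run $\tup{\mem_0, \makeinst{\vmem_\Init}} \asteplab{\trI}{\makeinst{\M}} \tup{\mem, \makeinst{\vmem}}$ with $\erase(\makeinst{\vmem}) = \vmem$, and by \cref{def:erasure_trace} $\erase(\trI) = \tr$: the lifted instrumented silent labels erase to $\epsl$ and are dropped, while the lifted non-silent labels erase back to the labels of $\tr$ in the original order.

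I do not expect a genuine obstacle here; the argument is a routine forward-simulation, and \cref{def:erasure,def:erasure_trace} are designed precisely to make it go through. The only points that require a little care are (a) correctly unfolding the big-step relation $\bsteplab{}{}$ into a concrete small-step run (and re-folding it at the end), so that the $\epsl$-segments and the instrumented silent steps are matched uniformly, and (b) making the ``$\erase$-image'' invariant on volatile states explicit at each induction step, since the clauses of \cref{def:erasure} speak only about single transitions and say nothing about composites.
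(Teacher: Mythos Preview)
Your proposal is correct and is precisely the standard forward/backward simulation argument one would expect here; the paper in fact states \cref{lem:M_MI} without proof, treating it as routine. Your induction on the length of the run, using the per-step clauses of \cref{def:erasure} together with the definition of $\erase(\trI)$ in \cref{def:erasure_trace}, is exactly the intended argument.
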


\subsection{\PTSOI: Instrumented \PTSO}
\label{sec:PTSOI}

The instrumented versions of our TSO-based persistent memory subsystems
augment the entries of the persistent and store buffers with the identifier $\id\in\N$
that was used in the label of the issuing step that added the entry to the buffer. For instance,
we have entries of the form 
$\iiwlab{\loc}{\val}{\id}$ in the persistence buffer instead of $\wlab{\loc}{\val}$;
and $\ifllab{\loc}{\id}$ in the store buffer instead of $\fllab{\loc}$.
Then, when propagating an entry with identifier $\id$, we include $\id$ in the
instrumented silent transition label.
This allows us to easily relate the transitions in which events
are issued, propagated from store buffer, and persist.
For instance, a write step generates a fresh identifier $\id$ (included both in the transition label 
and in the new store buffer entry), that is (possibly) reused in a (exactly one) later \rulename{prop-w} step,
and further (possibly) reused in (exactly one) later \rulename{persist-w} step.

\begin{definition}
\label{def:instrumented_buffers_PTSO}
An \emph{instrumented persistence buffer} is a finite sequence $\pbuffI$ 
of elements of the form $\addid{\alpha}{\id}$ where $\alpha$ is a 
persistence-buffer entry (of the form $\wlab{\loc}{\val}$ or $\perlab{\loc}$) 
and $\id\in\N$.
An \emph{instrumented store buffer} is a finite sequence $\buffI$
of elements of the form $\addid{\alpha}{\id}$ where $\alpha$ is a 
store-buffer entry (of the form $\wlab{\loc}{\val}$, $\fllab{\loc}$, $\folab{\loc}$, or $\sflab$)
and $\id\in\N$.
An \emph{instrumented store-buffer mapping} is a function $\BuffI$
assigning an instrumented store buffer to every $\tid\in\Tid$.
\end{definition}

\begin{definition}
\label{def:erasure_buffers}
The \emph{erasure of an instrumented persistence buffer} $\pbuffI$, denoted by $\erase(\pbuffI)$,
is the persistence buffer obtained from $\pbuffI$ by omitting the identifier $\id$ from all symbols.
Similarly, the \emph{erasure of an instrumented store buffer} $\buffI$, denoted by $\erase(\buffI)$,
is the store buffer obtained from $\buffI$ by omitting the identifier $\id$ from all symbols,
and it is lifted to instrumented store-buffer mappings in the obvious way.
\end{definition}

Using these definitions, \PTSOI (\emph{instrumented \PTSO}) is presented in  \cref{fig:PTSOI}.
The functions $\lTID$, $\lTYP$, $\lLOC$ are extended to
$\PTSOI.\makeinst{\lSigma}$ in the obvious way
(in particular, for $\alpha\in \PTSOI.\makeinst{\lSigma}$, we have
$\lTYP(\alpha)\in \set{\lTP{\lW}/\lTP{\lFL}/\lTP{\lFO}/\lTP{\lSF}/\lP{\lW}/\lP{\lPER}}$).

\begin{figure*}[t]
\centering
\smaller
\smaller
\myhrule
\begin{align*}
\PTSOI.\makeinst{\lSigma} \defeq &
\set{\itidlab{\tid}{\itpwlab{\loc}{\val}{\id}} \st\tid\in\Tid,\loc\in\Loc, \id \in \N}
\cup \set{\itidlab{\tid}{\lTP{\ifllab{\loc}{\id}}} \st \tid\in\Tid, \loc\in\Loc, \id \in \N}
\\
& \cup \set{\itidlab{\tid}{\lTP{\ifolab{\loc}{\id}}} \st \tid\in\Tid, \loc\in\Loc, \id \in \N}
\cup \set{\itidlab{\tid}{\lTP{\isflab{\id}}} \st \tid\in\Tid, \id \in \N}
\\ 
&\cup \set{\ipwlab{\loc}{\val}{\id} \st \loc\in\Loc, \id \in \N}
\cup \set{\lP{\iperlab{\loc}{\id}} \st \loc\in\Loc,\id \in \N}
\end{align*}
\myhrule
$$\inarrC{
\mem  \in \Loc \to \Val \qquad\qquad
\pbuffI \in (\set{\iiwlab{\loc}{\val}{\id} \st \loc\in\Loc, \val\in\Val, \id \in \N} \cup \set{\iperlab{\loc}{\id} \st \loc\in\Loc, \id\in\N})^* \\
}$$
\begin{align*}
\BuffI  \in \Tid \to (&\set{\iiwlab{\loc}{\val}{\id} \st \loc\in\Loc, \val\in\Val , \id \in \N} \cup \set{\ifllab{\loc}{\id} \st \loc\in\Loc, \N \in \N} \\ &
\cup \set{\ifolab{\loc}{\id} \st \loc\in\Loc, \id \in \N} \cup \set{\isflab{\id} \st \id \in \N})^*\qquad\qquad
\ID \suq \N
\end{align*}
$$\inarrC{
\pbuffI_\Init  \defeq \epsl \qquad\qquad\qquad
\BuffI_\Init  \defeq \lambda \tid.\; \epsl
\qquad\qquad\qquad
\ID_\Init = \emptyset
}$$
\myhrule
\begin{mathpar}
\inferrule[write/flush/flush-opt/sfence]{
\inst{\ID' = \ID \uplus \set{\id}}
\\\\ \lTYP(\lab) \in \set{\lW,\lFL,\lFO,\lSF}
\\\\ \BuffI'=\BuffI[\tid \mapsto \BuffI(\tid) \cdot \addid{\lab}{\id}]
}{\tup{\mem, \pbuffI,\BuffI , \inst{\ID}} \asteptidlab{\tid}{\addid{\lab}{\id}}{\PTSOI} \tup{\mem, \pbuffI, \BuffI', \inst{\ID'}}
} \and
\inferrule[read]{
\inst{\ID' = \ID \uplus \set{\id}}
\\\\ \lab = \rlab{\loc}{\val}
\\\\ \rdWtso{\mem}{\inst{\erase(}\pbuffI\inst{)}}{\inst{\erase(}\BuffI(\tid)\inst{)}}(\loc) = \val
}{\tup{\mem, \pbuffI, \BuffI, \inst{\ID}} \asteptidlab{\tid}{\addid{\lab}{\id}}{\PTSOI} \tup{\mem, \pbuffI, \BuffI,\inst{\ID'}}
} \\
\hspace*{-10pt}
\inferrule[rmw]{
\inst{\ID' = \ID \uplus \set{\id}}
\\\\ \lab = \ulab{\loc}{\val_\lR}{\val_\lW}
\\\\ \rdWtso{\mem}{\inst{\erase(}\pbuffI\inst{)}}{\epsilon}(\loc) = \val_\lR
\\\\ \BuffI(\tid)=\epsilon 
\\\\ \pbuffI' = \pbuffI \cdot \iiwlab{\loc}{\val_\lW}{\id}
}{\tup{\mem, \pbuffI, \BuffI, \inst{\ID}} \asteptidlab{\tid}{\addid{\lab}{\id}}{\PTSOI} 
\tup{\mem, \pbuffI', \BuffI,\inst{\ID'}}
} \hfill
\inferrule[rmw-fail]{
\inst{\ID' = \ID \uplus \set{\id}}
\\\\ \lab = \rexlab{\loc}{\val}
\\\\ \rdWtso{\mem}{\inst{\erase(}\pbuffI\inst{)}}{\epsilon}(\loc) = \val
\\\\ \BuffI(\tid)=\epsilon 
\\\\
}{\tup{\mem, \pbuffI, \BuffI, \inst{\ID}} \asteptidlab{\tid}{\addid{\lab}{\id}}{\PTSOI} \tup{\mem, \pbuffI, \BuffI,\inst{\ID'}}
} \hfill
\inferrule[mfence]{
\inst{\ID' = \ID \uplus \set{\id}}
\\\\ \lab = \mflab
\\\\
\\\\ \BuffI(\tid)=\epsilon 
\\\\
}{\tup{\mem, \pbuffI, \BuffI, \inst{\ID}} \asteptidlab{\tid}{\addid{\lab}{\id}}{\PTSOI} \tup{\mem, \pbuffI, \BuffI,\inst{\ID'}}
} \end{mathpar}
\myhrule
\begin{mathpar}
\inferrule[prop-w]{
\inst{\ilab = \itpwlab{\loc}{\val}{\id}}
\\\\ \BuffI(\tid) = \buffI_1 \cdot \iiwlab{\loc}{\val}{\id} \cdot \buffI_2 
\\\\ \iiwlab{\_}{\_}{\_}, \ifllab{\_}{\_}, \isflab{\_} \nin \buffI_1
\\\\ \BuffI' = \BuffI[\tid \mapsto \buffI_1 \cdot \buffI_2]
\\ \pbuffI' = \pbuffI \cdot \iiwlab{\loc}{\val}{\id}
}{\tup{\mem, \pbuffI, \BuffI, \inst{\ID}} \iasteptidlab{\tid}{\ilab}{\PTSOI} 
\tup{\mem, \pbuffI', \BuffI', \inst{\ID}}
} \and
\inferrule[prop-fl]{
\inst{\ilab = \lTP{\ifllab{\loc}{\id}}}
\\\\ \BuffI(\tid) = \buffI_1 \cdot \ifllab{\loc}{\id} \cdot \buffI_2 
\\\\ \iiwlab{\_}{\_}{\_}, \ifllab{\_}{\_}, \ifolab{\loc}{\_}, \isflab{\_} \nin \buffI_1
\\\\ \BuffI' = \BuffI[\tid \mapsto \buffI_1 \cdot \buffI_2]
\\ \pbuffI' = \pbuffI \cdot \iperlab{\loc}{\id}
}{\tup{\mem, \pbuffI, \BuffI, \inst{\ID}} \iasteptidlab{\tid}{\ilab}{\PTSOI} 
\tup{\mem, \pbuffI', \BuffI', \inst{\ID}}
} \and
\inferrule[prop-fo]{
\inst{\ilab = \lTP{\ifolab{\loc}{\id}}}
\\\\ \BuffI(\tid) = \buffI_1 \cdot \ifolab{\loc}{\id} \cdot \buffI_2 
\\\\ \iiwlab{\loc}{\_}{\_},\ifllab{\loc}{\_},\isflab{\_}  \nin \buffI_1
\\\\ \BuffI' = \BuffI[\tid \mapsto \buffI_1 \cdot \buffI_2]
\\ \pbuffI' = \pbuffI \cdot \iperlab{\loc}{\id}
}{\tup{\mem, \pbuffI, \BuffI, \inst{\ID}} \iasteptidlab{\tid}{\ilab}{\PTSOI} 
\tup{\mem, \pbuffI', \BuffI', \inst{\ID}}
} \and
\inferrule[prop-sf]{
\inst{\ilab = \lTP{\isflab{\id}}}
\\\\ \BuffI(\tid) = \isflab{\id} \cdot \buffI
\\\\
\\\\ \BuffI' = \BuffI[\tid \mapsto \buffI]
}{\tup{\mem, \pbuffI, \BuffI, \inst{\ID}} \iasteptidlab{\tid}{\ilab}{\PTSOI} 
\tup{\mem, \pbuffI, \BuffI', \inst{\ID}}
} \end{mathpar}
\myhrule
\begin{mathpar}
\inferrule[persist-w]{
\inst{\ilab = \ipwlab{\loc}{\val}{\id}}
\\\\ \pbuffI = \pbuffI_1 \cdot \iiwlab{\loc}{\val}{\id} \cdot \pbuffI_2
\\\\ \iiwlab{\loc}{\_}{\_}, \iperlab{\_}{\_} \nin \pbuffI_1
\\\\ \pbuffI' = \pbuffI_1 \cdot \pbuffI_2
\\ \mem' = \mem[\loc \mapsto \val]
}{\tup{\mem, \pbuffI ,\BuffI, \inst{\ID}} \iasteplab{\ilab}{\PTSOI} 
\tup{\mem', \pbuffI',\BuffI, \inst{\ID}}
} \and
\inferrule[persist-per]{
\inst{\ilab = \lP{\iperlab{\loc}{\id}}}
\\\\ \pbuffI =\pbuffI_1 \cdot \iperlab{\loc}{\id} \cdot \pbuffI_2
\\\\ \iiwlab{\loc}{\_}{\_}, \iperlab{\_}{\_} \nin \pbuffI_1
\\\\ \pbuffI' = \pbuffI_1 \cdot \pbuffI_2
}{\tup{\mem, \pbuffI, \BuffI, \inst{\ID}} \iasteplab{\ilab}{\PTSOI} 
\tup{\mem, \pbuffI', \BuffI, \inst{\ID}}
}\end{mathpar}
\myhrule
\caption{The \PTSOI Instrumented Persistent Memory Subsystem (the instrumentation is \inst{colored}).}
\label{fig:PTSOI}
\end{figure*}

\smallskip
It is easy to see that \PTSOI is an instrumentation of \PTSO.
\begin{lemma}
\label{lem:PTSO_PTSOI}
\PTSOI is a $\erase$-instrumentation of \PTSO
for $\erase \defeq \lambda \tup{\pbuffI,\BuffI,\inst{\ID}} .\; \tup{\erase(\pbuffI),\erase(\BuffI)}$.
\end{lemma}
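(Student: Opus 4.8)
The plan is to verify directly the five conditions of \cref{def:erasure} for $\M = \PTSO$, $\makeinst{\M}=\PTSOI$, and the stated $\erase$. The first condition is immediate: $\PTSOI.\lvinit$ is the singleton $\set{\tup{\epsl,\BuffI_\Init,\emptyset}}$ with all buffers empty and $\ID=\emptyset$, and $\erase$ sends it to $\tup{\epsl,\Buff_\Init}$, the unique element of $\PTSO.\lvinit$. The remaining four conditions I would establish by a rule-by-rule case analysis, pairing each $\PTSOI$-rule with the $\PTSO$-rule of the same name: \rulename{write/flush/flush-opt/sfence}, \rulename{read}, \rulename{rmw}, \rulename{rmw-fail} and \rulename{mfence} handle the two observable-transition conditions, and \rulename{prop-w}, \rulename{prop-fl}, \rulename{prop-fo}, \rulename{prop-sf}, \rulename{persist-w} and \rulename{persist-per} handle the two silent-transition conditions.

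For the two forward directions (given a $\PTSOI$-step, produce the corresponding $\PTSO$-step on the erased states), in each rule I would simply apply $\erase$ to all premises and to the pre- and post-states. The key observations are: (i) $\erase$ acts entrywise on persistence and store buffers and only deletes the attached identifier, so it preserves the type, location and value of every entry; hence side conditions on sub-buffers such as $\iiwlab{\_}{\_}{\_},\ifllab{\_}{\_},\isflab{\_}\nin\buffI_1$ map exactly onto the $\PTSO$ side conditions $\wlab{\_}{\_},\fllab{\_},\sflab\nin\buff_1$, and likewise concatenations $\buffI_1\cdot\buffI_2$ erase to the corresponding $\PTSO$ concatenations; (ii) the $\rdWn$-lookups in the \rulename{read}, \rulename{rmw} and \rulename{rmw-fail} rules of $\PTSOI$ are already written as $\rdWn$ applied to $\erase(\pbuffI)$ and $\erase(\BuffI(\tid))$, so they coincide literally with $\PTSO$'s lookups; and (iii) the $\ID$-component plays no role, since $\erase$ discards it. Thus the erased transition is exactly an application of the $\PTSO$-rule of the same name.

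For the two backward directions (given a $\PTSO$-step out of $\erase(\makeinst{\vmem})$, reconstruct a matching $\PTSOI$-step out of $\makeinst{\vmem}$), I would again proceed rule by rule. For the issuing rules we need an identifier $\id$ with $\id\notin\ID$; such an $\id$ always exists because $\ID$ is finite --- it starts empty and each issuing step adds exactly one element --- and tagging the issued label and the new buffer entry with this $\id$ yields an instrumented step whose erasure is the given $\PTSO$-step. For the propagation and persistence rules there is no choice to make: the entry named in the $\PTSO$-rule, at the position singled out by the relevant $\PTSO$ side conditions, has inside $\makeinst{\vmem}$ a unique instrumented preimage under $\erase$ at the same position; we copy that entry's identifier into the instrumented silent transition label $\alpha\in\PTSOI.\makeinst{\lSigma}$ and into the updated buffers. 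In every case $\erase(\makeinst{\vmem'})=\vmem'$ holds by construction of the reconstructed post-state.

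I do not expect any genuine obstacle here; the proof is routine bookkeeping. The only point deserving a careful formulation is, in the backward directions for the propagation and persistence rules, the claim that the entry to be moved has a unique instrumented preimage inside $\makeinst{\vmem}$ at the prescribed position; this is exactly where one uses that $\erase$ acts independently and position-preservingly on each buffer slot, and it should be stated explicitly so that the reconstructed $\PTSOI$-transition is manifestly well defined.
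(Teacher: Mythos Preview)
Your proposal is correct and matches the paper's approach: the paper simply states the lemma after remarking ``It is easy to see that \PTSOI is an instrumentation of \PTSO'', leaving the rule-by-rule verification of \cref{def:erasure} as routine, which is exactly what you carry out.
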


\clearpage
\subsection{Intermediate Systems \PTSOsynI and \PTSOsynnI}
\label{app:PTSOoneI}
\label{app:PTSOtwoI}
\label{sec:PTSOint}

For the proof of equivalence of \PTSO and \PTSOsynnn, we use two intermediate
instrumented persistent memory subsystems: \PTSOsynI and \PTSOsynnI.
Next, we present these systems.

\begin{definition}
\label{def:instrumented_Pbuff}
An \emph{instrumented per-location persistence buffer} is a finite sequence $\pbuffI$ 
of elements of the form $\addid{\alpha}{\id}$ where $\alpha$ is a 
per-location persistence buffer entry (of the form $\vale$ or $\fotlabp{\tid}$) 
and $\id\in\N$.
An \emph{instrumented per-location-persistence-buffer mapping} is a function $\PbuffI$
assigning an instrumented per-location persistence buffer to every $\loc\in\Loc$.
\end{definition}

\begin{definition}
\label{def:erasure_Pbuff}
The \emph{erasure of an instrumented per-location persistence buffer} $\pbuffI$, denoted by $\erase(\pbuffI)$,
is the per-location persistence buffer obtained from $\pbuffI$ by omitting the identifier $\id$ from all symbols.
It is lifted to instrumented per-location-persistence-buffer mappings in the obvious way.
\end{definition}

\PTSOsynI is presented in \cref{fig:PTSOsynI}.
Note that the per-location-persistence-buffers of \PTSOsynI do not include
$\fotlabp{\tid}$-entries (these are used in the other systems below). The
rules \rulename{write/flush/flush-opt/sfence}, \rulename{mfence} and
\rulename{prop-sf} are identical to the rules of \PTSOI. The rules
\rulename{read}, \rulename{rmw}, \rulename{rmw-fail} and
\rulename{prop-w} are analogous to those of \PTSOI (they are trivially
adjusted to operate with per-location persistence buffers).

The main feature of \PTSOsynI is that it makes all
\textflush and \textflushopt instructions blocking. To this end, propagation
of $\folab{\loc}$ and $\fllab{\loc}$ is predicated upon $\PbuffI(\loc)$ being
empty, and persistence steps for writes persist writes from the heads of the buffers.

\PTSOsynnI is presented in \cref{fig:PTSOsynnI}.
This instrumented persistent memory subsystem is similar to (the instrumented version of) \PTSOsynnn with the exception that its
store buffers do not have the "almost" FIFO behavior of \PTSOsynnn and
propagate entries out-of-order. We further highlight the differences \wrt
\PTSOsynI. Like \PTSOsynI, \PTSOtwo also has synchronous \textflush
instructions, however, \textflushopt instructions are asynchronous. The
\rulename{prop-fo} transition is analogous to \PTSOI (adjusted to the type of persistence buffers). \PTSOtwo makes \textsfence
instructions synchronous, as well as other serializing instructions, which
results in
\rulename{rmw}, \rulename{rmw-fail}, \rulename{mfence} and \rulename{prop-sf}
enforcing persistence of all \textflushopt instructions preceding the given
one in program order as required by the constraint $(\forall \loca \ldotp
\ifotlabp{\tid}{\_} \nin \PbuffI(\loca))$. Finally, \rulename{persist-fo}
simply ensures that writes to a given location persist before the subsequent
\textflushopt instruction.

\begin{figure*}[p]
\smaller
\smaller
\myhrule
\begin{align*}
\PTSOsynI.\makeinst{\lSigma} \defeq &
\set{\itidlab{\tid}{\itpwlab{\loc}{\val}{\id}} \st\tid\in\Tid,\loc\in\Loc, \id \in \N}
\cup \set{\itidlab{\tid}{\lTP{\ifllab{\loc}{\id}}} \st \tid\in\Tid, \loc\in\Loc, \id \in \N}
\\
& \cup \set{\itidlab{\tid}{\lTP{\ifolab{\loc}{\id}}} \st \tid\in\Tid, \loc\in\Loc, \id \in \N}
\cup \set{\itidlab{\tid}{\lTP{\isflab{\id}}} \st \tid\in\Tid, \id \in \N}
\\ 
&\cup \set{\ipwlab{\loc}{\val}{\id} \st \loc\in\Loc, \id \in \N}
\end{align*}
\myhrule
$$\inarrC{
\mem  \in \Loc \to \Val \qquad\qquad
\diffemph{\PbuffI} \in \diffemph{\Loc \to \set{\iiwlab{\loc}{\val}{\id} \st \loc\in\Loc, \val\in\Val, \id \in \N}^*}
}$$
\begin{align*}
\BuffI  \in \Tid \to (&\set{\iiwlab{\loc}{\val}{\id} \st \loc\in\Loc, \val\in\Val , \id \in \N} \cup \set{\ifllab{\loc}{\id} \st \loc\in\Loc, \id \in \N} \\ &
\cup \set{\ifolab{\loc}{\id} \st \loc\in\Loc, \id \in \N} \cup \set{\isflab{\id} \st \id \in \N})^*\qquad\qquad
\ID \suq \N
\end{align*}
$$\inarrC{
\diffemph{\PbuffI_\Init}  \defeq \diffemph{\lambda \loc.\; \epsl} \qquad\qquad\qquad
\BuffI_\Init  \defeq \lambda \tid.\; \epsl \qquad\qquad\qquad
\ID_\Init = \emptyset
}$$
\myhrule
\begin{mathpar}
\inferrule[write/flush/flush-opt/sfence]{
\inst{\ID' = \ID \uplus \set{\id}}
\\\\ \lTYP(\lab) \in \set{\lW,\lFL,\lFO,\lSF}
\\\\ \BuffI'=\BuffI[\tid \mapsto \BuffI(\tid) \cdot \addid{\lab}{\id}]
}{\tup{\mem, \diffemph{\PbuffI},\BuffI , \inst{\ID}} \asteptidlab{\tid}{\addid{\lab}{\id}}{\PTSOsynI} \tup{\mem, \diffemph{\PbuffI}, \BuffI', \inst{\ID'}}
} \and
\inferrule[read]{
\inst{\ID' = \ID \uplus \set{\id}}
\\\\ \lab = \rlab{\loc}{\val}
\\\\ \rdWtsosyn{\mem}{\inst{\erase(}\diffemph{\PbuffI(\loc)}\inst{)}}{\inst{\erase(}\BuffI(\tid)\inst{)}}(\loc) = \val
}{\tup{\mem, \diffemph{\PbuffI}, \BuffI, \inst{\ID}} \asteptidlab{\tid}{\addid{\lab}{\id}}{\PTSOsynI} \tup{\mem, \diffemph{\PbuffI}, \BuffI,\inst{\ID'}}
} \\
\hspace*{-30pt}
\inferrule[rmw]{
\inst{\ID' = \ID \uplus \set{\id}}
\\\\ \lab = \ulab{\loc}{\val_\lR}{\val_\lW}
\\\\ \rdWtsosyn{\mem}{\inst{\erase(}\diffemph{\PbuffI(\loc)}\inst{)}}{\epsilon}(\loc) = \val_\lR
\\\\ \BuffI(\tid)=\epsilon 
\\\\ \diffemph{\PbuffI' = \PbuffI[\loc \mapsto \PbuffI(\loc) \cdot \ivale[\val_\lW]{\id}]}
}{\tup{\mem, \diffemph{\PbuffI}, \BuffI, \inst{\ID}} \asteptidlab{\tid}{\addid{\lab}{\id}}{\PTSOsynI} 
\tup{\mem, \diffemph{\PbuffI'}, \BuffI,\inst{\ID'}}
} \hfill
\inferrule[rmw-fail]{
\inst{\ID' = \ID \uplus \set{\id}}
\\\\ \lab = \rexlab{\loc}{\val}
\\\\ \rdWtsosyn{\mem}{\inst{\erase(}\diffemph{\PbuffI(\loc)}\inst{)}}{\epsilon}(\loc) = \val
\\\\ \BuffI(\tid)=\epsilon 
\\\\
}{\tup{\mem, \diffemph{\PbuffI}, \BuffI, \inst{\ID}} \asteptidlab{\tid}{\addid{\lab}{\id}}{\PTSOsynI} \tup{\mem, \diffemph{\PbuffI}, \BuffI,\inst{\ID'}}
} \hfill
\inferrule[mfence]{
\inst{\ID' = \ID \uplus \set{\id}}
\\\\ \lab = \mflab
\\\\
\\\\ \BuffI(\tid)=\epsilon 
\\\\
}{\tup{\mem, \diffemph{\PbuffI}, \BuffI, \inst{\ID}} \asteptidlab{\tid}{\addid{\lab}{\id}}{\PTSOsynI} \tup{\mem, \diffemph{\PbuffI}, \BuffI,\inst{\ID'}}
} \end{mathpar}
\myhrule
\begin{mathpar}
\inferrule[prop-w]{
\inst{\ilab = \itpwlab{\loc}{\val}{\id}}
\\\\ \BuffI(\tid) = \buffI_1 \cdot \iiwlab{\loc}{\val}{\id} \cdot \buffI_2 
\\\\ \iiwlab{\_}{\_}{\_}, \ifllab{\_}{\_}, \isflab{\_} \nin \buffI_1
\\\\ \BuffI' = \BuffI[\tid \mapsto \buffI_1 \cdot \buffI_2]
\\ \diffemph{\PbuffI' = \PbuffI[\loc \mapsto \PbuffI(\loc) \cdot \ivale{\id}]}
}{\tup{\mem, \diffemph{\PbuffI}, \BuffI, \inst{\ID}} \iasteptidlab{\tid}{\ilab}{\PTSOsynI} 
\tup{\mem, \diffemph{\PbuffI'}, \BuffI', \inst{\ID}}
} \and
\inferrule[prop-fl]{
\inst{\ilab = \lTP{\ifllab{\loc}{\id}}}
\\\\ \BuffI(\tid) = \buffI_1 \cdot \ifllab{\loc}{\id} \cdot \buffI_2 
\\\\ \iiwlab{\_}{\_}{\_}, \ifllab{\_}{\_}, \ifolab{\loc}{\_}, \isflab{\_} \nin \buffI_1
\\\\ \diffemph{\PbuffI(\loc)=\epsilon}
\\\\ \BuffI' = \BuffI[\tid \mapsto \buffI_1 \cdot \buffI_2]
}{\tup{\mem, \diffemph{\PbuffI}, \BuffI, \inst{\ID}} \iasteptidlab{\tid}{\ilab}{\PTSOsynI} 
\tup{\mem, \diffemph{\PbuffI}, \BuffI', \inst{\ID}}
} \and
\inferrule[prop-fo]{
\inst{\ilab = \lTP{\ifolab{\loc}{\id}}}
\\\\ \BuffI(\tid) = \buffI_1 \cdot \ifolab{\loc}{\id} \cdot \buffI_2 
\\\\ \iiwlab{\loc}{\_}{\_},\ifllab{\loc}{\_},\isflab{\_}  \nin \buffI_1
\\\\ \diffemph{\PbuffI(\loc)=\epsilon}
\\\\ \BuffI' = \BuffI[\tid \mapsto \buffI_1 \cdot \buffI_2]
}{\tup{\mem, \diffemph{\PbuffI}, \BuffI, \inst{\ID}} \iasteptidlab{\tid}{\ilab}{\PTSOsynI} 
\tup{\mem, \diffemph{\PbuffI}, \BuffI', \inst{\ID}}
} \and
\inferrule[prop-sf]{
\inst{\ilab = \lTP{\isflab{\id}}}
\\\\ \BuffI(\tid) = \isflab{\id} \cdot \buffI
\\\\
\\\\
\\\\ \BuffI' = \BuffI[\tid \mapsto \buffI]
}{\tup{\mem, \diffemph{\PbuffI}, \BuffI, \inst{\ID}} \iasteptidlab{\tid}{\ilab}{\PTSOsynI} 
\tup{\mem, \diffemph{\PbuffI}, \BuffI', \inst{\ID}}
} \end{mathpar}
\myhrule
\begin{mathpar}
\inferrule[persist-w]{
\inst{\ilab = \ipwlab{\loc}{\val}{\id}}
\\\\ \diffemph{\PbuffI(\loc) = \ivale{\id} \cdot \pbuffI}
\\\\ \diffemph{\PbuffI' = \PbuffI[\loc \mapsto \pbuffI]}
\\ \mem' = \mem[\loc \mapsto \val]
}{\tup{\mem, \diffemph{\PbuffI} ,\BuffI, \inst{\ID}} \iasteplab{\ilab}{\PTSOsynI} 
\tup{\mem', \diffemph{\PbuffI'},\BuffI, \inst{\ID}}
}\end{mathpar}
\myhrule
\caption{The \PTSOsynI Instrumented Persistent Memory Subsystem (differences \wrt \PTSOI are \diffemph{\text{highlighted}})}
\label{fig:PTSOsynI}
\end{figure*}

\begin{figure*}[p]
\smaller
\smaller
\myhrule
\begin{align*}
\PTSOsynnI.\makeinst{\lSigma} \defeq &
\set{\itidlab{\tid}{\itpwlab{\loc}{\val}{\id}} \st\tid\in\Tid,\loc\in\Loc, \id \in \N}
\cup \set{\itidlab{\tid}{\lTP{\ifllab{\loc}{\id}}} \st \tid\in\Tid, \loc\in\Loc, \id \in \N}
\\
& \cup \set{\itidlab{\tid}{\lTP{\ifolab{\loc}{\id}}} \st \tid\in\Tid, \loc\in\Loc, \id \in \N}
\cup \set{\itidlab{\tid}{\lTP{\isflab{\id}}} \st \tid\in\Tid, \id \in \N}
\\ 
&\cup \set{\ipwlab{\loc}{\val}{\id} \st \loc\in\Loc, \id \in \N}
\diffemph{~\cup \set{\ipfotlab{\tid}{\loc}{\id} \st \loc\in\Loc,\id \in \N}}
\end{align*}
\myhrule
$$\inarrC{
\mem  \in \Loc \to \Val \qquad\qquad
\PbuffI \in \Loc \to (\set{\iiwlab{\loc}{\val}{\id} \st \loc\in\Loc, \val\in\Val, \id \in \N}
 \diffemph{~\cup\set{\ifotlabp{\tid}{\id} \st \tid\in\Tid, \id \in \N} } )^*
}$$
\begin{align*}
\BuffI  \in \Tid \to (&\set{\iiwlab{\loc}{\val}{\id} \st \loc\in\Loc, \val\in\Val , \id \in \N} \cup \set{\ifllab{\loc}{\id} \st \loc\in\Loc, \id \in \N} \\ &
\cup \set{\ifolab{\loc}{\id} \st \loc\in\Loc, \id \in \N} \cup \set{\isflab{\id} \st \id \in \N})^*\qquad\qquad
\ID \suq \N
\end{align*}
$$\inarrC{
\PbuffI_\Init \defeq \lambda \loc.\; \epsl \qquad\qquad\qquad
\BuffI_\Init  \defeq \lambda \tid.\; \epsl \qquad\qquad\qquad
\ID_\Init = \emptyset
}$$
\myhrule\begin{mathpar}
\inferrule[write/flush/flush-opt/sfence]{
\inst{\ID' = \ID \uplus \set{\id}}
\\\\ \lTYP(\lab) \in \set{\lW,\lFL,\lFO,\lSF}
\\\\ \BuffI'=\BuffI[\tid \mapsto \BuffI(\tid) \cdot \addid{\lab}{\id}]
}{\tup{\mem, {\PbuffI},\BuffI , \inst{\ID}} \asteptidlab{\tid}{\addid{\lab}{\id}}{\PTSOsynnI} \tup{\mem, {\PbuffI}, \BuffI', \inst{\ID'}}
} \and
\inferrule[read]{
\inst{\ID' = \ID \uplus \set{\id}}
\\\\ \lab = \rlab{\loc}{\val}
\\\\ \rdWtsosyn{\mem}{\inst{\erase(}{\PbuffI(\loc)}\inst{)}}{\inst{\erase(}\BuffI(\tid)\inst{)}}(\loc) = \val
}{\tup{\mem, {\PbuffI}, \BuffI, \inst{\ID}} \asteptidlab{\tid}{\addid{\lab}{\id}}{\PTSOsynnI} \tup{\mem, {\PbuffI}, \BuffI,\inst{\ID'}}
} \\
\hspace*{-30pt}
\inferrule[rmw]{
\inst{\ID' = \ID \uplus \set{\id}}
\\\\ \lab = \ulab{\loc}{\val_\lR}{\val_\lW}
\\\\ \rdWtsosyn{\mem}{\inst{\erase(}{\PbuffI(\loc)}\inst{)}}{\epsilon}(\loc) = \val_\lR
\\\\ \BuffI(\tid)=\epsilon 
\\\\ \diffemph{\forall \loca.\;\ifotlabp{\tid}{\_} \nin \PbuffI(\loca)}
\\\\ {\PbuffI' = \PbuffI[\loc \mapsto \PbuffI(\loc) \cdot \ivale[\val_\lW]{\id}]}
}{\tup{\mem, {\PbuffI}, \BuffI, \inst{\ID}} \asteptidlab{\tid}{\addid{\lab}{\id}}{\PTSOsynnI} 
\tup{\mem, {\PbuffI'}, \BuffI,\inst{\ID'}}
} \hfill
\inferrule[rmw-fail]{
\inst{\ID' = \ID \uplus \set{\id}}
\\\\ \lab = \rexlab{\loc}{\val}
\\\\ \rdWtsosyn{\mem}{\inst{\erase(}{\PbuffI(\loc)}\inst{)}}{\epsilon}(\loc) = \val
\\\\ \BuffI(\tid)=\epsilon 
\\\\ \diffemph{\forall \loca.\;\ifotlabp{\tid}{\_} \nin \PbuffI(\loca)}
\\\\
}{\tup{\mem, {\PbuffI}, \BuffI, \inst{\ID}} \asteptidlab{\tid}{\addid{\lab}{\id}}{\PTSOsynnI} \tup{\mem, {\PbuffI}, \BuffI,\inst{\ID'}}
} \hfill
\inferrule[mfence]{
\inst{\ID' = \ID \uplus \set{\id}}
\\\\ \lab = \mflab
\\\\
\\\\ \BuffI(\tid)=\epsilon 
\\\\ \diffemph{\forall \loca.\;\ifotlabp{\tid}{\_} \nin \PbuffI(\loca)}
\\\\
}{\tup{\mem, {\PbuffI}, \BuffI, \inst{\ID}} \asteptidlab{\tid}{\addid{\lab}{\id}}{\PTSOsynnI} \tup{\mem, {\PbuffI}, \BuffI,\inst{\ID'}}
} \end{mathpar}
\myhrule
\begin{mathpar}
\inferrule[prop-w]{
\inst{\ilab = \itpwlab{\loc}{\val}{\id}}
\\\\ \BuffI(\tid) = \buffI_1 \cdot \iiwlab{\loc}{\val}{\id} \cdot \buffI_2 
\\\\ \iiwlab{\_}{\_}{\_}, \ifllab{\_}{\_}, \isflab{\_} \nin \buffI_1
\\\\ \BuffI' = \BuffI[\tid \mapsto \buffI_1 \cdot \buffI_2]
\\ {\PbuffI' = \PbuffI[\loc \mapsto \PbuffI(\loc) \cdot \ivale{\id}]}
}{\tup{\mem, {\PbuffI}, \BuffI, \inst{\ID}} \iasteptidlab{\tid}{\ilab}{\PTSOsynnI} 
\tup{\mem, {\PbuffI'}, \BuffI', \inst{\ID}}
} \and
\inferrule[prop-fl]{
\inst{\ilab = \lTP{\ifllab{\loc}{\id}}}
\\\\ \BuffI(\tid) = \buffI_1 \cdot \ifllab{\loc}{\id} \cdot \buffI_2 
\\\\ \iiwlab{\_}{\_}{\_}, \ifllab{\_}{\_}, \ifolab{\loc}{\_}, \isflab{\_} \nin \buffI_1
\\\\ {\PbuffI(\loc)=\epsilon}
\\\\ \BuffI' = \BuffI[\tid \mapsto \buffI_1 \cdot \buffI_2]
}{\tup{\mem, {\PbuffI}, \BuffI, \inst{\ID}} \iasteptidlab{\tid}{\ilab}{\PTSOsynnI} 
\tup{\mem, {\PbuffI}, \BuffI', \inst{\ID}}
} \and
\inferrule[prop-fo]{
\inst{\ilab = \lTP{\ifolab{\loc}{\id}}}
\\\\ \BuffI(\tid) = \buffI_1 \cdot \ifolab{\loc}{\id} \cdot \buffI_2 
\\\\ \iiwlab{\loc}{\_}{\_},\ifllab{\loc}{\_},\isflab{\_}  \nin \buffI_1
\\\\ \BuffI' = \BuffI[\tid \mapsto \buffI_1 \cdot \buffI_2]
\\ \diffemph{\PbuffI' = \PbuffI[\loc \mapsto \PbuffI(\loc) \cdot \ifotlabp{\tid}{\id}]}
}{\tup{\mem, {\PbuffI}, \BuffI, \inst{\ID}} \iasteptidlab{\tid}{\ilab}{\PTSOsynnI} 
\tup{\mem, \diffemph{\PbuffI'}, \BuffI', \inst{\ID}}
} \and
\inferrule[prop-sf]{
\inst{\ilab = \lTP{\isflab{\id}}}
\\\\ \BuffI(\tid) = \isflab{\id} \cdot \buffI
\\\\ \diffemph{\forall \loca.\; \ifotlabp{\tid}{\_} \nin \PbuffI(\loca)}
\\\\ \BuffI' = \BuffI[\tid \mapsto \buffI]
}{\tup{\mem, {\PbuffI}, \BuffI, \inst{\ID}} \iasteptidlab{\tid}{\ilab}{\PTSOsynnI} 
\tup{\mem, {\PbuffI}, \BuffI', \inst{\ID}}
} \end{mathpar}
\myhrule
\begin{mathpar}
\inferrule[persist-w]{
\inst{\ilab = \ipwlab{\loc}{\val}{\id}}
\\\\ {\PbuffI(\loc) = \ivale{\id} \cdot \pbuffI}
\\\\ {\PbuffI' = \PbuffI[\loc \mapsto \pbuffI]}
\\ \mem' = \mem[\loc \mapsto \val]
}{\tup{\mem, {\PbuffI} ,\BuffI, \inst{\ID}} \iasteplab{\ilab}{\PTSOsynnI} 
\tup{\mem', {\PbuffI'},\BuffI, \inst{\ID}}
} \and
\diffemph{
\inferrule[persist-fo]{
\inst{\ilab = \ipfotlab{\tid}{\loc}{\id}}
\\\\ \PbuffI(\loc) =\ifotlabp{\tid}{\id} \cdot \pbuffI
\\\\ \PbuffI' = \PbuffI[\loc \mapsto \pbuffI]
}{\tup{\mem, \PbuffI, \BuffI, \inst{\ID}} \iasteplab{\ilab}{\PTSOsynnI} 
\tup{\mem, \PbuffI', \BuffI, \inst{\ID}}}
}\end{mathpar}
\myhrule
\caption{The \PTSOsynnI Instrumented Persistent Memory Subsystem (differences \wrt \PTSOsynI are \diffemph{\text{highlighted}})}
\label{fig:PTSOsynnI}
\end{figure*} 
\subsection{\PTSOsynnnI: Instrumented \PTSOsynnn}
\label{sec:PTSOsynnnI}

We will also need an instrumented version of \PTSOsynnn, called \PTSOsynnnI.
This system is presented in \cref{fig:PTSOsynnnI}.
It is identical to \PTSOsynnI, except for some transitions (as \diffemph{\text{highlighted}} in the figure). 
It is easy to see that \PTSOsynnnI is an instrumentation of \PTSOsynnn.

\begin{figure*}
\smaller
\smaller
\myhrule
\begin{align*}
\PTSOsynnnI.\makeinst{\lSigma} \defeq &
\set{\itidlab{\tid}{\itpwlab{\loc}{\val}{\id}} \st\tid\in\Tid,\loc\in\Loc, \id \in \N}
\cup \set{\itidlab{\tid}{\lTP{\ifllab{\loc}{\id}}} \st \tid\in\Tid, \loc\in\Loc, \id \in \N}
\\
& \cup \set{\itidlab{\tid}{\lTP{\ifolab{\loc}{\id}}} \st \tid\in\Tid, \loc\in\Loc, \id \in \N}
\cup \set{\itidlab{\tid}{\lTP{\isflab{\id}}} \st \tid\in\Tid, \id \in \N}
\\ 
&\cup \set{\ipwlab{\loc}{\val}{\id} \st \loc\in\Loc, \id \in \N}
\cup \set{\ipfotlab{\tid}{\loc}{\id} \st \loc\in\Loc,\id \in \N}
\end{align*}
\myhrule
$$\inarrC{
\mem  \in \Loc \to \Val \qquad\qquad
\PbuffI \in \Loc \to (\set{\iiwlab{\loc}{\val}{\id} \st \loc\in\Loc, \val\in\Val, \id \in \N}
 \cup\set{\ifotlabp{\tid}{\id} \st \tid\in\Tid, \id \in \N}  )^*
}$$
\begin{align*}
\BuffI  \in \Tid \to (&\set{\iiwlab{\loc}{\val}{\id} \st \loc\in\Loc, \val\in\Val , \id \in \N} \cup \set{\ifllab{\loc}{\id} \st \loc\in\Loc, \id \in \N} \\ &
\cup \set{\ifolab{\loc}{\id} \st \loc\in\Loc, \id \in \N} \cup \set{\isflab{\id} \st \id \in \N})^*\qquad\qquad
\ID \suq \N
\end{align*}
$$\inarrC{
\PbuffI_\Init \defeq \lambda \loc.\; \epsl \qquad\qquad\qquad
\BuffI_\Init  \defeq \lambda \tid.\; \epsl \qquad\qquad\qquad
\ID_\Init = \emptyset
}$$
\myhrule\begin{mathpar}
\inferrule[write/flush/flush-opt/sfence]{
\inst{\ID' = \ID \uplus \set{\id}}
\\\\ \lTYP(\lab) \in \set{\lW,\lFL,\lFO,\lSF}
\\\\ \BuffI'=\BuffI[\tid \mapsto \BuffI(\tid) \cdot \addid{\lab}{\id}]
}{\tup{\mem, {\PbuffI},\BuffI , \inst{\ID}} \asteptidlab{\tid}{\addid{\lab}{\id}}{\PTSOsynnnI} \tup{\mem, {\PbuffI}, \BuffI', \inst{\ID'}}
} \and
\inferrule[read]{
\inst{\ID' = \ID \uplus \set{\id}}
\\\\ \lab = \rlab{\loc}{\val}
\\\\ \rdWtsosyn{\mem}{\inst{\erase(}{\PbuffI(\loc)}\inst{)}}{\inst{\erase(}\BuffI(\tid)\inst{)}}(\loc) = \val
}{\tup{\mem, {\PbuffI}, \BuffI, \inst{\ID}} \asteptidlab{\tid}{\addid{\lab}{\id}}{\PTSOsynnnI} \tup{\mem, {\PbuffI}, \BuffI,\inst{\ID'}}
} \\
\hspace*{-40pt}
\inferrule[rmw]{
\inst{\ID' = \ID \uplus \set{\id}}
\\\\ \lab = \ulab{\loc}{\val_\lR}{\val_\lW}
\\\\ \rdWtsosyn{\mem}{\inst{\erase(}{\PbuffI(\loc)}\inst{)}}{\epsilon}(\loc) = \val_\lR
\\\\ \BuffI(\tid)=\epsilon 
\\\\ {\forall \loca.\;\ifotlabp{\tid}{\_} \nin \PbuffI(\loca)}
\\\\ {\PbuffI' = \PbuffI[\loc \mapsto \PbuffI(\loc) \cdot \ivale[\val_\lW]{\id}]}
}{\tup{\mem, {\PbuffI}, \BuffI, \inst{\ID}} \asteptidlab{\tid}{\addid{\lab}{\id}}{\PTSOsynnnI} 
\tup{\mem, {\PbuffI'}, \BuffI,\inst{\ID'}}
} \hfill
\inferrule[rmw-fail]{
\inst{\ID' = \ID \uplus \set{\id}}
\\\\ \lab = \rexlab{\loc}{\val}
\\\\ \rdWtsosyn{\mem}{\inst{\erase(}{\PbuffI(\loc)}\inst{)}}{\epsilon}(\loc) = \val
\\\\ \BuffI(\tid)=\epsilon 
\\\\ {\forall \loca.\;\ifotlabp{\tid}{\_} \nin \PbuffI(\loca)}
\\\\
}{\tup{\mem, {\PbuffI}, \BuffI, \inst{\ID}} \asteptidlab{\tid}{\addid{\lab}{\id}}{\PTSOsynnnI} \tup{\mem, {\PbuffI}, \BuffI,\inst{\ID'}}
} \hfill
\inferrule[mfence]{
\inst{\ID' = \ID \uplus \set{\id}}
\\\\ \lab = \mflab
\\\\
\\\\ \BuffI(\tid)=\epsilon 
\\\\ {\forall \loca.\;\ifotlabp{\tid}{\_} \nin \PbuffI(\loca)}
\\\\
}{\tup{\mem, {\PbuffI}, \BuffI, \inst{\ID}} \asteptidlab{\tid}{\addid{\lab}{\id}}{\PTSOsynnnI} \tup{\mem, {\PbuffI}, \BuffI,\inst{\ID'}}
} \end{mathpar}
\myhrule
\begin{mathpar}
\inferrule[prop-w]{
\inst{\ilab = \itpwlab{\loc}{\val}{\id}}
\\\\ \BuffI(\tid) = \diffemph{\iiwlab{\loc}{\val}{\id} \cdot \buffI}
\\\\ \BuffI' = \BuffI[\tid \mapsto \diffemph{\buffI}]
\\ {\PbuffI' = \PbuffI[\loc \mapsto \PbuffI(\loc) \cdot \ivale{\id}]}
}{\tup{\mem, {\PbuffI}, \BuffI, \inst{\ID}} \iasteptidlab{\tid}{\ilab}{\PTSOsynnnI} 
\tup{\mem, {\PbuffI'}, \BuffI', \inst{\ID}}
} \and
\inferrule[prop-fl]{
\inst{\ilab = \lTP{\ifllab{\loc}{\id}}}
\\\\ \BuffI(\tid) = \diffemph{\ifllab{\loc}{\id} \cdot \buffI}
\\\\ \iiwlab{\_}{\_}{\_}, \ifllab{\_}{\_}, \ifolab{\loc}{\_}, \isflab{\_} \nin \buffI_1
\\\\ {\PbuffI(\loc)=\epsilon}
\\\\ \BuffI' = \BuffI[\tid \mapsto \diffemph{\buffI}]
}{\tup{\mem, {\PbuffI}, \BuffI, \inst{\ID}} \iasteptidlab{\tid}{\ilab}{\PTSOsynnnI} 
\tup{\mem, {\PbuffI}, \BuffI', \inst{\ID}}
} \and
\inferrule[prop-fo]{
\inst{\ilab = \lTP{\ifolab{\loc}{\id}}}
\\\\ \BuffI(\tid) = \buffI_1 \cdot \ifolab{\loc}{\id} \cdot \buffI_2 
\\\\ \iiwlab{\loc}{\_}{\_},\ifllab{\loc}{\_},\diffemph{\ifolab{\loc}{\_}},\isflab{\_}  \nin \buffI_1
\\\\ \BuffI' = \BuffI[\tid \mapsto \buffI_1 \cdot \buffI_2]
\\ {\PbuffI' = \PbuffI[\loc \mapsto \PbuffI(\loc) \cdot \ifotlabp{\tid}{\id}]}
}{\tup{\mem, {\PbuffI}, \BuffI, \inst{\ID}} \iasteptidlab{\tid}{\ilab}{\PTSOsynnnI} 
\tup{\mem, {\PbuffI'}, \BuffI', \inst{\ID}}
} \and
\inferrule[prop-sf]{
\inst{\ilab = \lTP{\isflab{\id}}}
\\\\ \BuffI(\tid) = \isflab{\id} \cdot \buffI
\\\\ {\forall \loca.\; \ifotlabp{\tid}{\_} \nin \PbuffI(\loca)}
\\\\ \BuffI' = \BuffI[\tid \mapsto \buffI]
}{\tup{\mem, {\PbuffI}, \BuffI, \inst{\ID}} \iasteptidlab{\tid}{\ilab}{\PTSOsynnnI} 
\tup{\mem, {\PbuffI}, \BuffI', \inst{\ID}}
} \end{mathpar}
\myhrule
\begin{mathpar}
\inferrule[persist-w]{
\inst{\ilab = \ipwlab{\loc}{\val}{\id}}
\\\\ {\PbuffI(\loc) = \ivale{\id} \cdot \pbuffI}
\\\\ {\PbuffI' = \PbuffI[\loc \mapsto \pbuffI]}
\\ \mem' = \mem[\loc \mapsto \val]
}{\tup{\mem, {\PbuffI} ,\BuffI, \inst{\ID}} \iasteplab{\ilab}{\PTSOsynnnI} 
\tup{\mem', {\PbuffI'},\BuffI, \inst{\ID}}
} \and
\inferrule[persist-fo]{
\inst{\ilab = \ipfotlab{\tid}{\loc}{\id}}
\\\\ \PbuffI(\loc) =\ifotlabp{\tid}{\id} \cdot \pbuffI
\\\\ \PbuffI' = \PbuffI[\loc \mapsto \pbuffI]
}{\tup{\mem, \PbuffI, \BuffI, \inst{\ID}} \iasteplab{\ilab}{\PTSOsynnnI} 
\tup{\mem, \PbuffI', \BuffI, \inst{\ID}}
}\end{mathpar}
\myhrule
\caption{The \PTSOsynnnI Instrumented Persistent Memory Subsystem (differences \wrt \PTSOsynnI are \diffemph{\text{highlighted}})}
\label{fig:PTSOsynnnI}
\end{figure*}

\begin{lemma}
\label{lem:PTSOsynnn_PTSOsynnnI}
\PTSOsynnnI is a $\erase$-instrumentation of \PTSOsynnn
for $\erase \defeq \lambda \tup{\PbuffI,\BuffI,\inst{\ID}} .\; \tup{\erase(\PbuffI),\erase(\BuffI)}$.
\end{lemma}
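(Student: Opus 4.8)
The plan is to verify directly the five conditions of \cref{def:erasure} for the function $\erase = \lambda \tup{\PbuffI,\BuffI,\inst{\ID}}.\;\tup{\erase(\PbuffI),\erase(\BuffI)}$, proceeding exactly as in the proof of \cref{lem:PTSO_PTSOI}. The whole argument is a rule-by-rule check, and it goes through because \PTSOsynnnI differs from \PTSOsynnn only in three ways, each of which is invisible under erasure: (i) every persistence-buffer and store-buffer entry carries the serial number $\id$ of the issuing step that created it; (ii) an auxiliary finite component $\ID\suq\N$ records the already-used serial numbers so that issuing steps may pick a fresh one; and (iii) each application of $\rdWn$ to an instrumented buffer is wrapped in an $\erase(\cdot)$. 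Since $\erase$ acts as the identity on the non-volatile memory $\mem$ and is a homomorphism with respect to all buffer operations that occur in the rules (concatenation, head/tail splitting, and the membership side-conditions), all three features project away.

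Concretely, for the initial-states condition both $\PTSOsynnnI.\lvinit$ and $\PTSOsynnn.\lvinit$ are singletons whose unique state has all buffers empty, and $\erase$ maps one to the other. For the forward direction, given an observable transition $\tup{\mem,\makeinst{\vmem}} \asteptidlab{\tid}{\addid{\lab}{\id}}{\PTSOsynnnI} \tup{\mem',\makeinst{\vmem'}}$, we case on the rule applied and exhibit the homonymous \PTSOsynnn rule on the erased states; the only steps needing a word of justification are that an instrumented side-condition — such as $\forall\loca.\;\ifotlabp{\tid}{\_}\nin\PbuffI(\loca)$, or the absence of certain tagged entries from a store-buffer segment in \rulename{prop-fo} — is equivalent, under $\erase$, to the corresponding condition of \PTSOsynnn (because these conditions already quantify over all attached identifiers), and that the value produced by $\rdWn$ is unchanged since it is already evaluated on the erasure. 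Silent transitions (\rulename{prop-w}, \rulename{prop-fl}, \rulename{prop-fo}, \rulename{prop-sf}, \rulename{persist-w}, \rulename{persist-fo}) are treated identically, with the instrumented silent label erasing to $\epsilon$.

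For the two converse conditions we lift a \PTSOsynnn transition to \PTSOsynnnI: for an issuing step we choose any $\id\notin\ID$ (possible since $\ID$ is finite) and tag the newly added buffer entry with it; for a propagation or persistence step we read off the identifier already attached to the buffer entry being moved in $\makeinst{\vmem}$ and form the matching instrumented silent label. In each case the resulting instrumented state erases to the given non-instrumented target, since the underlying buffer manipulations coincide up to tags. There is no real obstacle: the only risk is a bookkeeping slip — overlooking one of the (many) transition rules, or a mismatch between an instrumented guard and its erasure — and this is dispatched uniformly by the homomorphism observation above, exactly as in the proof of \cref{lem:PTSO_PTSOI}.
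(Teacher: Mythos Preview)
Your proposal is correct and matches the paper's (implicit) approach: the paper gives no proof for this lemma, treating it as routine (``It is easy to see that \PTSOsynnnI is an instrumentation of \PTSOsynnn''), and your rule-by-rule verification of the five conditions of \cref{def:erasure} is exactly the standard argument one supplies for such a claim.
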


\clearpage
\subsection{Proof of Theorem~\ref{thm:tsosyn}}
\label{sec:main_proof}

With the four systems above, we prove \cref{thm:tsosyn}.

Utilizing \cref{lem:memory_refine},
we need to show:
\begin{enumerate}[label=(\Alph*),leftmargin=0.3cm]
\item Every $\mem_0$-initialized \PTSOsynnn-observable-trace is also an $\mem_0$-initialized \PTSO-observable-trace.
\item For every $\mem_0$-to-$\mem$ \PTSOsynnn-observable-trace $\tr$,
some $\tr' \lesssim \tr$ is an $\mem_0$-to-$\mem$ \PTSO-observable-trace.
\item Every $\mem_0$-initialized \PTSO-observable-trace is also an $\mem_0$-initialized \PTSOsynnn-observable-trace.
\item For every $\mem_0$-to-$\mem$ \PTSO-observable-trace $\tr$,
some $\tr' \lesssim \tr$ is an $\mem_0$-to-$\mem$ \PTSOsynnn-observable-trace.
\end{enumerate}

In the proof outlines below, we \hilight{highlight} the steps whose proofs we found more interesting.
The proofs of the non-highlighted steps are easier and mostly straightforward.

\subsubsection{General Definitions for all Parts}

\begin{definition}
\label{def:commutes}
Let $\A$ be an LTS.
We say that a pair $\tup{\sigma,\sigma'} \in \A.\lSigma \times \A.\lSigma$ of transition labels \emph{\A-commutes}
if $$\asteplab{\sigma}{\A} \seq \asteplab{\sigma'}{\A} \suq
\asteplab{\sigma'}{\A} \seq \asteplab{\sigma}{\A}.$$
\end{definition}

\begin{definition}
\label{def:complete}
A trace $\trI$ of one the systems \PTSOI, \PTSOsynnI, or \PTSOsynnnI
is called \emph{$\lTP{\lFO}$-complete} if for every $i\in\dom{\trI}$
with $\trI(i)=\itidlab{\tid}{\lTP{\ifolab{\loc}{\id}}}$, 
we have $\lSN(\trI(j))=\id$ for some $j > i$.
In addition, if $\trI$ is a \PTSOI-trace, we also say that $\trI$ is
\begin{enumerate}
\item \emph{$\lTP{\lFL}$-complete} if for every $i\in\dom{\trI}$
with $\trI(i)=\itidlab{\tid}{\lTP{\ifllab{\loc}{\id}}}$, 
we have $\lSN(\trI(j))=\id$ for some $j > i$.
\item \emph{$\set{\lTP{\lFL},\lTP{\lFO}}$-complete} if $\trI$ is both 
$\lTP{\lFL}$-complete and $\lTP{\lFO}$-complete.
\end{enumerate}
\end{definition}

\begin{definition}
\label{def:delay}
Given a trace $\trI$ of one the systems \PTSOsynnI or \PTSOsynnnI,
the \emph{delay function} 
$d_\trI:\dom{\trI} \to \N$ 
assigns to every $i\in\dom{\trI}$ 
with $\lTYP(\trI(i))\in\set{\lU,\lTP{\lW},\lTP{\lFO}}$
the difference $j-i-1$ where $j > i$ is the (unique) index satisfying  
$\lSN(\trI(j)) = \lSN(\trI(i))$. 
If $\lTYP(\trI(i))\nin\set{\lU,\lTP{\lW},\lTP{\lFO}}$
or such index $j$ does not exist, the delay $d_\trI(i)$ is defined to be $0$.
Similarly, if $\trI$ is a trace of \PTSOI,
the \emph{delay function} 
$d_\trI:\dom{\trI} \to \N$ 
assigns to every $i\in\dom{\trI}$ 
with $\lTYP(\trI(i))\in\set{\lU,\lTP{\lW},\lTP{\lFO},\lTP{\lFL}}$
the difference $j-i-1$ where $j > i$ is the (unique) index satisfying  
$\lSN(\trI(j)) = \lSN(\trI(i))$. 
If $\lTYP(\trI(i))\nin\set{\lU,\lTP{\lW},\lTP{\lFO},\lTP{\lFL}}$
or such index $j$ does not exist, the delay $d_\trI(i)$ is defined to be $0$.
\end{definition}

\begin{definition}
\label{def:synchronous}
A trace $\trI$ of one the systems \PTSOI, \PTSOsynnI, or \PTSOsynnnI
is \emph{synchronous} if $d_\trI(i)=0$ for every $1\leq i\leq \size{\trI}$.
\end{definition}

\subsubsection{Proof of (A)}

The proof of (A) is structured as follows:
\begin{enumerate}[label=(A.\arabic*),start=0,leftmargin=0.4cm]
\item Let $\tr$ be an $\mem_0$-initialized \PTSOsynnn-observable-trace.
\item By \cref{lem:M_MI,lem:PTSOsynnn_PTSOsynnnI}, there exists some 
$\mem_0$-initialized \PTSOsynnnI-trace $\trI$ such that $\erase(\trI)=\tr$.
\item By \cref{lem:a2}, there exists some 
$\mem_0$-initialized \PTSOI-trace $\trI'$ such that $\erase(\trI')=\erase(\trI)$.
\item By \cref{lem:M_MI,lem:PTSO_PTSOI},
$\erase(\trI')$ is an $\mem_0$-initialized \PTSO-observable-trace.
\item Then, the claim follows observing that $\erase(\trI')=\erase(\trI)=\tr$.
\end{enumerate}

\begin{lemma}
\label{lem:PTSOsynnnI_persist_all}
For every $\mem_0$-initialized \PTSOsynnnI-trace $\trI$, there exists some
$\lTP{\lFO}$-complete $\mem_0$-initialized \PTSOsynnnI-trace $\trI'$ such that
$\erase(\trI)=\erase(\trI')$.
\end{lemma}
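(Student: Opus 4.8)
`\textbf{Proof plan.}` The statement to prove is \Cref{lem:PTSOsynnnI_persist_all}: every $\mem_0$-initialized \PTSOsynnnI-trace $\trI$ can be turned into a $\lTP{\lFO}$-complete one $\trI'$ with the same erasure. Recall that $\lTP{\lFO}$-completeness means: every $\lTP{\ifolab{\loc}{\id}}$ propagation step in the trace is eventually followed by a step with serial number $\id$ --- which, by the shape of the instrumented labels, must be the matching \rulename{persist-fo} step $\ipfotlab{\tid}{\loc}{\id}$. So the content of the lemma is just: we may assume WLOG that all flush-optimal \emph{markers} that get propagated into the per-location persistence buffers are eventually persisted (drained). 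The plan is to drain them by appending extra \rulename{persist-fo} steps at the very end of the run.

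`The key steps, in order.` First I would take any run $\tup{\mem_0,\PbuffI_\Init,\BuffI_\Init,\ID_\Init} \astep{\trI}{\PTSOsynnnI} \tup{\mem,\PbuffI,\BuffI,\ID}$ witnessing that $\trI$ is a \PTSOsynnnI-trace. Let $k$ be the number of positions $i$ in $\trI$ with $\trI(i)=\itidlab{\tid}{\lTP{\ifolab{\loc}{\id}}}$ for which no later step has serial number $\id$; equivalently, the total number of $\ifotlabp{\_}{\_}$-entries still residing in $\PbuffI$ in the final state. Proceed by induction on $k$. If $k=0$, we are done: $\trI$ is already $\lTP{\lFO}$-complete. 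If $k>0$, pick any location $\loca$ with $\PbuffI(\loca)\neq\epsl$ containing an $\ifotlabp{\_}{\_}$-entry. The only obstacle to firing \rulename{persist-fo} on $\loca$ is that the \emph{head} of $\PbuffI(\loca)$ might be a write entry $\ivale{\id'}$ rather than an $\ifotlabp{\tid}{\id}$-entry. So I would first fire a sequence of \rulename{persist-w} steps on $\loca$ (each removing the current head write and updating $\mem$) until the head becomes the desired $\ifotlabp{\tid}{\id}$ marker, then fire \rulename{persist-fo} on $\loca$. Appending these steps to $\trI$ is legal because the persistence steps only inspect and modify $\mem$ and $\PbuffI(\loca)$, they do not touch the store buffers, the $\ID$ set (\rulename{persist-w}, \rulename{persist-fo} leave $\ID$ unchanged), or other persistence buffers, and there is no constraint in \PTSOsynnnI blocking \rulename{persist-w}/\rulename{persist-fo} when they are at the head of the buffer. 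Call the extended trace $\trI_1$; it is still a \PTSOsynnnI-trace starting from $\tup{\mem_0,\PbuffI_\Init,\BuffI_\Init,\ID_\Init}$, and now it has strictly fewer unpersisted flush-optimal markers, so the induction hypothesis applies.

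`Checking the invariants.` I should verify three small points. (1) $\mem_0$-initialization of $\trI_1$ is immediate: we only appended steps, so the initial state is unchanged. (2) $\erase(\trI_1)=\erase(\trI)$: the appended steps are all silent instrumented labels (elements of $\PTSOsynnnI.\makeinst{\lSigma}$ of the form $\ipwlab{\loc}{\val}{\id}$ or $\ipfotlab{\tid}{\loc}{\id}$), so by \Cref{def:erasure_trace} their erasure is $\epsl$ and gets omitted. (3) The measure strictly decreases: each iteration removes at least one $\ifotlabp{\_}{\_}$-entry from the final persistence buffers (it may also remove some write entries via \rulename{persist-w}, which is harmless --- $\lTP{\lFO}$-completeness says nothing about write entries). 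A subtlety to state carefully: when we fire \rulename{persist-w} to expose a marker, those writes' matching $\ipwlab{}{}{}$ labels now \emph{do} appear later in the trace, so those write-propagation steps become ``complete'' too --- fine, this only helps. The only step requiring a bit of thought, and the natural place to be careful, is arguing that no blocking side-condition of \PTSOsynnnI can prevent draining a nonempty per-location buffer from its head; but inspecting \cref{fig:PTSOsynnnI}, \rulename{persist-w} and \rulename{persist-fo} have no preconditions beyond the head of $\PbuffI(\loc)$ having the right shape, so this is straightforward. I do not expect a genuine obstacle here; this is essentially the ``drain the persistence buffers'' observation mentioned in the proof outline of Part (A) in the main text, isolated as a lemma.
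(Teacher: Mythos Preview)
Your proposal is correct and takes essentially the same approach as the paper: append silent \rulename{persist-w} and \rulename{persist-fo} steps at the end of the run to drain the per-location persistence buffers, which leaves the erasure unchanged. The paper's proof is terser (simply ``append corresponding labels in the order in which unmatched propagation events occur in $\trI$''), whereas you organize it as an explicit induction on the number of unpersisted flush-optimal markers and are more careful about checking the side conditions, but the underlying idea is identical.
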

\begin{proof}
$\trI$ can be extended to some $\trI'$ so that every 
$\tup{\_,\iulab{\loc}{\_}{\val}{\id}}$,
$\tup{\_,\itpwlab{\loc}{\val}{\id}}$, and $\tup{\_, \lTP{\ifolab{\loc}{\id}}}$  has a matching
$\ipwlab{\loc}{\val}{\id}$ or $\ipfotlab{\tid}{\loc}{\id}$. Indeed, since
it is always possible to persist entries of persistence buffer in order, we
can simply append corresponding labels in the order in which unmatched
propagation events occur in $\trI$.
\end{proof}

\begin{lemma}
\label{lem:PTSOsynnnI_persist_early}
For every $\lTP{\lFO}$-complete
$\mem_0$-initialized \PTSOsynnnI-trace $\trI$, there exists some synchronous
$\lTP{\lFO}$-complete $\mem_0$-initialized \PTSOsynnnI-trace
$\trI'$ such that $\erase(\trI)=\erase(\trI')$.
\end{lemma}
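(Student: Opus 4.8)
The plan is to turn $\trI$ into a synchronous trace by a sequence of commutations (\cref{def:commutes}) that relocate each persistence step so that it immediately follows the transition that created the entry it consumes. The structural fact I would rely on is that, in \PTSOsynnnI, the entries that ever persist at a fixed location $\loc$ form a \emph{prefix} of the entries appended to $\PbuffI(\loc)$ along $\trI$: both \rulename{persist-w} and \rulename{persist-fo} pop the head of $\PbuffI(\loc)$, so an entry can persist only after every entry added to $\PbuffI(\loc)$ earlier has persisted. Accordingly, for each location $\loc$ I would enumerate the entries $e_1,e_2,\dots$ added to $\PbuffI(\loc)$ in the order of their adding transitions (a \rulename{prop-w}, a \rulename{prop-fo}, or an \rulename{rmw} step), note that the ones that persist are some prefix $e_1,\dots,e_m$, and process them in this order: when processing $e_t$, I move the unique persist step consuming $e_t$ (identified via the instrumentation identifiers) leftwards, one transition at a time, until it sits right after the adding transition of $e_t$.

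Maintaining the obvious invariant---after processing $e_1,\dots,e_{t-1}$, the persist step of each $e_s$ with $s<t$ is immediately after the adding transition of $e_s$, and (since only silent persist steps are ever moved) the adding transitions keep their relative order---the segment of the current trace strictly between the adding transition of $e_t$ and its persist step contains \emph{no} persistence step touching $\loc$: those of $e_1,\dots,e_{t-1}$ sit before the adding transition of $e_t$, and those of $e_{t+1},\dots$ sit after the persist step of $e_t$ by FIFO. The heart of the argument is to check that the persist step consuming $e_t$ \PTSOsynnnI-commutes with \emph{every} transition in that segment, and I expect this to be the main obstacle. I would split it as follows: persistence steps of locations $\neq\loc$ act on disjoint state; \rulename{prop-w}, \rulename{prop-fo}, \rulename{read}, \rulename{rmw}, \rulename{rmw-fail}, \rulename{mfence} and the issuing step only ever consult $\PbuffI(\loc)$ through $\rdWn$, and popping the head of $\PbuffI(\loc)$ changes neither the multiset of flush-optimal markers $\fotlabp{\tid}$ nor the value $\rdWn$ returns for any location (popping either leaves the last write entry of $\PbuffI(\loc)$ in place, or, when $e_t$ was that last write entry, moves exactly its value into $\mem(\loc)$); and the transitions genuinely blocked while $e_t\in\PbuffI(\loc)$---namely \rulename{prop-fl} for $\loc$, and \rulename{prop-sf}/\rulename{rmw}/\rulename{rmw-fail}/\rulename{mfence} of $e_t$'s issuing thread when $e_t$ is a flush-optimal marker---simply cannot occur in the segment, since $e_t$ stays in $\PbuffI(\loc)$ throughout it. I would then observe that processing one location never disturbs the synchronization already achieved for another, since distinct-location persist steps commute and the relocated step lands immediately after an adding transition, which never lies strictly between an adding transition and its own following persist step.

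Finally I would verify the three properties of the resulting trace $\trI'$. Since only silent persistence steps were reordered and the order of non-silent transitions was preserved, $\erase(\trI')=\erase(\trI)$ (in particular the $\ID$-freshness side conditions of the issuing steps are untouched) and the start state is unchanged, so $\trI'$ is still $\mem_0$-initialized; no \rulename{prop-fo} or \rulename{persist-fo} step was created or destroyed and each \rulename{persist-fo} still follows its matching \rulename{prop-fo}, so $\trI'$ is $\lTP{\lFO}$-complete; and by construction every \rulename{prop-w}/\rulename{prop-fo}/\rulename{rmw} step of $\trI'$ whose entry ever persists is immediately followed by the consuming persist step, while by \cref{def:delay} a step whose entry never persists has delay $0$ by convention, so $d_{\trI'}\equiv 0$, i.e.\ $\trI'$ is synchronous (\cref{def:synchronous}). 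A cleaner bookkeeping alternative, if the nested-enumeration write-up gets unwieldy, would be to phrase the whole transformation as an induction on $\sum_i d_{\trI}(i)$, at each step picking the persist step whose entry is the earliest-added still-unsynchronized entry of its location and applying exactly the commutations above.
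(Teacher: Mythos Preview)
Your proposal is correct and takes essentially the same approach as the paper's sketch: move each persist step leftward to sit immediately after its producing \rulename{prop-w}/\rulename{prop-fo}/\rulename{rmw} step, relying on the FIFO structure of the per-location persistence buffers so that no same-location persist step lies in the way. Your commutation case analysis is more detailed than the paper provides (minor quibble: the phrase ``changes neither the multiset of flush-optimal markers'' is literally false when $e_t$ is itself a marker, but you correctly cover that situation via your third case), and your closing alternative of inducting on $\sum_i d_{\trI}(i)$ is exactly the bookkeeping the paper uses in the analogous \PTSOsynnI lemma.
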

\begin{proof}[Proof sketch]
We can transform $\trI$ into a synchronous
$\lTP{\lFO}$-complete $\mem_0$-initialized \PTSOsynnnI-trace
$\trI'$ simply by moving $\ipwlab{\loc}{\val}{\id}$ and $\ipfotlab{\tid}{\loc}{\id}$ immediately after matching 
$\tup{\_,\itpwlab{\loc}{\val}{\id}}$, $\tup{\_,\iulab{\loc}{\_}{\val}{\id}}$, or $\tup{\_, \lTP{\ifolab{\loc}{\id}}}$
labels in $\trI$. In a $\lTP{\lFO}$-complete trace, the writes $\loc$ that do
not persist always occur after $\ipfotlab{\_}{\loc}{\_}$ steps. With that
observed, one can argue that considering propagation labels in order and
moving their matching persist labels is possible, as relevant persistence
buffers constraints are satisfied by construction.
\end{proof}

\begin{lemma}[Step A.2]\label{lem:a2}
For every $\mem_0$-initialized \PTSOsynnnI-trace $\trI$, there exists some
$\mem_0$-initialized \PTSOI-trace $\trI'$ such that
$\erase(\trI')=\erase(\trI)$.
\end{lemma}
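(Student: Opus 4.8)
The plan is to route the argument through Lemmas~\ref{lem:PTSOsynnnI_persist_all} and~\ref{lem:PTSOsynnnI_persist_early}, which turn an arbitrary \PTSOsynnnI-trace into one that \PTSOI can imitate in lockstep. Given an $\mem_0$-initialized \PTSOsynnnI-trace $\trI$, I would first apply Lemma~\ref{lem:PTSOsynnnI_persist_all} to obtain a $\lTP{\lFO}$-complete $\mem_0$-initialized \PTSOsynnnI-trace $\trI_1$ with $\erase(\trI_1)=\erase(\trI)$; its proof in fact appends persist steps for \emph{all} pending entries, so $\trI_1$ may be taken so that every propagated write, every RMW and every propagated flush-optimal has a subsequent matching \rulename{persist-w}/\rulename{persist-fo}. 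Then Lemma~\ref{lem:PTSOsynnnI_persist_early} yields a \emph{synchronous} $\lTP{\lFO}$-complete $\mem_0$-initialized \PTSOsynnnI-trace $\trI_2$ with $\erase(\trI_2)=\erase(\trI_1)$, and since that transformation merely relocates persist labels, in $\trI_2$ every \rulename{prop-w}/\rulename{rmw}/\rulename{prop-fo} step is \emph{immediately} followed by its persistence step, while \rulename{prop-fl} and \rulename{prop-sf} steps stand alone and there is no nesting of propagation/persistence pairs. It then remains to produce a \PTSOI-trace $\trI'$ with $\erase(\trI')=\erase(\trI_2)$.

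I would build $\trI'$ by a forward simulation along $\trI_2$, maintaining the invariant that the reached \PTSOI-state $\tup{\mem,\pbuffI,\BuffI,\inst{\ID}}$ has $\pbuffI=\epsilon$ and agrees on $\mem$, $\BuffI$ and $\inst{\ID}$ with the reached \PTSOsynnnI-state $\tup{\mem,\PbuffI,\BuffI,\inst{\ID}}$, in which moreover all per-location buffers $\PbuffI(\loca)$ are empty. This holds initially. Under the invariant: every issuing step and every \rulename{prop-sf} step is copied verbatim (the \PTSOsynnnI side conditions of the form $\forall\loca.\;\ifotlabp{\tid}{\_}\nin\PbuffI(\loca)$ are vacuous, and \rulename{read}/\rulename{rmw}/\rulename{rmw-fail} read the same value, since with empty persistence buffers the lookup falls through to $\mem$, which coincides on both sides); a \rulename{prop-w} step together with the \rulename{persist-w} immediately following it is matched by the analogous \PTSOI pair; a \rulename{prop-fo}+\rulename{persist-fo} pair is matched by a \PTSOI \rulename{prop-fo} followed by an inserted \rulename{persist-per}; and a standalone \rulename{prop-fl} is matched by a \PTSOI \rulename{prop-fl} followed by an inserted \rulename{persist-per}. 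In each case the single entry created in $\pbuffI$ by the propagation step is drained at once, restoring the invariant, and $\mem$ is updated exactly as on the \PTSOsynnnI side.

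Each \PTSOI step so used is enabled: the store-buffer preconditions of \PTSOI's \rulename{prop-w}, \rulename{prop-fl} and \rulename{prop-fo} are implied by those of \PTSOsynnnI, which propagates writes and flushes only from the buffer head and imposes the stronger ``no preceding same-location flush-optimal'' condition on flush-optimals; and \rulename{persist-w}/\rulename{persist-per} fire because $\pbuffI$ then contains only the single matching entry, with no earlier write to the same location and no earlier marker. Since the observable labels are copied verbatim and every silent label used---copied, substituted, or inserted---erases to $\epsl$, we get $\erase(\trI')=\erase(\trI_2)=\erase(\trI_1)=\erase(\trI)$; and $\trI'$ starts from $\tup{\mem_0,\epsilon,\BuffI_\Init,\emptyset}$, so it is $\mem_0$-initialized. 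This gives the lemma.

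The whole statement is routine once the synchronous, fully-draining $\trI_2$ is in hand; the one point that needs a short argument is the memory-agreement clause of the invariant. One must check that eagerly persisting in \PTSOI reaches the same $\mem$ at every boundary as \PTSOsynnnI does along $\trI_2$: this rests on the facts that writes never overtake writes in a store buffer, so writes to a location are propagated in the same order on both sides; that a flush to $\loc$ in \PTSOsynnnI blocks on $\PbuffI(\loc)=\epsilon$, so in a synchronous trace each per-location buffer holds at most one entry at a time and is drained before the next propagation to that location; and that---by the strengthened Lemma~\ref{lem:PTSOsynnnI_persist_all}---every such write is actually persisted. Together these force the ``last persisted write to $\loc$'' to coincide step by step in the two runs, and hence the read values to match. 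Everything else is a direct case analysis over the transition rules.
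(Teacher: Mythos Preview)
Your proof is correct and takes essentially the same route as the paper: reduce to a synchronous \PTSOsynnnI-trace via Lemmas~\ref{lem:PTSOsynnnI_persist_all} and~\ref{lem:PTSOsynnnI_persist_early}, then simulate it in \PTSOI by inserting a \rulename{persist-per} after each \rulename{prop-fl} and replacing each \rulename{persist-fo} by \rulename{persist-per}. Your final paragraph's worry about memory agreement is unnecessary---given your own invariant that both persistence buffers are empty at each boundary, a \rulename{prop-w}/\rulename{persist-w} pair updates $\mem$ to $\mem[\loc\mapsto\val]$ identically on both sides, so agreement is immediate and needs no appeal to store-buffer ordering.
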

\begin{proof}[Proof sketch]
By \Cref{lem:PTSOsynnnI_persist_all} applied to $\trI$, there exists some
$\lTP{\lFO}$-complete $\mem_0$-initialized \PTSOsynnnI-trace
$\trI_1$ such that $\erase(\trI)=\erase(\trI_1)$. Moreover, by
\Cref{lem:PTSOsynnnI_persist_early} applied to $\trI_1$, there exists some synchronous
$\lTP{\lFO}$-complete $\mem_0$-initialized \PTSOsynnnI-trace
$\trI'_1$ such that $\erase(\trI_1)=\erase(\trI'_1)$. We further transform
$\trI'_1$ into $\trI'$ by putting a persist step $\lP{\iperlab{\loc}{\id}}$
after each $\lTP{\ifllab{\loc}{\id}}$, and by replacing
$\ipfotlab{\tid}{\loc}{\id}$ after each $\lTP{\ifolab{\loc}{\id}}$ with
$\lP{\iperlab{\loc}{\id}}$. Note that the resulting trace is $\set{\lTP{\lFL},\lTP{\lFO}}$-complete and synchronous.

We argue that $\trI'$ that is a \PTSOI-trace. Indeed, for all but persistence
steps, whenever \PTSOsynnnI performs a step, the same step is possible in
\PTSOI. The persistence steps in $\trI'$ are enabled by construction, since
their constraints on the content of the persistence buffer are trivially
satisfied in a synchronous trace. Overall, we have constructed $\trI'$ that
is $\mem_0$-initialized \PTSOI-trace such that $\erase(\trI')=\erase(\trI)$.
\end{proof}

\subsubsection{Proof of (B)}

The proof of (B) is structured as follows:%
\begin{enumerate}[label=(B.\arabic*),start=0]
\item Let $\tr$ be an $\mem_0$-to-$\mem$ \PTSOsynnn-observable-trace.
\item By \cref{lem:M_MI,lem:PTSOsynnn_PTSOsynnnI}, there exists some 
$\mem_0$-to-$\mem$ \PTSOsynnnI-trace $\trI$ such that $\erase(\trI)=\tr$.
\item By \Cref{lem:PTSOsynnnI_refines_PTSOsynnI}, $\trI$ is also an
$\mem_0$-to-$\mem$ \PTSOsynnI-trace.
\hlitem By \Cref{lem:b3}, there exists some 
$\mem_0$-to-$\mem$ \PTSOsynI-trace $\trI_1$ such that   $\erase(\trI_1)\lesssim\erase(\trI)$.
\item By \cref{lem:b4}, there exists some 
$\mem_0$-to-$\mem$ \PTSOI-trace $\trI'$ such that  $\erase(\trI')=\erase(\trI_1)$.
\item By \cref{lem:M_MI,lem:PTSO_PTSOI}, 
$\erase(\trI')$ is an $\mem_0$-to-$\mem$ \PTSO-observable-trace.
\item Then, the claim follows observing that $\erase(\trI')=\erase(\trI_1)\lesssim\erase(\trI)=\tr$.
\end{enumerate}

\begin{lemma}
\label{lem:PTSOsynnnI_refines_PTSOsynnI}
Every $\mem_0$-to-$\mem$ \PTSOsynnnI-trace $\trI$ is also 
an $\mem_0$-to-$\mem$ \PTSOsynnI-trace.
\end{lemma}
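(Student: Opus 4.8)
The plan is to prove the lemma by a direct, label-preserving simulation, exploiting the observation that \PTSOsynnI is obtained from \PTSOsynnnI by \emph{relaxing} the store-buffer propagation discipline while leaving everything else unchanged. First I would record that the two instrumented subsystems have literally the same set of volatile states, the same (singleton) set of initial volatile states $\tup{\PbuffI_\Init,\BuffI_\Init,\ID_\Init}$, and the same transition alphabet. Given this, it suffices to show the transition-relation inclusion $\asteplab{\alpha}{\PTSOsynnnI} \suq \asteplab{\alpha}{\PTSOsynnI}$ for every transition label $\alpha$.

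I would then check this inclusion rule by rule against \cref{fig:PTSOsynnI} and \cref{fig:PTSOsynnnI}. For all the issuing rules, for \rulename{prop-sf}, and for \rulename{persist-w} and \rulename{persist-fo}, the rules of the two systems are identical, so there is nothing to check. The only rules that differ are the three store-buffer propagation rules, and in each of them the premises of \PTSOsynnnI are \emph{strictly stronger} than those of \PTSOsynnI: \rulename{prop-w} and \rulename{prop-fl} in \PTSOsynnnI fire only on the head $\BuffI(\tid)=\lab\cdot\buffI$ of a store buffer, which is the special case $\buffI_1=\epsilon$ of the corresponding, more permissive, \PTSOsynnI rules (and the side condition $\PbuffI(\loc)=\epsilon$ for \rulename{prop-fl} is common to both systems), while \rulename{prop-fo} in \PTSOsynnnI additionally requires $\ifolab{\loc}{\_}\nin\buffI_1$, which only shrinks the set of enabled \PTSOsynnI \rulename{prop-fo} steps. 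Crucially, in each case the \emph{effect} on the state (the removal from the store buffer together with the entry appended to the relevant per-location persistence buffer --- $\ivale{\id}$ for \rulename{prop-w}, $\ifotlabp{\tid}{\id}$ for \rulename{prop-fo}, and nothing for \rulename{prop-fl}) is the same in the two systems, so the \PTSOsynnnI step is not merely enabled in \PTSOsynnI but leads to the very same successor state.

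From the transition-relation inclusion, a trivial induction on $\size{\trI}$ shows that whenever $\tup{\mem_0,\PbuffI_\Init,\BuffI_\Init,\ID_\Init} \asteplab{\trI}{\PTSOsynnnI} \tup{\mem,\PbuffI,\BuffI,\ID}$, the same label sequence $\trI$ drives \PTSOsynnI from the same initial state to the same final state; in particular every $\mem_0$-to-$\mem$ \PTSOsynnnI-trace is an $\mem_0$-to-$\mem$ \PTSOsynnI-trace, which is the claim. There is essentially no obstacle: this is exactly the easy direction, since passing from \PTSOsynnnI to \PTSOsynnI only removes constraints on propagation order. The single point that needs a moment of care is the \rulename{prop-fo} case --- one must verify not merely that \PTSOsynnI can fire \emph{some} \rulename{prop-fo} step whenever \PTSOsynnnI can, but that it is the \emph{same} step (same decomposition $\buffI_1\cdot\ifolab{\loc}{\id}\cdot\buffI_2$ and same appended marker), so that the two runs stay in lockstep rather than merely producing erasure-equivalent runs; no instrumentation or erasure reasoning is needed at all.
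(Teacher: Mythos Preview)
Your proposal is correct and takes essentially the same approach as the paper, which simply states ``Every transition of \PTSOsynnnI is also a transition of \PTSOsynnI.'' Your rule-by-rule verification of this transition-relation inclusion is precisely what underlies the paper's one-line proof, just spelled out in more detail.
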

\begin{proof}
Every transition of \PTSOsynnnI is also a transition of \PTSOsynnI.
\end{proof}

\begin{lemma}
\label{lem:PTSOsynnI_complete}
For every $\mem_0$-to-$\mem$ \PTSOsynnI-trace $\trI$,
there exists some $\lTP{\lFO}$-complete $\mem_0$-to-$\mem$ \PTSOsynnI-trace $\trI'$
such that $\erase(\trI')\lesssim\erase(\trI)$.
\end{lemma}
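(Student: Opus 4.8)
The plan is to build $\trI'$ from $\trI$ by deleting exactly the \emph{useless} \rulename{prop-fo} steps, \ie the transitions labelled $\lTP{\ifolab{\loc}{\id}}$ for which no later transition carries the identifier $\id$; equivalently (the only candidate being the matching \rulename{persist-fo} step) those whose flush-optimal marker is never drained from a persistence buffer. We argue by induction on the number of such \emph{unmatched} \rulename{prop-fo} steps in $\trI$. If there are none, $\trI$ is already $\lTP{\lFO}$-complete and we set $\trI' = \trI$. Otherwise fix one unmatched \rulename{prop-fo} step, occurring at index $i^*$, of thread $\tid$, for location $\loc$, with identifier $\id$, and let $\trI_1$ be obtained from $\trI$ by deleting this single step. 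We show $\trI_1$ is again an $\mem_0$-to-$\mem$ \PTSOsynnI-trace, has one fewer unmatched \rulename{prop-fo} step, and satisfies $\erase(\trI_1) = \erase(\trI)$ since the deleted step is silent; the induction hypothesis applied to $\trI_1$ then gives a $\lTP{\lFO}$-complete $\mem_0$-to-$\mem$ \PTSOsynnI-trace $\trI'$ with $\erase(\trI') \lesssim \erase(\trI_1) = \erase(\trI)$, as required.

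The crux is that deleting the step at $i^*$ changes every later configuration only in two ``inert'' ways: the entry $\ifolab{\loc}{\id}$ that \rulename{prop-fo} would have consumed stays in $\BuffI(\tid)$, and the marker $m = \ifotlabp{\tid}{\id}$ that \rulename{prop-fo} would have appended is absent from $\PbuffI(\loc)$. The key observation about the \emph{original} run is that, $m$ being unmatched, it is never removed, so from $i^*$ onward $\PbuffI(\loc)$ is permanently non-empty and permanently holds a $\tid$-marker. Consulting the \PTSOsynnI rules, this forbids, after $i^*$: any \rulename{prop-fl} for $\loc$, which needs $\PbuffI(\loc) = \epsilon$; and any \rulename{prop-sf}, \rulename{rmw}, \rulename{rmw-fail}, or \rulename{mfence} of thread $\tid$, each of which needs $\ifotlabp{\tid}{\_} \nin \PbuffI(\loca)$ for all $\loca$. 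Hence none of the transitions whose guard our modification could break --- precisely those testing $\BuffI(\tid) = \epsilon$, $\BuffI(\tid) = \isflab{\id'} \cdot \buffI$, or $\PbuffI(\loc) = \epsilon$ --- occurs after $i^*$ in $\trI$. For every surviving transition the modification only \emph{relaxes} its precondition: the leftover entry $\ifolab{\loc}{\id}$ is never a write, nor a flush for its own location, nor an sfence, so it blocks neither \rulename{prop-w} nor \rulename{prop-fo} nor a \rulename{prop-fl} for a location other than $\loc$; and removing $m$ only shortens $\PbuffI(\loc)$, keeping at its head whatever write or marker later heads it (in $\trI$, $m$ sits behind every entry of $\PbuffI(\loc)$ that is subsequently persisted, as $m$ itself is never persisted). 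Reads are untouched, since $\rdWtsosynn$ ignores $\lFO$-markers. Thus every transition of $\trI$ after $i^*$ remains enabled in $\trI_1$, making $\trI_1$ a valid \PTSOsynnI-trace; and since the deleted step is silent and neither modifies $\mem$ nor persists a write, the entire memory trajectory --- in particular the endpoints $\mem_0$ and $\mem$ --- is preserved. Finally, no previously matched \rulename{prop-fo} becomes unmatched: its \rulename{persist-fo} step is still present and, by the head-position remark above, still enabled; so the count of unmatched \rulename{prop-fo} steps strictly decreases and the induction closes.

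The main obstacle is this rule-by-rule enabledness check. Its entire weight rests on the one structural fact that an unmatched flush-optimal marker pins its per-location persistence buffer non-empty forever after --- which is exactly what a priori rules out the transitions (flushes of that location, and serializing operations of that thread) whose preconditions our bookkeeping adjustment might otherwise endanger, while all other transitions see only relaxed guards and all persist steps see only shorter buffers with unchanged heads.
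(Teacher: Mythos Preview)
Your proposal is correct and follows essentially the same approach as the paper: construct $\trI'$ by discarding the unmatched \rulename{prop-fo} steps. The paper does this in one shot and declares the verification ``straightforward'', whereas you proceed by induction on the number of unmatched steps and actually carry out the rule-by-rule enabledness check; the key structural observation you isolate --- that an unmatched $\lFOT$-marker keeps $\PbuffI(\loc)$ non-empty and blocks all $\tid$-serializing steps, so the only guards affected by your edit are ones that never fire anyway --- is exactly what makes the paper's ``straightforward'' claim true.
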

\begin{proof}
We take $\trI'$ to be the trace obtained from $\trI$ by discarding all transition labels
at an index $i$ with $\lTYP(\trI(i))=\lTP{\lFO}$
but $\lSN(\trI(j))\neq \lSN(\trI(i))$ for every $j>i$.
It is straightforward to verify that  $\trI'$ is a 
$\lTP{\lFO}$-complete $\mem_0$-to-$\mem$ \PTSOsynnI-trace,
as well as that $\erase(\trI')\lesssim\erase(\trI)$.
\end{proof}

\begin{proposition}
\label{prop:PTSOsynnI-commute}
$\tup{\alpha,\beta}$ \PTSOsynnI-commutes if $\lTYP(\beta)\in\set{\lP{\lW},\lP{\lFOT}}$
and one of the following conditions holds:
\begin{itemize}
\item 
$\lTYP(\alpha)\nin\set{\lP{\lW},\lP{\lFOT}}$
 and $\lSN(\alpha)\neq\lSN(\beta)$.
 \item  $\lTYP(\alpha)\in\set{\lP{\lW},\lP{\lFOT}}$
 and $\lLOC(\alpha)\neq\lLOC(\beta)$.
 \end{itemize}
\end{proposition}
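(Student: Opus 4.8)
The plan is to prove, via \cref{def:commutes}, the inclusion $\asteplab{\alpha}{\PTSOsynnI} \seq \asteplab{\beta}{\PTSOsynnI} \suq \asteplab{\beta}{\PTSOsynnI} \seq \asteplab{\alpha}{\PTSOsynnI}$ directly: assuming $q \asteplab{\alpha}{\PTSOsynnI} q' \asteplab{\beta}{\PTSOsynnI} q''$, I would exhibit a state $\tilde q$ with $q \asteplab{\beta}{\PTSOsynnI} \tilde q \asteplab{\alpha}{\PTSOsynnI} q''$. The structural observation that drives everything is that a persistence step $\beta$ (a \rulename{persist-w} or \rulename{persist-fo} step, say with $\lLOC(\beta)=\loc$) inspects only the head element of the single buffer $\PbuffI(\loc)$, and its effect is to pop that head from $\PbuffI(\loc)$ and (for \rulename{persist-w}) to overwrite the single cell $\mem(\loc)$ with the value of the popped entry; it never reads or writes $\ID$, any store buffer, or any $\PbuffI(\loc')$ with $\loc'\neq\loc$, and it only deletes from persistence buffers, never inserts. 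Note also that $\lSN(\beta)$ is the serial number of the head entry it consumes.

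Case~2, where $\alpha$ is itself a persistence step with $\lLOC(\alpha)\neq\lLOC(\beta)$, is immediate: $\alpha$ and $\beta$ act on disjoint memory cells and disjoint per-location buffers, and each one's enabling condition concerns only components that the other leaves untouched, so they commute. For Case~1 I would do a case analysis on the rule producing $\alpha$ --- issuing of a $\lW/\lFL/\lFO/\lSF$ label, \rulename{read}, \rulename{rmw}, \rulename{rmw-fail}, \rulename{mfence}, \rulename{prop-w}, \rulename{prop-fl}, \rulename{prop-fo}, or \rulename{prop-sf} --- checking in each case that $\beta$ is enabled at $q$, that $\alpha$ is enabled at $\tilde q$, and that the two orders reach the same $q''$. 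Several uniform sub-arguments dispatch most of this. Since $\beta$ touches neither $\ID$ nor any store buffer, every side condition and effect of $\alpha$ on those components is insensitive to $\beta$ (this already settles the issuing rules). Since $\beta$ only deletes from persistence buffers, any side condition of the form ``$\ifotlabp{\tid}{\_}\nin\PbuffI(\loca)$ for all $\loca$'' --- as in \rulename{rmw}, \rulename{rmw-fail}, \rulename{mfence}, \rulename{prop-sf} --- that holds at $q$ still holds at $\tilde q$. If $\alpha$ is \rulename{prop-fl} on $\lLOC(\beta)$ then the sequence $\alpha;\beta$ is already disabled at $q$ (it would require $\PbuffI(\lLOC(\beta))$ to be empty, whereas $\beta$ needs a head there), so there is nothing to prove. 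Finally, when $\alpha$ appends an entry to $\PbuffI(\lLOC(\beta))$ --- in \rulename{prop-w}, \rulename{prop-fo}, or \rulename{rmw} on $\lLOC(\beta)$ --- it appends at the tail, which commutes with popping the head \emph{provided} $\PbuffI(\lLOC(\beta))$ is non-empty before $\alpha$; and if it were empty, the entry $\alpha$ appended would be exactly the entry $\beta$ pops, forcing $\lSN(\alpha)=\lSN(\beta)$, contrary to the Case~1 hypothesis.

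The one genuinely delicate point, which I expect to be the main obstacle, is the invariance of read values when $\beta$ is \rulename{persist-w} of a head entry $\ivale{\id}$ of $\PbuffI(\loc)$ and $\alpha$ is a \rulename{read} (or \rulename{rmw}, \rulename{rmw-fail}) of the same location $\loc$: one must show that the value returned by $\rdWn$ is the same before and after $\beta$, so that literally the same transition label $\alpha$ is enabled in both orders and the resulting states coincide. This follows by unfolding $\rdWn$: if the reading thread's store buffer contains a write to $\loc$, the value comes from there and is untouched by $\beta$; otherwise it is the value of the last $\lW$-entry of $\PbuffI(\loc)$, or $\mem(\loc)$ if there is none --- and $\beta$ simultaneously writes $\ivale{\id}$'s value into $\mem(\loc)$ and pops $\ivale{\id}$ from the head, so this ``last-write-or-else-memory'' value is preserved (it equals $\ivale{\id}$'s value precisely when $\ivale{\id}$ was the last, hence the only, $\lW$-entry). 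Assembling these observations over all the rules for $\alpha$ produces the intermediate state $\tilde q$ and the required reordering, completing the proof.
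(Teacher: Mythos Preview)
The paper states this proposition without proof, treating it as a routine check; your case analysis on the rule producing $\alpha$ is exactly the natural argument and is correct. In particular, your handling of the one subtle point---the invariance of the value returned by $\rdWn$ when $\beta$ is a \rulename{persist-w} at the same location as a \rulename{read}/\rulename{rmw}/\rulename{rmw-fail} step $\alpha$---is right: popping the head $\ivale{\id}$ while simultaneously writing $\val$ into $\mem(\loc)$ preserves the ``last-write-in-buffer, else memory'' value.
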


\begin{lemma}
\label{lem:PTSOsynnI_comp_to_syn}
For every $\lTP{\lFO}$-complete $\mem_0$-to-$\mem$ \PTSOsynnI-trace $\trI$,
there exists some synchronous $\lTP{\lFO}$-complete $\mem_0$-to-$\mem$ \PTSOsynnI-trace $\trI'$
such that $\erase(\trI')=\erase(\trI)$.
\end{lemma}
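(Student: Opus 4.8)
The plan is to remove the delays one unit at a time, exploiting the fact (\cref{prop:PTSOsynnI-commute}) that a persistence step can be commuted leftward past any step that neither carries its serial number nor is a persistence step to its location. Concretely, I would argue by induction on the total delay $D(\trI)\defeq\sum_{i\in\dom{\trI}}d_\trI(i)$. If $D(\trI)=0$ then $\trI$ is already synchronous. Otherwise let $i$ be the least index with $d_\trI(i)>0$; then $\trI(i)$ is a prop-w, an RMW, or a prop-fo step with serial number $\id$ (for a prop-fo, $\lTP{\lFO}$-completeness guarantees a later matching step), its unique matching step is a persistence step $\trI(j)$ with $j>i+1$, and it is a \rulename{persist-w} step when $\trI(i)$ is a prop-w/RMW and a \rulename{persist-fo} step when $\trI(i)$ is a prop-fo. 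The inductive step relocates $\trI(j)$ to the position immediately after $i$. This zeroes $d_\trI(i)$ and decreases by one the delay of every other matched propagation whose persist lies beyond position $j$ (leaving all others unchanged, and never producing a negative delay, since the only serial-$\id$ steps are $\trI(i)$ and $\trI(j)$ plus possibly an earlier issuing step), so $D$ strictly decreases.

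To justify the relocation I would check that $\trI(j)$ \PTSOsynnI-commutes with each of $\trI(i+1),\dots,\trI(j-1)$, so it can be bubbled left past them without altering the endpoints of the run. By \cref{prop:PTSOsynnI-commute} this requires, for every $k$ with $i<k<j$: (i) if $\trI(k)$ is not a persistence step, then $\lSN(\trI(k))\neq\id$; and (ii) if $\trI(k)$ is a persistence step, then $\lLOC(\trI(k))\neq\lLOC(\trI(i))$. Claim (i) is immediate from serial-number uniqueness: the only transitions carrying serial $\id$ are the (possible) issuing step of the instruction, which precedes its propagation and hence occurs before $i$; the propagation/RMW step $\trI(i)$ itself; and the persistence step $\trI(j)$ --- none of these lies strictly between $i$ and $j$.

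Claim (ii) is the delicate point and I expect it to be the main obstacle. Let $\loc=\lLOC(\trI(i))$, and suppose for contradiction that $\trI(k)$ is a persistence step to $\loc$ with $i<k<j$. The per-location buffer $\PbuffI(\loc)$ is FIFO: entries are appended by propagation and RMW steps (writes and $\fotlabp{\tid}$ markers alike) and removed from the head by persistence steps. The entry persisted by $\trI(k)$ must therefore have been appended \emph{before} the entry associated with $\id$; the latter is appended at position $i$, and an entry appended at some position $\geq i$ would sit behind it and could not be removed before position $j>k$, so the former is appended at some position $i'<i$. But then $\trI(i')$ is a matched propagation to $\loc$ with $i'<i$, so by minimality of $i$ we get $d_\trI(i')=0$, \ie its persistence step occurs at $i'+1<i<k$; since each buffer entry is persisted exactly once, $\trI(k)$ cannot be that step --- contradiction. (This covers the mixed cases, \eg $\trI(k)$ a \rulename{persist-w} and $\trI(i)$ a prop-fo, since both entry kinds live in the single buffer $\PbuffI(\loc)$.) Hence $\trI(j)$ commutes down to position $i+1$, producing a \PTSOsynnI-trace $\trI_1$ with the same first and last states (so still $\mem_0$-to-$\mem$), the same labels in the same order up to displacement of one silent persistence step (so $\erase(\trI_1)=\erase(\trI)$), with persistence steps still occurring after their matching propagations (so still $\lTP{\lFO}$-complete), and with $D(\trI_1)<D(\trI)$. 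The induction then yields the required synchronous trace, noting finally that unmatched prop-w/RMW steps have delay $0$ by definition, so synchrony imposes no constraint on non-persisting writes.
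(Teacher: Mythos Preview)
Your argument is correct and follows the paper's strategy: induct on the total delay $\sum_i d_\trI(i)$, locate the least $i$ with positive delay, and use \cref{prop:PTSOsynnI-commute} to move the matching persistence step $\trI(j)$ leftwards. Your justification of the two commutation conditions is sound; in particular, your FIFO argument for claim~(ii)---that any same-location persistence step between $i$ and $j$ would have to correspond to a propagation at some $i'<i$, contradicting minimality of $i$---is exactly the right observation.

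The one difference worth noting is that the paper swaps only $\trI(j-1)$ with $\trI(j)$ and then appeals to the induction hypothesis, whereas you bubble $\trI(j)$ all the way to position $i{+}1$ in a single inductive step. Your variant is actually cleaner for the termination argument: after a full bubble, $d(i)$ drops from $j-i-1>0$ to $0$, every propagation in $(i,j)$ is shifted right together with its persist (if the persist is also in $(i,j)$) or has its delay decreased by one (if the persist lies beyond $j$), and nothing else moves---so the sum strictly decreases. By contrast, a single adjacent swap can leave the sum unchanged (e.g.\ when $\trI(j-1)$ is itself a persistence step to a different location whose propagation sits at some $i'$ with $i<i'<j-1$: then $d(i)$ decreases by one but $d(i')$ increases by one). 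The paper's one-swap proof therefore needs a finer well-founded measure than the bare sum to go through, which your all-the-way bubble sidesteps entirely.
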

\begin{proof}
By induction on the sum of delays in $\trI$ (\ie $\sum_i d_\trI(i)$).
If this sum is $0$, then we can take $\trI'=\trI$.
Otherwise, consider the minimal $1\leq i\leq \size{\trI}$ with $d_\trI(i)>0$.
Then, we have $\lTYP(\trI(i))\in\set{\lU,\lTP{\lW},\lTP{\lFO}}$
and $\lSN(\trI(j)) = \lSN(\trI(i))$ for $j=i+d_\trI(i)+1$.
Following \PTSOsynnI's transitions, it must be the case that 
$\lLOC(\trI(j))=\lLOC(\trI(i))$,
$\lTYP(\trI(j))=\lP{\lW}$ if $\lTYP(\trI(i))\in\set{\lU,\lTP{\lW}}$,
and $\lTYP(\trI(j))=\lP{\lFOT}$ if $\lTYP(\trI(i))=\lTP{\lFO}$.
Now, it is straightforward to verify that $\tup{\trI(j-1),\trI(j)}$ 
must satisfy one of the conditions in 
\cref{prop:PTSOsynnI-commute},
and so this pair \PTSOsynnI-commutes.
The resulting $\lTP{\lFO}$-complete $\mem_0$-to-$\mem$ \PTSOsynnI-trace
has smaller sum of delays,
and the claim follows by applying the induction hypothesis.
\end{proof}

\begin{lemma}
\label{lem:PTSOsynnI_PTSOsynI}
For every synchronous $\lTP{\lFO}$-complete $\mem_0$-to-$\mem$ \PTSOsynnI-trace $\trI$,
there exists some $\mem_0$-to-$\mem$ \PTSOsynI-trace $\trI'$
such that $\erase(\trI')=\erase(\trI)$.
\end{lemma}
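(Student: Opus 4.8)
The plan is to obtain $\trI'$ from $\trI$ by deleting every persist-flush-optimal step, i.e., every $\trI(i)$ with $\lTYP(\trI(i)) = \lP{\lFOT}$. Such labels lie in $\PTSOsynnI.\makeinst{\lSigma}$, hence $\erase$ maps them to $\epsl$ and discards them anyway (\cref{def:erasure_trace}); so $\erase(\trI') = \erase(\trI)$ is immediate, and the whole content of the lemma is that $\trI'$ is an $\mem_0$-to-$\mem$ \PTSOsynI-trace.

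First I would prove the key structural fact about synchronous $\lTP{\lFO}$-complete \PTSOsynnI-traces: if $\trI(i)$ is a \rulename{prop-fo} step, say with label $\itidlab{\tid}{\lTP{\ifolab{\loc}{\id}}}$, then (a) the state reached just before $\trI(i)$ has $\PbuffI(\loc) = \epsilon$; (b) $\trI(i+1)$ is exactly the matching \rulename{persist-fo} step $\ipfotlab{\tid}{\loc}{\id}$; and (c) the state reached just after $\trI(i+1)$ again has $\PbuffI(\loc) = \epsilon$. Indeed, $\lTP{\lFO}$-completeness plus uniqueness of identifiers guarantees a later step with identifier $\id$, which (the issue step being earlier) must be the \rulename{persist-fo} step; synchronicity ($d_\trI(i) = 0$) puts it at $i+1$; and since \rulename{prop-fo} appends $\ifotlabp{\tid}{\id}$ to the \emph{tail} of $\PbuffI(\loc)$ while \rulename{persist-fo} consumes it from the \emph{head}, the latter can fire at $i+1$ only if $\PbuffI(\loc)$ was empty at $i$. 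A direct corollary is that $\ifotlabp{\_}{\_}$-markers are purely transient: at any index of $\trI$ that is \emph{not} immediately after a \rulename{prop-fo} step, no persistence buffer contains such a marker; in particular the \PTSOsynnI-state reached at any such index is literally a valid \PTSOsynI-state.

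The core of the proof is then a forward simulation in which the \PTSOsynI-state tracks exactly the marker-free \PTSOsynnI-states: the \PTSOsynI-state reached after the $k$-th step of $\trI'$ equals the \PTSOsynnI-state reached after the corresponding step of $\trI$, where a \rulename{prop-fo} step of $\trI'$ is put in correspondence with the \PTSOsynnI-position just \emph{past} its deleted \rulename{persist-fo} partner, and every other step with the same position. For each step of $\trI'$ I would check it is enabled in \PTSOsynI from this state: for \rulename{write/flush/flush-opt/sfence}, \rulename{read}, \rulename{prop-w}, \rulename{prop-fl}, \rulename{prop-sf}, and \rulename{persist-w} the premises coincide in the two systems once the states coincide; for \rulename{rmw}, \rulename{rmw-fail}, and \rulename{mfence} the \PTSOsynI premises are strictly weaker (they have no side condition on persistence buffers, which here contain no $\ifotlabp{\_}{\_}$-entries at all); and for \rulename{prop-fo} the \PTSOsynnI premises supply precisely the store-buffer conditions together with $\PbuffI(\loc)=\epsilon$ (by (a) above), which are exactly the premises of \PTSOsynI's \rulename{prop-fo}, while the resulting \PTSOsynI-state (obtained by only removing $\ifolab{\loc}{\id}$ from $\BuffI(\tid)$) agrees with the \PTSOsynnI-state reached after the matching \rulename{persist-fo}, since both leave $\PbuffI(\loc)=\epsilon$ and are otherwise identical. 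Finally, deleted \rulename{persist-fo} steps never modify $\mem$, so the run ends at non-volatile memory $\mem$, and $\trI'$ is an $\mem_0$-to-$\mem$ \PTSOsynI-trace as required.

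The main obstacle is the structural fact — chiefly part (a), namely that \emph{every} \rulename{prop-fo} of a synchronous $\lTP{\lFO}$-complete run fires against an empty persistence buffer (this is what lets the ``delete persist-fo'' transformation land inside the more restrictive \PTSOsynI), together with keeping the $\trI$/$\trI'$ position correspondence honest so that the simulation invariant only ever refers to marker-free \PTSOsynnI-states.
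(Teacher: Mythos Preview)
Your approach is correct and is precisely the paper's approach, elaborated in full detail: the paper's one-line proof says ``merge consecutive \rulename{prop-fo} and \rulename{persist-fo} steps in $\trI$ into one \rulename{prop-fo} step of \PTSOsynI, thus maintaining the persistence buffers without $\lFOT$-entries,'' which is exactly your deletion-of-\rulename{persist-fo} together with your structural fact (a)--(c). One tiny inaccuracy: \rulename{prop-sf} belongs with \rulename{rmw}/\rulename{rmw-fail}/\rulename{mfence} in your case split (its \PTSOsynI premise drops the $\lFOT$ side condition rather than coinciding), but this only strengthens your argument.
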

\begin{proof}
We obtain $\trI'$ by merging consecutive \rulename{prop-fo} and \rulename{persist-fo} steps
in $\trI$ into one \rulename{prop-fo} step of \PTSOsynI, thus maintaining the persistence buffers
without $\lFOT$-entries.
\end{proof}

\begin{lemma}[Step B.3]\label{lem:b3}
For every $\mem_0$-to-$\mem$ \PTSOsynnI-trace $\trI$, there exists some
$\mem_0$-to-$\mem$ \PTSOsynI-trace $\trI'$ such that
$\erase(\trI')\lesssim\erase(\trI)$.
\end{lemma}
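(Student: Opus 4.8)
The plan is to obtain $\trI'$ by composing the three auxiliary transformations established immediately above, each of which either trims the trace to a per-thread prefix or rewrites it while preserving its erasure and its $\mem_0$-to-$\mem$ endpoints. First I would apply \cref{lem:PTSOsynnI_complete} to the given $\mem_0$-to-$\mem$ \PTSOsynnI-trace $\trI$, obtaining a $\lTP{\lFO}$-complete $\mem_0$-to-$\mem$ \PTSOsynnI-trace $\trI_1$ with $\erase(\trI_1)\lesssim\erase(\trI)$; intuitively this discards the propagation steps of flush-optimals that are never persisted, which are exactly the ones that \PTSOsynI cannot replay (there $\lFO$ propagation is blocking, so only persisting flush-optimals survive as \PTSOsynI's synchronous flush-optimals).

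Next I would apply \cref{lem:PTSOsynnI_comp_to_syn} to $\trI_1$, yielding a synchronous $\lTP{\lFO}$-complete $\mem_0$-to-$\mem$ \PTSOsynnI-trace $\trI_2$ with $\erase(\trI_2)=\erase(\trI_1)$ — here the content is the commutation argument driven by \cref{prop:PTSOsynnI-commute}, sliding each persist step leftward past all intervening steps until it sits immediately after the propagation/RMW step that enqueued its persistence-buffer entry, decreasing $\sum_i d_{\trI}(i)$ at each step. Then I would apply \cref{lem:PTSOsynnI_PTSOsynI} to $\trI_2$, obtaining a $\mem_0$-to-$\mem$ \PTSOsynI-trace $\trI'$ with $\erase(\trI')=\erase(\trI_2)$, by collapsing each adjacent \rulename{prop-fo}/\rulename{persist-fo} pair of $\trI_2$ into the single blocking \rulename{prop-fo} step of \PTSOsynI (so \PTSOsynI's per-location persistence buffers never hold $\lFOT$-markers). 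Finally I would chain the relations $\erase(\trI')=\erase(\trI_2)=\erase(\trI_1)\lesssim\erase(\trI)$, using that $\lesssim$ is a preorder (reflexive and transitive on observable program traces, directly from \cref{def:per-thread}), to conclude $\erase(\trI')\lesssim\erase(\trI)$, as required.

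As for the main obstacle: for \cref{lem:b3} itself there is essentially none — it is a bookkeeping composition. The genuine difficulties live entirely in the three cited lemmas, and the sharpest is the reordering step \cref{lem:PTSOsynnI_comp_to_syn}: one must verify that the persistence-buffer well-formedness constraints of \PTSOsynnI are preserved under every commutation, i.e.\ that a persist step is never pulled ahead of a step it depends on (a persist to the same location, or the propagation/RMW that produced its entry). The $\lTP{\lFO}$-completeness secured in the first step is precisely what makes this work, since it guarantees that every flush-optimal capable of blocking an sfence-inducing step has a matching \rulename{persist-fo} later in the trace that can be moved adjacent to it.
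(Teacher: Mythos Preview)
Your proposal is correct and matches the paper's proof essentially step for step: the paper applies \cref{lem:PTSOsynnI_complete}, then \cref{lem:PTSOsynnI_comp_to_syn}, then \cref{lem:PTSOsynnI_PTSOsynI}, and chains the resulting (in)equalities exactly as you do. Your additional remarks about where the real work lies (in the cited lemmas, especially the commutation argument) are accurate and go beyond what the paper spells out here.
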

\begin{proof}
By \cref{lem:PTSOsynnI_complete}, 
there exists some $\lTP{\lFO}$-complete $\mem_0$-to-$\mem$ \PTSOsynnI-trace $\trI_c$
such that $\erase(\trI_c)\lesssim\erase(\trI)$.
Then, by \cref{lem:PTSOsynnI_comp_to_syn},
there exists a synchronous $\lTP{\lFO}$-complete $\mem_0$-to-$\mem$ \PTSOsynnI-trace $\trI_s$
such that $\erase(\trI_s)=\erase(\trI_c)$.
Then, by \cref{lem:PTSOsynnI_PTSOsynI}, 
there exists an $\mem_0$-to-$\mem$ \PTSOsynI-trace $\trI'$
such that $\erase(\trI')=\erase(\trI_s)$.
Now, since $\erase(\trI_c)\lesssim\erase(\trI)$,
$\erase(\trI_s)=\erase(\trI_c)$,
and $\erase(\trI')=\erase(\trI_s)$,
we have that $\erase(\trI')\lesssim\erase(\trI)$,
and the claim follows.
\end{proof}

\begin{lemma}[Step B.4]\label{lem:b4}
For every $\mem_0$-to-$\mem$ \PTSOsynI-trace $\trI$, there exists some
$\mem_0$-to-$\mem$ \PTSOI-trace $\trI'$ such that
$\erase(\trI')=\erase(\trI)$.
\end{lemma}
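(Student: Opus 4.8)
The plan is to prove Step B.4 by a \emph{forward simulation} of \PTSOsynI by \PTSOI that replays every observable (issuing) transition of \PTSOsynI verbatim and inserts only silent \PTSOI steps in between. First I would set up a simulation relation $R$ pairing a \PTSOsynI-state $\tup{\mem,\PbuffI,\BuffI,\ID}$ with a \PTSOI-state $\tup{\mem,\pbuffI,\BuffI,\ID}$ — identical non-volatile memory, store buffers, and used-identifier set — subject to: (a) $\pbuffI$ contains no $\lPER$-marker (no entry of the form $\iperlab{\loc}{\id}$); and (b) for every $\loc$, the subsequence of $\pbuffI$ formed by the write-entries $\iiwlab{\loc}{\val}{\id}$ matches $\PbuffI(\loc)$ entry for entry (an $\iiwlab{\loc}{\val}{\id}$ corresponding to the per-location entry $\ivale{\id}$; recall that in \PTSOsynI each $\PbuffI(\loc)$ holds only writes to $\loc$). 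Since all buffers are empty initially, the initial states of the two systems are $R$-related and start from the same non-volatile memory.

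Next I would verify that each \PTSOsynI transition can be matched so as to preserve $R$. The issuing rules \rulename{write/flush/flush-opt/sfence}, \rulename{mfence}, \rulename{rmw-fail}, and the silent \rulename{prop-sf} are mirrored by the textually identical \PTSOI rule (their store-buffer manipulations coincide and they leave (a)–(b) untouched). For \rulename{read}, and for the read side-condition of \rulename{rmw}/\rulename{rmw-fail}, the corresponding \PTSOI step is enabled because $\rdWtsosyn{\mem}{\PbuffI(\loc)}{\BuffI(\tid)}(\loc)$ and $\rdWtso{\mem}{\pbuffI}{\BuffI(\tid)}(\loc)$ agree: the store-buffer clause agrees since $\BuffI$ is shared, and the persistence clause agrees by (b), since the last write to $\loc$ in $\pbuffI$ is the last entry of $\PbuffI(\loc)$. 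For \rulename{prop-w} (and the write side of \rulename{rmw}), \PTSOI appends the same write-entry at the tail of its global buffer — the store-buffer side-conditions of the two \rulename{prop-w} rules are literally the same, so the decomposition $\BuffI(\tid)=\buffI_1\cdot\iiwlab{\loc}{\val}{\id}\cdot\buffI_2$ used by \PTSOsynI works for \PTSOI — which extends the $\loc$-subsequence by exactly that entry, preserving (b); (a) is untouched. For \rulename{persist-w}, the entry popped from the head of $\PbuffI(\loc)$ corresponds, by (b), to the first write-to-$\loc$ entry of $\pbuffI$, and by (a) no $\lPER$-marker precedes it, so \PTSOI's \rulename{persist-w} applies, updates $\mem$ identically, and preserves (a)–(b).

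The one genuinely delicate case — and the main obstacle — is \rulename{prop-fl}/\rulename{prop-fo}. In \PTSOsynI these rules \emph{require} $\PbuffI(\loc)=\epsilon$ and merely delete the flush(-opt) from the store buffer, leaving $\PbuffI$ unchanged, whereas \PTSOI's \rulename{prop-fl}/\rulename{prop-fo} push a $\lPER$-marker onto the global buffer, which would break (a). Here I would match the single \PTSOsynI step by \emph{two} \PTSOI steps: first \PTSOI's \rulename{prop-fl}/\rulename{prop-fo} (legal, since the store-buffer side-conditions are again identical to \PTSOsynI's), producing $\pbuffI\cdot\iperlab{\loc}{\id}$; then \PTSOI's \rulename{persist-per} on that marker. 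The latter is enabled precisely because (a) makes the new marker the unique (hence first) $\lPER$-marker, and (b) together with $\PbuffI(\loc)=\epsilon$ guarantees that no $\iiwlab{\loc}{\_}{\_}$ precedes it — so the side-condition $\iiwlab{\loc}{\_}{\_},\iperlab{\_}{\_}\nin\pbuffI_1$ holds with $\pbuffI_1=\pbuffI$ (the buffer before the step) and $\pbuffI_2=\epsilon$; this step restores $\pbuffI$ to its former value, re-establishing (a)–(b) and touching neither $\mem$ nor $\BuffI$ nor $\ID$. Both inserted steps are silent. The conceptual point is that \PTSOsynI's synchronous flush discipline — a flush propagates only after the target per-location buffer has drained — supplies exactly the precondition needed to immediately discharge the $\lPER$-marker that \PTSOI is forced to create.

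Finally, iterating the simulation along $\trI$ yields a \PTSOI-run whose observable labels are exactly those of $\trI$ and whose remaining steps are silent, hence a \PTSOI-trace $\trI'$ with $\erase(\trI')=\erase(\trI)$; and since $R$ includes non-volatile memory equality, if $\trI$ is $\mem_0$-to-$\mem$ then so is $\trI'$. This is the claimed trace.
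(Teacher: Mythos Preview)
Your proposal is correct and follows essentially the same approach as the paper's proof: after each \rulename{prop-fl}/\rulename{prop-fo} step, immediately insert a \PTSOI \rulename{persist-per} step to discharge the newly created $\lPER$-marker, leaving all other steps unchanged. The paper states this transformation in one line and asserts that all steps are ``trivially enabled''; you unpack the same idea as an explicit forward simulation with the invariant (no $\lPER$-markers in $\pbuffI$; per-location write subsequences agree), which makes the enabledness checks transparent, but the underlying construction is identical.
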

\begin{proof}[Proof sketch]
We transform $\trI$ into $\trI'$ by putting a persist step
$\lP{\iperlab{\loc}{\id}}$ after each occurrence of $\lTP{\ifllab{\loc}{\id}}$
or $\lTP{\ifolab{\loc}{\id}}$. All of the steps in $\trI'$ are trivially
enabled in \PTSOI by construction, so $\trI'$ is an $\mem_0$-to-$\mem$
\PTSOI-trace.
\end{proof}

\subsubsection{Helper Lemmas for (C) and (D)}

To prove (C) and (D), we introduce several 
trace transformation properties for persisting synchronously.

\begin{proposition}
\label{prop:PTSOI-commute}
$\tup{\alpha,\beta}$ \PTSOI-commutes if  $\lTYP(\beta)\in\set{\lP{\lW},\lP{\lPER}}$
and one of the following conditions holds:
\begin{itemize}
\item 
$\lTYP(\alpha)\nin\set{\lP{\lW},\lP{\lPER}}$
 and $\lSN(\alpha)\neq\lSN(\beta)$.
 \item  $\lTYP(\alpha)=\lP{\lW}$
 and $\lLOC(\alpha)\neq\lLOC(\beta)$.
\end{itemize}
\end{proposition}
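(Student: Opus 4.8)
The plan is to unfold \cref{def:commutes}: assuming $\tup{\mem,\pbuffI,\BuffI,\ID} \asteplab{\alpha}{\PTSOI} q_1 \asteplab{\beta}{\PTSOI} q_2$, we must produce an intermediate state $q_1'$ with $\tup{\mem,\pbuffI,\BuffI,\ID} \asteplab{\beta}{\PTSOI} q_1' \asteplab{\alpha}{\PTSOI} q_2$. The key starting observation is that a $\beta$ with $\lTYP(\beta)\in\set{\lP{\lW},\lP{\lPER}}$ is a \rulename{persist-w}/\rulename{persist-per} step: it leaves $\BuffI$ and $\ID$ untouched, depends only on $\mem$ and $\pbuffI$, deletes exactly one entry of $\pbuffI$ (the first $\lLOC(\beta)$-write for \rulename{persist-w}, the first $\lPER$-marker for \rulename{persist-per}), and --- for \rulename{persist-w} --- updates $\mem$ at a single location. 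So the whole argument reduces to bookkeeping about how one single-entry deletion, together with at most one single-cell memory update, interacts with whatever $\alpha$ does.

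I would first isolate an invariance lemma: a \rulename{persist-w} or \rulename{persist-per} step does not change $\rdWtso{\mem}{\erase(\pbuffI)}{\buff}(\loc)$ for any location $\loc$ and store buffer $\buff$. For \rulename{persist-per} this is immediate, since $\rdWtson$ ignores $\lPER$-markers and $\mem$ is untouched; for \rulename{persist-w} persisting a write of $\val$ to $\loc'$, one checks the case $\loc\neq\loc'$ (the deleted write is invisible to a $\loc$-lookup and $\mem(\loc)$ is unchanged) and the case $\loc=\loc'$ (the deleted write was the first $\loc$-write in $\pbuffI$; if it was the only one, the memory update compensates exactly, and otherwise the last $\loc$-write, which is what $\rdWtson$ returns, is unaffected). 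This lemma discharges all side conditions of $\alpha$ that mention $\mem$ or $\pbuffI$ outside the propagation and persistence rules --- namely the read-value conditions of \rulename{read}, \rulename{rmw}, and \rulename{rmw-fail}.

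Then I would split on the rule instance generating $\alpha$. In the cases \rulename{write/flush/flush-opt/sfence}, \rulename{read}, \rulename{rmw-fail}, \rulename{mfence}, and \rulename{prop-sf}, the step $\alpha$ never touches $\pbuffI$ (only $\BuffI$ and/or $\ID$), so together with the invariance lemma and the hypothesis $\lSN(\alpha)\neq\lSN(\beta)$ (which keeps the fresh-identifier side condition of $\alpha$ valid after $\beta$) one sees that $\alpha$ and $\beta$ act on disjoint parts of the state and commute trivially. In the cases \rulename{rmw}, \rulename{prop-w}, \rulename{prop-fl}, and \rulename{prop-fo}, $\alpha$ \emph{appends} exactly one entry $e_\alpha$ to the tail of $\pbuffI$ (with only $\BuffI$/$\ID$ side effects, which are inert for $\beta$); since $\lSN(\alpha)\neq\lSN(\beta)$, the entry deleted by $\beta$ already lies in the original $\pbuffI$, so its guarding prefix is a prefix of the original $\pbuffI$ and $\beta$ is already enabled before $\alpha$, while appending at the tail commutes with a deletion strictly before it. The remaining case, $\lTYP(\alpha)=\lP{\lW}$ with $\lLOC(\alpha)\neq\lLOC(\beta)$, is the one I expect to be the main obstacle: here both $\alpha$ and $\beta$ delete an entry of $\pbuffI$ and update disjoint cells of $\mem$, so one has to case on the position of $\beta$'s entry relative to $e_\alpha$. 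The crux is that $e_\alpha$ is a write to $\lLOC(\alpha)\neq\lLOC(\beta)$, hence neither a $\lPER$-marker nor a $\lLOC(\beta)$-write, so inserting or deleting it cannot break the ``no preceding $\lPER$-marker, no preceding same-location write'' guard of the other persistence step; symmetrically, $\beta$'s entry cannot be a $\lPER$-marker lying before $e_\alpha$ (that would already disable $\alpha$), so deleting it first keeps $\alpha$ enabled; and the two memory updates commute because $\lLOC(\alpha)\neq\lLOC(\beta)$. I would also remark that the combinations excluded by the statement ($\lTYP(\alpha)=\lP{\lPER}$, or equal locations) are genuinely necessary, since removing a $\lPER$-marker or an earlier same-location write can \emph{unblock} $\beta$, breaking the inclusion.
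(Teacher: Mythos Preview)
The paper states this proposition without proof, treating it as a routine verification; your case analysis is correct and is exactly the argument one would write out. One small inaccuracy: your parenthetical that $\lSN(\alpha)\neq\lSN(\beta)$ ``keeps the fresh-identifier side condition of $\alpha$ valid after $\beta$'' is not the right reason---persistence steps leave $\ID$ untouched, so freshness is automatic---but you use the hypothesis correctly later, to ensure that the entry $\beta$ deletes already lies in the original $\pbuffI$ rather than being the one $\alpha$ just appended.
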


\begin{lemma}
\label{lem:PTSOI_comp_to_syn}
For every $\set{\lTP{\lFL},\lTP{\lFO}}$-complete $\mem_0$-to-$\mem$ \PTSOI-trace $\trI$,
there exists some synchronous $\set{\lTP{\lFL},\lTP{\lFO}}$-complete $\mem_0$-to-$\mem$ \PTSOI-trace $\trI'$
such that $\erase(\trI')=\erase(\trI)$.
\end{lemma}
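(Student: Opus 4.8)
The plan is to establish \cref{lem:PTSOI_comp_to_syn} by the same induction-on-total-delay strategy already used for the analogous \PTSOsynnI statement in \cref{lem:PTSOsynnI_comp_to_syn}, now instantiated for \PTSOI with the commutation facts collected in \cref{prop:PTSOI-commute}. Concretely, I would induct on $\sum_{i} d_\trI(i)$, the sum of delays of the trace $\trI$ (see \cref{def:delay}, specialized to \PTSOI, where the relevant propagation types are $\set{\lU,\lTP{\lW},\lTP{\lFO},\lTP{\lFL}}$). If this sum is $0$, then $\trI$ is already synchronous by \cref{def:synchronous}, so we take $\trI'=\trI$ and there is nothing to do.

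For the inductive step, suppose $\sum_i d_\trI(i)>0$ and pick the minimal index $i$ with $d_\trI(i)>0$. By definition of the delay function, $\lTYP(\trI(i))\in\set{\lU,\lTP{\lW},\lTP{\lFO},\lTP{\lFL}}$, and letting $j=i+d_\trI(i)+1$ we have $\lSN(\trI(j))=\lSN(\trI(i))$. Inspecting the transitions of \PTSOI (\cref{fig:PTSOI}), the matching later step with the same identifier must be a persist step: $\lTYP(\trI(j))=\lP{\lW}$ when $\lTYP(\trI(i))\in\set{\lU,\lTP{\lW}}$, and $\lTYP(\trI(j))=\lP{\lPER}$ when $\lTYP(\trI(i))\in\set{\lTP{\lFO},\lTP{\lFL}}$ (flushes and flush-optimals both push $\lPER$-markers into the persistence buffer, which are later consumed by \rulename{persist-per}); in the write case moreover $\lLOC(\trI(j))=\lLOC(\trI(i))$. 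Now I would slide $\trI(j)$ leftwards one position at a time, repeatedly swapping it with $\trI(j-1),\trI(j-2),\dots$ until it sits immediately after $\trI(i)$. Each such swap is of the form $\tup{\trI(k),\trI(j)}$ with $\lTYP(\trI(j))\in\set{\lP{\lW},\lP{\lPER}}$; because $i$ was chosen minimal, every intermediate label $\trI(k)$ for $i<k<j$ has $\lSN(\trI(k))\neq\lSN(\trI(j))$, and if $\trI(k)$ is itself a persist step it has either a different location or (being a $\lP{\lPER}$ that got overtaken) again a different identifier, so one of the two conditions of \cref{prop:PTSOI-commute} is met and the pair \PTSOI-commutes. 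After these swaps we obtain a \PTSOI-trace with the same erasure, the same start and end states (hence still $\mem_0$-to-$\mem$ and still $\set{\lTP{\lFL},\lTP{\lFO}}$-complete, since completeness only refers to the existence of a later matching label, which is preserved), and strictly smaller total delay. The induction hypothesis then finishes the argument.

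The one point that needs a little care — and I expect it to be the main obstacle, mild as it is — is verifying that in every intermediate swap the pair genuinely falls under \cref{prop:PTSOI-commute}, i.e.\ ruling out the excluded case where $\trI(k)$ is a persist step to the \emph{same} location as $\trI(j)$ with a \emph{different} identifier. The resolution is that such a configuration cannot arise at this earliest delayed index: if $\trI(i)$ is a $\lTP{\lW}$/$\lU$ at location $\loc$ whose matching \rulename{persist-w} is delayed, then by the FIFO discipline of \PTSO's single persistence buffer on a fixed location (entries to $\loc$ persist in propagation order, and $\lPER$-markers cannot be bypassed by same-location writes), no later write or marker touching $\loc$ can have persisted before $\trI(j)$ does; a symmetric observation handles $\trI(i)$ being a $\lTP{\lFL}$/$\lTP{\lFO}$ whose $\lP{\lPER}$ is delayed. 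So the only persist steps that could appear strictly between $\trI(i)$ and $\trI(j)$ either concern other locations or are themselves delayed-and-overtaken markers with fresh identifiers, both of which are permitted by \cref{prop:PTSOI-commute}. With this observation in place the commutation chain goes through and the lemma follows; the construction is entirely parallel to \cref{lem:PTSOsynnI_comp_to_syn}, just with \cref{prop:PTSOI-commute} in place of \cref{prop:PTSOsynnI-commute} and with the extra $\lTP{\lFL}$ case to track.
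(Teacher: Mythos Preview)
Your overall strategy matches the paper's --- induction on $\sum_i d_\trI(i)$, locate the minimal $i$ with $d_\trI(i)>0$, and commute the matching persist step $\trI(j)$ leftwards --- but there is a genuine gap in your case analysis. You invoke \cref{prop:PTSOI-commute} for every swap, claiming that when the left neighbour $\trI(j-1)$ is a persist step, ``one of the two conditions of \cref{prop:PTSOI-commute} is met''. This is false when $\lTYP(\trI(j-1))=\lP{\lPER}$: the first clause of \cref{prop:PTSOI-commute} requires $\lTYP(\alpha)\nin\set{\lP{\lW},\lP{\lPER}}$, and the second clause requires $\lTYP(\alpha)=\lP{\lW}$ only. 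A $\lP{\lPER}$ as $\alpha$ satisfies neither clause, regardless of location or identifier. Your phrase ``being a $\lP{\lPER}$ that got overtaken, again a different identifier'' suggests you are reading the first clause as applying, but it does not.

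The paper treats this situation as a separate case and argues directly, without the commutation proposition. Using the minimality of $i$, the matching propagation index $i'$ of $\trI(j-1)$ satisfies $i'>i$, so the $\lPER$-entry being removed at $j-1$ was appended to the persistence buffer \emph{after} the entry being removed at $j$. Two consequences follow from the \PTSOI rules (\cref{fig:PTSOI}): first, $\trI(j)$ must be $\lP{\lW}$ rather than $\lP{\lPER}$ (since $\lPER$-markers are removed in strict FIFO order), and second, $\lLOC(\trI(j-1))\neq\lLOC(\trI(j))$ (a $\lPER$-marker cannot be removed while a same-location write precedes it). With these facts, the $j$-write entry sits \emph{before} the $(j{-}1)$-$\lPER$ entry in the buffer, so the \rulename{persist-w} precondition for $\trI(j)$ is already satisfied before $\trI(j-1)$ executes, and the swap goes through by direct inspection. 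You need to add this argument; once you do, your proof is essentially the paper's (the paper swaps one step per induction round rather than sliding all the way, but that difference is inessential).
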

\begin{proof}
By induction on the sum of delays in $\trI$ (\ie $\sum_i d_\trI(i)$).
If this sum is $0$, then we can take $\trI'=\trI$.
Otherwise, consider the minimal $1\leq i\leq \size{\trI}$ with $d_\trI(i)>0$.
Then, we have $\lTYP(\trI(i))\in\set{\lU,\lTP{\lW},\lTP{\lFL},\lTP{\lFO}}$
and $\lSN(\trI(j)) = \lSN(\trI(i))$ for $j=i+d_\trI(i)+1$.
Following \PTSOI's transitions, it must be the case that 
$\lLOC(\trI(j))=\lLOC(\trI(i))$,
$\lTYP(\trI(j))=\lP{\lW}$ if $\lTYP(\trI(i))\in\set{\lU,\lTP{\lW}}$,
and $\lTYP(\trI(j))=\lP{\lPER}$ if $\lTYP(\trI(i))\in\set{\lTP{\lFL},\lTP{\lFO}}$.
Consider the possible cases:
\begin{enumerate}
\item $\lTYP(\trI(j-1))\nin\set{\lP{\lW},\lP{\lPER}}$:
Then, by \cref{prop:PTSOI-commute}, $\tup{\trI(j-1),\trI(j)}$ \PTSOI-commutes.
The resulting $\lTP{\lFO}$-complete $\mem_0$-to-$\mem$ \PTSOI-trace
has smaller sum of delays, and the claim follows by applying the induction hypothesis.
\item $\lTYP(\trI(j-1))=\lP{\lW}$:
The minimality of $i$ ensures that the index $i'$ with $\lSN(\trI(i'))=\lSN(\trI(j-1))$
satisfies $i' \geq i$.
Following \PTSOI's transitions, we must have $\lLOC(\trI(j-1))\neq\lLOC(\trI(j))$
(writes to the same location persist in their propagation order).
Then, again, 
the claim follows using \cref{prop:PTSOI-commute} and the induction hypothesis.
\item $\lTYP(\trI(j-1))=\lP{\lPER}$:
The minimality of $i$ ensures that the index $i'$ with $\lSN(\trI(i'))=\lSN(\trI(j-1))$
satisfies $i' \geq i$.
Following \PTSOI's transitions, we must have $\lTYP(\trI(j))=\lP{\lW}$
($\lPER$-entries to the same location are removed from the persistence buffer in their propagation order),
as well as $\lLOC(\trI(j-1))\neq\lLOC(\trI(j))$
(a $\lPER$-entry cannot be removed from the persistence buffer if there is a preceding write entry to the same location).
In this case we can swap  $\trI(j-1)$ and $\tr(j)$, and, as before 
obtain a $\lTP{\lFO}$-complete $\mem_0$-to-$\mem$ \PTSOI-trace,
so the claim follows by the induction hypothesis.
\qedhere
\end{enumerate}
\end{proof}

\begin{lemma}[Steps C.3 and D.3]
\label{lem:PTSOsynnI_refines_PTSOsynnnI}
For every $\mem_0$-to-$\mem$ \PTSOsynnI-trace $\trI$, there exists some
$\mem_0$-to-$\mem$ \PTSOsynnnI-trace $\trI'$ such that
$\erase(\trI')=\erase(\trI)$.
\end{lemma}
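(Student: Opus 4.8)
The plan is to prove the lemma by an explicit forward simulation of \PTSOsynnI by \PTSOsynnnI. The crucial observation is that the two systems (\cref{fig:PTSOsynnI} and \cref{fig:PTSOsynnnI}) differ \emph{only} in the store-buffer-propagation rules: \PTSOsynnnI insists that \rulename{prop-w} and \rulename{prop-fl} fire from the \emph{head} of a store buffer and that \rulename{prop-fo} never overtakes a pending flush-optimal of the same location, whereas in \PTSOsynnI writes and flushes may overtake arbitrary pending flush-optimals and same-location flush-optimals may propagate out of order; all issuing steps, the persistence machinery (\rulename{persist-w}, \rulename{persist-fo}, both acting on the head of a per-location buffer), and the synchronous blocking conditions of \rulename{prop-fl}, \rulename{prop-sf}, \rulename{rmw}, \rulename{rmw-fail}, and \rulename{mfence} are identical. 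Hence \PTSOsynnnI only has to ``catch up'' on flush-optimal propagation: whenever \PTSOsynnI is about to run a \rulename{prop-w}/\rulename{prop-fl} step whose target entry is not yet at the head of the relevant store buffer, \PTSOsynnnI will first eagerly propagate the leading block of flush-optimals (head-first), and whenever one of those eagerly created $\lFOT$-markers would block a step \PTSOsynnnI must mimic, that marker is removed by an eager \rulename{persist-fo}.

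Concretely, I would process $\trI$ from left to right, maintaining a \PTSOsynnnI state together with a simulation invariant relating it to the \PTSOsynnI state reached by the processed prefix: (a) the two non-volatile memories $\mem^\dagger,\mem$ are equal and the two id-sets $\ID$ are equal; (b) for each thread $\tid$, $\BuffI^\dagger(\tid)$ is obtained from $\BuffI(\tid)$ by deleting some flush-optimal entries, with all entries of the remaining kinds occurring in the same order in both; (c) for each location $\loc$, $\PbuffI^\dagger(\loc)$ and $\PbuffI(\loc)$ carry the same sub-sequence of \emph{write} entries, and differ only in their $\lFOT$-markers — \PTSOsynnnI may carry extra markers for flush-optimals it has already propagated but \PTSOsynnI has not, and it may be missing markers it has already removed by an early \rulename{persist-fo}. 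Since the value read from a location by a thread depends only on the store-buffer write entries, the persistence-buffer write entries, and the non-volatile memory, conditions (a)--(c) guarantee that every issuing step of \PTSOsynnI can be mirrored verbatim by \PTSOsynnnI (and every \rulename{persist-w} by the same write), while all the catch-up steps are silent, so $\erase(\trI')=\erase(\trI)$; and by (a) the target memory is reached at the end.

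The technical heart is to discharge the simulation step for each transition kind of \PTSOsynnI. The key sub-claims are: (1) a leading block of flush-optimals in a \PTSOsynnnI store buffer can always be drained head-first, since after the strictly earlier ones are removed each is at the head and hence unblocked; together with \PTSOsynnI's side condition that no write/flush/sfence sits ahead of the entry being propagated by \rulename{prop-w}/\rulename{prop-fl}, this shows \PTSOsynnnI can always bring the required entry to the head and that the cleared block contains no sfence (so (b) needs no deleted sfences); (2) whenever \PTSOsynnnI needs an empty per-location buffer (to mimic \rulename{prop-fl}) or needs to persist the head write of a per-location buffer (to mimic \rulename{persist-w}), invariant (c) plus the side condition that produced this situation in \PTSOsynnI ensures the only obstructions in $\PbuffI^\dagger(\loc)$ are $\lFOT$-markers lying \emph{before} every write entry, so they can be cleared by \rulename{persist-fo} without persisting any write, keeping $\mem^\dagger=\mem$; (3) whenever \PTSOsynnI runs a synchronous step of thread $\tid$ (\rulename{prop-sf}, \rulename{rmw}, \rulename{rmw-fail}, \rulename{mfence}), every $\ifotlabp{\tid}{\_}$-marker in \PTSOsynnI's buffers would block that very step, so none is present; a short positional argument on $\tid$'s store buffer (an ``extra'' $\ifotlabp{\tid}{\_}$-marker can only come from a flush-optimal strictly ahead of some already-propagated entry of $\tid$, hence strictly ahead of any pending sfence of $\tid$, and such a flush-optimal has by then been propagated by \PTSOsynnI too and its marker persisted by \PTSOsynnI, hence already mirrored and gone in \PTSOsynnnI) shows that no extra $\ifotlabp{\tid}{\_}$-marker is present in \PTSOsynnnI's buffers either, so the step can be mirrored.

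I expect the main obstacle to be pinning down invariant (c) precisely enough to survive all step kinds, in particular the interaction between eagerly propagating flush-optimals — which may place $\lFOT$-markers arbitrarily deep inside a persistence buffer — and the hard requirement that \PTSOsynnnI never persist a write ahead of \PTSOsynnI, which forbids most eager \rulename{persist-fo}'s and is exactly what makes sub-claims (2) and (3) non-trivial: one has to argue that in precisely the situations where an early \rulename{persist-fo} is needed, the markers in question have no write in front of them. The instrumentation/id bookkeeping (keeping $\ID$ and the identifiers attached to buffer entries synchronized) is routine but must be threaded through the construction.
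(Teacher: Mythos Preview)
Your plan is essentially the paper's proof: a forward simulation in which \PTSOsynnnI eagerly propagates and persists flush-optimals. The paper, however, makes \PTSOsynnnI \emph{maximally} eager---after every simulated step it executes all enabled \rulename{prop-fo} and \rulename{persist-fo} transitions (so when \PTSOsynnI performs a \rulename{prop-fo} or \rulename{persist-fo}, \PTSOsynnnI stays put). This yields the crisp invariant that in \PTSOsynnnI every $\ifolab{\loc}{\_}$ left in a store buffer is blocked by a preceding $\iiwlab{\loc}{\_}{\_}/\ifllab{\loc}{\_}/\isflab{\_}$, and every $\ifotlabp{\_}{\_}$ left in a persistence buffer has a write before it; from there all the synchronous-step cases are immediate. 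Your just-in-time eagerness is equivalent in spirit but forces a looser invariant (c) and a more delicate case analysis.

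One place your write-up is slightly off is sub-claim~(3): ``its marker persisted by \PTSOsynnI, hence already mirrored and gone in \PTSOsynnnI'' does not follow, because when you mirror a \rulename{persist-fo} the targeted marker may not sit at the head of \PTSOsynnnI's buffer (it was inserted earlier, and other eagerly-propagated $\lFOT$ markers---possibly from other threads---may precede it), so you may have removed a different marker or none. The repair is easy and fits your framework: at the synchronous step, argue that every residual $\ifotlabp{\tid}{\_}$ in $\PbuffI^\dagger(\loc)$ has only $\lFOT$ entries in front of it (the writes ahead of it form a subset of those ahead of the corresponding marker in \PTSOsynnI, all of which have been persisted and mirrored), and clear them with a cascade of \rulename{persist-fo} right there. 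With maximal eagerness this bookkeeping disappears, which is what the paper's choice buys.
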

\begin{proof}[Proof (outline)]
We use a standard forward simulation argument,
where \PTSOsynnnI eagerly takes \rulename{prop-fo} and \rulename{persist-fo} steps whenever possible.
Then, \PTSOsynnnI is always at a state in which the flush-optimals are further propagated \wrt
the corresponding state of \PTSOsynnI (\eg a flush-optimal in \PTSOsynnI's store buffer
may already be in \PTSOsynnnI's persistence buffer).
In this case, the flush-optimals impose only (possibly) weaker constraints on the transitions.
For this argument to work we rely on the fact that a flush-optimal of a certain thread being further propagated
does not impose constraints on actions of other threads.

More formally, we define a simulation relation $R$ between 
\PTSOsynnI-states and \PTSOsynnnI-states.
To define $R$ we use the notation 
$s\rst{T}$ to restrict a sequence $s$ (which will be an instrumented per-location persistence buffer 
or an instrumented store buffer) to entries of type $\lX\in T$ (yielding a possibly shorter sequence).
The simulation relation $R \suq \PTSOsynnI.\lQ \times \PTSOsynnnI.\lQ$ is defined as follows:
$\tup{\tup{\mem_2, \PbuffI_2, \BuffI_2, \inst{\ID_2}},\tup{\mem, \PbuffI, \BuffI, \inst{\ID}}}\in R$ if the following hold:
\begin{itemize}
\item $\mem_2 = \mem$ and $\inst{\ID_2} = \inst{\ID}$.

\item For every $\loc\in\Loc$, $\PbuffI_2(\loc)\rst{\set{\lW}} = \PbuffI(\loc)\rst{\set{\lW}}$.
\item For every $\tid\in\Tid$, 
 $\BuffI_2(\tid)\rst{\set{\lW,\lFL,\lSF}}=
   \BuffI(\tid)\rst{\set{\lW,\lFL,\lSF}}$.

   \item If $\BuffI(\tid)(i) \in \set{\iiwlab{\loc}{\_}{\_}, \ifllab{\loc}{\_}, \isflab{\_}}$ 
   and $\BuffI(\tid)(j) = \ifolab{\loc}{\_}$ for some $i<j$, 
   then $\BuffI(\tid)(i_2)   = \BuffI(\tid)(i)$ and $\BuffI(\tid)(j_2)   = \BuffI(\tid)(j)$ for some $i_2 < j_2$.

   \item If $\PbuffI(\loc)(i) =\ivale[\_]{\_}$ and $\PbuffI(\loc)(j) = \ifotlabp{\tid}{\id}$
   for some $i<j$, 
   then one of the following holds:
   \begin{itemize}
   \item $\PbuffI_2(\loc)(i_2) = \PbuffI(\loc)(i)$ and $\PbuffI_2(\loc)(j_2) = \ifotlabp{\tid}{\id}$
   for some $i_2 < j_2$; or
   \item $\PbuffI_2(\loc)(i_2) = \PbuffI(\loc)(i)$ and $\BuffI_2(\tid)(j_2) = \ifolab{\loc}{\id}$
   for some $i_2$ and $j_2$.
   \end{itemize}

   \item If $\BuffI_2(\tid)(i_2) = \isflab{\_}$ and $\BuffI_2(\tid)(j_2) = \ifolab{\_}{\_}$
   for some $i_2<j_2$, 
   then $\BuffI(\tid)(i) =\BuffI_2(\tid)(i_2)$ and $\BuffI(\tid)(j) =\BuffI_2(\tid)(j_2)$ 
   for some $i < j$.

   \item If $\BuffI(\tid)(j) = \ifolab{\loc}{\_}$,
   then $\BuffI(\tid)(i) \in \set{ \iiwlab{\loc}{\_}{\_}, \ifllab{\loc}{\_}, \isflab{\_}}$
   for some $i < j$.

   \item If $\PbuffI(\loc)(j) = \ifotlabp{\_}{\_}$,
   then $\PbuffI(\loc)(i) =\ivale[\_]{\_}$
   for some $i < j$.

\end{itemize}

Initially, we clearly have 
$\tup{\tup{\mem_0,\Pbuff_\epsl, \Buff_\epsl, \inst{\emptyset}} ,\tup{\mem_0,\Pbuff_\epsl, \Buff_\epsl, \inst{\emptyset}} }\in R$.
Now, suppose that 
$\tup{\mem, \PbuffI_2, \BuffI_2, \inst{\ID}}
\asteplab{\alpha}{\PTSOsynnI} 
\tup{\mem', \PbuffI_2', \BuffI_2', \inst{\ID'}}$,
and let 
$\tup{\mem_1, \PbuffI, \BuffI, \inst{\ID_1}}\in \PTSOsynnnI.\lQ$
such that
$\tup{\tup{\mem, \PbuffI_2, \BuffI_2, \inst{\ID}},\tup{\mem_1, \PbuffI, \BuffI, \inst{\ID_1}}}\in R$.
Then, we have 
$\mem=\mem_1$
and
$\inst{\ID} = \inst{\ID_1}$.
We show that 
$\tup{\mem, \PbuffI, \BuffI, \inst{\ID}}
\asteplab{\tr}{\PTSOsynnnI}
\tup{\mem', \PbuffI', \BuffI', \inst{\ID'}}$
for some 
$\tr$, $\PbuffI'$, and $\BuffI'$
such that 
$\erase(\tr)=\erase(\alpha)$
and 
$\tup{\tup{\mem', \PbuffI_2', \BuffI_2', \inst{\ID'}},\tup{\mem', \PbuffI', \BuffI', \inst{\ID'}}}\in R$.

Roughly speaking, to obtain this we will make \PTSOsynnnI take \rulename{persist-fo} steps
as eagerly as possible after every other step.
(Thus, when \PTSOsynnI takes a \rulename{prop-fo} or \rulename{persist-fo} step,
\PTSOsynnnI remains in the same state.)
The rest of the proof continues by separately considering each possible step of \PTSOsynnI,
and establishing the simulation invariants at each step.
For example, suppose that 
$\tup{\mem, \PbuffI_2, \BuffI_2, \inst{\ID}}
\iasteptidlab{\tid}{\itpwlab{\loc}{\val}{\id}}{\PTSOsynnI} 
\tup{\mem', \PbuffI_2', \BuffI_2', \inst{\ID'}}$.
Then, the simulation invariants ensure that 
$\tup{\mem, \PbuffI, \BuffI, \inst{\ID}}
\iasteptidlab{\tid}{\itpwlab{\loc}{\val}{\id}}{\PTSOsynnnI} 
\tup{\mem', \PbuffI_\text{mid}, \BuffI_\text{mid}, \inst{\ID'}}$
for some $\PbuffI_\text{mid}$ and $\BuffI_\text{mid}$.
Then, to establish the simulation invariant,
we repeatedly execute \rulename{prop-fo} and \rulename{persist-fo} steps
as long as it is possible and obtain the state 
$\tup{\mem', \PbuffI', \BuffI', \inst{\ID'}}$.
\end{proof}

\subsubsection{Proof of (C)}

The proof of (C) is structured as follows:
\begin{enumerate}[label=(C.\arabic*),start=0,leftmargin=0.4cm]
\item Let $\tr$ be an $\mem_0$-initialized \PTSO-observable-trace.
\item By \cref{lem:M_MI,lem:PTSO_PTSOI}, there exists some 
$\mem_0$-initialized \PTSOI-trace $\trI$ such that $\erase(\trI)=\tr$.
\item By \Cref{lem:c2}, there exists some 
$\mem_0$-initialized \PTSOsynnI-trace $\trI_2$ such that $\erase(\trI_2)=\erase(\trI)$.
\hlitem By \cref{lem:PTSOsynnI_refines_PTSOsynnnI}, there exists some 
$\mem_0$-initialized \PTSOsynnnI-trace $\trI'_2$ such that $\erase(\trI'_2)=\erase(\trI_2)$.
\item By \cref{lem:M_MI,lem:PTSOsynnn_PTSOsynnnI},
$\erase(\trI_2')$ is an $\mem_0$-initialized \PTSOsynnn-observable-trace.
\item Then, the claim follows observing that $\erase(\trI_2')=\erase(\trI_2)=\erase(\trI)=\tr$.
\end{enumerate}

The next lemma states that every trace can be continued to empty
the content of its persistence buffer.

\begin{lemma}\label{lem:PTSOI_persist_all}
For every $\mem_0$-initialized \PTSOI-trace $\trI$, there exists some
$\set{\lTP{\lFL},\lTP{\lFO}}$-complete $\mem_0$-initialized \PTSOI-trace
$\trI'$ such that $\erase(\trI)=\erase(\trI')$.
\end{lemma}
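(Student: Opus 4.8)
The plan is to extend $\trI$ by a suffix of persistence steps that completely drains the persistence buffer reached at the end of $\trI$, and to observe that this suffix supplies the "matching" steps that are missing for completeness.

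The core observation I would establish first is that in \PTSOI the persistence buffer can always be drained one entry at a time from its head. I would argue by induction on $\size{\pbuffI}$ that from any reachable state $\tup{\mem,\pbuffI,\BuffI,\inst{\ID}}$ there is a sequence of silent transitions reaching a state whose persistence buffer is empty: if $\pbuffI\neq\epsilon$, its head is either a write entry $\iiwlab{\loc}{\val}{\id}$, in which case \rulename{persist-w} applies with $\pbuffI_1=\epsilon$ (so its side conditions hold vacuously), emitting the silent label $\ipwlab{\loc}{\val}{\id}$; or a marker $\iperlab{\loc}{\id}$, in which case \rulename{persist-per} applies, again with $\pbuffI_1=\epsilon$, emitting the silent label $\lP{\iperlab{\loc}{\id}}$. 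Either step removes the head and leaves $\BuffI$ and $\inst{\ID}$ untouched, and the induction hypothesis finishes. Applying this at the final state of $\trI$ yields a \PTSOI-trace $\trI'=\trI\cdot\trI''$ with $\trI''$ consisting only of persistence steps. Since every label of $\trI''$ is silent, $\erase(\trI')=\erase(\trI)$, and $\trI'$ is still $\mem_0$-initialized because $\trI$ is its prefix.

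It remains to check that $\trI'$ is $\set{\lTP{\lFL},\lTP{\lFO}}$-complete. I would record the simple invariant, proved by induction along any \PTSOI-trace, that the entries currently in the persistence buffer are in bijection with the propagation transitions taken so far that have no matching persist transition yet; in particular each $\iperlab{\loc}{\id}$ entry was inserted by a unique \rulename{prop-fl} or \rulename{prop-fo} step carrying identifier $\id$ (this uses that \rulename{write/flush/flush-opt/sfence} picks a fresh $\id$, and that a $\lPER$-marker can only be removed by \rulename{persist-per}, whose silent label has $\lSN=\id$). Now fix an index $i$ with $\trI'(i)=\itidlab{\tid}{\lTP{\ifllab{\loc}{\id}}}$ (the $\lTP{\lFO}$ case is identical): the marker $\iperlab{\loc}{\id}$ is present immediately after step $i$, so it is either removed later inside $\trI$ by a \rulename{persist-per} transition, or it survives to the end of $\trI$ and is therefore removed inside $\trI''$, again by a \rulename{persist-per} transition; in both cases the removing transition's label has $\lSN=\id$, so there is $j>i$ with $\lSN(\trI'(j))=\id$. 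As $\trI''$ introduces no new $\lTP{\lFL}$ or $\lTP{\lFO}$ transitions, $\trI'$ is $\set{\lTP{\lFL},\lTP{\lFO}}$-complete.

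The argument is essentially bookkeeping, and the only point that needs care is making the bijection invariant between persistence-buffer entries and outstanding propagation transitions precise, so that "the matching persist step carries $\lSN$ equal to the propagation step's identifier" is justified. There is no genuine obstacle: unlike in \PTSOsynnn, where persisting the head of a per-location buffer can be blocked, the head of \PTSOI's global persistence buffer is always either a write (removable by \rulename{persist-w}) or a $\lPER$-marker (removable by \rulename{persist-per}), so the draining procedure never gets stuck.
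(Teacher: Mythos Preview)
Your proposal is correct and follows essentially the same approach as the paper: extend $\trI$ by appending persistence steps that drain the persistence buffer in order, which supplies the missing matching persist labels and leaves the erasure unchanged. The paper's proof sketch is terser (it simply asserts that one can ``persist entries of persistence buffer in order''), while you spell out the head-of-buffer draining argument and the invariant relating buffer entries to outstanding propagation steps; but the underlying idea is identical.
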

\begin{proof}[Proof sketch]
$\trI$ can be extended to some $\trI'$ so that every 
$\tup{\_,\iulab{\loc}{\_}{\val}{\id}}$,
$\tup{\_,\itpwlab{\loc}{\val}{\id}}$, $\tup{\_, \lTP{\ifolab{\loc}{\id}}}$ or
$\tup{\_, \lTP{\ifllab{\loc}{\id}}}$ has a matching
$\ipwlab{\loc}{\val}{\id}$ or $\lP{\iperlab{\loc}{\id}}$. Indeed, since
it is always possible to persist entries of persistence buffer in order, we
can simply append corresponding labels in the order, in which unmatched
propagation events occur in $\trI$.
\end{proof}

\begin{lemma}[Step C.2]\label{lem:c2}
For every $\mem_0$-initialized \PTSOI-trace $\trI$, there exists some
$\mem_0$-initialized \PTSOsynnI-trace $\trI_2$ such that
$\erase(\trI_2)=\erase(\trI)$.
\end{lemma}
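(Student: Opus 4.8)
The plan is to reduce $\trI$ to a canonical \PTSOI-trace in which every persisted write, flush, and flush-opt is persisted immediately after it is propagated, and then translate such a trace into \PTSOsynnI label by label. First I would apply \Cref{lem:PTSOI_persist_all} to $\trI$ to obtain a $\set{\lTP{\lFL},\lTP{\lFO}}$-complete $\mem_0$-initialized \PTSOI-trace $\trI_1$ with $\erase(\trI_1)=\erase(\trI)$, and then apply \Cref{lem:PTSOI_comp_to_syn} (using that an $\mem_0$-initialized trace is $\mem_0$-to-$\mem$ for some $\mem$) to turn $\trI_1$ into a \emph{synchronous} $\set{\lTP{\lFL},\lTP{\lFO}}$-complete $\mem_0$-initialized \PTSOI-trace $\trI_s$ with $\erase(\trI_s)=\erase(\trI)$. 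Along $\trI_s$, each \rulename{prop-w}/\rulename{rmw} step is immediately followed by its matching \rulename{persist-w} step, and each \rulename{prop-fl}/\rulename{prop-fo} step is immediately followed by its matching \rulename{persist-per} step.

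The next step is to record the key invariant that along $\trI_s$ the global persistence buffer $\pbuff$ has length at most $1$, and whenever $\pbuff$ is non-empty its single entry was inserted by the previous step and is removed by the next step. This follows by an easy induction on prefixes of $\trI_s$: a step that inserts into $\pbuff$ (a \rulename{prop-w}, \rulename{rmw}, \rulename{prop-fl} or \rulename{prop-fo}) cannot fire while $\pbuff$ is non-empty, since by synchronicity the pending entry would have to persist first; a persist step empties $\pbuff$; and any other step leaves $\pbuff$ unchanged and, by the invariant, can only fire when $\pbuff$ is empty. In particular, $\pbuff=\epsilon$ immediately before every \rulename{prop-w}, \rulename{rmw}, \rulename{prop-fl}, \rulename{prop-fo}, \rulename{prop-sf}, \rulename{rmw-fail}, and \rulename{mfence} step of $\trI_s$ — even though the original $\trI$ need not have this property; it is the canonicalisation into $\trI_s$ that buys it.

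Then I would build $\trI_2$ by translating $\trI_s$ label by label: all issuing steps, reads, failed/successful RMWs, mfences, and the \rulename{prop-w}, \rulename{prop-sf}, \rulename{prop-fl}, \rulename{prop-fo}, \rulename{persist-w} labels are reused unchanged; each \rulename{persist-per} label matching a flush is deleted (in \PTSOsynnI the synchronous flush effect is already realised by \rulename{prop-fl} firing only when $\PbuffI(\loc)=\epsilon$); and each \rulename{persist-per} label matching a flush-opt is replaced by the corresponding \rulename{persist-fo} label. Since \rulename{persist-per} and \rulename{persist-fo} are silent and all reused labels coincide, $\erase(\trI_2)=\erase(\trI_s)=\erase(\trI)$. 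To see that $\trI_2$ is a \PTSOsynnI-trace, I would run a forward simulation maintaining that the store-buffer mappings and the non-volatile memory of the two runs agree and that each per-location buffer $\PbuffI(\loc)$ mirrors $\pbuff$ (empty whenever $\pbuff$ is, otherwise the single relevant entry). The side conditions of \PTSOsynnI not present in \PTSOI are then vacuous: \rulename{prop-fl} needs $\PbuffI(\loc)=\epsilon$, which holds since $\pbuff=\epsilon$ there; \rulename{rmw}, \rulename{rmw-fail}, \rulename{mfence}, and \rulename{prop-sf} need $\forall\loca.\,\fotlabp{\tid}\nin\PbuffI(\loca)$, which holds since all persistence buffers are empty at those points; each inserted \rulename{persist-fo} is enabled because the $\fotlabp{\tid}$-entry it removes was placed as the unique entry of $\PbuffI(\loc)$ by the immediately preceding \rulename{prop-fo} (completeness and synchronicity make these adjacent); and reads agree because $\PbuffI(\loc)=\pbuff=\epsilon$ at every read, so both systems read from the store buffer or the identical memory.

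The main obstacle I anticipate is not a single hard step but the bookkeeping that ties everything together: one must verify carefully that completeness and synchronicity of $\trI_s$ really do keep all persistence buffers empty at exactly the moments where \PTSOsynnI imposes its extra blocking conditions on flushes, sfences, and RMWs, so that \PTSOsynnI's stricter semantics is harmless on the translated trace.
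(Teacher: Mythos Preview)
Your approach is essentially the paper's: complete the trace via \Cref{lem:PTSOI_persist_all}, synchronize via \Cref{lem:PTSOI_comp_to_syn}, then translate label-by-label, deleting \rulename{persist-per} after a \rulename{prop-fl} and replacing \rulename{persist-per} after a \rulename{prop-fo} by \rulename{persist-fo}. The only point to flag is your invariant that the global persistence buffer has length at most~$1$: this needs \emph{every} propagated write/RMW to be matched with a persist step, but the \emph{statement} of \Cref{lem:PTSOI_persist_all} promises only $\set{\lTP{\lFL},\lTP{\lFO}}$-completeness. An unmatched \rulename{prop-w} entry would sit in $\pbuff$ indefinitely (its delay is $0$ by definition when no match exists), breaking your induction. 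The fix is trivial---append persist steps for all remaining writes as well, which does not change the erasure---and in fact the proof sketch of \Cref{lem:PTSOI_persist_all} already does this. The paper sidesteps this by arguing directly that the \PTSOsynnI preconditions hold (no $\lW$-entries for $\loc$ in $\PbuffI(\loc)$ at a \rulename{prop-fl}, etc.) rather than via a global length bound, but your cleaner invariant is fine once full completeness is secured.
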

\begin{proof}[Proof sketch]
By \Cref{lem:PTSOI_persist_all} applied to $\trI$, there is some
$\set{\lTP{\lFL},\lTP{\lFO}}$-complete \PTSOI-trace $\trI_1$ such that
$\erase(\trI)=\erase(\trI_1)$. Moreover,
by applying \Cref{lem:PTSOI_comp_to_syn}
to $\trI_1$, there is some synchronous $\set{\lTP{\lFL},\lTP{\lFO}}$-complete
$\mem_0$-initialized \PTSOI-trace $\trI'_1$ such that
$\erase(\trI_1)=\erase(\trI'_1)$. We transform $\trI'_1$ further into
$\trI_2$ by removing every $\lP{\iperlab{\loc}{\id}}$ following
$\tup{\_, \lTP{\ifllab{\loc}{\id}}}$, and by replacing every
$\lP{\iperlab{\loc}{\id}}$
following $\tup{\_, \lTP{\ifolab{\loc}{\id}}}$ with $\ipfotlab{\tid}{\loc}{\id}$.

We argue that $\trI_2$ that is an \PTSOsynnI-trace. Indeed, by construction of
$\trI'$, each persistence buffer $\PbuffI(\loc)$ only contains
$\ifotlabp{\tid}{\id}$-entries right before the step propagating them from the
buffer takes place. Moreover, each persistence buffer $\PbuffI(\loc)$ does not
contain $\ivale{\id}$-entries upon executing $\tup{\_,
\lTP{\ifllab{\loc}{\id}}}$ steps, since the conditions for persisting flush
instructions in $\trI'_1$ ensure that such writes previously persisted. 
Hence, the constraints on the
content of the persistence buffers are satisfied in \PTSOsynnI by
construction.%
\end{proof}
\subsubsection{Proof of (D)}

The proof of (D) is structured as follows:
\begin{enumerate}[label=(D.\arabic*),start=0]
\item Let $\tr$ be an $\mem_0$-to-$\mem$ \PTSO-observable-trace.
\item By \cref{lem:M_MI,lem:PTSO_PTSOI}, there exists some 
$\mem_0$-to-$\mem$ \PTSOI-trace $\trI$ such that $\erase(\trI)=\tr$.
\hlitem By \Cref{lem:d2}, there exists some 
$\mem_0$-to-$\mem$ \PTSOsynI-trace $\trI_1$ such that  $\erase(\trI_1) \lesssim \erase(\trI)$.
\item By \Cref{lem:d3}, there exists some 
$\mem_0$-to-$\mem$ \PTSOsynnI-trace $\trI_2$ such that  $\erase(\trI_2)=\erase(\trI_1)$.
\item By \cref{lem:PTSOsynnI_refines_PTSOsynnnI}, there exists some 
$\mem_0$-to-$\mem$ \PTSOsynnnI-trace $\trI_2'$ such that  $\erase(\trI_2')=\erase(\trI_2)$. %
\item By \cref{lem:M_MI,lem:PTSOsynnn_PTSOsynnnI}, 
$\erase(\trI_2')$ is an $\mem_0$-to-$\mem$ \PTSOsynnn-observable-trace.
\item Then, the claim follows observing that $\erase(\trI_2')=\erase(\trI_2)=\erase(\trI_1)\lesssim \erase(\trI)=\tr$.
\end{enumerate}

\begin{lemma}
\label{lem:PTSOI_complete}
For every $\mem_0$-to-$\mem$ \PTSOI-trace $\trI$,
there exists some $\set{\lTP{\lFL},\lTP{\lFO}}$-complete $\mem_0$-to-$\mem$ \PTSOI-trace $\trI'$
such that $\erase(\trI')\lesssim\erase(\trI)$.
\end{lemma}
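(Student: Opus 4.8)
The approach parallels that of \Cref{lem:PTSOsynnI_complete}: we make $\trI$ complete by deleting the propagation steps of flush/flush-optimal entries whose $\lPER$-markers are never persisted. Unlike in \PTSOsynnI, an orphan $\lPER$-marker in \PTSOI's single global persistence buffer blocks \emph{all} subsequent persistence, so the deletion must be done in one cut, together with a simple tail reshuffling of the persistence steps. Say a \rulename{prop-fl} or \rulename{prop-fo} step of $\trI$ at index $i$ is \emph{bad} if there is no $j>i$ with $\lSN(\trI(j))=\lSN(\trI(i))$, \ie the $\lPER$-marker it appends to the persistence buffer is never consumed by a matching \rulename{persist-per} step. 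If $\trI$ has no bad step, every propagated flush and flush-optimal is persisted, so $\trI$ is already $\set{\lTP{\lFL},\lTP{\lFO}}$-complete and we take $\trI'=\trI$. Otherwise let $i_0$ be the index of the \emph{first} bad step, let $M_0$ be the $\lPER$-marker it appends, and let $\hat\trI=\trI(1)\cdots\trI(i_0-1)$, a prefix of $\trI$ (hence again a \PTSOI-trace starting from $\mem_0$) ending in some state $\hat q$.

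The first key step is a ``freezing'' observation: since $M_0$ is never removed, and entries of the persistence buffer keep their relative insertion order while both \rulename{persist-w} and \rulename{persist-per} require that no $\lPER$-marker precede the entry they consume, \emph{no} entry appended to the persistence buffer at or after index $i_0$ is ever consumed in $\trI$; consequently every \rulename{persist-w}/\rulename{persist-per} step of $\trI$ consumes an entry already present in $\hat q$'s persistence buffer. I would then let $\sigma$ be the subsequence of $\trI(i_0)\cdots\trI(\size{\trI})$ consisting of exactly its \rulename{persist-w} and \rulename{persist-per} steps, and set $\trI'=\hat\trI\cdot\sigma$. Validity of $\trI'$ follows by a short induction along $\sigma$: at each point the persistence buffer reached in $\trI'$ is the persistence buffer reached in $\trI$ at the corresponding point with the entries appended after $i_0$ (which form a suffix of it) deleted; since every step of $\sigma$ consumes an entry strictly preceding that suffix, and the enabling conditions of \rulename{persist-w}/\rulename{persist-per} only inspect what precedes the consumed entry, each step of $\sigma$ remains enabled (the store-buffer and identifier components are untouched by persistence steps). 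Moreover $\trI'$ performs exactly the \rulename{persist-w} steps of $\trI$ in the same relative order, hence ends with the same non-volatile memory $\mem$.

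It remains to check completeness and the trace-inclusion. Every \rulename{prop-fl}/\rulename{prop-fo} step of $\trI'$ occurs in $\hat\trI$, hence strictly before $i_0$, hence — by minimality of $i_0$ — is not bad, so its matching \rulename{persist-per} step appears in $\trI$; being a persistence step, that step lies either in $\hat\trI$ or in $\sigma$, so it appears in $\trI'$; thus $\trI'$ is $\set{\lTP{\lFL},\lTP{\lFO}}$-complete. Finally, $\sigma$ consists only of silent transitions, so $\erase(\trI')=\erase(\hat\trI)$, which is an honest prefix of $\erase(\trI)$ and therefore in particular a per-thread prefix, giving $\erase(\trI')\lesssim\erase(\trI)$. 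I expect the main work to be the freezing observation together with the bookkeeping in the induction — spelling out that removals do not reorder the persistence buffer and that the post-$i_0$ appends remain a contiguous suffix throughout; the remaining verifications are routine, in the spirit of \Cref{lem:PTSOI_persist_all,lem:PTSOI_comp_to_syn}.
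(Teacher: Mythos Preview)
Your construction is identical to the paper's: take the prefix up to the first unmatched \rulename{prop-fl}/\rulename{prop-fo} step, then append only the \rulename{persist-w}/\rulename{persist-per} steps that occurred afterwards. Your freezing observation and the buffer-suffix invariant spell out in more detail exactly what the paper asserts tersely (namely that each such persistence step must match a propagation that happened before $i_0$), and your justification of $\erase(\trI')\lesssim\erase(\trI)$ via the honest prefix is the same as the paper's ``trivially''.
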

\begin{proof}
Let $i_0$ be the minimal index for which 
$\lTYP(\trI(i_0))\in\set{\lTP{\lFL},\lTP{\lFO}}$ but $\lSN(\trI(j)) \neq \lSN(\trI(i_0))$
for every $j > i_0$.
Let $i_1 \til i_m$ be an enumeration of all indices $i> i_0$
with $\lTYP(\trI(i))\in\set{\lP{\lW},\lP{\lPER}}$.
We define $\trI'=\trI(1) \til \trI(i_0-1), \trI(i_1) \til \trI(i_m)$.
We trivially have that $\erase(\trI')\lesssim\erase(\trI)$.
To see that $\trI'$ is a ($\set{\lTP{\lFL},\lTP{\lFO}}$-complete) \PTSOI-trace, it suffices to note that the transitions
of \PTSOI ensure that 
for every $1\leq j \leq m$ with $\lTYP(\trI(i_j))=\lP{\lW}$,
we have $\lTYP(\trI(k))\in\set{\lU,\lTP{\lW}}$ and $\lSN(\trI(k))=\lSN(\trI(i_j))$ for some $k< i_0$;
and for every $1\leq j \leq m$ with $\lTYP(\trI(i_j))=\lP{\lPER}$,
we have $\lTYP(\trI(k))\in\set{\lTP{\lFL},\lTP{\lFO}}$ and $\lSN(\trI(k))=\lSN(\trI(i_j))$ for some $k< i_0$. 
Finally, since $\trI'$ includes all $\lP{\lW}$ transitions of $\trI$,
it is an $\mem_0$-to-$\mem$ \PTSOI-trace.
\end{proof}

\begin{lemma}
\label{lem:PTSOI_PTSOsynI}
For every synchronous $\set{\lTP{\lFL},\lTP{\lFO}}$-complete $\mem_0$-to-$\mem$ \PTSOI-trace $\trI$,
there exists some $\mem_0$-to-$\mem$ \PTSOsynI-trace $\trI'$
such that $\erase(\trI')=\erase(\trI)$.
\end{lemma}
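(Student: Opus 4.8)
The plan is to build $\trI'$ by a direct step-by-step forward simulation of $\trI$, exploiting that under the stated hypotheses the global persistence buffer of \PTSOI stays essentially trivial. Because $\trI$ is synchronous, whenever $\lTYP(\trI(i))\in\set{\lU,\lTP{\lW},\lTP{\lFL},\lTP{\lFO}}$ --- that is, $\trI(i)$ appends an entry to $\pbuffI$ (a write via \rulename{rmw}, or a write / $\lPER$-marker via \rulename{prop-w}, \rulename{prop-fl}, \rulename{prop-fo}) --- the very next step $\trI(i+1)$ is the unique persistence step (\rulename{persist-w} or \rulename{persist-per}) carrying the matching identifier, and it removes exactly that entry; and $\set{\lTP{\lFL},\lTP{\lFO}}$-completeness guarantees that no $\lPER$-marker survives to the end of the trace. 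Hence $\pbuffI$ never holds more than one entry, it holds an entry only in the single configuration strictly between an appending step and its matching persistence step, and (since no two appending steps are consecutive) every appending step fires from a configuration with $\pbuffI=\epsilon$. In particular $\pbuffI$ is empty whenever a \rulename{read}, \rulename{rmw}, \rulename{rmw-fail}, or \rulename{mfence} step fires, so the value the lookup function returns in those steps does not depend on $\pbuffI$.

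I would then fix a simulation relation $R$ between \PTSOI-states and \PTSOsynI-states that equates the store-buffer mappings and the identifier sets and additionally records, for the memory and the per-location persistence buffers: if $\pbuffI=\epsilon$, the two memories coincide and every $\PbuffI(\loc)$ is empty; if $\pbuffI=\iiwlab{\loc}{\val}{\id}$, the memories coincide, $\PbuffI(\loc)=\ivale{\id}$, and every other per-location buffer is empty; and if $\pbuffI=\iperlab{\loc}{\id}$, the memories coincide and all per-location buffers are empty (in \PTSOsynI the matching \rulename{prop-fl}/\rulename{prop-fo} has already been absorbed and leaves no marker). The translation maps each $\trI(i)$ to a \PTSOsynI step with the same erasure: issuing steps, \rulename{read}, \rulename{rmw}, \rulename{rmw-fail}, \rulename{mfence}, and \rulename{prop-sf} go to the identically-named \PTSOsynI rules; \rulename{prop-w}, \rulename{prop-fl}, and \rulename{prop-fo} go to their \PTSOsynI counterparts; \rulename{persist-w} goes to \PTSOsynI's \rulename{persist-w}; and \rulename{persist-per} is dropped (translated to the empty sequence). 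The points to check are that \PTSOsynI's additional precondition $\PbuffI(\loc)=\epsilon$ on \rulename{prop-fl} and \rulename{prop-fo}, and the head shape $\PbuffI(\loc)=\ivale{\id}\cdot\pbuffI$ required by \PTSOsynI's \rulename{persist-w}, hold exactly because $R$ keeps the per-location buffers empty at precisely those moments (\rulename{prop-fl}/\rulename{prop-fo} fire when $\pbuffI=\epsilon$, and \rulename{persist-w} fires one step after the appending step that put $\ivale{\id}$ alone into $\PbuffI(\loc)$); that for the \rulename{read}-family steps the lookup-function computations agree because both $\pbuffI$ and the relevant per-location buffer are empty there; and that the store-buffer conditions of \PTSOsynI's \rulename{prop-w}, \rulename{prop-fl}, \rulename{prop-fo}, \rulename{prop-sf} are verbatim those of \PTSOI, hence carry over since $R$ equates store buffers. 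The remaining cases are routine.

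Finally I would assemble the pieces: every emitted \PTSOsynI step has the same instrumented label as the \PTSOI step it translates, and the dropped \rulename{persist-per} steps are silent (erasure $\epsilon$), so $\erase(\trI')=\erase(\trI)$; the run $\trI'$ starts from $m_0$, and since all three cases of $R$ force the \PTSOsynI memory to equal the \PTSOI memory, the final \PTSOsynI memory is $\mem$, so $\trI'$ is an $m_0$-to-$\mem$ \PTSOsynI-trace, as required. I expect the only mildly delicate point to be getting $R$ right for the transient configuration in which \PTSOI carries a pending $\pbuffI$-entry that \PTSOsynI either has not yet (writes, RMWs) or will never (flushes, flush-opts) materialise --- i.e.\ checking that each \rulename{prop-w}/\rulename{persist-w} and \rulename{rmw}/\rulename{persist-w} pair of \PTSOI is matched by a single \PTSOsynI step, and each \rulename{prop-fl}/\rulename{persist-per} and \rulename{prop-fo}/\rulename{persist-per} pair by a single \PTSOsynI step followed by nothing, with $R$ re-established at both endpoints of the pair.
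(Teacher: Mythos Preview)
Your translation---map every step to the same-named \PTSOsynI rule and drop \rulename{persist-per}---is exactly the paper's construction (the paper phrases it as ``merging'' each \rulename{prop-fl}/\rulename{prop-fo} with its adjacent \rulename{persist-per}). The problem is your justification, which rests on the claim that in a synchronous $\set{\lTP{\lFL},\lTP{\lFO}}$-complete \PTSOI-trace the global buffer $\pbuffI$ never holds more than one entry and is empty at every \rulename{read}/\rulename{rmw}/\rulename{prop-fl}/\rulename{prop-fo}. That claim is false. Synchrony only says that \emph{if} a $\lTP{\lW}$ (or $\lU$) step has a later matching persist step then it is adjacent; by the definition of the delay function, a \rulename{prop-w} with \emph{no} matching \rulename{persist-w} has delay $0$, so the trace stays synchronous while that write sits in $\pbuffI$ indefinitely. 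For example, the trace issuing and propagating $\wlab{\loc}{1}$, then issuing and propagating $\wlab{\loca}{1}$, then reading $\loc$---with no persist steps at all---is synchronous and (vacuously) $\set{\lTP{\lFL},\lTP{\lFO}}$-complete, and the read fires with two entries in $\pbuffI$.

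Consequently your simulation relation $R$, which only covers $\pbuffI\in\set{\epsilon,\,\iiwlab{\loc}{\val}{\id},\,\iperlab{\loc}{\id}}$, does not cover the reachable states, and the reasons you give for the crucial checks---reads agree ``because both buffers are empty'', and $\PbuffI(\loc)=\epsilon$ at \rulename{prop-fl}/\rulename{prop-fo} ``because $\pbuffI=\epsilon$''---collapse. The invariant you actually need is: for each $\loc$, $\PbuffI(\loc)$ is the subsequence of $\pbuffI$ consisting of the $\loc$-write entries, memories agree, store buffers and identifier sets agree, and $\pbuffI$ contains a $\lPER$-entry only as its last element (and only in the single state between a \rulename{prop-fl}/\rulename{prop-fo} and its immediate \rulename{persist-per}). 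With that invariant, the lookup functions of the two systems agree at reads/RMWs because both see the same latest $\loc$-write; and when \rulename{prop-fl}/\rulename{prop-fo} on $\loc$ fires, synchrony plus completeness forces the very next step to be the matching \rulename{persist-per}, whose \PTSOI precondition ($\iiwlab{\loc}{\_}{\_}\nin\pbuffI_1$) says precisely that there were no $\loc$-writes in $\pbuffI$ before the marker, i.e.\ $\PbuffI(\loc)=\epsilon$. With the corrected $R$ your step-by-step plan goes through; as written it does not. (As a minor point, your last paragraph contradicts your own translation: you do not map a \rulename{prop-w}/\rulename{persist-w} pair to a single \PTSOsynI step---you map them to two.)
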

\begin{proof}
We obtain $\trI'$ by merging consecutive \rulename{prop-fl}/\rulename{prop-fo} 
and \rulename{persist-per} steps
in $\trI$ into one \rulename{prop-fl}/\rulename{prop-fo} step of \PTSOsynI, 
thus maintaining the persistence buffers
without $\lPER$-entries.
\end{proof}

\begin{lemma}[Step D.2]\label{lem:d2}
For every $\mem_0$-to-$\mem$ \PTSOI-trace $\trI$, there exists some
$\mem_0$-to-$\mem$ \PTSOsynI-trace $\trI'$ such that
$\erase(\trI')\lesssim\erase(\trI)$.
\end{lemma}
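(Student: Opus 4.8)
The plan is to construct $\trI'$ by three successive rewrites of $\trI$, mirroring the construction carried out for \PTSOsynnI in \cref{lem:b3}. First I would normalise $\trI$ so that every propagated flush or flush-optimal is eventually matched by a persist step, by discarding everything that follows the first unmatched such propagation; this is the only place the per-thread-prefix relaxation $\lesssim$ is needed. Next I would move every persist step to sit immediately after its matching propagation step, turning the trace \emph{synchronous}. Finally, since in a synchronous trace no flush or flush-optimal remains pending in the persistence buffer, I would collapse each adjacent \rulename{prop-fl}/\rulename{prop-fo} step and its following \rulename{persist-per} step into the single blocking propagation step that \PTSOsynI provides, which simultaneously does away with the global $\lPER$-markers that \PTSOsynI's per-location persistence buffers do not carry.

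Concretely, \cref{lem:PTSOI_complete} applied to $\trI$ yields a $\set{\lTP{\lFL},\lTP{\lFO}}$-complete $\mem_0$-to-$\mem$ \PTSOI-trace $\trI_c$ with $\erase(\trI_c)\lesssim\erase(\trI)$. Then \cref{lem:PTSOI_comp_to_syn}, whose hypothesis is exactly the $\set{\lTP{\lFL},\lTP{\lFO}}$-completeness just established, yields a synchronous $\set{\lTP{\lFL},\lTP{\lFO}}$-complete $\mem_0$-to-$\mem$ \PTSOI-trace $\trI_s$ with $\erase(\trI_s)=\erase(\trI_c)$. Finally \cref{lem:PTSOI_PTSOsynI}, now applicable since $\trI_s$ is both synchronous and $\set{\lTP{\lFL},\lTP{\lFO}}$-complete, yields a $\mem_0$-to-$\mem$ \PTSOsynI-trace $\trI'$ with $\erase(\trI')=\erase(\trI_s)$. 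Chaining the three relations gives $\erase(\trI')=\erase(\trI_s)=\erase(\trI_c)\lesssim\erase(\trI)$, which is the claim.

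Within \cref{lem:d2} itself there is thus essentially no obstacle; the substance has been pushed into the three helper lemmas. The genuinely delicate one is \cref{lem:PTSOI_complete}: the key observation is that once a flush or flush-optimal propagates in \PTSO and never persists, the $\lPER$-marker it leaves in the global persistence buffer is never discarded and thereafter blocks every later-propagated write and $\lPER$-marker from persisting; hence no operation after the first such unmatched propagation can have influenced $\mem$, so all of them may be dropped---which is precisely a per-thread prefix---while the retained persist steps stay enabled because removing propagation steps only deletes entries appended at the \emph{tail} of the persistence buffer and so cannot insert a blocking entry in front of a surviving one. The commutation step \cref{lem:PTSOI_comp_to_syn} is the second place that needs care, since moving a \rulename{persist-per} step leftward past a persist step to a different location is not covered by the uniform commutation fact \cref{prop:PTSOI-commute} and requires a separate swap argument. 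After these, the merge in \cref{lem:PTSOI_PTSOsynI} is routine: the side condition $\PbuffI(\loc)=\epsilon$ of \PTSOsynI's blocking rules holds automatically, because a synchronous trace returns the persistence buffer to empty after every propagate-persist pair.
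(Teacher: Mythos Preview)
Your proposal is correct and follows exactly the paper's approach: apply \cref{lem:PTSOI_complete}, then \cref{lem:PTSOI_comp_to_syn}, then \cref{lem:PTSOI_PTSOsynI}, and chain the resulting (in)equalities. Your additional commentary on the helper lemmas is accurate and matches the paper's reasoning, including the key observation in \cref{lem:PTSOI_complete} that an unmatched $\lPER$-marker blocks all later-propagated entries from persisting, so every retained persist step after $i_0$ necessarily corresponds to a propagation before $i_0$.
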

\begin{proof}
By \cref{lem:PTSOI_complete}, 
there exists some $\set{\lTP{\lFL},\lTP{\lFO}}$-complete $\mem_0$-to-$\mem$ \PTSOI-trace $\trI_c$
such that $\erase(\trI_c)\lesssim\erase(\trI)$.
Then, by \cref{lem:PTSOI_comp_to_syn},
there exists a synchronous $\set{\lTP{\lFL},\lTP{\lFO}}$-complete $\mem_0$-to-$\mem$ \PTSOI-trace $\trI_s$
such that $\erase(\trI_s)=\erase(\trI_c)$.
Then, by \cref{lem:PTSOI_PTSOsynI}, 
there exists an $\mem_0$-to-$\mem$ \PTSOsynI-trace $\trI'$
such that $\erase(\trI')=\erase(\trI_s)$.
Now, since $\erase(\trI_c)\lesssim\erase(\trI)$,
$\erase(\trI_s)=\erase(\trI_c)$,
and $\erase(\trI')=\erase(\trI_s)$,
we have that $\erase(\trI')\lesssim\erase(\trI)$,
and the claim follows.
\end{proof}

\begin{lemma}[Step D.3]\label{lem:d3}
For every $\mem_0$-to-$\mem$ \PTSOsynI-trace $\trI$, there exists some
$\mem_0$-to-$\mem$ \PTSOsynnI-trace $\trI'$ such that
$\erase(\trI')=\erase(\trI)$.
\end{lemma}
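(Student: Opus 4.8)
The plan is to prove \Cref{lem:d3} by a \emph{forward simulation} of \PTSOsynI by \PTSOsynnI in which \PTSOsynnI mimics every \PTSOsynI-step and, in addition, eagerly persists each flush-optimal marker the instant it is propagated, so that no $\lFOT$-entry ever lingers in \PTSOsynnI's persistence buffers. First I would take the simulation relation $R$ between \PTSOsynI-states and \PTSOsynnI-states to be plain componentwise equality: $\tup{\tup{\mem,\PbuffI,\BuffI,\inst{\ID}},\tup{\mem',\PbuffI',\BuffI',\inst{\ID'}}}\in R$ iff $\mem=\mem'$, $\PbuffI=\PbuffI'$, $\BuffI=\BuffI'$, and $\inst{\ID}=\inst{\ID'}$. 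Since \PTSOsynI's persistence buffers never contain $\lFOT$-entries (these do not even occur in its alphabet), membership in $R$ forces the \PTSOsynnI-state to have marker-free persistence buffers; this is the key invariant. The initial states $\tup{\mem_0,\PbuffI_\Init,\BuffI_\Init,\emptyset}$ of the two systems coincide, hence are $R$-related.

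The core is the simulation step: if $\tup{s_1,s_2}\in R$ and $s_1 \asteplab{\alpha}{\PTSOsynI} s_1'$, exhibit $s_2'$ and a \PTSOsynnI-run $s_2 \asteplab{\trI_\alpha}{\PTSOsynnI} s_2'$ with $\tup{s_1',s_2'}\in R$ and $\erase(\trI_\alpha)=\erase(\alpha)$. For every rule other than \rulename{prop-fo}, \PTSOsynnI takes the \emph{identical} single step: the read-value lookups in \rulename{read}, \rulename{rmw}, \rulename{rmw-fail} agree because $R$ equates the store and persistence buffers, and the only extra hypotheses in \PTSOsynnI's rules — the side conditions $\forall\loca.\;\ifotlabp{\tid}{\_}\nin\PbuffI(\loca)$ of \rulename{rmw}, \rulename{rmw-fail}, \rulename{mfence}, and \rulename{prop-sf} — hold vacuously by the invariant. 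For $\alpha=\itidlab{\tid}{\lTP{\ifolab{\loc}{\id}}}$ (a \rulename{prop-fo} step), \PTSOsynI's rule forces $\PbuffI(\loc)=\epsilon$; by $R$, \PTSOsynnI's $\PbuffI(\loc)$ is also empty, so \PTSOsynnI takes \rulename{prop-fo} with the very same store-buffer decomposition (its store-buffer precondition is literally identical), reaching a state with $\PbuffI(\loc)=\ifotlabp{\tid}{\id}$, and then \rulename{persist-fo} at $\loc$ — whose precondition holds since $\ifotlabp{\tid}{\id}$ is now at the head — restoring $\PbuffI(\loc)=\epsilon$ and re-establishing $R$. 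Both emitted labels, $\lTP{\ifolab{\loc}{\id}}$ and $\ipfotlab{\tid}{\loc}{\id}$, lie in $\PTSOsynnI.\makeinst{\lSigma}$, so $\erase(\trI_\alpha)$ is empty — and so is $\erase(\alpha)$, since $\lTP{\ifolab{\loc}{\id}}\in\PTSOsynI.\makeinst{\lSigma}$.

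Iterating the simulation step along $\trI$ and concatenating the produced fragments then yields a \PTSOsynnI-trace $\trI'$ with $\erase(\trI')=\erase(\trI)$ driving \PTSOsynnI from $\tup{\mem_0,\PbuffI_\Init,\BuffI_\Init,\emptyset}$ to the state $R$-equal to \PTSOsynI's final state; in particular, its non-volatile component is still $\mem$, so $\trI'$ is an $\mem_0$-to-$\mem$ \PTSOsynnI-trace, as required. The essentially only obstacle is maintaining the marker-freeness invariant, and this is precisely where \PTSOsynI's \emph{synchronous} flush-optimals pay off: the $\PbuffI(\loc)=\epsilon$ precondition of \rulename{prop-fo} guarantees the freshly added $\lFOT$-marker sits alone at the head of $\PbuffI(\loc)$, hence can be persisted on the spot. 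The remaining per-rule verifications are routine case analysis.
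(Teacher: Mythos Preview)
Your proposal is correct and matches the paper's approach exactly: the paper's proof sketch says precisely that \PTSOsynnI simulates \PTSOsynI by taking a \rulename{persist-fo} step immediately after every \rulename{prop-fo} step, keeping the persistence buffers free of $\ifotlabp{\_}{\_}$ entries. Your write-up simply spells out the simulation relation (componentwise equality) and the per-rule case analysis that the paper leaves implicit.
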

\begin{proof}[Proof sketch]
	\PTSOsynnI can simulate \PTSOsynI by taking a \rulename{persist-fo} step
immediately after every \rulename{prop-fo} step, keeping the persistence buffers
without any $\ifotlabp{\_}{\_}$ entries.
\end{proof}

\subsection{Proof of \cref{lem:empty_buff}}
\label{sec:empty}

\emptyBUF*
\begin{proof}
The first item is trivial (we can simply propagate and persist whatever needed in the end of the trace).
We prove the second using the instrumented system \PTSOsynnnI.
By \cref{lem:M_MI,lem:PTSOsynnn_PTSOsynnnI}, there exist $\trI$, $\PbuffI$, $\BuffI$, and $\ID\suq \N$,
such that $\tup{\mem_0,\Pbuff_\epsl,\Buff_\epsl,\inst{\emptyset}} \asteplab{\trI}{\PTSOsynnnI} \tup{\mem, \PbuffI, \BuffI, \ID}$,
$\Lambda(\trI)=\tr$,
$\Lambda(\PbuffI)=\Pbuff$,
and $\Lambda(\BuffI)=\Buff$.
For every $\tid\in\Tid$, let $i_\tid$ be the minimal index such that 
$\lTID(\trI(i_\tid))=\tid$,
$\lTYP(\trI(i_\tid)) \in \set{\lW,\lFL,\lFO,\lSF}$,
and $\lSN(\trI(j))\neq \lSN(\trI(i_\tid))$ for every $j>i$
(that is, the operation in index $i_\tid$ never propagated from the store buffer).
If such index does not exist, we let $i_\tid = \bot$.
For every $\tid\in\Tid$, let $I_\tid$ be the set of all 
indices $i \geq i_\tid$ such that 
$\lTID(\trI(i)) =\tid$ and
$\lTYP(\trI(i))\in \set{\lW,\lR,\lU,\lRex,\lMF,\lFL,\lFO,\lSF}$
(that is, the operation in index $i$ was issued after an operation that never propagated from the store buffer).
If $i_\tid = \bot$, we let $I_\tid=\emptyset$.
Now, let $\trI'$ be the sequence obtained from $\tr$
by omitting for every $\tid\in\Tid$ all transition labels 
in indices $I_\tid$, and further omitting $\trI(j)$ if $\lSN(\trI(j))=\lSN(\trI(i))$
for some $i\in I_\tid$ 
(that is, we remove the operations in $I_\tid$ and their corresponding propagation operations).
Note that such $j$ can only exist if $\lTYP(\trI(i))=\lFO$.
It is easy to see that 
$\tup{\mem_0,\Pbuff_\epsl,\Buff_\epsl,\inst{\emptyset}} \asteplab{\trI'}{\PTSOsynnnI} \tup{\mem, \PbuffI,\Buff_\epsl,\inst{\ID'}}$
for some $\inst{\ID'}$ (in particular, all operations of threads $\tida \neq \tid$, as well as all propagation operations, 
are oblivious to the contents of $\Buff(\tid)$).
Going back to the non-instrumented system, 
by \cref{lem:M_MI,lem:PTSOsynnn_PTSOsynnnI},
we obtain that 
$\tup{\mem_0,\Pbuff_\epsl,\Buff_\epsl} \bsteplab{\Lambda(\trI')}{\PTSOsynnn} \tup{\mem, \Lambda(\PbuffI), \Buff_\epsl}$.
It is also easy to see that our construction ensures that $\Lambda(\trI') \lesssim \tr$.
\end{proof}
\section{Proofs for Section \ref{sec:dec}}
\label{app:dec_proofs}

\oprefinesdec*
\begin{proof}
Suppose that $\progstate \in \prog.\lQ$ is reachable under \M.
Then, by definition, 
$\tup{\progstate,{\mem,\vmem}}$ is reachable in
$\cs{\prog}{\M}$ for some $\tup{\mem, \vmem} \in \M.\lQ$.
Thus, there exist crashless observable program traces $\tr_0 \til \tr_n$,
initial program states $\progstate_0 \til \progstate_n\in \prog.\linit$,
initial non-volatile memories $\mem_1 \til \mem_n \in \Loc \to \Val$,
and initial volatile states $\vmem_0 \til \vmem_n \in \M.\lvinit$, such that the following hold:
\begin{itemize}
\item $\tup{\progstate_0, \mem_\Init, \vmem_0}
	\bsteplab{\tr_0}{\cs{\prog}{\M}} \tup{\_, \mem_1, \_}$, and
$\tup{\progstate_i, \mem_i, \vmem_i}
	\bsteplab{\tr_i}{\cs{\prog}{\M}} \tup{\_, \mem_{i+1}, \_}$ for every $1\leq i\leq n-1$.
\item $\tup{\progstate_n, \mem_n, \vmem_n}
	\bsteplab{\tr_n}{\cs{\prog}{\M}} \tup{\progstate, \_, \_}$.
\end{itemize}
By \cref{prop:bigstep}, it follows that:
\begin{itemize}
\item $\progstate_i 	\bsteplab{\tr_i}{\prog} \_$
for every $0\leq i\leq n-1$, and $\progstate_n \bsteplab{\tr_n}{\prog} \progstate$.
\item $\tr_0$ is an $\mem_\Init$-to-$\mem$ \M-observable-trace, and
$\tr_i$ is an $\mem_i$-to-$\mem_{i+1}$ \M-observable-trace
for every $1\leq i\leq n-1$.
\item $\tr_n$ is an $\mem_n$-initialized \M-observable-trace.
\end{itemize}
Then, assumption (ii) entails that
there exist $\tr'_0 \til \tr'_{n-1}$
and \D-consistent execution graphs $G_0 \til G_{n-1}$
 such that the following hold:
\begin{itemize}
\item $\tr'_i \lesssim \tr_i$ for every $0\leq i\leq n-1$.
\item $\tr'_i  \in \traces{G_i}$ for every $0\leq i\leq n-1$.
\item $G_0$ is $\mem_\Init$-initialized and $\mem(G_0)=\mem_1$.
\item For every $1\leq i\leq n-1$,
$G_i$ is $\mem_i$-initialized and $\mem(G_i)=\mem_{i+1}$.
\end{itemize}
Now, since $\progstate_i 	\bsteplab{\tr_i}{\prog} \_$ and $\tr'_i \lesssim \tr_i$ for every $0\leq i\leq n-1$,
by \cref{prop:bigstep-prefix}, we have $\progstate_i 	\bsteplab{\tr'_i}{\prog} \_$ for every $0\leq i\leq n-1$.
Since $\tr'_i  \in \traces{G_i}$ for every $0\leq i\leq n-1$,
by \cref{prop:gen_trace2}, it follows that $G_i$ is generated by $\prog$ for every $0\leq i\leq n-1$.

In addition, assumption (i) entails that
there exists a \D-consistent $\mem_n$-initialized execution graph $G_n$ such that $\tr_n\in\traces{G_n}$.
Since $\progstate_n \bsteplab{\tr_n}{\prog} \progstate$, by \cref{prop:gen_trace2}, it follows that 
$G_n$ is generated by $\prog$ with final state $\progstate$.

It follows that $G_0\til G_n$ are  \D-consistent execution graphs
that satisfy the conditions of \cref{def:dec_reachable},
so that $\progstate$ is reachable under \D.
\end{proof}

\decrefinesop*
\begin{proof}
Suppose that $\progstate \in \prog.\lQ$ is reachable under \D.
Let $G_0\til G_n$ be \D-consistent execution graphs that satisfy the conditions of \cref{def:dec_reachable}.
Our assumption entails that there exist $\tr_0\til \tr_n$ such that 
for every $1\leq i\leq n$, $\tr_i\in\traces{G_i}$
and $\tr_i$ is an $\mem_\Init(G_i)$-to-$\mem(G_i)$ \M-observable-trace.
Let $\vmem_0 \til \vmem_n \in \M.\lvinit$ such that 
$\tup{\mem_\Init(G_i), \vmem_i} \bsteplab{\tr_i}{\M} \tup{\mem(G_i), \_}$ for every $1\leq i \leq n$.

By \cref{prop:gen_trace1},
since $G_i$ is generated by $\prog$ for every $0\leq i\leq n-1$, 
there exist initial program states $\progstate_0 \til \progstate_{n-1}\in \prog.\linit$,
such that $\progstate_i \bsteplab{\tr_i}{\prog}  \_$  for every $0\leq i\leq n-1$.
Using \cref{prop:bigstep}, it follows that
$\tup{\progstate_i, \mem_\Init(G_i), \vmem_i}
	\bsteplab{\tr_i}{\cs{\prog}{\M}}
	\tup{\_, \mem(G_i), \_}$ for every $0\leq i\leq n-1$.
	
In addition, since $G_n$  is generated by $\prog$ with final state $\progstate$,
there exists initial program state $\progstate_n\in \prog.\linit$,
such that $\progstate_n \bsteplab{\tr_n}{\prog}  \progstate$.
Using \cref{prop:bigstep}, it follows that
$\tup{\progstate_n, \mem_\Init(G_n), \vmem_n}
	\bsteplab{\tr_n}{\cs{\prog}{\M}}
	\tup{\progstate, \mem(G_n), \_}$.
	
Now, since $\mem_\Init(G_0)=\mem_\Init$
and $\mem_\Init(G_i)=\mem(G_{i-1})$ for every $1\leq i\leq n$,
it follows that $\tup{\progstate,\mem(G_n),\vmem}$ is reachable in
$\cs{\prog}{\M}$ for some $\vmem \in \M.\lvQ$.
\end{proof}

The following property of $\lPPO$ is useful below:

\begin{lemma}
\label{lem:ppo_prop}
$G.\lPPO \seq [\sR] \seq G.\lPO \suq G.\lPPO$.
\end{lemma}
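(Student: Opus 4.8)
The statement to prove is $G.\lPPO \seq [\sR] \seq G.\lPO \suq G.\lPPO$. The plan is to unfold the definition of $G.\lPPO$ (\cref{def:ppo}) and check the two defining implications for a composed pair. First I would take an arbitrary triple $\tup{a,c}\in G.\lPPO$, $c\in\sR$, and $\tup{c,b}\in G.\lPO$, and show $\tup{a,b}\in G.\lPPO$. Transitivity of $G.\lPO$ immediately gives $\tup{a,b}\in G.\lPO$, so the content is entirely in verifying the two side conditions in the definition of $\lPPO$ for the endpoints $a$ and $b$.

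\textbf{Key steps.} The crucial observation is that the two implications constraining a $\lPPO$ pair $\tup{a,b}$ only restrict $b$ according to its \emph{type}: the first says $a\in\sW\cup\sFL\cup\sFO\cup\sSF \implies b\nin\sR$, and the second says $a\in\sW\cup\sFL\cup\sFO \land \lLOC(a)\neq\lLOC(b) \implies b\nin\sFO$. Now $b$ here is an event such that $\tup{c,b}\in G.\lPO$ with $c\in\sR$. Since $\tup{a,c}\in G.\lPPO$ and $c\in\sR$, the \emph{first} implication applied to the pair $\tup{a,c}$ forces $a\nin\sW\cup\sFL\cup\sFO\cup\sSF$ — that is, $a\in\sR\cup\sU\cup\sRex\cup\sMF$ (using the type partition: $\sE = \sR\uplus\sW\uplus\sU\uplus\sRex\uplus\sMF\uplus\sFL\uplus\sFO\uplus\sSF$). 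But then the antecedents of \emph{both} implications for the pair $\tup{a,b}$ are false, because $a$ is not in $\sW\cup\sFL\cup\sFO\cup\sSF$ and in particular not in $\sW\cup\sFL\cup\sFO$. Hence both implications hold vacuously, and $\tup{a,b}\in G.\lPPO$. That completes the argument.

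\textbf{Main obstacle.} There is essentially no hard part here; the only thing to be careful about is the exact reading of the type sets and making sure the event-type partition is used correctly (e.g. that $\sMF$, $\sU$, $\sRex$ are genuinely disjoint from $\sW\cup\sFL\cup\sFO\cup\sSF$, which is clear from \cref{def:event}). One should also double-check the edge case where $a$ itself is a read: if $a\in\sR$ the conclusion is still fine since again the antecedents are false. So the ``obstacle,'' if any, is purely bookkeeping: confirming that the membership $c\in\sR$ propagates backward through the first $\lPPO$ clause to pin down the type of $a$, and then forward to discharge both clauses for $\tup{a,b}$. The proof is a two-line case-free argument once this chain is spelled out, so I would simply write it as: apply clause~(1) of \cref{def:ppo} to $\tup{a,c}$ to learn $a\notin\sW\cup\sFL\cup\sFO\cup\sSF$; conclude both clauses hold trivially for $\tup{a,b}$; done.
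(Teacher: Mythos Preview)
Your argument is correct. The paper states this lemma without proof, treating it as immediate from \cref{def:ppo}; your verification is exactly the straightforward unfolding one would expect, and there is nothing to add.
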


\hbtsohelper*
\begin{proof}
In this proof we consider a single graph $G$, and thus omit the ``$G.$'' prefix from all notations.

Consider a cycle in $\lPPO \cup \lRFE \cup \tpo \cup \lFR(\tpo)$
of minimal length.
The fact that $\tpo$ is total on $\sP$ and the minimality of the cycle
imply that this cycle may contain at most two events in $\sP$.

If the cycle contains no events in $\sP$, then it must consist solely of  $\lPPO$-edges,
which contradict the fact that $\lPO$ is irreflexive.

If the cycle contains one event in $\sP$, then we must have
$\tup{e,e} \in (\lPPO\cup \lRFE) \seq \lPPO^+ \seq (\lPPO\cup \lFR(\tpo))$
for some $e\in \lE$, which implies that one of the following holds:
\begin{enumerate}
\item[$(i)$] $\tup{e,e} \in \lPPO^+ \suq \lPO$,
\item[$(ii)$] $\tup{e,e} \in \lPPO^+ \seq \lFR(\tpo) \suq \lPO \seq \lFR(\tpo)$,
\item[$(iii)$] $\tup{e,e} \in \lRFE \seq \lPPO^+ \suq \lRFE \seq \lPO$, or
\item[$(iv)$] $\tup{e,e} \in \lRFE \seq \lPPO^+ \seq \lFR(\tpo) \suq \lRFE \seq \lPO \seq \lFR(\tpo)$.
\end{enumerate}
Each of these options contradicts one of the conditions of \cref{def:DPTSO}.

Finally, suppose that the cycle contains two events in $\sP$.
Then, from the fact that $\tpo$ is total on $\sP$,
there must exist some $\tup{e_1,e_2}\in \tpo$,
such that $\tup{e_2,e_1} \in \lPPO \cup \lRFE \cup \lFR(\tpo)$
or 
$\tup{e_2,e_1} \in (\lPPO \cup \lRFE) \seq [\sR] \seq
\lPPO^*  \seq
(\lPPO \cup \lFR(\tpo))$.
The first case leads to a contradiction since 
the conditions of \cref{def:DPTSO} ensure that 
$\tpo \seq \lPPO$, $\tpo \seq \lRFE$, and $\tpo \seq \lFR(\tpo)$ are all irreflexive.
It follows that one of the following holds:
\begin{enumerate}
\item[$(i)$] $\tup{e_2,e_2} \in \lPPO \seq [\sR] \seq \lPPO^+ \seq \tpo \suq \lPPO \seq [\sR] \seq \lPO \seq \tpo \suq \lPPO \seq \tpo$ (by \cref{lem:ppo_prop}),
\item[$(ii)$] 
$\tup{e_2,e_2} \in \lPPO \seq [\sR] \seq \lPPO^* \seq \lFR(\tpo) \seq \tpo$,
\item[$(iii)$] 
$\tup{e_2,e_2} \in \lRFE \seq \lPPO^+ \seq \tpo  \suq  \lRFE \seq \lPO \seq \tpo$,
or 
\item[$(iv)$] 
$\tup{e_2,e_2} \in \lRFE \seq \lPPO^*  \seq \lFR(\tpo) \seq \tpo  
\suq \tpo \cup \lRFE \seq \lPO  \seq \lFR(\tpo) \seq \tpo$.
\end{enumerate}
As before, each of these options contradicts one of the conditions of \cref{def:DPTSO}.
The least trivial case is $(ii)$:
suppose that $\tup{e_2,e_2} \in \lPPO \seq [\sR] \seq \lPPO^* \seq \lFR(\tpo) \seq \tpo$.
Then, it must be the case that $e_2\in \sU \cup \sRex \cup \sMF$,
and so $\tup{e_2,e_2} \in \lPO \seq \lFR(\tpo) \seq \tpo \seq [\sU \cup \sRex \cup \sMF]$,
which contradicts \cref{def:DPTSO}.
\end{proof}

\Cref{thm:PTSO_eq_DPTSO} is obtained from the following two theorems 
(one for each direction):

\begin{restatable}{theorem}{PTSOrefinesDPTSO}\label{thm:PTSO_refines_DPTSO}
\PTSOsynnn observationally refines \DPTSO.
\end{restatable}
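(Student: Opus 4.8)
The goal is to show that every program state reachable under the operational system \PTSOsynnn{} is reachable under the declarative model \DPTSO{}. By \cref{lem:op_refines_dec}, it suffices to establish the two trace-level conditions: (i) every $\mem_0$-initialized \PTSOsynnn-observable-trace $\tr$ lies in $\traces{G}$ for some \DPTSO-consistent $\mem_0$-initialized $G$, and (ii) for every $\mem_0$-to-$\mem$ \PTSOsynnn-observable-trace $\tr$ there are $\tr' \lesssim \tr$ and a \DPTSO-consistent $\mem_0$-initialized $G$ with $\tr' \in \traces{G}$ and $\mem(G) = \mem$. The plan is to build $G$ directly from an operational run, reading off $G.\lRF$, $G.\lMEMF$, and a witnessing propagation order $\tpo$ from the run, and then to verify the seven conditions of \cref{def:DPTSO}.

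First I would lift the given \PTSOsynnn-trace to an \PTSOsynnnI-trace $\trI$ via \cref{lem:M_MI} and \cref{lem:PTSOsynnn_PTSOsynnnI}; the instrumentation equips every event with a unique identifier $\id$, which lets me unambiguously correlate the issuing step of an event, the \rulename{prop-*} step that drains it from the store buffer, and (for writes/flush-opts) the \rulename{persist-*} step that moves it to memory. From $\trI$ I construct $G.\lE$ by turning each issuing step $\tidlab{\tid}{\addid{\lab}{\id}}$ into an event $\event{\tid}{\sn}{\lab}$ with serial numbers respecting per-thread order (plus initialization events for $\mem_0$). I set $\tup{w,r}\in G.\lRF$ exactly when the \rulename{read}/\rulename{rmw}/\rulename{rmw-fail} step that generated $r$ obtained its value from the buffer/memory entry generated by $w$ (the $\rdWn$ lookup makes this a well-defined function of the run, using $\id$s to disambiguate equal values). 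For $G.\lMEMF(\loc)$ I take the write whose \rulename{persist-w} step last updated $\mem(\loc)$ along $\trI$ — for case (i) this can be any consistent choice extending the run; for case (ii) I must first append \rulename{persist-w}/\rulename{persist-fo} steps draining exactly enough of the persistence buffers to realize the target $\mem$, and then use \cref{lem:empty_buff} to assume all store buffers are empty, possibly trimming $\tr$ to a per-thread prefix $\tr'$ — this is where the $\lesssim$ slack is used. Finally I define $\tpo$ on $G.\sP$ by the order in which the corresponding \rulename{prop-*} steps occur in $\trI$ (each propagated event propagates exactly once since buffers end empty).

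The core of the proof is verifying conditions $(1)$–$(7)$ of \cref{def:DPTSO} against this $\tpo$. Conditions $(1)$–$(6)$ are essentially the standard TSO argument from~\cite{sra}: condition $(1)$ follows because \PTSOsynnn propagates entries from the \emph{head} of each store buffer except for \rulename{prop-fo}, which may overtake writes/flushes/flush-opts of a different location — exactly the exception carved out in $(1)$; conditions $(2)$–$(6)$ follow by the usual reasoning that a read reading externally from $w$ forces all program-order-later propagated events of its thread to propagate after $w$, combined with the FIFO structure of the per-location persistence buffers and the fact that mfence/RMW/sfence drain the relevant buffers. Condition $(7)$, $G.\lDTPO(\tpo) \seq \tpo$ irreflexive, is the genuinely new part: I would argue that if $f \in G.\lFLO_\loc$ (a flush to $\loc$, or a flush-opt to $\loc$ program-order-before an sfence-like event) propagated, then at that moment the \rulename{prop-fl} guard $\Pbuff(\loc)=\epsilon$ (resp. the sfence guard $\forall\loca.\,\fotlabp{\tid}\nin\Pbuff(\loca)$ together with in-order \rulename{persist-fo}) forces every $\loc$-write that propagated before $f$ to have \emph{already persisted}; hence such a write cannot be overwritten later, so it cannot be propagated before $G.\lMEMF(\loc)$ in $\tpo$ — which is exactly what $G.\lDTPO(\tpo) \seq \tpo$ irreflexivity asserts.

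I expect condition $(7)$ to be the main obstacle, because it couples the propagation order with the persistence dynamics and with the somewhat subtle thread-indexed $\fotlabp{\tid}$ markers: one must carefully track that an sfence of thread $\tid$ only clears $\tid$'s own flush-opt markers, so the derived order is only claimed from flush-opts that are genuinely ``completed'' in the sense of \cref{def:dtpo} (flush-opt followed in program order by a $\sU/\sRex/\sMF/\sSF$ of the same thread). A secondary technical nuisance is the bookkeeping in case (ii): extending the run to drain the buffers to the precise target memory $\mem$, then invoking \cref{lem:empty_buff} and verifying that the trimmed trace $\tr'$ is still per-thread-$\lesssim \tr$ and still induces the same $G$ up to the events that matter. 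Once $G$ and $\tpo$ are in hand these are routine but must be done with care to keep $\tpo$ total on $G.\sP$ and consistent with program order.
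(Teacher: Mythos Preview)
Your proposal is correct and follows essentially the same approach as the paper: reduce via \cref{lem:op_refines_dec} and \cref{lem:empty_buff} to a run ending with empty store buffers, lift to \PTSOsynnnI, build $G$ (events from issuing steps, $\lRF$ from the $\rdWn$ lookup, $\lMEMF$ from the last-persisted write per location), define $\tpo$ from the \rulename{prop-*} order (with $\mathit{tp}(i)=i$ for $\sU/\sRex/\sMF$, which you should make explicit), and then check conditions $(1)$--$(7)$ with exactly the argument you sketch for $(7)$. One small clarification: in case (ii) you should not ``append persist steps to realize $\mem$''---the final non-volatile memory is already $\mem$, and appending persist steps would move you away from it; the paper simply invokes the second part of \cref{lem:empty_buff} to obtain $\tr'\lesssim\tr$ reaching $\tup{\mem,\Pbuff,\Buff_\epsl}$ and builds $G$ from that run directly.
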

\begin{proof}[Proof (outline)]
Using \cref{lem:op_refines_dec}, it suffices to show that:
\begin{itemize}
\item For every $\mem_0$-initialized \PTSOsynnn-observable-trace $\tr$,
there exists a \DPTSO-consistent $\mem_0$-initialized execution graph $G$ such that $\tr\in\traces{G}$.
\item For every $\mem_0$-to-$\mem$ \PTSOsynnn-observable-trace $\tr$,
there exist $\tr' \lesssim \tr$ and $\mem_0$-initialized \DPTSO-consistent execution graph
such that $\tr'  \in \traces{G}$ and $\mem(G)=\mem$.
\end{itemize}
Using \cref{lem:empty_buff,lem:empty_buff_simple}, it suffices to prove that 
$\tup{\mem_0,\Pbuff_\epsl, \Buff_\epsl} \bsteplab{\tr}{\PTSOsynnn} \tup{\mem, \Pbuff,\Buff_\epsl}$
implies that there exists a \DPTSO-consistent  $\mem_0$-initialized execution graph $G$ 
such that $\tr\in\traces{G}$ and $\mem(G)=\mem$.
Suppose that $\tup{\mem_0,\Pbuff_\epsl, \Buff_\epsl} \bsteplab{\tr}{\PTSOsynnn} \tup{\mem, \Pbuff,\Buff_\epsl}$.
We construct a \DPTSO-consistent  $\mem_0$-initialized execution graph $G$ 
such that $\tr\in\traces{G}$ and $\mem(G)=\mem$.

We use the instrumented semantics (\PTSOsynnnI).
By \cref{lem:M_MI,lem:PTSOsynnn_PTSOsynnnI}, 
we have $\tup{\mem_0,\Pbuff_\epsl, \Buff_\epsl, \inst{\emptyset}} \bsteplab{\trI}{\PTSOsynnnI} \tup{\mem, \PbuffI,\Buff_\epsl, \inst{\ID}}$
for some $\trI$ such that $\Lambda(\trI)=\tr$, $\PbuffI$, and $\ID\suq\N$.
We use the (instrumented) trace $\trI$ to construct $G$:
\begin{itemize}
\item Events:
For every $1\leq i\leq \size{\trI}$ with $\trI(i)=\tidlab{\tid}{\addid{\lab}{\_}}$ 
and $\lTYP(\lab)\in\set{\lW,\lR,\lU,\lRex,\lMF,\lFL,\lFO,\lSF}$,
we include the event $e_i \defeq \event{\tid}{i}{\lab}$ in $G.\lE$.
In addition, we include the initialization events
$e_\loc \defeq \event{\bot}{0}{\wlab{\loc}{\mem_0(\loc)}}$
for every $\loc\in\Loc$.
It is easy to see that we have $\tr\in\traces{G}$
and that $G$ is $\mem_0$-initialized.
\item Reads-from:
$G.\lRF$ is constructed as follows:
for every $1\leq i\leq \size{\trI}$ with 
$\lTYP(e_i)\in\set{\lR,\lU,\lRex}$ and $\lLOC(e_i)=\loc$,
we locate the last index $1\leq j < i$ such that
$\lTYP(e_j) = \lW$, $\lLOC(e_j)=\loc$, $\lTID(e_j)=\lTID(e_i)$
and there does not exist an index $j < k < i$ such that $\lSN(\trI(k))=\lSN(\trI(j))$
(namely, the write that corresponds to $e_j$ was not propagated from the store buffer
when the read that corresponds to $e_i$ was executed),
and include an edge $\tup{e_j,e_i}$ in $G.\lRF$.
If such an index $j$ does not exist, we further locate the last index $1\leq k < i$ such that
such that $\lTYP(e_j) \in \set{\lU, \lTP{\lW}}$ and $\lLOC(e_j)=\loc$,
and include an edge $\tup{e_j,e_i}$ in $G.\lRF$,
where $j$ is the unique index satisfying $j<k$ and $\lSN(\trI(j))=\lSN(\trI(k))$,
or $j=k$ in case $\lTYP(e_j) =\lU$.
Finally, if such index $k$ does not exist as well,
we include the edge $\tup{e_\loc,e_i}$ in $G.\lRF$
(reading from the initialization event).
Using \PTSOsynnnI's operational semantics, it is easy to verify that $G.\lRF$ is indeed a reads-from relation for $G.\lE$.
\item Memory assignment:
To define  $G.\lMEMF$, 
for every $\loc\in\Loc$, let $i(\loc)$ be the maximal index such that 
$\lTYP(\trI(i(\loc)))=\lP{\lW}$
and $\lLOC(\trI(i(\loc)))=\loc$
(that is, $i(\loc)$ is the index of the last propagation to the persistent memory of a write to $\loc$).
In addition, let $w(i(\loc))$ be the (unique) index $k$ such that 
$\lTYP(\trI(k))\in \set{\lW,\lU}$ and $\lSN(\trI(k))=\lSN(\trI(i(\loc)))$
(that is, $w(i(\loc))$ is the index of the write operation that persists in index $i(\loc)$).
Now, we define $G.\lMEMF(\loc) \defeq e_{w(i(\loc))}$ for every 
$\loc\in\Loc$ for which $i(\loc)$ is defined.
If $i(\loc)$ is undefined ($\lTYP(\trI(i)=\lP{\lW}$
and $\lLOC(\trI(i))=\loc$ never hold), we set
$G.\lMEMF(\loc) \defeq e_\loc$ (the initialization event of $\loc$).
Then, we clearly have $\mem(G)=\mem$.
\end{itemize}
To show that $G$ is \DPTSO-consistent, 
we construct a propagation order $\tpo$ for $G$.
First, for every $1\leq i\leq \size{\trI}$ with 
$\lTYP(e_i)\in\set{\lW,\lFL,\lFO,\lSF}$, let $\mathit{tp}(i)$ denote 
the (unique) index $k$ such that 
$\lTYP(\trI(k))\in \set{\lTP{\lW}/\lTP{\lFL}/\lTP{\lFO}/\lTP{\lSF}}$
and $\lSN(\trI(k))=\lSN(\trI(i))$
(that is, $\mathit{tp}(i)$ is the index of the propagation from the store buffer of the operation in index $i$).
In addition, for every $1\leq i\leq \size{\trI}$ with 
$\lTYP(e_i)\in\set{\lU,\lRex,\lMF}$, we let $\mathit{tp}(i) \defeq i$.
Now, $\tpo$ is constructed as follows: for every $e_i,e_j\in G.\sP$, we include $\tup{e_i,e_j}\in \tpo$ iff $\mathit{tp}(i) < \mathit{tp}(j)$.
In addition, we include in $\tpo$ some arbitrary total order on $G.\lE \cap \Init$,
as well as pairs ordering all initialization events before all non-initialization events.
It is straightforward to verify that this construction satisfies the (local) properties of \cref{def:DPTSO}
yielding a \DPTSO-consistent graph:

\begin{enumerate}
\item For every $a,b\in \sP$,
except for the case that $a\in \sW \cup \sFL \cup \sFO$, $b\in \sFO$, and $\lLOC(a)\neq \lLOC(b)$,
if $\tup{a,b}\in G.\lPO$, then $\tup{a,b}\in \tpo$:
Let $a,b\in \sP$ such that $\tup{a,b}\in G.\lPO$.
Suppose that it is not the case that $a\in \sW \cup \sFL \cup \sFO$, $b\in \sFO$, and $\lLOC(a)\neq \lLOC(b)$.
First, if $a$ is an initialization event, then by definition we have $\tup{a,b}\in \tpo$ 
($b$ cannot be an initialization event in this case).
Otherwise, we have that 
$a= e_i$ and $b = e_j$ for some $1 \leq i < j \leq \size{\trI}$ such that $\lTID(e_i)=\lTID(e_j)$.
Since \PTSOsynnnI propagates the entries from the persistent buffer in the same order they were issued,
except for the case of an $\lFO$-entry that may propagate before previously-issued $\lW/\lFL/\lFO$-entries to a different location,
it must be the case that $\mathit{tp}(i) < \mathit{tp}(j)$, and so we have $\tup{a,b}=\tup{e_i,e_j}\in \tpo$.
\item $\tpo^? \seq G.\lRFE \seq G.\lPO^?$  is irreflexive:
First, we show that $G.\lRFE \seq G.\lPO^?$  is irreflexive.
Suppose that $\tup{a,b}\in G.\lRFE$ and $\tup{b,a} \in G.\lPO^?$.
Then, we have that 
$a= e_j$ and $b = e_i$ for some $1 \leq i \leq j \leq \size{\trI}$ such that $\lTID(e_i)=\lTID(e_j)$
(note that initialization events do not have incoming $\lPO$ or $\lRF$-edges).
However, $\tup{e_j,e_i}\in G.\lRF$ implies that $j<i$.
Now, suppose that $\tup{a,b}\in \tpo$, $\tup{b,c} \in G.\lRFE$, and $\tup{c,a} \in G.\lPO^?$.
Then, it follows that 
$a= e_i$, $b=e_j$, and $c = e_k$ for some $1 \leq i,j,k \leq \size{\trI}$ such that $\lTID(e_k)=\lTID(e_i)$,
$k \leq i$, and $\mathit{tp}(i) < \mathit{tp}(j)$.
Then, since we do not have $\tup{e_j,e_k}\in G.\lPO \cup G.\lPO^{-1}$, 
we cannot have $\tid(e_j)=\tid(e_k)$.
Then, the construction of $G.\lRF$ ensures that $\mathit{tp}(j) < k$.
It follows that $\mathit{tp}(i) < k$.
Since $i \leq \mathit{tp}(i)$, this contradicts the fact that $k \leq i$.
\item $G.\lFR(\tpo) \seq G.\lRFE^? \seq G.\lPO$ is irreflexive:
From the construction of $G.\lRF$, it is easy to verify that $\tup{e_i,e_j}\in G.\lFR(\tpo)$ implies that $i < \mathit{tp}(j)$.
Now, suppose that $\tup{a,b}\in G.\lFR(\tpo)$ and $\tup{b,a} \in G.\lPO$.
Then, $a= e_j$ and $b = e_i$ for some $1 \leq i \leq j \leq \size{\trI}$ such that $\lTID(e_i)=\lTID(e_j)$
and $j < \mathit{tp}(i)$. It follows that $i < \mathit{tp}(i)$ which contradicts our construction.
Finally, suppose that $\tup{a,b}\in G.\lFR(\tpo)$, $\tup{b,c} \in G.\lRFE$, and  $\tup{c,a} \in G.\lPO$.
Then, it follows that 
$a= e_i$, $b=e_j$, and $c = e_k$ for some $1 \leq i,j,k \leq \size{\trI}$ such that $\lTID(e_k)=\lTID(e_i)$,
$k \leq i$, and $i < \mathit{tp}(j)$.
As in the previous item, we have that $\mathit{tp}(j) < k$, which leads to a contradiction.
\item $G.\lFR(\tpo) \seq \tpo$  is irreflexive:
Suppose that $\tup{a,b}\in G.\lFR(\tpo)$ and $\tup{b,a} \in \tpo$.
Then, $a= e_j$ and $b = e_i$ for some $1 \leq i , j \leq \size{\trI}$ such that 
$i < \mathit{tp}(j)$ and $\mathit{tp}(j) \leq \mathit{tp}(i)$. 
It follows that $i < \mathit{tp}(i)$, which contradicts our construction.
\item $G.\lFR(\tpo) \seq \tpo \seq G.\lRFE \seq G.\lPO$  is irreflexive:
Suppose that $\tup{a,b}\in G.\lFR(\tpo)$, $\tup{b,c}\in \tpo$, $\tup{c,d}\in G.\lRFE$, and $\tup{d,a}\in G.\lPO$.
Then, it follows that 
$a= e_i$, $b=e_j$, $c = e_k$, and $d=e_m$ for some $1 \leq i,j,k,m \leq \size{\trI}$ such that 
$\lTID(e_m)=\lTID(e_i)$, $m < i$, $i < \mathit{tp}(j)$, $\mathit{tp}(j) < \mathit{tp}(k)$,
and $\mathit{tp}(k) < m$.
Clearly, these inequalities lead to a contradiction.
\item $G.\lFR(\tpo) \seq \tpo \seq [\sU \cup \sRex \cup \sMF] \seq G.\lPO$ is irreflexive:
Suppose that $\tup{a,b}\in G.\lFR(\tpo)$, $\tup{b,c}\in \tpo$, 
$c\in \sU \cup \sRex \cup \sMF$, and $\tup{c,a}\in G.\lPO$.
Then, it follows that 
$a= e_i$, $b=e_j$, $c = e_k$ for some $1 \leq i,j,k \leq \size{\trI}$ such that 
$\lTID(e_k)=\lTID(e_i)$, $k < i$, $i < \mathit{tp}(j)$, $\mathit{tp}(j) < \mathit{tp}(k)$.
However, since $c\in \sU \cup \sRex \cup \sMF$, we have $\mathit{tp}(k)= k$,
and, as before,  these inequalities lead to a contradiction.

\item $G.\lDTPO(\tpo) \seq \tpo$ is irreflexive:
Suppose that $\tup{a,b}\in G.\lDTPO(\tpo)$ and $\tup{b,a}\in \tpo$.
By definition, there is a location $\loc\in\Loc$ such that 
$a\in G.\lFLO_\loc = G.\sFL_\loc \cup (\sFO_\loc \cap \dom{G.\lPO\seq  [\sU \cup \sRex \cup \sMF \cup \sSF]})$,
$b\in \sW_\loc \cup \sU_\loc$, and $\tup{G.\lMEMF(\loc),b} \in \tpo$.
Then, $a= e_j$ and $b = e_i$ for some $1 \leq i , j \leq \size{\trI}$ such that 
$\mathit{tp}(i) < \mathit{tp}(j)$. 
Now, if $a$ is a flush event, the flush step in index $j$ can only exist if the write entry that corresponds to $b$ has persisted.
Hence, $i(x)$ is defined, and we have $G.\lMEMF(\loc)= e_{w(i(x))}$.
In addition, $\tup{G.\lMEMF(\loc),b} \in \tpo$ implies that $\mathit{tp}(w(i(x))) \leq \mathit{tp}(i)$. 
However, since the persistence order (on each location) must follow the order in which the write propagated from the store buffer,
the write entry that corresponds to $b$ must persist after the write entry that corresponds to $G.\lMEMF(\loc)$,
which contradicts the construction of $G.\lMEMF$.
The case that $a$ is a flush-optimal event followed by an $\sU \cup \sRex \cup \sMF \cup \sSF$-event of the same thread is handled similarly.
\qedhere
\end{enumerate}
\end{proof}

\begin{restatable}{theorem}{DPTSOrefinesPTSO}\label{thm:DPTSO_refines_PTSO}
\DPTSO observationally refines \PTSOsynnn.
\end{restatable}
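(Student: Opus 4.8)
The plan is to invoke \cref{lem:dec_refines_op}: it suffices to show that for every \DPTSO-consistent initialized execution graph $G$, some $\tr\in\traces{G}$ is an $\mem_\Init(G)$-to-$\mem(G)$ \PTSOsynnn-observable-trace. So fix such a $G$ together with a propagation order $\tpo$ witnessing \cref{def:DPTSO}; by \cref{lem:dtpo} we have $G.\lDTPO(\tpo)\suq\tpo$, and by \cref{lem:hbtso_helper} the relation $G.\lPPO\cup G.\lRFE\cup\tpo\cup G.\lFR(\tpo)$ is acyclic. I would fix a total order $\le$ on $G.\lE$ extending it, together with a linearization $\iota$ of $G.\lPO$.

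To build the operational run I would work in the instrumented system \PTSOsynnnI (\cref{fig:PTSOsynnnI}), tagging the events of $G$ with distinct identifiers so that, by \cref{lem:M_MI} and \cref{lem:PTSOsynnn_PTSOsynnnI}, producing an $\mem_\Init(G)$-to-$\mem(G)$ \PTSOsynnnI-trace whose erasure lies in $\traces{G}$ is enough. The run interleaves three kinds of actions: \emph{issue} steps, taken in the order $\iota$; \emph{propagation} steps (\rulename{prop-w}, \rulename{prop-fl}, \rulename{prop-fo}, \rulename{prop-sf}, and the implicit serialization in \rulename{rmw}/\rulename{rmw-fail}/\rulename{mfence}), taken in the order $\tpo$ restricted to $G.\sP$; and \emph{persistence} steps, where for each location $\loc$ we persist, in $\tpo$-order, exactly the writes $w\in G.\sW_\loc\cup G.\sU_\loc$ with $\tup{w,G.\lMEMF(\loc)}\in\tpo^?$ (so that $G.\lMEMF(\loc)$ is the last $\loc$-write to persist) together with the $\fotlabp{\tid}$-markers preceding them, taken as eagerly as the FIFO persistence buffers permit. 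The interleaving is chosen so that each event is issued before it is propagated and so that reads are placed relative to propagations exactly as in the standard declarative-to-operational construction for TSO, which $\le$ guides.

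The verification then has three parts. That the resulting trace lies in $\traces{G}$ and is $\mem_\Init(G)$-initialized is immediate, since the observable (issue) steps follow $\iota\supseteq G.\lPO$ and $G$ is $\mem_\Init(G)$-initialized; that the final non-volatile memory equals $\mem(G)$ follows because at each $\loc$ we persist writes in $\tpo$-order up to $G.\lMEMF(\loc)$ and never a $\tpo$-later $\loc$-write. Enabledness I would check rule by rule: for \rulename{prop-w}/\rulename{prop-fl} the event is at the head of its store buffer because condition~(1) of \cref{def:DPTSO} forces every $G.\lPO$-predecessor in $G.\sP$ to be $\tpo$-earlier (the $\lFO$-overtaking exception never applies when the propagated event is not a flush-optimal), hence already propagated; for \rulename{prop-fo} the overtaking side conditions hold for the dual reason, namely that only $\lW/\lFL/\lFO$ events of a \emph{different} location can still precede it in the buffer while an $\lSF$ predecessor is always $\tpo$-earlier; and reads read the value prescribed by $G.\lRF$ by the standard argument from conditions~(2)--(6), as in~\cite{sra,x86-tso}.

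The main obstacle, and the only genuinely new ingredient beyond non-persistent TSO, is the synchronous blocking: \rulename{prop-fl} of $\loc$ demands $\Pbuff(\loc)=\epsilon$, and \rulename{prop-sf}/\rulename{rmw}/\rulename{rmw-fail}/\rulename{mfence} of a thread $\tid$ demand the absence of $\fotlabp{\tid}$-markers in every persistence buffer. Here I would use $G.\lDTPO(\tpo)\suq\tpo$: when such a step for an event $e$ is scheduled, every $\loc$-write that has been propagated but not yet persisted is $\tpo$-before $e$, and the relevant flush (respectively flush-optimal, which is $G.\lPO$-before $e$ and hence in $\dom{G.\lPO\seq[\sU\cup\sRex\cup\sMF\cup\sSF]}$) lies in $G.\lFLO_\loc$; so condition~(7) forces that pending write to be $\tpo$-before-or-equal $G.\lMEMF(\loc)$, meaning it is one of the writes we persist, so the eager persistence steps can indeed drain the relevant buffer in time. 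Turning this into a precise invariant — that the chosen persisting writes and markers can always be flushed before a blocking step needs them — is a small forward-simulation argument in the style of those in \cref{app:ptsosynnn_proofs}, and is where the real work lies.
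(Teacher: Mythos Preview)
Your proposal is correct and follows essentially the same approach as the paper: both invoke \cref{lem:dec_refines_op}, work in \PTSOsynnnI, build the trace from issue/propagate/persist actions whose mutual ordering is justified via \cref{lem:hbtso_helper}, and handle the synchronous-blocking side conditions of \rulename{prop-fl}, \rulename{prop-sf}, \rulename{rmw}, \rulename{rmw-fail}, \rulename{mfence} using condition~(7) through \cref{lem:dtpo}. The only difference is presentational: the paper encodes your interleaving constraints as a single explicit partial order $R$ on the set of all action labels (with fourteen named components $R_1,\ldots,R_{14}$ capturing precisely the cross-constraints you describe informally, and a carefully delimited set $\E_\gamma^{\sFO}$ of flush-optimals that must persist) and proves $R$ acyclic, whereas you describe the construction procedurally with ``interleave as $\le$ guides'' and ``persist eagerly''; the underlying argument is the same.
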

\begin{proof}[Proof (outline)]
By \cref{lem:dec_refines_op}, is suffices to show that 
for every \DPTSO-consistent initialized execution graph $G$, some $\tr\in\traces{G}$
is an $\mem_\Init(G)$-to-$\mem(G)$ \PTSOsynnn-observable-trace.
By \cref{lem:M_MI,lem:PTSOsynnn_PTSOsynnnI}, we may use the instrumented system \PTSOsynnnI and show that 
there exists an $\mem_\Init(G)$-to-$\mem(G)$ \PTSOsynnnI-trace $\trI$
such that $\Lambda(\trI)\in \traces{G}$.

Let $G$ be a \DPTSO-consistent execution graph,
and let $\tpo$ be a propagation order for $G$ that satisfies the conditions of \cref{def:DPTSO}.
Let $F$ be some injective function from events to $\N$
(we will use it to assign identifiers to the different operations).
For every event $e\in\sE$, we associate three transition labels $\alpha(e),\beta(e),\gamma(e)$:
\begin{itemize}
\item Issue of $e$: $\alpha(e) = \tidlab{\lTID(e)}{\addid{\lLAB(e)}{F(e)}}$.
\item Propagation of $e$ from store buffer to persistence buffer (only defined for $e\in \sW \cup \sFL \cup \sFO \cup \sSF$):
$\beta(e) = \begin{cases}
\itidlab{\lTID(e)}{\itpwlab{\lLOC(e)}{\lVALW(e)}{F(e)}} & e\in \sW \\
\itidlab{\lTID(e)}{\lTP{\ifllab{\lLOC(e)}{F(e)}}} & e\in \sFL \\
\itidlab{\lTID(e)}{\lTP{\ifolab{\lLOC(e)}{F(e)}}} & e\in \sFO \\
\itidlab{\lTID(e)}{\lTP{\isflab{F(e)}}} & e\in  \sSF
\end{cases}$
\item Propagation of $e$ from persistence buffer to persistent memory (only defined for $e\in \sW \cup \sU \cup \sFO$):
$\gamma(e) = \begin{cases}
\ipwlab{\lLOC(e)}{\lVALW(e)}{F(e)} & e\in \sW \cup \sU \\
\ipfotlab{\lTID(e)}{\lLOC(e)}{F(e)} & e\in \sFO
\end{cases}$
\end{itemize}

Using these definition, we construct a set $A$ of transition labels of \PTSOsynnnI.
Let:
\begin{itemize}
\item $\E_\alpha = G.\lE \setminus \Init$.
\item $\E_\beta = (G.\sW \setminus \Init) \cup G.\sFL \cup G.\sFO \cup G.\sSF$.
\item $\E_\gamma^{\sW_\loc} = \set{ w \in (\sW_\loc \setminus \Init) \cup \sU_\loc \st \tup{w,G.\lMEMF(\loc)} \in \tpo^?}$.
\item $\E_\gamma^{\sW} = \bigcup_{\loc\in\Loc}  \E_\gamma^{\sW_\loc}$.
\item $\E_\gamma^{\sFO_\loc} = \sFO_\loc \cap (\dom{\tpo^? \seq G.\lPO \seq [\sU \cup \sRex \cup \sMF \cup \sSF]}
\cup \dom{\tpo \seq [\sFL_\loc \cup \set{G.\lMEMF(\loc)}]}$.
\item $\E_\gamma^\sFO = \bigcup_{\loc\in\Loc}  \E_\gamma^{\sFO_\loc}$.
\item $\E_\gamma = \E_\gamma^\sW \cup \E_\gamma^\sFO$.
\end{itemize} 
We define 
$$A = \set{\alpha(e) \st e\in \E_\alpha} \cup \set{\beta(e) \st e\in \E_\beta} 
\cup \set{\gamma(e) \st e\in \E_\gamma}.$$

Next, we construct an enumeration of $A$ which will serve as $\trI$.
Let $R$ be the union of the following relations on $A$:
\begin{itemize}
\item $R_1=\set{\tup{\alpha(e),\beta(e)} \st e \in \E_\beta}$
\item $R_2=\set{\tup{\beta(e),\gamma(e)} \st e \in \E_\gamma}$
\item $R_3=\set{\tup{\alpha(e_1),\alpha(e_2)} \st \tup{e_1,e_2}\in [\E_\alpha] \seq G.\lPO}$
\item $R_4=\set{\tup{\beta(e_1),\beta(e_2)} \st \tup{e_1,e_2}\in [\E_\beta] \seq \tpo \seq [\E_\beta]}$
\item $R_5=\set{\tup{\alpha(e_1),\beta(e_2)} \st \tup{e_1,e_2}\in [\sU \cup \sRex \cup \sMF] \seq \tpo \seq [\E_\beta]}$
\item $R_6=\set{\tup{\beta(e_1),\alpha(e_2)} \st \tup{e_1,e_2}\in [\E_\beta] \seq \tpo \seq [\sU \cup \sRex \cup \sMF] }$
\item $R_7=\set{\tup{\beta(e_1),\alpha(e_2)} \st \tup{e_1,e_2}\in  [\E_\beta] \seq G.\lRFE}$
\item $R_8=\set{\tup{\alpha(e_1),\alpha(e_2)} \st \tup{e_1,e_2}\in [\sU] \seq G.\lRFE}$
\item $R_9=\set{\tup{\alpha(e_1),\beta(e_2)} \st \tup{e_1,e_2}\in G.\lFR(\tpo) \seq [\E_\beta]}$
\item $R_{10}=\set{\tup{\alpha(e_1),\alpha(e_2)} \st \tup{e_1,e_2}\in G.\lFR(\tpo) \seq [\sU]}$
\item $R_{11}=\set{\tup{\gamma(e_1),\beta(e_2)} \st \tup{e_1,e_2} \in [\E_\gamma] \seq \tpo \seq [\sFL]}$
\item $R_{12}=\set{\tup{\gamma(e_1),\beta(e_2)} \st \tup{e_1,e_2} \in [\E_\gamma^\sFO] \seq G.\lPO \seq [\sSF]}$
\item $R_{13}=\set{\tup{\gamma(e_1),\alpha(e_2)} \st \tup{e_1,e_2} \in [\E_\gamma^\sFO] \seq G.\lPO \seq [\sU \cup \sRex \cup \sMF]}$
\item $R_{14}=\set{\tup{\gamma(e_1),\gamma(e_2)} \st \tup{e_1,e_2} \in [\E_\gamma] \seq \tpo \seq [\E_\gamma]}$
\end{itemize}
It is standard to verify that for any enumeration $\trI$ of $R$,
we have $\Lambda(\trI) \in \traces{G}$ and
that $\trI$ is an $\mem_\Init(G)$-to-$\mem(G)$ \PTSOsynnnI-trace.
In particular, let $\loc\in\Loc$ and suppose that for the last transition label of the form $\ipwlab{\loc}{\_}{\_}$ in $\trI$
is not $\ipwlab{\loc}{\lVALW(G.\lMEMF(\loc))}{F(G.\lMEMF(\loc))}$,
but rather $\ipwlab{\loc}{\lVALW(w)}{F(w)}$ for some $w\in \E_\gamma^\sW \setminus \set{G.\lMEMF(\loc)}$.
Then, since $w\in \E_\gamma^\sW$ we have $ \tup{w,G.\lMEMF(\loc)} \in \tpo^?$,
which contradicts the fact that $R_{14}\suq R$.
The proof that $\trI$ is indeed an \PTSOsynnnI-trace is performed by induction:
assume that a prefix $\trI'$ of $\trI$ is an \PTSOsynnnI-trace, show that it can be extended with one more label from
$\trI$. For that matter, the claim has to be strengthened to relate the prefix $\trI'$ with the state that 
\PTSOsynnnI reaches. This state, denoted by $\tup{\mem_{\trI'}, {\PbuffI_{\trI'}},\BuffI_{\trI'} , \inst{\ID}_{\trI'}}$, is constructed as follows:
\begin{itemize}
\item Persistent memory:
For every $\loc\in\Loc$, let $e_\loc \in \E_\gamma^\sW \cap \sE_\loc$ such that $\gamma(e_\loc)$ is the last
occurrence in $\trI'$ of a transition label of the form $\ipwlab{\loc}{\_}{\_}$.
If no transition of the form $\ipwlab{\loc}{\_}{\_}$ occurs in $\trI'$, let $e_\loc$ be the initialization write to $\loc$ in $G$ (\ie $\mem_\Init(G)(\loc)$).
Then, $\mem_{\trI'} = \lambda \loc.\; \lVALW(e_\loc)$.
\item Instrumented persistent buffers: 
For every location $\loc$, we include in $\PbuffI_{\trI'}(\loc)$ all entries of the following forms:
\begin{itemize}
\item $\iiwlab{\lLOC(e)}{\lVALW(e)}{\lSN(e)}$ for some $e\in G.\sW_\loc$ such that $\beta(e)\in \trI'$ and $\gamma(e)\nin \trI'$.
\item $\iiwlab{\lLOC(e)}{\lVALW(e)}{\lSN(e)}$ for some $e\in G.\sU_\loc$ such that $\alpha(e)\in \trI'$ and $\gamma(e)\nin \trI'$.
\item $\ifotlabp{\lTID(e)}{\lSN(e)}$ for some $e\in G.\sFO_\loc$ such that $\beta(e)\in \trI'$ and $\gamma(e)\nin \trI'$.
\end{itemize} 
Denote the instrumented entry related to event $e$ by $\textit{entry}(e)$. Then, 
$\textit{entry}(e_1)$ appears before $\textit{entry}(e_2)$ in $\PbuffI_{\trI'}(\loc)$ iff
one of the following hold:

\begin{itemize}
\item If $e_1,e_2\nin G.\sU_\loc$ and $\beta(e_1)$ appears before $\beta(e_2)$ in $\trI'$.
\item If $e_1\nin G.\sU_\loc$, $e_2\in G.\sU_\loc$, and 
$\beta(e_1)$ appears before $\alpha(e_2)$ in $\trI'$.
\item If $e_1\in G.\sU_\loc$, $e_2\nin G.\sU_\loc$, and
$\alpha(e_1)$ appears before $\beta(e_2)$ in $\trI'$.
\item If $e_1,e_2\in G.\sU_\loc$ and 
$\alpha(e_1)$ appears before $\alpha(e_2)$ in $\trI'$.
\end{itemize}
\item Instrumented store buffers: 
For every thread identifier $\tid$, we include in $\BuffI_{\trI'}(\tid)$ all entries of the following forms:
\begin{itemize}
\item $\iiwlab{\lLOC(e)}{\lVALW(e)}{\lSN(e)}$ for some $e\in G.\sW^\tid$ such that $\alpha(e)\in \trI'$ and $\beta(e)\nin \trI'$.
\item $\ifllab{\lLOC(e)}{\lSN(e)}$ for some $e\in G.\sFL^\tid$ such that $\alpha(e)\in \trI'$ and $\beta(e)\nin \trI'$.
\item $\ifolab{\lLOC(e)}{\lSN(e)}$ for some $e\in G.\sFO^\tid$ such that $\alpha(e)\in \trI'$ and $\beta(e)\nin \trI'$.
\item $\isflab{\lSN(e)}$ for some $e\in G.\sSF^\tid$ such that $\alpha(e)\in \trI'$ and $\beta(e)\nin \trI'$.
\end{itemize} 
Denote the instrumented entry related to event $e$ by $\textit{entry}(e)$. Then, 
$\textit{entry}(e_1)$ appears before $\textit{entry}(e_2)$ in $\BuffI_{\trI'}(\tid)$ iff
$\alpha(e_1)$ appears before $\alpha(e_2)$ in $\trI'$.
\item $\inst{\ID}_{\trI'}$ is the set of all identifiers used in $\trI'$.
\end{itemize}

It remains to show that $R$ is acyclic.
Clearly, a cycle in $R_3$ induces a $G.\lPO$-cycle, and so $R_3$ is acyclic.
Now, since $R_3$ is transitive, we can assume that any use of $R_3$ in an $R$-cycle
follows an $R_i$-step with $i\neq 3$.
It follows that any use of $R_3$ in an $R$-cycle must 
start in a transition label $\alpha(e)$ for some $e \in G.\sR \cup G.\sU \cup G.\sRex \cup G.\sMF$.
Hence, any $R$-cycle induces cycle in 
$G.\lPPO \cup G.\lRFE \cup \tpo \cup G.\lFR(\tpo)$,
which is acyclic by \cref{lem:hbtso_helper}.
\end{proof}

\section{Proofs for Section \ref{sec:psc}}
\label{app:psc}

For the proofs in this section, we use 
the instrumented persistent memory subsystem (see \cref{sec:instrumented})
\PSCI, presented in  \cref{fig:PSCI}.
The functions $\lTID$, $\lTYP$, $\lLOC$ are extended to
$\PSCI.\makeinst{\lSigma}$ in the obvious way
(in particular, for $\alpha\in \PSCI.\makeinst{\lSigma}$, we have
$\lTYP(\alpha)\in \set{\lP{\lW}/\lP{\lFOT}}$).

\begin{figure*}
\smaller
\myhrule
\begin{align*}
\PSCI.\makeinst{\lSigma} \defeq &
 \set{\ipwlab{\loc}{\val}{\id} \st \loc\in\Loc, \id \in \N}
{\cup \set{\ipfotlab{\tid}{\loc}{\id} \st \loc\in\Loc,\id \in \N}}
\end{align*}
\myhrule
$$\inarrC{
\mem  \in \Loc \to \Val \qquad\qquad
\PbuffI \in \Loc \to (\set{\iiwlab{\loc}{\val}{\id} \st \loc\in\Loc, \val\in\Val, \id \in \N}
 {\cup\set{\ifotlabp{\tid}{\id} \st \tid\in\Tid, \id \in \N} } )^*
}$$
$$\inarrC{
\PbuffI_\Init \defeq \lambda \loc.\; \epsl \qquad\qquad\qquad
\ID_\Init = \emptyset
}$$
\myhrule
\begin{mathpar}
\inferrule[write]{
\inst{\ID' = \ID \uplus \set{\id}}
\\\\ \lab = \wlab{\loc}{\val}
\\\\ \PbuffI'=\PbuffI[ \loc \mapsto \PbuffI(\loc) \cdot \ivale{\id}]
}{\tup{\mem, \PbuffI, \ID} \asteptidlab{\tid}{\addid{\lab}{\id}}{\PSCI} \tup{\mem, \PbuffI' , \inst{\ID'}}
} \and
\inferrule[read]{
\inst{\ID' = \ID \uplus \set{\id}}
\\\\ \lab = \rlab{\loc}{\val}
\\\\ \rdW{\mem}{\inst{\Lambda(}\PbuffI(\loc)\inst{)}}(\loc) = \val
}{\tup{\mem, \PbuffI, \ID} \asteptidlab{\tid}{\addid{\lab}{\id}}{\PSCI} \tup{\mem, \PbuffI , \inst{\ID'}}
} \\
\inferrule[rmw]{
\inst{\ID' = \ID \uplus \set{\id}}
\\\\ \lab = \ulab{\loc}{\val_\lR}{\val_\lW}
\\\\ \rdW{\mem}{\inst{\Lambda(}\PbuffI(\loc)\inst{)}}(\loc) = \val_\lR
\\\\ \forall \loca.\; \ifotlabp{\tid}{\_} \nin \PbuffI(\loca)
\\\\ \PbuffI' = \PbuffI[ \loc \mapsto \PbuffI(\loc) \cdot \ivale[\val_\lW]{\id}]
}{\tup{\mem, \PbuffI, \ID} \asteptidlab{\tid}{\addid{\lab}{\id}}{\PSCI} \tup{\mem, \PbuffI' , \inst{\ID'}}
} \hfill
\inferrule[rmw-fail]{
\inst{\ID' = \ID \uplus \set{\id}}
\\\\ \lab = \rexlab{\loc}{\val}
\\\\ \rdW{\mem}{\inst{\Lambda(}\PbuffI(\loc)\inst{)}}(\loc) = \val
\\\\ \forall \loca.\; \ifotlabp{\tid}{\_} \nin \PbuffI(\loca)
\\\\
}{\tup{\mem, \PbuffI, \ID} \asteptidlab{\tid}{\addid{\lab}{\id}}{\PSCI} \tup{\mem, \PbuffI  , \inst{\ID'}}
} \hfill
\inferrule[mfence/sfence]{
\inst{\ID' = \ID \uplus \set{\id}}
\\\\ \lab \in \set{\mflab,\sflab}
\\\\
\\\\ \forall \loca.\; \ifotlabp{\tid}{\_} \nin \PbuffI(\loca)
\\\\
}{\tup{\mem, \PbuffI, \ID} \asteptidlab{\tid}{\addid{\lab}{\id}}{\PSCI} \tup{\mem, \PbuffI  , \inst{\ID'}}
} \\
\inferrule[flush]{
\inst{\ID' = \ID \uplus \set{\id}}
\\\\ \lab = \fllab{\loc}
\\\\ {\PbuffI(\loc)=\emptyset}
}{\tup{\mem, \PbuffI, \ID} \asteptidlab{\tid}{\addid{\lab}{\id}}{\PSCI} \tup{\mem, \PbuffI  , \inst{\ID'}}
} \and
\inferrule[flush-opt]{
\inst{\ID' = \ID \uplus \set{\id}}
\\\\ \lab = \folab{\loc}
\\\\ \PbuffI' = \PbuffI[\loc \mapsto \PbuffI(\loc) \cdot \ifotlabp{\tid}{\id}]
}{\tup{\mem, \PbuffI, \ID} \asteptidlab{\tid}{\addid{\lab}{\id}}{\PSCI} \tup{\mem, \PbuffI', \inst{\ID'}}
} \end{mathpar}
\myhrule
\begin{mathpar}
\inferrule[persist-w]{
\inst{\ilab = \ipwlab{\loc}{\val}{\id}}
\\\\ \PbuffI(\loc) = \ivale{\id} \cdot \pbuffI
\\\\ \PbuffI' = \PbuffI[\loc \mapsto \pbuffI]
\\ \mem' = \mem[\loc \mapsto \val]
}{\tup{\mem, \PbuffI, \ID} \asteplab{\ilab}{\PSCI} \tup{\mem', \PbuffI', \ID}
} \and
\inferrule[persist-fo]{
\inst{\ilab = \ipfotlab{\tid}{\loc}{\id}}
\\\\ \PbuffI(\loc) = \ifotlabp{\_}{\id} \cdot \pbuffI
\\\\ \PbuffI' = \PbuffI[\loc \mapsto \pbuffI]
}{\tup{\mem, \PbuffI, \ID} \asteplab{\ilab}{\PSCI} \tup{\mem, \PbuffI', \ID}
}\end{mathpar}
\myhrule
\caption{The \PSCI Instrumented Persistent Memory Subsystem (the instrumentation is \inst{colored}).}
\label{fig:PSCI}
\end{figure*}

It is easy to see that \PSCI is an instrumentation of \PSC (see \cref{def:erasure_Pbuff} for the 
definition of an erasure of an instrumented per-location persistence buffer).
\begin{lemma}
\label{lem:PSC_PSCI}
\PSCI is a $\Lambda$-instrumentation of \PSC
for $\Lambda \defeq \lambda \tup{\PbuffI,\inst{\ID}} .\; \Lambda(\PbuffI)$.
\end{lemma}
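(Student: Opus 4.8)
The plan is to verify that the map $\erase \defeq \lambda \tup{\PbuffI,\inst{\ID}}.\;\erase(\PbuffI)$ — with $\erase(\PbuffI)$ the erasure of an instrumented per-location-persistence-buffer mapping of \cref{def:erasure_Pbuff} — meets the five requirements of \cref{def:erasure}, so that \PSCI is indeed a $\erase$-instrumentation of \PSC. The single idea behind all five checks is that the instrumentation in \PSCI is purely cosmetic: each persistence-buffer entry $\ivale{\id}$ erases to $\vale$ and each $\ifotlabp{\tid}{\id}$ erases to $\fotlabp{\tid}$, the $\inst{\ID}$ component of a state is dropped altogether, and by \cref{def:erasure_trace} the informative silent labels $\ipwlab{\loc}{\val}{\id},\ipfotlab{\tid}{\loc}{\id}\in\PSCI.\makeinst{\lSigma}$ erase to $\epsl$; none of this data influences which transitions are enabled.

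First, the initial-state clause is immediate, since $\PSCI.\lvinit=\set{\tup{\Pbuff_\epsl,\emptyset}}$, $\erase(\tup{\Pbuff_\epsl,\emptyset})=\Pbuff_\epsl$, and $\PSC.\lvinit=\set{\Pbuff_\epsl}$. For the two forward clauses I would walk through the transition rules of \PSCI in \cref{fig:PSCI} one by one and observe that erasing a \PSCI step produces the identically named rule of \PSC in \cref{fig:PSC}. The only points worth noting, all routine, are: the lookup function in \rulename{read}, \rulename{rmw}, and \rulename{rmw-fail} is applied to $\erase(\PbuffI(\loc))$, which is precisely the argument $\Pbuff(\loc)$ used by \PSC, so read values agree; the guard $\forall \loca.\;\ifotlabp{\tid}{\_}\nin\PbuffI(\loca)$ of \rulename{rmw}, \rulename{rmw-fail}, and \rulename{mfence/sfence} holds iff $\forall\loca.\;\fotlabp{\tid}\nin\erase(\PbuffI(\loca))$, the corresponding \PSC guard; likewise $\PbuffI(\loc)=\emptyset$ iff $\erase(\PbuffI(\loc))=\epsl$, the guard of \rulename{flush}; and the silent \rulename{persist-w} and \rulename{persist-fo} steps of \PSCI erase to the $\epsl$-labelled \rulename{persist-w} and \rulename{persist-fo} steps of \PSC. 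In every rule the update $\inst{\ID}\mapsto\inst{\ID}\uplus\set{\id}$ is invisible under $\erase$.

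For the two backward clauses, given a \PSC step $\tup{\mem,\erase(\makeinst{\vmem})}\asteptidlab{\tid}{\lab}{\PSC}\tup{\mem',\vmem'}$, I would fire the corresponding \PSCI rule using any identifier $\id\in\N\setminus\inst{\ID}$ — such an $\id$ exists because $\inst{\ID}$ stays a finite subset of $\N$ along any run, being built one element at a time from $\emptyset$ — and the guards of that \PSCI rule hold because, as noted above, they are equivalent to the \PSC guards we assumed; by construction the resulting state $\makeinst{\vmem'}$ satisfies $\erase(\makeinst{\vmem'})=\vmem'$. The $\epsl$-case is handled the same way, taking $\alpha\in\PSCI.\makeinst{\lSigma}$ to be the \rulename{persist-w} or \rulename{persist-fo} label carrying the identifier already stored in the head entry of the relevant persistence buffer. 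This exhausts \cref{def:erasure}. There is no real obstacle: the argument is a mechanical rule-by-rule comparison of \cref{fig:PSCI} with \cref{fig:PSC}, the only mild care needed being the availability of a fresh identifier in the backward direction.
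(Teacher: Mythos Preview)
Your proposal is correct and follows exactly the approach the paper (implicitly) takes: the paper states this lemma without proof, remarking only that ``it is easy to see,'' and your rule-by-rule verification of the five clauses of \cref{def:erasure} is the natural elaboration of that claim. One very minor point: \cref{def:erasure} is stated for \emph{all} volatile states $\makeinst{\vmem}$, not just reachable ones, so your justification for the existence of a fresh $\id$ (``$\inst{\ID}$ stays a finite subset of $\N$ along any run'') is, strictly read, a reachability argument rather than a pointwise one; but this is a harmless formalization artifact that the paper itself does not address, and the intended use via \cref{lem:M_MI} only concerns traces from initial states anyway.
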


\section{Proofs for Section \ref{sec:pscf}}
\label{app:pscf}

The next lemmas are used to prove \cref{thm:PSCFeqPSC}.

\begin{lemma}
\label{lem:PSCf_refines_PSC}
Every $\mem_0$-to-$\mem$ \PSCf-observable-trace $\tr$
is also an $\mem_0$-to-$\mem$ \PSC-observable-trace.
\end{lemma}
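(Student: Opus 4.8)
The plan is to show directly that any \PSCf-run producing the observable trace $\tr$ can be mimicked by a \PSC-run with the same observable trace and the same final non-volatile memory, via a forward simulation. The crucial idea is that \PSC should \emph{eagerly} persist every entry that is ``destined to persist'' and \emph{never} persist the others: concretely, I would maintain a simulation relation between a \PSCf-state $\tup{\mem,\vmem,\cp,\csf}$ and a \PSC-state $\tup{\mem',\Pbuff}$ asserting (1) $\mem'=\mem$; (2) $\Pbuff(\loc)=\epsl$ for every $\loc\in\cp$; (3) $\rdW{\mem'}{\Pbuff(\loc)}(\loc)=\vmem(\loc)$ for every $\loc\in\Loc$; and (4) $\fotlabp{\tid}\nin\Pbuff(\loca)$ for every $\tid\in\csf$ and $\loca\in\Loc$. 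Invariant~(2) is what pays off: it makes the blocking precondition of \PSC's \rulename{flush} rule ($\Pbuff(\loc)=\epsl$) hold automatically whenever the \PSCf \rulename{flush} is enabled ($\loc\in\cp$), and together with~(4) it discharges the ``no $\tid$-marker anywhere'' precondition of \PSC's \rulename{mfence/sfence}, \rulename{rmw}, \rulename{rmw-fail} rules whenever the \PSCf side has $\tid\in\csf$. Consequently \emph{no} persist step ever has to be scheduled \emph{before} a labelled step.

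The simulation itself replays the given \PSCf-run step by step. For the three ``persist-variant'' \PSCf rules — \rulename{write-persist}, \rulename{rmw-persist}, \rulename{flush-opt-persist} — I would play the corresponding labelled \PSC transition (\rulename{write}, \rulename{rmw}, \rulename{flush-opt}) and \emph{immediately} append the matching silent \PSC persist step (\rulename{persist-w}, resp.\ \rulename{persist-fo}); this is legal because invariant~(2) guarantees $\Pbuff(\loc)$ was empty beforehand ($\loc\in\cp$), so the freshly-appended entry is at the head, and it restores $\Pbuff(\loc)=\epsl$ and sets $\mem'(\loc)$ to the new value, preserving~(1)--(4). For the ``no-persist'' rules — \rulename{write-no-persist}, \rulename{rmw-no-persist}, \rulename{flush-opt-no-persist} — I append the entry to $\Pbuff(\loc)$ and do nothing more: invariant~(3) survives because the appended $\vale$ records exactly the updated $\vmem(\loc)$ (and an appended $\fotlabp{\tid}$ leaves the last write entry of $\Pbuff(\loc)$ untouched), and invariant~(4) survives because \rulename{flush-opt-no-persist} simultaneously drops $\tid$ from $\csf$. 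For the remaining rules — \rulename{read}, \rulename{rmw-fail}, \rulename{mfence/sfence}, \rulename{flush} — no buffer changes, and the \PSC-side enabling condition follows from invariant~(3) (the read value), invariant~(4) (absence of $\tid$-markers), or invariant~(2) (emptiness of the flushed buffer). Each of these is a short mechanical check against \cref{fig:PSC,fig:PSCf}, and I would carry out the full case analysis.

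For the base case, both systems start with empty persistence buffers and $\vmem$ equal to the initial non-volatile memory, so~(1)--(4) hold initially. Since the injected persist steps are silent, the constructed \PSC-run has observable trace exactly $\tr$; and since invariant~(1) is preserved at every step (the only \PSCf steps that alter $\mem$ are \rulename{write-persist}/\rulename{rmw-persist}, each matched by a \PSC write/rmw immediately followed by \rulename{persist-w}), it ends in non-volatile memory $\mem$. Hence $\tup{\mem_0,\Pbuff_\epsl}\bsteplab{\tr}{\PSC}\tup{\mem,\Pbuff}$ for some $\Pbuff$, i.e.\ $\tr$ is an $\mem_0$-to-$\mem$ \PSC-observable-trace, as required (this lemma, together with the converse direction, then yields \cref{thm:PSCFeqPSC} through \cref{lem:memory_refine}). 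The step I expect to require the most care in the write-up is keeping invariants~(2)--(4) consistent through the two flush-optimal rules: one has to observe that a $\fotlabp{\tid}$ that ever lingers in $\Pbuff$ must have been produced by a \rulename{flush-opt-no-persist} of thread $\tid$, which permanently removed $\tid$ from $\csf$, so such a marker can never falsify~(4); whereas a \rulename{flush-opt-persist} is always cancelled by the immediately following \rulename{persist-fo} and so disturbs nothing.
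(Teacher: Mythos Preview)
Your proposal is correct and takes essentially the same approach as the paper: the paper's simulation relation is your invariants (1)--(4) with the implications in (2) and (4) strengthened to biconditionals, and its step-by-step mapping of \PSCf transitions to \PSC transitions (persist-variants $\mapsto$ labelled step followed by the matching silent persist; no-persist variants $\mapsto$ labelled step alone; all others $\mapsto$ same-named step) is identical to yours. Your one-directional forms of (2) and (4) already suffice, since only those directions are needed to discharge the \PSC-side preconditions.
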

\begin{proof}[Proof (outline)]
We use a standard forward simulation argument.
A simulation relation $R \suq \PSCf.\lQ \times \PSC.\lQ$ is defined as follows:
$\tup{\tup{\mem_f, \vmem, \cp, \csf}, \tup{\mem, \Pbuff}} \in R$  if the following hold:
\begin{itemize}
\item $\mem_f = \mem$.
\item For every $\loc\in\Loc$, $\vmem(\loc)=\rdW{\mem}{\Pbuff(\loc)}(\loc)$.
\item $\loc \in \cp$ iff $\Pbuff(\loc)=\epsl$.
\item $\tid \in \csf$ iff $\forall \loca.\; \fotlabp{\tid} \nin \Pbuff(\loca)$.
\end{itemize}
Initially, we clearly have 
$\tup{\tup{\mem_0,\vmem_\Init, \cp_\Init, \csf_\Init } ,\tup{\mem_0,\Pbuff_\epsl} }\in R$.
Now, suppose that 
$\tup{\mem_f, \vmem, \cp, \csf}
\asteptidlab{\tid}{\lab}{\PSCf} 
\tup{\mem'_f, \vmem', \cp', \csf'}$,
and let 
$\tup{\mem, \Pbuff}\in \PSC.\lQ$
such that
$\tup{\tup{\mem_f, \vmem, \cp, \csf},\tup{\mem, \Pbuff}}\in R$.
Then, we have 
$\mem_f=\mem$.
We show that 
$\tup{\mem, \Pbuff}
\bsteptidlab{\tid}{\lab}{\PSC}
\tup{\mem'_f, \Pbuff'}$
for some $\Pbuff'$
such that $\tup{\tup{\mem'_f, \vmem', \cp', \csf'},\tup{\mem'_f, \Pbuff'}}\in R$.
The rest of the proof continues by separately considering each possible step of \PSCf,
and establishing the simulation invariants at each step.
Below, we present the mapping of \PSCf-steps to \PSC-steps:
\begin{itemize}
\item \rulename{write-persist}-step is mapped to a \rulename{write}-step immediately followed
by a \rulename{persist-w}-step.
\item \rulename{write-no-persist} is mapped to a \rulename{write}-step.
\item  \rulename{rmw-persist} is mapped to an \rulename{rmw}-step immediately followed
by a \rulename{persist-w}-step.
\item  \rulename{rmw-no-persist} is mapped to an \rulename{rmw}-step.
\item  \rulename{flush-opt-persist} is mapped to an \rulename{flush-opt}-step immediately followed 
by a \rulename{persist-fo}-step.
\item  \rulename{flush-opt-no-persist} is mapped to an \rulename{flush-opt}-step.
\item All other steps (\rulename{read}, \rulename{rmw-fail}, \rulename{mfence} ,\rulename{sfence},
and \rulename{flush}) are mapped to the \PSC-step of the same name.
\end{itemize}
It is straightforward to verify that this mapping induces possible sequences of steps,
and preserves the simulation invariants.
\end{proof}

For the converse, we use the following additional proposition (see \cref{def:commutes} for the definition of ``commutes'').

\begin{proposition}
\label{prop:PSCI-commute}
$\tup{\alpha,\beta}$ \PSCI-commutes if $\lTYP(\beta)\in\set{\lP{\lW},\lP{\lFOT}}$
and one of the following conditions holds:
\begin{itemize}
\item 
$\lTYP(\alpha)\nin\set{\lP{\lW},\lP{\lFOT}}$
 and $\lSN(\alpha)\neq\lSN(\beta)$.
 \item  $\lTYP(\alpha)\in\set{\lP{\lW},\lP{\lFOT}}$
 and $\lLOC(\alpha)\neq\lLOC(\beta)$.
\end{itemize}
\end{proposition}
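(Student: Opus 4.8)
The plan is to unfold \cref{def:commutes}: I would fix an arbitrary pair of consecutive transitions $q_0 \asteplab{\alpha}{\PSCI} q_1 \asteplab{\beta}{\PSCI} q_2$ and exhibit an intermediate state $q_1'$ with $q_0 \asteplab{\beta}{\PSCI} q_1' \asteplab{\alpha}{\PSCI} q_2$. Both of the hypothesised cases rest on the same structural reading of the rules in \cref{fig:PSCI}. A persistence step $\beta$ (of type $\lP{\lW}$ or $\lP{\lFOT}$) inspects and modifies only the persistence buffer $\PbuffI(\loc_\beta)$, where $\loc_\beta = \lLOC(\beta)$ --- it removes that buffer's head entry, which carries the identifier $\lSN(\beta)$ --- and, when $\beta$ is a \rulename{persist-w} step, it additionally overwrites $\mem(\loc_\beta)$ with the value of the removed entry; it leaves $\ID$ and every other buffer and memory cell untouched. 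An issuing step $\alpha$ carrying identifier $\id_\alpha = \lSN(\alpha)$ reads $\mem$ and $\PbuffI(\lLOC(\alpha))$ only through $\rdWn$, possibly appends a single entry carrying $\id_\alpha$ to the \emph{tail} of $\PbuffI(\lLOC(\alpha))$, and adds $\id_\alpha$ to $\ID$ (requiring $\id_\alpha \nin \ID$ in the source state); its guard is, depending on the rule, the freshness of $\id_\alpha$ alone, a read condition $\rdW{\mem}{\PbuffI(\lLOC(\alpha))}(\lLOC(\alpha)) = \val$, the emptiness of $\PbuffI(\lLOC(\alpha))$ (for \rulename{flush}), or the condition $\forall \loca.\; \ifotlabp{\lTID(\alpha)}{\_} \nin \PbuffI(\loca)$ (for the sfence-inducing rules).

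For the second case, where $\alpha$ is itself a persistence step and $\lLOC(\alpha) \neq \lLOC(\beta)$, the reading above shows directly that $\alpha$ and $\beta$ read from and write to disjoint components of the state. Hence $\beta$ is enabled in $q_0$, firing it yields a state $q_1'$ in which $\alpha$'s guard (which speaks only of $\PbuffI(\lLOC(\alpha))$ and $\mem(\lLOC(\alpha))$) still holds, and the two updates act on disjoint cells and therefore commute, giving $q_0 \asteplab{\beta}{\PSCI} q_1' \asteplab{\alpha}{\PSCI} q_2$.

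For the first case, write $\loc_\beta = \lLOC(\beta)$ and $\id_\beta = \lSN(\beta)$. Because $\beta$ is enabled in $q_1$, the head of $\PbuffI_{q_1}(\loc_\beta)$ carries $\id_\beta$; and since $\alpha$ is an issuing step it can only append an entry carrying $\id_\alpha \neq \id_\beta$ to some buffer, so $\PbuffI_{q_0}(\loc_\beta)$ is already nonempty with the same head, and $\beta$ is enabled in $q_0$ --- let $q_1'$ be the resulting state. It remains to check that $\alpha$ is enabled in $q_1'$ and that firing it produces $q_2$. The identifier set of $q_1'$ equals that of $q_0$ (a persistence step does not touch $\ID$), so $\alpha$'s freshness guard is preserved; moreover $q_1'$ differs from $q_0$ only by the deletion of the head of $\PbuffI(\loc_\beta)$ (and, for \rulename{persist-w}, by the update of $\mem(\loc_\beta)$). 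Thus any guard of $\alpha$ that refers only to locations different from $\loc_\beta$ is literally unchanged, and when $\lLOC(\alpha) = \loc_\beta$ one checks three points: a \rulename{persist-w} step preserves the value returned by $\rdW{\mem}{\PbuffI(\loc_\beta)}(\loc_\beta)$, since it moves the head write entry's value into $\mem(\loc_\beta)$, which is exactly the value $\rdWn$ returns once no write entry remains; a \rulename{persist-fo} step only deletes a non-write marker and hence leaves $\rdWn$'s value alone; and the guard $\forall \loca.\; \ifotlabp{\lTID(\alpha)}{\_} \nin \PbuffI(\loca)$ can only become easier when a buffer entry is removed --- while the \rulename{flush} emptiness guard cannot arise here, as $\PbuffI_{q_0}(\loc_\beta) \neq \epsilon$. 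Finally, appending one entry to the tail of $\PbuffI(\lLOC(\alpha))$ commutes with deleting the head of the nonempty $\PbuffI(\loc_\beta)$ regardless of whether the two locations coincide, and the $\ID$- and $\mem$-updates touch disjoint cells, so firing $\alpha$ from $q_1'$ indeed lands in $q_2$.

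I expect the only non-mechanical point to be the small coherence fact used in the first case: that a \rulename{persist-w} step does not change the value $\rdWn$ returns for the location it persists. This is the one place that requires reasoning from the definition of $\rdWn$ rather than merely from disjointness of footprints; everything else is a routine case split over the seven issuing rules and the two persistence rules of \PSCI. The argument runs entirely parallel to the commutation facts for the instrumented TSO systems (\cref{prop:PTSOsynnI-commute,prop:PTSOI-commute}), the store-buffer-free setting of \PSCI only making matters simpler.
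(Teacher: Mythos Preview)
Your argument is correct and is precisely the routine verification the paper has in mind; the paper states this proposition (like its analogues \cref{prop:PTSOsynnI-commute,prop:PTSOI-commute}) without proof, treating it as an immediate consequence of inspecting the rules in \cref{fig:PSCI}. Your write-up spells out exactly that inspection, including the one non-disjointness subtlety --- that a \rulename{persist-w} step preserves $\rdW{\mem}{\PbuffI(\loc_\beta)}(\loc_\beta)$ --- and the observation that a \rulename{flush} of $\loc_\beta$ cannot occur as $\alpha$ since its emptiness guard would contradict $\beta$ being enabled afterwards.
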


\begin{lemma}
\label{lem:PSC_refines_PSCf}
Every $\mem_0$-to-$\mem$ \PSC-observable-trace $\tr$
is also an $\mem_0$-to-$\mem$ \PSCf-observable-trace.
\end{lemma}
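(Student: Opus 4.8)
\textbf{Proof plan for Lemma~\ref{lem:PSC_refines_PSCf}.}
The plan is to establish this direction, which is the harder half of the observational equivalence of \PSC and \PSCf, by a trace-transformation argument followed by a forward simulation. Working with the instrumented system \PSCI (which, by \cref{lem:M_MI,lem:PSC_PSCI}, is harmless and lets us track which write/flush-optimal each persist step corresponds to via the identifier $\id$), the key obstruction is that an arbitrary \PSC-run may delay persist steps long after the corresponding write or flush-optimal was issued, whereas \PSCf takes the persistence decision \emph{eagerly}, at the very moment the write or flush-optimal is performed. So the first step is to normalize the trace: given an $\mem_0$-to-$\mem$ \PSCI-trace $\trI$, I would show there is an $\mem_0$-to-$\mem$ \PSCI-trace $\trI'$ with $\Lambda(\trI')=\Lambda(\trI)$ that is \emph{synchronous} in the sense that every $\ipwlab{\loc}{\val}{\id}$ step (respectively $\ipfotlab{\tid}{\loc}{\id}$ step) whose entry ever gets persisted appears immediately after the corresponding \rulename{write}/\rulename{rmw} step (respectively \rulename{flush-opt} step), while the entries that are never persisted remain in the buffers at the end. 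This is a standard commutation argument using \cref{prop:PSCI-commute}: a persist step for an entry of location $\loc$ commutes leftward past any non-persist step with a different identifier and past any persist step to a different location, and since \PSCI's persistence buffers are per-location FIFO queues, the minimal-delay entry to be moved is always at the head of its buffer at the relevant point, so the swap with the immediately preceding step is always enabled. Induction on the sum of delays (in the style of \cref{lem:PTSOsynnnI_persist_early} and \cref{lem:PTSOsynnI_comp_to_syn}) closes this step.

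Once $\trI$ is synchronous, each persisting write/flush-optimal behaves exactly like a \PSCf \rulename{write-persist}/\rulename{rmw-persist}/\rulename{flush-opt-persist} step (the write-then-persist or flush-opt-then-persist micro-pair collapses to a single \PSCf transition), and each non-persisting write/flush-optimal behaves like the corresponding ``no-persist'' step. The second step is therefore a forward simulation $R \suq \PSC.\lQ \times \PSCf.\lQ$ along the synchronous trace: I would relate a \PSC-state $\tup{\mem,\Pbuff}$ (reached by a prefix of the normalized trace) to the \PSCf-state $\tup{\mem_f,\vmem,\cp,\csf}$ by requiring $\mem_f=\mem$; $\vmem(\loc)=\rdW{\mem}{\Pbuff(\loc)}(\loc)$ for every $\loc$; $\loc\in\cp$ iff no entry of $\Pbuff(\loc)$ will \emph{fail} to persist in the remainder of the trace, equivalently (after normalization) iff $\Pbuff(\loc)$ currently contains no entry at all except possibly entries that do get persisted later --- the cleanest formulation is $\loc \in \cp$ iff the suffix of the trace contains persist steps draining exactly the current content of $\Pbuff(\loc)$; and $\tid\in\csf$ iff every $\fotlabp{\tid}$ entry in every $\Pbuff(\loca)$ is among those later persisted. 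Actually, because the normalized trace is synchronous, a simpler invariant works: $\loc \in \cp$ iff $\Pbuff(\loc)=\epsl$ \emph{or} every entry currently in $\Pbuff(\loc)$ is scheduled to be persisted before any further write to $\loc$ is performed; dually for $\csf$. I would then check step-by-step that each \PSC transition (write, rmw, flush-opt, read, rmw-fail, mfence/sfence, flush, persist-w, persist-fo), under synchrony, is matched by the evidently-corresponding \PSCf transition (``persist'' or ``no-persist'' variant depending on whether the entry is later persisted), and that $R$ is preserved. The \rulename{flush} and \rulename{mfence}/\rulename{sfence} cases are where $\cp$/$\csf$ guard membership is used, mirroring the guards $\Pbuff(\loc)=\epsl$ and $\forall\loca.\ \fotlabp{\tid}\nin\Pbuff(\loca)$ of \PSC; these match precisely because of how $R$ is set up.

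The main obstacle I anticipate is getting the invariant for $\cp$ and $\csf$ exactly right so that both the guards are preserved \emph{and} the simulation relation is maintained through a later persist step that does \emph{not} reset $\cp$ --- the subtlety is that in \PSCf, once $\loc\notin\cp$, it stays out forever, whereas in \PSC the buffer $\Pbuff(\loc)$ can return to $\epsl$; normalization is precisely what reconciles this, since after normalization a non-persisting write to $\loc$ is never followed (in the trace) by a persisting write to $\loc$, so $\cp$ correctly remains able to be ``out'' from that point on. Making this precise may require carrying, as part of the simulation relation, a little bookkeeping about the fixed remaining suffix of the normalized trace (i.e., proving the simulation ``relative to $\trI'$'' rather than purely state-based), which is a mild but necessary departure from a textbook forward simulation. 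With that in hand, the lemma follows: composing with \cref{lem:PSCf_refines_PSC} gives \cref{thm:PSCFeqPSC}.
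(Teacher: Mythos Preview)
Your proposal is correct and follows essentially the same approach as the paper: pass to \PSCI, use \cref{prop:PSCI-commute} to normalize the trace so that every persist step immediately follows its issuing step, and then map the synchronized trace to a \PSCf-trace. The only difference is that you over-engineer the final step. After synchronization, the simple state invariant $\loc\in\cp \iff \Pbuff(\loc)=\epsl$ and $\tid\in\csf \iff \forall\loca.\ \fotlabp{\tid}\nin\Pbuff(\loca)$ already works---no future-dependent bookkeeping is needed---because the per-location FIFO discipline guarantees that once some entry of $\Pbuff(\loc)$ fails to persist, nothing later in $\Pbuff(\loc)$ ever persists, so in the synchronized trace $\Pbuff(\loc)$ never returns to $\epsl$ once it becomes nonempty. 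The paper accordingly skips the simulation and just replaces each write/rmw/flush-opt followed by its persist step with the corresponding \rulename{*-persist} step of \PSCf, and each one not so followed with the \rulename{*-no-persist} variant.
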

\begin{proof}[Proof (outline)]
Let $\tr$ be an $\mem_0$-to-$\mem$ \PSC-observable-trace.
By \cref{lem:M_MI,lem:PSC_PSCI}, there exists an $\mem_0$-to-$\mem$ \PSCI-trace $\trI$
such that $\Lambda(\trI)=\tr$.
Using \cref{prop:PSCI-commute}, we can move all $\lP{\lW}$-steps and $\lP{\lFOT}$-steps 
to immediately follow their corresponding $\lW/\lU$-step and $\lFO$-step,
thus obtaining a ``synchronized'' instrumented trace in which every write/rmw/flush-optimal either persists immediately
after it is issued or never persists.
This instrumented trace easily induces an $\mem_0$-to-$\mem$ \PSCf-observable-trace:
we take a \rulename{*-persist}-step for steps that are followed by a 
$\lP{\lW}$-steps or $\lP{\lFOT}$-steps, and otherwise we take the \rulename{*-no-persist} or other steps of \PSCf.
\end{proof}

\PSCFeqPSC*
\begin{proof}
Follows from \cref{lem:memory_refine,lem:PSCf_refines_PSC,lem:PSC_refines_PSCf}.
\end{proof}

\section{Proofs for Section \ref{sec:dpsc}}
\label{app:dpsc}

The following lemma is used to show that \DPSC observationally refines \PSC.

\begin{lemma}
\label{lem:DPSC_refines_PSC}
Let $G$ be a \DPSC-consistent initialized execution graph.
Then, some $\tr\in\traces{G}$ is an $\mem_\Init(G)$-to-$\mem(G)$ \PSC-observable-trace.
\end{lemma}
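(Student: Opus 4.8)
Looking at this statement, I need to prove that every $\DPSC$-consistent initialized execution graph $G$ has some trace $\tr \in \traces{G}$ that is an $\mem_\Init(G)$-to-$\mem(G)$ $\PSC$-observable-trace. This is one direction of the equivalence $\PSC = \DPSC$ (Theorem~\ref{thm:PSCeqDPSC}), feeding into Lemma~\ref{lem:dec_refines_op}'s analogue for $\PSC$.

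\medskip

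\textbf{Plan.} The approach is to extract a linearization from the $\DPSC$-consistency witness and replay it on the instrumented system $\PSCI$. First I would fix a modification order $\mo$ for $G$ such that $G.\lHB_\PSC(\mo)$ is irreflexive, and let $R$ be any total order on $G.\lE$ extending $G.\lHB_\PSC(\mo)$ (which exists since $G.\lHB_\PSC(\mo)$ is a strict partial order). The key point is that $R$ simultaneously extends $G.\lPO$, $G.\lRF$, $\mo$, $G.\lFR(\mo)$, and $G.\lDTPO(\mo)$. Using an injective identifier function $F$ from events to $\N$, I would associate to each non-initialization event $e$ an issue label $\alpha(e)$, and to each write/RMW and each flush-optimal a persist label $\gamma(e)$ (of type $\lP{\lW}$ or $\lP{\lFOT}$ respectively), exactly as in the proof of Theorem~\ref{thm:DPTSO_refines_PTSO} but simplified since $\PSC$ has no store buffers and hence no $\beta$-labels. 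Then I would build the instrumented trace $\trI$ by enumerating: each $\alpha(e)$ in $R$-order; the persist label $\gamma(w)$ for location $\loc$ placed so that the $\gamma$-labels for $\loc$ appear in $\mo_\loc$-order, and placed \emph{before} $\alpha(e')$ whenever consistency forces a later read/flush to see the memory (rather than the buffer) contents, and whenever $G.\lDTPO(\mo)$ requires it. Concretely, $w$ persists iff $\tup{w, G.\lMEMF(\loc)} \in \mo^?$, i.e.\ $w$ is $\mo$-at-or-below the designated persisted write.

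\medskip

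\textbf{Key steps.} (1) Show $\Lambda(\trI) \in \traces{G}$: the $\alpha$-labels appear in $R$-order, which refines $G.\lPO$, so the induced event enumeration respects $G.\lPO$; combined with the fresh-identifier bookkeeping this gives a valid element of $\traces{G}$ via Definition of induced traces. (2) Show $\trI$ is a genuine $\PSCI$-trace by induction on prefixes, strengthening the claim to describe the $\PSCI$-state reached after each prefix $\trI'$: the persistence buffer $\PbuffI_{\trI'}(\loc)$ consists of the $\iiwlab{\loc}{\_}{\_}$-entries for writes/RMWs to $\loc$ whose $\alpha$ (or their $\gamma$) has/has-not yet appeared, plus $\ifotlabp{\_}{\_}$-entries for flush-optimals to $\loc$ similarly, ordered by $R$; the memory $\mem_{\trI'}$ records the last-persisted value per location. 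Then each candidate next label is enabled: a \rulename{read} reads the right value because $G.\lRF$ and $G.\lFR(\mo)$ are in $R$ (so the reads-from write is either still in the buffer or has persisted and no later write to $\loc$ has been issued); a \rulename{flush} of $\loc$ is enabled because $G.\lDTPO(\mo)$ forces all $\mo$-later writes (equivalently all writes propagated after $G.\lMEMF(\loc)$) to be ordered after the flush in $R$, hence not yet issued, while writes $\mo$-at-or-below $G.\lMEMF(\loc)$ have persisted; \rulename{mfence}/\rulename{sfence}/\rulename{rmw}/\rulename{rmw-fail} need $\forall \loca.\,\ifotlabp{\tid}{\_}\nin \PbuffI(\loca)$, which holds because any flush-optimal by $\tid$ program-ordered before such a serializing event gives, via $G.\lDTPO(\mo)$ applied to $G.\lFLO_\loc$ (the flush-optimal is in the domain of $G.\lPO \seq [\sU \cup \sRex \cup \sMF \cup \sSF]$), an $R$-ordering forcing its $\gamma$-label before the serializing event's $\alpha$-label. (3) Show $\trI$ ends in state $\mem(G)$: for each $\loc$, the last $\iiwlab{\loc}{\_}{\_}$-persist in $\trI$ is $\gamma(G.\lMEMF(\loc))$ because $G.\lMEMF(\loc)$ is $\mo_\loc$-maximal among writes that persist (all persisting writes are $\mo^?$-below it, and the $\gamma$-labels appear in $\mo_\loc$-order); hence $\mem_{\trI}(\loc) = \lVALW(G.\lMEMF(\loc)) = \mem(G)(\loc)$. (4) Conclude via Lemma~\ref{lem:M_MI} and Lemma~\ref{lem:PSC_PSCI} that $\Lambda(\trI)$ is an $\mem_\Init(G)$-to-$\mem(G)$ $\PSC$-observable-trace, and it lies in $\traces{G}$.

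\medskip

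\textbf{Main obstacle.} The delicate part is verifying that the \rulename{flush} and the serializing-instruction steps are enabled — i.e.\ that the persistence buffer for the relevant location is empty (for flush) or free of $\tid$'s flush-optimal markers (for sfence-like steps) at exactly the moment the step is taken. This is precisely where $G.\lDTPO(\mo)$ and the set $G.\lFLO_\loc$ do their work, and getting the bookkeeping right — arguing that "persisted by now in $\trI'$" corresponds exactly to "$\mo^?$-below $G.\lMEMF(\loc)$", and that $\DPSC$-consistency forbids any write-to-$\loc$ strictly $\mo$-above $G.\lMEMF(\loc)$ from having been issued before the flush — requires carefully chasing the definition of $G.\lDTPO(\mo)$ through the total order $R$. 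The per-location-coherence side condition (second bullet of Definition~\ref{def:DPTSOmo}, here subsumed since $G.\lPO \suq G.\lHB_\PSC(\mo)$) makes the read-value argument cleaner than in the TSO case, which is the one simplification relative to Theorem~\ref{thm:DPTSO_refines_PTSO}.
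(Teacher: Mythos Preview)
Your proposal is correct and follows essentially the same approach as the paper's proof: both fix a total order extending $G.\lHB_\PSC(\mo)$, define issue labels $\alpha(e)$ and persist labels $\gamma(e)$ for the instrumented system $\PSCI$, select the persisting writes as those $\mo^?$-below $G.\lMEMF(\loc)$, schedule the labels so that the resulting trace is a valid $\PSCI$-trace reaching $\mem(G)$, and conclude via \cref{lem:M_MI,lem:PSC_PSCI}. The paper is slightly more explicit in two places where your sketch is informal: it spells out exactly which flush-optimals receive a $\gamma$-label (the set $\E_\gamma^{\sFO_\loc}$, which must include any flush-optimal to $\loc$ that is $T$-before a flush to $\loc$ or $T$-before a persisting write to $\loc$, not only those in $G.\lFLO_\loc$), and it encodes the interleaving constraints as explicit relations $R_1$--$R_5$ on the label set whose acyclicity reduces to acyclicity of $T$; in particular your \rulename{flush}-enabledness argument should also dispatch the $\ifotlabp{\_}{\_}$-entries in the buffer, which falls out of the paper's $R_3$ once $\E_\gamma^{\sFO_\loc}$ is defined as above.
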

\begin{proof}[Proof (outline)]
By \cref{lem:M_MI,lem:PSC_PSCI}, we may use the instrumented system \PSCI and show that 
some $\trI$ with $\Lambda(\trI)\in \traces{G}$
is an $\mem_\Init(G)$-to-$\mem(G)$ \PSCI-trace.

Let $\mo$ be a modification order for $G$ that satisfies the condition of \cref{def:DPSC}.
Let $F$ be some injective function from events to $\N$
(we will use it to assign identifiers to the different operations).
For every event $e\in\sE$, we associate two transition labels $\alpha(e),\gamma(e)$:
\begin{itemize}
\item Issue of $e$: $\alpha(e) = \tidlab{\lTID(e)}{\addid{\lLAB(e)}{F(e)}}$.
\item Propagation of $e$ from persistence buffer to persistent memory (only defined for $e\in \sW \cup \sU \cup \sFO$):
$\gamma(e) = \begin{cases}
\ipwlab{\lLOC(e)}{\lVALW(e)}{F(e)} & e\in \sW \cup \sU \\
\ipfotlab{\lTID(e)}{\lLOC(e)}{F(e)} & e\in \sFO
\end{cases}$
\end{itemize}

Let $T$ be any total order on $G.\lE$ extending $G.\lHB_\PSC(\mo)$.
We construct a set $A$ of transition labels of \PSCI
and an enumeration of $A$ which will serve as $\trI$.

Let:
\begin{itemize}
\item $\E_\alpha = G.\lE \setminus \Init$.
\item $\E_\gamma^{\sW_\loc} = \set{ w \in (\sW_\loc \setminus \Init) \cup \sU_\loc \st \tup{w,G.\lMEMF(\loc)} \in \mo^?}$.
\item $\E_\gamma^{\sW} = \bigcup_{\loc\in\Loc}  \E_\gamma^{\sW_\loc}$.
\item $\E_\gamma^{\sFO_\loc} = 
\sFO_\loc \cap (\dom{T^? \seq [\sFO_\loc] \seq G.\lPO \seq [\sU \cup \sRex \cup \sMF \cup \sSF]} \cup
\dom{T \seq [\sFL_\loc \cup \E_\gamma^{\sW_\loc}]}$.
\item $\E_\gamma^\sFO = \bigcup_{\loc\in\Loc}  \E_\gamma^{\sFO_\loc}$.
\item $\E_\gamma = \E_\gamma^\sW \cup \E_\gamma^\sFO$.
\end{itemize} 
We define 
$$A = \set{\alpha(e) \st e\in \E_\alpha}
\cup \set{\gamma(e) \st e\in \E_\gamma^\sW \cup \E_\gamma^\sFO}.$$

Let $R$ be the union of the following relations on $A$:
\begin{itemize}
\item $R_1=\set{\tup{\alpha(e),\gamma(e)} \st e \in \E_\gamma}$
\item $R_2=\set{\tup{\alpha(e_1),\alpha(e_2)} \st \tup{e_1,e_2}\in [\E_\alpha] \seq T }$
\item $R_3=\set{\tup{\gamma(e_1),\alpha(e_2)} \st \tup{e_1,e_2} \in [\E_\gamma] \seq T \seq [\sFL] }$
\item $R_4=\set{\tup{\gamma(e_1),\alpha(e_2)} \st \tup{e_1,e_2} \in [\E_\gamma^\sFO] \seq G.\lPO \seq [\sU \cup \sRex \cup \sMF \cup \sSF]}$
\item $R_5=\set{\tup{\gamma(e_1),\gamma(e_2)} \st \tup{e_1,e_2} \in [\E_\gamma] \seq T \seq [\E_\gamma]}$
\end{itemize}
It is easy to see that $R$ is acyclic (an $R$-cycle would entail a $T$-cycle).
It is standard to verify that for any enumeration $\trI$ of $R$,
we have $\Lambda(\trI) \in \traces{G}$ and
that $\trI$ is an $\mem_\Init(G)$-to-$\mem(G)$ \PSCI-trace.
In particular, let $\loc\in\Loc$ and suppose that for the last transition label of the form $\ipwlab{\loc}{\_}{\_}$ in $\trI$
is not $\ipwlab{\loc}{\lVALW(G.\lMEMF(\loc))}{F(G.\lMEMF(\loc))}$,
but rather $\ipwlab{\loc}{\lVALW(w)}{F(w)}$ for some $w\in \E_\gamma^\sW \setminus \set{G.\lMEMF(\loc)}$.
Then, since $w\in \E_\gamma^\sW$ we have $ \tup{w,G.\lMEMF(\loc)} \in \mo^? \suq T^?$,
which contradicts the fact that $R_5\suq R$.
\end{proof}

\PSCeqDPSC* 

\begin{proof}[Proof (outline)]
The fact that \DPSC observationally refines \PSC immediately follows from \cref{lem:dec_refines_op,lem:DPSC_refines_PSC}.
Next, we first show that \PSC observationally refines \DPSC.
Let $\tr$ be an $\mem_0$-to-$\mem$ \PSC-observable-trace.
We construct a \DPSC-consistent  $\mem_0$-initialized execution graph $G$ 
such that $\tr\in\traces{G}$ and $\mem(G)=\mem$.
Then, the claim follows using \cref{lem:op_refines_dec}.

We use the instrumented semantics (\PSCI).
By \cref{lem:M_MI,lem:PSC_PSCI}, 
there exists a $\mem_0$-to-$\mem$ \PSCI-trace $\trI$
such that $\Lambda(\trI)=\tr$.
We use $\trI$ to construct $G$:
\begin{itemize}
\item Events:
For every $1\leq i\leq \size{\trI}$ with $\trI(i)$ of the form $\tidlab{\tid}{\addid{\lab}{\id}}$,
we include the event $e_i \defeq \event{\tid}{i}{\lab}$ in $G.\lE$.
In addition, we include the initialization events
$e_\loc \defeq \event{\bot}{0}{\wlab{\loc}{\mem_0(\loc)}}$
for every $\loc\in\Loc$.
It is easy to see that we have $\tr\in\traces{G}$
and that $G$ is $\mem_0$-initialized.
\item Reads-from:
$G.\lRF$ is constructed as follows:
for every $1\leq i\leq \size{\trI}$ with 
$\lTYP(e_i)\in\set{\lR,\lU,\lRex}$ and $\lLOC(e_i)=\loc$,
we locate the maximal index $1\leq j < i$ such that
$\lTYP(e_j) \in\set{ \lW,\lU}$ and $\lLOC(e_j)=\loc$
(namely, the write that corresponds to $e_j$ was the last write executed before 
the read that corresponds to $e_i$ was executed),
and include an edge $\tup{e_j,e_i}$ in $G.\lRF$.
If such index $j$ does not exist,
we include the edge $\tup{e_\loc,e_i}$ in $G.\lRF$
(reading from the initialization event).
Using \PSCI's operational semantics, it is easy to verify that $G.\lRF$ is indeed a reads-from relation for $G.\lE$.
\item Memory assignment:
To define  $G.\lMEMF$, 
for every $\loc\in\Loc$, let $i(\loc)$ be the maximal index such that 
$\lTYP(\trI(i(\loc)))=\lP{\lW}$
and $\lLOC(\trI(i(\loc)))=\loc$
(that is, $i(\loc)$ is the index of the last propagation to the persistent memory of a write to $\loc$).
In addition, let $w(i(\loc))$ be the (unique) index $k$ such that 
$\lTYP(\trI(k))\in \set{\lW,\lU}$ and $\lSN(\trI(k))=\lSN(\trI(i(\loc)))$
(that is, $w(i(\loc))$ is the index of the write operation that persists in index $i(\loc)$).
Now, we define $G.\lMEMF(\loc) \defeq e_{w(i(\loc))}$ for every 
$\loc\in\Loc$ for which $i(\loc)$ is defined.
If $i(\loc)$ is undefined ($\lTYP(\trI(i)=\lP{\lW}$
and $\lLOC(\trI(i))=\loc$ never hold), we set
$G.\lMEMF(\loc) \defeq e_\loc$ (the initialization event of $\loc$).
Then, we clearly have $\mem(G)=\mem$.
\end{itemize}
To show that $G$ is \DPSC-consistent, 
we construct a modification $\mo$ for $G$.
For every two events $e_i,e_j\in G.\lE \cap (\sW \cup \sU)$
with $\lLOC(e_i)=\lLOC(e_j)$,
we include $\tup{e_i,e_j}$ in $\mo$
if either $e_i\in\Init$ or $i<j$
(that is, the write the corresponds to $e_i$ was executed before the write that corresponds to $e_j$).
It is to verify that 
$\tup{e_i,e_j}\in G.\lPO \cup G.\lRF \cup \mo \cup G.\lFR(\mo) \cup G.\lDTPO(\mo)$
implies that 
$e_i\in\Init$ or $i<j$.
It follows that $G.\lHB_\PSC(\mo)$ is acyclic and so $G$ is \DPSC-consistent.
\end{proof}

\section{Proofs for Section \ref{sec:ptso_psc}}
\label{app:ptso_psc}

\DRFgraph*
\begin{proof}
By \cref{thm:DPTSO_DPTSOmo}, there exists a modification order $\mo$ for $G$ such that 
$G.\lHB(\mo)$ and $G.\lFR(\mo)\seq G.\lPO$ are irreflexive.
We show that $G.\lHB_\PSC(\mo)$ is irreflexive.
Suppose otherwise.
Let $\po=G.\lPO$, $\rf=G.\lRF$, $\fr=G.\lFR(\mo)$, $\dtpo=G.\lDTPO(\mo)$, $\ppo=G.\lPPO$,
and $\hb = G.\lHB(\mo)$.

Since $\po$ is transitive, 
$(\rf \cup \mo \cup \fr \cup \dtpo) \seq \dtpo=\emptyset$
(because of the domains and codomains of the different relations),
$\rf \seq \fr \suq \mo$,
$\mo \seq \fr \suq \mo$,
$\fr \seq \fr \suq \fr$,
$\dtpo \seq \fr \suq \dtpo$ (all these easily follow from the fact that $\hb$ is irreflexive),
and $\dom{\rf \cup \mo} \suq \sW\cup\sU$,
it suffices to show that 
$\po \seq [\sW\cup\sU] \cup \rf \cup \mo \cup \po \seq \fr \cup \po \seq\dtpo$
is acyclic.

For this matter, we show that 
$$[(\sR \cup \sW \cup \sU \cup \sRex) \setminus \Init] \seq (\po \seq \fr \cup \po \seq\dtpo) \setminus (\po \seq [\sW\cup\sU] \cup\rf \cup \mo)^+\suq  \ppo^+ \seq \fr \cup \ppo^+ \seq \dtpo.$$
Given the latter inclusion, since $\po \seq [\sW\cup\sU] \suq \ppo$,
the acyclicity of 
$\po \seq [\sW\cup\sU] \cup \rf \cup \mo \cup \po \seq \fr \cup \po \seq\dtpo$
will follow from the fact that $\hb$ is irreflexive.

Let $\tup{a,c}\in 
[(\sR \cup \sW \cup \sU \cup \sRex) \setminus \Init] \seq (\po \seq \fr \cup \po \seq\dtpo) \setminus (\po \seq [\sW\cup\sU] \cup\rf \cup \mo)^+$.
Let $b\in \sE$ such that $\tup{a,b}\in \po$ and $\tup{b,c}\in \fr \cup \dtpo$.
Let $\loc=\lLOC(b)$.
Consider the possible cases:

\begin{itemize}

\item $a \in \sW$, $\lLOC(a)\neq \loc$, $b\in \sR$, and $b$ is $G$-protected:
Then, we obtain that $\tup{a,b} \in 
\po \seq [\sW_\loc \cup \sU \cup \sRex \cup \sMF] \seq \po$.
If 
$\tup{a,b} \in \po \seq [\sU \cup \sRex \cup \sMF] \seq \po$,
then we have 
$\tup{a,b} \in \ppo^+$.
Otherwise, 
there is some $b'\in \sW_\loc$ such that $\tup{a,b'}\in \po$ and $\tup{b',b}\in \po$.
In this case it follows that $\tup{b',c}\in \mo$,
which contradicts the assumption that 
$\tup{a,c} \nin (\po \seq [\sW\cup\sU] \cup\rf \cup \mo)^+$.

\item $a \in \sW$, $\lLOC(a)\neq \loc$, $b\in \sR$, and $b$ is not $G$-protected:
Then, we must have 
either $\tup{c,b}\in (\po \cup \rf)^+$ or $\tup{b,c}\in (\po \cup \rf)^+$.
In the first case we obtain that $\tup{b,b}\in \fr \seq (\po \cup \rf)^+$,
which contradicts the fact that 
$\hb$ and $\fr\seq \po$ are irreflexive.
In turn, the second case contradicts the assumption that 
$\tup{a,c} \nin (\po \seq [\sW\cup\sU] \cup\rf \cup \mo)^+$.

\item $a \in \sW$, $\lLOC(a)= \loc$, and $b\in \sR$:
In this case, we must have $\tup{a,b}\in \mo^? \seq \rf$ and so $\tup{a,c}\in \mo$,
which contradicts the assumption that 
$\tup{a,c} \nin (\po \seq [\sW\cup\sU] \cup\rf \cup \mo)^+$.
\item $a \in \sW$, $\lLOC(a)\neq \loc$, and $b\in \sFO$:
Then, if $b$ is $G$-protected, 
we obtain that $\tup{a,b} \in \po \seq [\sW_\loc \cup \sU \cup \sRex \cup \sMF \cup \sSF] \seq \po \suq \ppo^+$.
Otherwise, we must have 
either $\tup{c,b}\in (\po \cup \rf)^+$ or $\tup{b,c}\in (\po \cup \rf)^+$.
In the first case we obtain that $\tup{b,b}\in \dtpo \seq (\po \cup \rf)^+$,
which contradicts the fact that 
$\hb$ is irreflexive.
In turn, the second case contradicts the assumption that 
$\tup{a,c} \nin (\po \seq [\sW\cup\sU] \cup\rf \cup \mo)^+$.

\item Otherwise, the fact that $\tup{a,b}\in \po$ directly implies that $\tup{a,b}\in \ppo$.
\qedhere
\end{itemize}
\end{proof}

\DRF*
\begin{proof}
The right-to-left direction is trivial.
For the left-to-right direction, suppose that $\progstate \in \prog.\lQ$ is reachable under \PTSOsynnn.
By \cref{thm:PTSO_refines_DPTSO,thm:DPTSO_DPTSOmo}, $\progstate$ is reachable under \DPTSOmo.
Let $G_0\til G_n$ be \DPTSOmo-consistent execution graphs 
that satisfy the conditions of \cref{def:dec_reachable}
(for the program $\prog$ and the state $\progstate$).
If all $G_i$'s are \DPSC-consistent, then 
$\progstate$ is reachable under \DPSC,
and the claim follows using \cref{thm:PSCeqDPSC}.

Suppose otherwise.
We show that $\prog$ is strongly racy, which contradicts our assumption.
Let $0\leq i\leq n-1$ be the minimal index such that $G_i$ is not \DPSC-consistent.
Let $G=G_i$.
The minimality of $i$ ensures that $G_0\til G_{i-1}$ are all \DPSC-consistent as well.
Hence, using the sequence  $G_0\til G_{i-1}$,
by repeatedly applying \cref{lem:DPSC_refines_PSC,prop:gen_trace1},
we obtain that for $\mem_0 \defeq \mem(G_{i-1})$ or $\mem_0 \defeq \mem_\Init$ if $i=0$,
we have that $\tup{\progstate_\Init,\mem_0,\Pbuff_\epsl}$ is reachable in $\cs{\prog}{\PSC}$
for some $\progstate_\Init \in \prog.\linit$.

Let $G.\lHB = (G.\lPO \cup G.\lRF)^+$ and let
$$W= \left\lbrace w\in \sW \cup \sU ~~\middle|~~ \exists e.\; 
\inarr{\text{$e$ is $G$-unprotected~} \land 
e \in \sR_{\lLOC(w)}\cup \sFO_{\lLOC(w)}~ \land \\
\tup{w,e}\nin (G.\lPO \cup G.\lRF)^+ ~\land~  \tup{e,w}\nin (G.\lPO \cup G.\lRF)^+}\right\rbrace.$$
By \cref{lem:DRF_graph}, $W$ is not empty.
Let $w$ be a $G.\lPO \cup G.\lRF$-minimal event in $W$,
and let $e$ be a $G.\lPO \cup G.\lRF$-minimal $G$-unprotected event in  
$\sR_{\lLOC(w)} \cup \sFO_{\lLOC(w)}$ such that 
$\tup{w,e}\nin (G.\lPO \cup G.\lRF)^+$ and $\tup{e,w}\nin (G.\lPO \cup G.\lRF)^+$.

Let $E' = \set{e' \st \tup{e',w}\in  (G.\lPO \cup G.\lRF)^+ ~\lor~ \tup{e',e}\in  (G.\lPO \cup G.\lRF)^+}$
and $G'$ be the execution graph given by 
$G'.\lE=E'$, $G'.\lRF = [G'.\lE] \seq G.\lRF \seq [G'.\lE]$, 
and $G'.\lMEMF=\lambda \loc.\; \max_\mo G'.\lE \cap (\sW_\loc \cup \sU_\loc)$,
where $\mo$ is some modification order for $G$ that satisfies the conditions of \cref{def:DPSC}.
It is easy to see that $G'$ is \DPTSO-consistent (since $G$ is \DPTSO-consistent).
The minimality of $w$ and $e$ ensures that 
for every $w'\in G'.\sW \cup G'.\sU$ and $G'$-unprotected event $e'\in \sR_{\lLOC(w)} \cup \sFO_{\lLOC(w)}$, we have 
either have $\tup{w',e'}\in (G'.\lPO \cup G'.\lRF)^+$ or $\tup{e',w'}\in (G'.\lPO \cup G'.\lRF)^+$.
Hence, by \cref{lem:DRF_graph}, $G'$ is \DPSC-consistent.

Now, since $G$ is generated by $\prog$, we clearly also have that $G'$ is generated by $\prog$
with some final state $\progstate'$.
Hence, by \cref{prop:gen_trace1}, for every $\tr \in \traces{G'}$, we have 
$\progstate_\Init \bsteplab{\tr}{\prog} \progstate'$ for some $\progstate_\Init\in \prog.\linit$.
By \cref{lem:DPSC_refines_PSC},
some $\tr\in\traces{G'}$ is an $\mem_0$-to-$\mem(G')$ \PSC-observable-trace.
It follows that 
$\tup{\progstate_\Init,\mem_0,\Pbuff_\epsl} \bsteplab{\tr}{\cs{\prog}{\PSC}} \tup{\progstate', \mem(G'), \Pbuff}$
for some $\Pbuff$.

Furthermore, the construction of $G'$ ensures that for $\tid_\lW = \lTID(w)$ and $\tid = \lTID(e)$,
we have that $\progstate'(\tid_\lW)$ enables $\lLAB(w)$
and $\progstate'(\tid_\lR)$ enables $\lLAB(e)$.
To show that $\prog$ is strongly racy,
 it remains to show that 
$\lLAB(e)$ is unprotected in $\suffix{\lTID(e)}{\tr}$.
Let $G'_{e}$ be the execution graph given by 
$G'_e.\lE=G'.\lE \cup \set{e}$,
$G'_e.\lRF =[G'_e.\lE] \seq G.\lRF \seq [G'_e.\lE]$,
and $G'_e.\lMEMF=\lambda \loc.\; \max_\mo G'_e.\lE \cap (\sW_\loc \cup \sU_\loc)$.
Using \cref{lem:protected}, it suffices to show that $e$ is $G'_{e}$-unprotected.
The latter easily follows from the fact that $e$ is $G$-unprotected.
\end{proof} 
\clearpage
\section{From Programs to Labeled Transition Systems}
\label{app:syntax}

We present a concrete programming language syntax for (sequential) programs,
and show how programs in this language are interpreted as LTSs
in the form assumed assumed in \cref{sec:preliminaries}.

Let $\Reg\suq \set{\creg{1},\creg{2},\ldots} $ be {finite} sets register names.
\Cref{fig:lang} presents our toy language.
Its expressions are constructed from registers (local variables) and values.
Instructions include assignments and conditional branching, as well as memory operations.

A sequential program $\sprog$ is a function from 
a set of the form $\set{0,1 \til N}$ (the possible values of the program counter) to instructions.
It induces an LTS over $\Lab \cup \set{\epsl}$.
Its states are pairs $\sprogstate=\tup{\pc,\phi}$ where $pc\in\N$
(called \emph{program counter}) and $\phi:\Reg \to \Val$ (called \emph{local store},
and extended to expressions in the obvious way).
Its initial state is $\tup{0,\lambda \reg \in \Reg.\, 0}$ and its transitions are given in \cref{fig:sprog_transitions}
(In particular, a read instruction in $\sprog$ induces $\size{\Val}$ transitions with different labels.)

\begin{figure*}[t]
$$
	\begin{array}{@{} l l @{}}
\exp ::=  & r \ALT \val \ALT \exp + \exp \ALT \exp = \exp \ALT \exp \neq \exp \ALT \ldots
\\[1ex]
\Inst \ni 
\mathit{inst} ::=  & 
\assignInst{r}{\exp}
\ALT \ifGotoInst{\exp}{n}
\ALT \writeInst{\loc}{\exp} 
\ALT \readInst{r}{\loc} 
\ALT \\&
 \incInst{r}{\loc}{\exp}
\ALT  
\casInst{r}{\loc}{\exp}{\exp} 
\ALT  \\&
\mfenceInst
\ALT \flInst{\loc} 
\ALT \foInst{\loc} 
\ALT \sfenceInst
\end{array} 
$$
\caption{Programming language syntax.}
\label{fig:lang}
\end{figure*}

\begin{figure*}[t]
\begin{adjustbox}{width=1\textwidth,center}
\begin{mathpar}
\inferrule*{
\sprog(\pc)=\assignInst{r}{\exp} \\\\ \phi'=\phi[\reg \mapsto \phi(\exp)]
}{\tup{\pc,\phi} \astep{\epsl\vphantom{\lab}}_\sprog \tup{\pc+1, \phi'}}
\and
\inferrule*{
\sprog(\pc)=\ifGotoInst{\exp}{n} \\\\ \phi(\exp)\neq 0
}{\tup{\pc,\phi} \astep{\epsl\vphantom{\lab}}_\sprog \tup{n, \phi}}
\and
\inferrule*{
\sprog(\pc)=\ifGotoInst{\exp}{n} \\\\ \phi(\exp)= 0
}{\tup{\pc,\phi} \astep{\epsl\vphantom{\lab}}_\sprog \tup{\pc+1, \phi}}
\and
\inferrule*{
\sprog(\pc)=\writeInst{\loc}{\exp} \\\\ \lab=\wlab{\loc}{\phi(\exp)}
}{\tup{\pc,\phi} \astep{\lab}_\sprog \tup{\pc+1, \phi}}
\and
\inferrule*{
\sprog(\pc)=\readInst{\reg}{\loc} \\\\\ \lab=\rlab{\loc}{\val} \ \ \phi'=\phi[\reg \mapsto \val]
}{\tup{\pc,\phi} \astep{\lab}_\sprog \tup{\pc+1, \phi'}}
\and
\inferrule*{
\sprog(\pc)=\incInst{\reg}{\loc}{\exp} \\\\ \lab = \ulab{\loc}{\val}{\val+\phi(\exp)} \\\\\phi'=\phi[\reg \mapsto \val]
}{\tup{\pc,\phi} \astep{\lab}_\sprog \tup{\pc+1, \phi'}}
\and
\inferrule*{
\sprog(\pc)=\casInst{\reg}{\loc}{\exp_\lR}{\exp_\lW} \\\\
\lab = \ulab{\loc}{\phi(\exp_\lR)}{\phi(\exp_\lW)} \\\\ \phi'=\phi[\reg \mapsto \phi(\exp_\lR)]
}{\tup{\pc,\phi} \astep{\lab}_\sprog \tup{\pc+1,\phi' }}
\and
\inferrule*{
\sprog(\pc)=\casInst{\reg}{\loc}{\exp_\lR}{\exp_\lW} \\\\
\lab = \rexlab{\loc}{\val} \\ \val\neq \phi(\exp_\lR) \\\\\phi'=\phi[\reg \mapsto \val]
}{\tup{\pc,\phi} \astep{\lab}_\sprog \tup{\pc+1, \phi'}}
\and
\inferrule*{
\sprog(\pc)=\mfenceInst \\\\
\lab = \mflab
}{\tup{\pc,\phi} \astep{\lab}_\sprog \tup{\pc+1, \phi}}
\and
\inferrule*{
\sprog(\pc)=\flInst{\loc} \\\\
\lab = \fllab{\loc}
}{\tup{\pc,\phi} \astep{\lab}_\sprog \tup{\pc+1, \phi}}
\and
\inferrule*{
\sprog(\pc)=\foInst{\loc} \\\\
\lab = \folab{\loc}
}{\tup{\pc,\phi} \astep{\lab}_\sprog \tup{\pc+1, \phi}}
\and
\inferrule*{
\sprog(\pc)=\sfenceInst \\\\
\lab = \sflab
}{\tup{\pc,\phi} \astep{\lab}_\sprog \tup{\pc+1, \phi}}
\end{mathpar}
\end{adjustbox}
\caption{Transitions of LTS induced by a sequential program $\sprog \in \SProg$.}
\label{fig:sprog_transitions}
\end{figure*}

\end{document}